\renewcommand{\baselinestretch}{1}
\newcommand{\indep}{\perp \!\!\! \perp}
\newcommand{\argmin}{\ensuremath{\operatornamewithlimits{arg min}}}
\newtheorem{lemma}{Lemma}%[section]
\newtheorem{proposition}{Proposition}%[section]
\newtheorem{theorem}{Theorem}%[section]
\newtheorem{definition}{Definition}%[section]
\newtheorem{remark}{Remark}%[section]
\newtheorem{example}{Example}%[section]
\newtheorem{condition}{Condition} 
\renewcommand{\theequation}{%\thesection.
	\arabic{equation}%
}
\newcommand{\ignore}[1]{}{}
\newcommand{\bSigma}{{\bf \Sigma}}
\newcommand{\bu}{\mbox{\bf u}}
\newcommand{\bw}{\mbox{\bf w}}
\newcommand{\bx}{\mbox{\bf x}}
\newcommand{\bsS}{\mbox{\bf\scriptsize S}}
\newcommand{\bsX}{\mbox{\bf\scriptsize X}}
\newcommand{\bsx}{\mbox{\bf\scriptsize x}}
\newcommand{\bs}{\mbox{\bf s}}
\newcommand{\bS}{\mbox{\bf S}}
\newcommand{\bW}{\mbox{\bf W}}
\newcommand{\bX}{\mbox{\bf X}}
\newcommand{\bY}{\mbox{\bf Y}}
\newcommand{\bzero}{\mbox{\bf 0}}
\newcommand{\bveps}{\mbox{\boldmath $\varepsilon$}}
\newcommand{\balpha}{\mbox{\boldmath $\alpha$}}
\newcommand{\eP}{\mathbb P}
\newcommand{\eE}{\mathbb E}
\newcommand{\eR}{\mathbb R}
\newcommand{\cG}{\mathcal G}
\newcommand{\cD}{\mathcal D}
\newcommand{\cF}{\mathcal F}
\newcommand{\cX}{\mathcal X}
\newcommand{\cS}{\mathcal S}
\newcommand{\cR}{\mathcal R}
\newcommand{\cN}{\mathcal N}
\newcommand{\cE}{\mathcal E}
\newcommand{\cA}{\mathcal A}
\newcommand{\var}{\mathrm{var}}
\newcommand{\corr}{\mathrm{corr}}
\def\T{{ \mathrm{\scriptscriptstyle T} }}
\def\CLT{{ \mathrm{\scriptscriptstyle CLT} }}
\def\BF{{ \mathrm{\scriptscriptstyle BF} }}
\def\ASY{{ \mathrm{\scriptscriptstyle ASY} }}
\def\EB{{ \mathrm{\scriptscriptstyle EB} }}
\def\independenT#1#2{\mathrel{\setbox0\hbox{$#1#2$}%
		\copy0\kern-\wd0\mkern4mu\box0}}
\renewcommand{\hat}{\widehat}
\def\spacingset#1{\renewcommand{\baselinestretch}%
	{#1}\small\normalsize} \spacingset{1}
\begin{document}

\title{SLOACI: Surrogate-Leveraged Online Adaptive Causal Inference%
\thanks{
Yingying Fan is Centennial Chair in Business Administration and Professor, Data Sciences and Operations Department, Marshall School of Business, University of Southern California, Los Angeles, CA 90089 (E-mail: \textit{fanyingy@marshall.usc.edu}). 
Zihan Wang is Ph.D. Candidate, Department of Statistics and Data Science, Tsinghua University, China (E-mail: \textit{wangzh21@mails.tsinghua.edu.cn}). % 
Waverly Wei is an Assistant Professor, Data Sciences and Operations Department, Marshall School of Business, University of Southern California, Los Angeles, CA 90089 (E-mail: \textit{waverly@marshall.usc.edu})
}
\date{}
%\date{December 25, 2024}
\author{
Yingying Fan$^1$, Zihan Wang$^2$ and Waverly Wei$^1$
\medskip\\
University of Southern California$^1$ and Tsinghua University$^2$
\\
} %
}

\maketitle

\maketitle

\spacingset{1.4}
\begin{abstract}	

Adaptive experimental designs have gained increasing attention across a range of domains. In this paper, we propose a new methodological framework, surrogate-leveraged online adaptive causal inference (SLOACI), which integrates predictive surrogate outcomes into adaptive designs to enhance efficiency. For downstream analysis, we construct the adaptive augmented inverse probability weighting estimator for the average treatment effect using collected data. Our procedure remains robust even when surrogates are noisy or weak. We provide a comprehensive theoretical foundation for SLOACI. Under the asymptotic regime, we show that the proposed estimator attains the semiparametric efficiency bound. From a non-asymptotic perspective, we derive a regret bound to provide practical insights. We also develop a toolbox of sequential testing procedures that accommodates both asymptotic and non-asymptotic regimes, allowing experimenters to choose the perspective that best aligns with their practical needs. Extensive simulations and a synthetic case study are conducted to showcase the superior finite-sample performance of our method. 
\end{abstract}

%\textit{Running title}: SLOACI

\textit{Keywords}: Adaptive experimental design; Doubly robust estimation; Finite-sample regret bound; Neyman allocation; Partially linear model; Sequential testing.

 \clearpage
\spacingset{1.77}

\section{Introduction}
\label{sec.intro}

 Adaptive randomized experiments, also known as adaptive designs, have gained increasing attention across a range of domains, including field experiments, online A/B testing, and clinical trials \citep{zhao2025adaptive}.
In adaptive designs, units arrive over time, and a data-dependent mechanism governs aspects of the experiment—such as treatment assignment, enrollment proportions, sampling intensity, or follow-up decisions, with the aim of improving participant outcomes and collecting information efficiently for downstream inference. The key advantage of adaptive experiments lies in their ability to dynamically adjust treatment allocations based on data accumulated during the course of the study.   In particular, a popular class of designs that revise treatment probabilities in response to observed outcomes from earlier participants is known as the response-adaptive randomization (RAR) designs \citep{hu2006theory}. This iterative adaptation of RAR designs often favors treatment arms that demonstrate more informative results, thereby maximizing the information gain from each participant and improving the overall efficiency of the experiment, leading to an effective way to collect data for downstream estimation and inference. %{\color{magenta}I think it would be helpful to mention the potential outcomes framework here because we want to bring up the surrogate concept soon} 

%While RAR designs offer flexibility and improved efficiency, it relies on the timely observation of outcomes to sequentially revise treatment allocation probabilities. However, in many real-world settings, primary outcomes (or outcomes of interest) are not immediately observed at the end of each experimental stage. Such missing outcomes at the end of each experimental stage are commonly seen in both clinical and digital environments. For example, in adaptive clinical trials, treatments may take weeks or months to manifest observable effects—especially for chronic conditions or diseases with slow progression {\color{red}Reference XXX}. Similarly, in e-commerce or online A/B testing, customer behavior may take time to unfold, leading to missing observation on conversions or purchases {\color{red}Reference XXX}. To address this practical issue, in this paper, we propose to leverage surrogate outcomes, outcomes that are ``predictive" of the primary outcome and are readily observed at the end of each experimental stage, to improve the efficiency of adaptive experiments in the presence of missing primary outcomes. Here, predictive means that the surrogate outcome and the corresponding primary outcome are correlated conditional on the covatiates in the structured causal model for adaptive design. We provide two examples to illustrate when/how surrogate outcomes can be collected or constructed for our purposes. 

In contemporary applications, in addition to the primary outcomes from the adaptive experiments, low cost (possibly) noisy side information that is predictive of the primary outcomes may also be available.   Such side information can be used to construct ``surrogate" outcomes  that are predictive of the primary outcomes at low cost but likely of low quality too. How to fuse such side information into the adaptive design to further improve its efficiency is the main purpose of our study. We provide two examples to illustrate when/how surrogate outcomes can be collected/constructed for our purpose. 

\begin{example}
Large language models (LLMs) make it possible to construct informative surrogates from unstructured text. Consider an online learning platform wanting to evaluate the effectiveness of a learning intervention (treatment). After completing a lesson (either treated or not),  each participant submits a short written response $T_t$ describing their understanding at Stage (time) $t$. The primary outcome  is the quiz score $Y_t(z)$ corresponding to assigned treatment $z\in \{0,1\}$, but the free-text response $T_t$ also contains useful information about knowledge mastery. Together with a low-dimensional covariate vector $\mathbf X_t \in \mathcal X$ (e.g., past academic performance, demographics) that contains well-understood confounders, one may fine-tune or prompt an LLM to extract predictive signals from the text \citep{veitch2020adapting} using external data or data-splitting, yielding
the LLM-based predictive models $
\big(f_{0}(\cdot), f_{1}(\cdot)\big): \mathcal{T}\times \mathcal X  \to \mathbb{R}^{2}
$ that approximate the potential primary outcomes. At Stage $t$, upon observing $(T_t, \mathbf X_t)$, the experimenter generates a pair of surrogates
$
\big(S_t(0), S_t(1)\big) = \big(f_{0}(T_t, \mathbf X_t),\, f_{1}(T_t, \mathbf X_t)\big).$
During the causal modeling stage, only confounder $\mathbf X_t$ is used in modeling the primary potential outcomes $(Y_t(0), Y_t(1))$, while the surrogates serve as auxiliary signals containing predictive information about $(Y_t(0), Y_t(1))$ beyond what is explained by $\mathbf X_t$. %As long as the textual response $T_t$ contains predictive information about $Y_t(z)$ beyond what is explained by $X_t$, the conditional correlations $\rho_0$ and $\rho_1$  will be nonzero and our condition will be satisfied. 
\end{example}

\begin{example}
The recent advances in data acquisition allow us to collect a wide array of inexpensive auxiliary measurements, though some are of lower predictive quality for the outcome of interest.
For example, in the case study in Section \ref{sec:case}, nine covariates are recorded, and the goal is to study whether small class size improves test scores. However, for causal identification, the structural models typically include only three confounders $\mathbf X_t$ motivated by prior literature \citep{chetty2011does, athey2025combining}. One may leverage external data (or a held-out sample) to construct predictive models $\big({f}_0(\cdot),{f}_1(\cdot)\big):\widetilde{\cX}\subset \mathbb R^9\to\eR^2$ to map the nine auxiliary variables $\widetilde{\bX}_t\in\widetilde{\cX}$ to predictions of the primary outcomes under two treatments $z\in \{0,1\}$. %, using all nine variables $\widetilde{\bX}_t\in\widetilde{\cX}$.
At Stage $t$, one can generate a pair of surrogates $(S_t(0),S_t(1))=(f_0(\widetilde{\bX}_t),f_1(\widetilde{\bX}_t))$. Though the constructed surrogates may be low-quality proxies of primary potential outcomes $(Y_t(0), Y_t(1))$ due to budget concerns,  they can still contain predictive information about $(Y_t(0), Y_t(1))$ beyond what is explained by $\mathbf X_t$. 
%This way, as long as conditional on $\mathbf X_t$, the remaining six variables contain information on potential outcomes $(Y_t(0), Y_t(1))$, our correlation condition is likely satisfied.  %  We remark that the independent variables $\widetilde{\bX}_t$ used for training the predictive models and generating the surrogates need not be identical to the covariates $\bX_t$ used in the causal identification. 
See Section~\ref{sec:case} for detailed illustration.
\end{example}

In this paper, %we propose to leverage surrogate outcomes  to improve the efficiency of adaptive experiments in the presence of missing primary outcomes under the potential outcomes framework. 
we propose a novel framework, namely the surrogate-leveraged online adaptive causal inference (SLOACI), to seamlessly integrate predictive surrogate outcomes into adaptive experiments to improve the efficiency of adaptive experiments and the downstream causal inference. %enabling more efficient use of interim data even when the primary outcome is not yet observed. 
Unlike existing approaches, our method can accommodate surrogate outcomes of varying predictive quality, rather than assuming they are highly informative of the primary outcomes. This flexibility is critical in real-world settings, where surrogate quality can be inconsistent or only moderately predictive, as illustrated in the two examples above. %Our framework uniquely quantifies and adjusts for the uncertainty introduced by imperfect surrogates. %, allowing the adaptive randomization procedure to remain robust even when surrogates are noisy or weak. 
We provide theoretical and empirical characterizations of how the efficiency gain of adaptive designs depends on surrogate quality, offering practical guidance on when and how surrogate-based adaptation is beneficial.

%To address this practical issue, there has been growing interest in leveraging surrogate outcomes-- outcomes that are predictive of the primary outcome and are readily observed at the end of each experimental stage--to improve the efficiency of adaptive experiments in the presence of missing primary outcomes. For example, advances in modern machine learning models, including large language models, have enabled the generation of synthetic surrogates at relatively low cost; see \cite{argyle2023out,egami2023using,horton2023large,mccaw2024synthetic}. The surrogates can provide early signals that are predictive of the primary outcome, and incorporating them offers improved efficiency. However, the existing literature on adaptive designs incorporating surrogate outcomes often assumes that the surrogates are of high quality, meaning they are highly predictive of the primary outcome. In practice, surrogate quality can vary significantly, and yet there are insufficient  discussions on how the efficiency gains of adaptive experiments depend on the predictive strength of the surrogate. 

Our study also contributes to the theoretical frontier of adaptive designs.  Existing theoretical results are typically  in the asymptotic setting \citep{zhang2007asymptotic}. However, in real-world adaptive experiments, sample sizes may be moderate or small due to budget concerns, where asymptotic analysis only provides limited theoretical insights or practical guidance. Therefore, studying the non-asymptotic regime is crucial for better finite-sample guarantees and practical guidance \citep{howard2021time,waudby2024estimating}.
% However, extending theoretical analysis from asymptotic to non-asymptotic settings introduces several technical challenges. Unlike asymptotic results, where the sample size in each stage or the number of experimental stages are assumed to go to infinity  simplify analysis, the non-asymptotic regime requires explicity characterization for finite-sample variability, randomness in treatment assignments, and dependence across experimental stages. Consequently, deriving non-asymptotic bounds demands advanced probabilistic techniques such as {\color{red} Add technical tools used in our paper}, significantly increasing the complexity of the theoretical analysis. Recently, there have been a few notable efforts to address non-asymptotic analyses in adaptive experiments \citep{neopane2025logarithmic}. However, these existing works often focus exclusively on simplified experimental structures or impose restrictive parametric modeling assumptions that may not be readily generalizable {\color{red}Check this statement and add reference}. Furthermore, incorporating surrogate outcomes in non-asymptotic analysis under adaptive experiment settings adds an additional layer of technical complexity. 

We provide a comprehensive theoretical framework studying the sampling properties of SLOACI in both asymptotic and non-asymptotic regimes. \textit{On the one hand}, our  finite-sample theoretical results offer a refined lens of viewing adaptive experiments that provide substantial theoretical insights, which are not fully captured in asymptotic analysis. Concretely, in our regret bound decomposition (see Theorem~\ref{thm:regret}), we provide  theoretical  insights on the convergence rate of the initialization phase, as well as the convergence rates corresponding to the estimated conditional surrogate outcome model, the estimated conditional primary outcome model, and the adaptive treatment allocation, respectively, in a finite-sample setting. Such a decomposition also provides a set of practical insights. \textit{On the other hand}, we are the first to study the regret bound for adaptive ATE estimation under nonparametric regression and provide the explicit characterization of the lower-order terms and constant factors. %Our work distinguishes itself from existing work on finite-sample analysis in adaptive experiments by developing comprehensive non-asymptotic theoretical results under substantially milder and more realistic assumptions, allowing our results to be broadly applicable to practical adaptive experiments {\color{red}Theorem XXX}. 
We also incorporate surrogate outcomes explicitly in our non-asymptotic analysis, addressing an important practical scenario that has not been systematically studied in previous adaptive experimentation literature.  We show that incorporating surrogates as ``supplement" instead of ``covariates" is non-trivial, as simply treating surrogates as covariates leads to a higher order regret bound (see Section~\ref{subsubsec:role_surrogate}).

Additionally, we develop a comprehensive toolbox of sequential testing procedures for evaluating the significance of ATE that accommodates both asymptotic and non-asymptotic regimes. Our framework allows experimenters to choose the perspective, whether asymptotic or non-asymptotic, that best aligns with their practical needs and their views on the asymptotic regime. We elaborate on the proposed testing procedures and discuss the trade-offs between the two regimes in Section~\ref{subsec:tradeoff}.

\subsection{Literature review} 

Our framework is closely related to frequentist RAR designs, a line of work that %has its root in Efron’s ``biased coin" design \citep{efron1971forcing} and subsequently 
has been developed in \cite{hu2006theory,tymofyeyev2007implementing,hu2009efficient}. In parallel, covariate-adjusted RAR designs seek to improve covariate balance across treatment arms by allowing future assignments to depend on both past assignments and subject-specific covariates \citep{%pocock1975sequential,atkinson1982op%timum,
zhang2007asymptotic,hu2015unified,bertsimas2019covariate,%zhu2023covariate,
zhao2025pigeonhole}. Our work is more closely aligned with the classical RAR literature.
% More recently, covariate-adjusted response-adaptive (CARA) designs have combined these ideas, adjusting treatment allocations according to both unit covariate information and observed outcomes \citep{zhang2007asymptotic,hu2015unified,zhu2023covariate,wei2025adaptive}. 
% It is noteworthy that our framework is distinctly apart from Bayesian adaptive designs \citep{atkinson2005bayesian,hill2011bayesian,breunig2025double}. 

Our work also connects to the adaptive experimentation literature aimed at improving ATE estimation efficiency. This line of literature includes two-stage designs \citep{hahn2011adaptive}, fully adaptive AIPW-based procedures attaining semiparametric efficiency \citep{kato2025efficient}, and clipped AIPW methods accommodating modern uncertainty quantification \citep{cook2024semiparametric}. These approaches focus primarily on asymptotic properties. 
Recent work has introduced regret-based analyses under design-based and superpopulation settings \citep{dai2023clip,neopane2025logarithmic,neopane2025optimistic}, but finite-sample regret bounds for the AIPW estimator under fully adaptive, nonparametric designs remain an open question. Beyond estimation, prior studies have developed sequential testing methods, including asymptotic confidence sequences based on Lindeberg-type martingales \citep{waudby2024time,cook2024semiparametric,oprescu2025efficient}, as well as non-asymptotic LIL-based and empirical Bernstein confidence sequences grounded in anytime-valid concentration results \citep{waudby2024estimating,cook2024semiparametric,kato2025efficient}.

The existing literature on the use of surrogates in randomized experiments can be broadly divided into two lines of work. The first line utilizes surrogates from a data fusion perspective \citep{athey2025surrogate,imbens2025long,chen2025efficient,athey2025combining}. %\cite*{athey2025surrogate} and \cite*{athey2025combining} proposed a data combination framework to incorporate surrogates under the assumption of latent unconfoundedness. \cite{imbens2025long} developed methods to use surrogates as proxy variables for persistent confounders. 
The second line treats surrogates as supplements for the primary outcomes \citep{chen2008improving, anderer2022adaptive, kallus2025role}. We also leverage surrogate information, but in different ways and under weaker assumptions on the surrogate quality.   %\cite{chen2008improving} investigated using surrogates to improve efficiency in parameter inference from estimating functions. \cite{anderer2022adaptive} incorporated surrogates into the design of adaptive clinical trials, demonstrating both relative and absolute benefits. More recently, \cite{kallus2025role} leveraged surrogates as auxiliary information to enhance the ATE estimation efficiency when the primary outcomes are partially missing. 
% For further developments, see \cite{shi2024using} and the references therein.

The remainder of the paper is organized as follows. Section~\ref{sec:method} introduces the adaptive data collection mechanism and presents the SLOACI framework.
Section~\ref{sec:theory} carries out a non-asymptotic theoretical analysis and derives a regret bound to provide practical insights under finite samples. 
Section~\ref{sec:testing} develops a comprehensive sequential testing framework that complements our proposed design under both asymptotic and non-asymptotic regimes. 
Section \ref{sec:case} evaluates the empirical performance of SLOACI via a synthetic case study. Section~\ref{sec:discuss} discusses future directions.
The asymptotic theoretical results and all technical proofs and simulation studies are relegated to the supplementary material.

\section{Methodology}\label{sec:method}

%In this section, we start with setting up the notation and then introduce the adaptive data collection mechanism considered in our setting, followed by a simple illustrative example of our proposed method. We then discuss our proposed method in a more general setting. 

\subsection{Formulation of adaptive data collection mechanism}\label{subsec:adaptive data collection}

Suppose that experimental units arrive sequentially over $t$ and 
the experimenter observes 
$
  \left\{\bX_t,  Z_t, Y_t, \bS_t\right\}_{t=1}^T,
$
across $T \in \mathbb N$ experimental stages, where $\bX_t\in\cX\subset\eR^d$ is a
$d$-dimensional covariate profile collected at Stage $t$,  $Z_t\in\{0,1\}$ denotes the treatment assignment, and  
$Y_t\in\eR$ denotes the observed primary outcome of interest. The potential outcomes under control ($Z_t=0$) and treatment ($Z_t=1$) are denoted by $Y_t(0)$ and $Y_t(1)$, respectively, following the Neyman-Rubin potential outcome framework \citep{splawa1990application,rubin2005causal}. Then $Y_t=Y_t(1)Z_t+Y_t(0)(1-Z_t)$. As discussed in the Introduction, we consider the scenario where surrogate outcomes, $\bS_t=(S_t(0),S_t(1))^{\T}\in\cS\subset\eR^2$ under the control and the treatment arms, are also available and predictive of the primary outcomes. %We denote the pair of counterfactual surrogate outcomes observed under the control and the treatment arm  as $\bS_t=(S_t(0),S_t(1))^{\T}\in\cS\subset\eR^2$. 

We consider an adaptive experimentation setting where the treatment allocation is adaptively updated at each Stage $t$ based on the experimental data collected up to Stage $t-1$, while our approach can be easily extended to accommodate batch-wise updates. We denote the treatment assignment probability (also known as allocation) as 
$$\pi_t=\eP(Z_t=1|\cF_{t-1}),$$
where %$\cF_{t-1}=\sigma(\bX_1,Z_1,\bS_1,Y_1,\dots,\bX_{t-1},Z_{t-1},\bS_{t-1},Y_{t-1})$ 
$\cF_{t-1}=\sigma(\{\bX_j,  Z_j, Y_j, \bS_j\}_{j=1}^{t-1})$ 
is the $\sigma$-field generated by the history up to $t-1$. 

% The adaptively collected data after $T$ rounds consists of the set $\{(\bX_t,Z_t,\bS_t,Y_t)\}_{t=1}^{T}$, whose distribution is given by
% $
% (\bX_t,Z_t,\bS_t,Y_t)\sim p(\bx)p(z|\bx,\cF_{t-1})p(\bs,y|z,\bx),
% $ 

%In adaptive experiment setting, $Y_t\in\eR$ may not be immediately observable. For example, in educational interventions \citep{athey2025combining}, the primary outcome of interest,  high school graduation, is observed with long lags; in job training programs \citep{athey2025surrogate}, the labor market outcome is observed with a long delay. However, experimenter may observe surrogate outcomes—proxy outcomes that provide earlier and immediate signals of the primary outcomes. We denote the pair of counterfactual surrogate outcomes observed under the control and the treatment arm  as $\bS_t=(S_t(0),S_t(1))^{\T}\in\cS\subset\eR^2$. 

Our parameter of interest is the average treatment effect (ATE), defined as 
$
    \tau_0:=\eE\{Y_t(1)\}-\eE\{Y_t(0)\}.
$ 
Alternatively, we can also represent the ATE as 
\begin{align*}
\tau_0:=\eE_{(\bsX,\bsS)}\{\tau(\bX,\bS)\},
\end{align*}
where $\eE_{(\bsX,\bsS)}(\cdot)$ denotes the expectation over $(\bX,\bS)\sim(\cX,\cS)$, an i.i.d. copy of pairs $\{(\bX_t,\bS_t)\}_{t\ge1}$, and 
$\tau(\bx,\bs):=\mu_1(\bx,\bs)-\mu_0(\bx,\bs)=\eE\{Y_t(1)|\bX_t=\bx,\bS_t=\bs\}-\eE\{Y_t(0)|\bX_t=\bx,\bS_t=\bs\}$ denotes the conditional average treatment effect (CATE). The functions $\mu_0(\bX_t,\bS_t)$ and $\mu_1(\bX_t,\bS_t)$ respectively denote the conditional means of $Y_t(0)$ and $Y_t(1)$ given $(\bX_t,\bS_t)$.

In our setting, we view the surrogate outcome $S_t(z)$ as a supplement to the corresponding primary outcome $Y_t(z)$ for $z\in\{0,1\}$. We assume $\rho_0^2+\rho_1^2>0$, where
\begin{equation}
    \label{eq:surr_cov}
   \rho_0 := \corr\{Y_t(0),S_t(0)|\bX_t\}~\text{ and }~ \rho_1:=\corr\{Y_t(1),S_t(1)|\bX_t\},
\end{equation}
indicating that at least one surrogate is predictive of the primary outcome given the covariates $\bX_t$. %{\color{magenta}can we say that we need at least one of these correlations nonzero?} {\color{blue}Zihan: please see my revision below to check if it is appropriate.}
%{\color{blue}In particular, when at least one of $\rho_0$ or $\rho_1$ is nonzero, incorporating surrogates leads to an efficiency gain; see Section~\ref{ubsec:efficiency_gain} for detailed discussion.}
%Our view of the role of surrogates departs from the surrogacy condition in the literature \citep[e.g.,][]{prentice1989surrogate}, which requires the primary outcome to be conditionally independent of the treatment given the surrogates. 
It is important to note that we do not require $S_t(z)$'s are good proxies of potential outcomes $Y_t(z)$'s; rather, the nonzero conditional correlation is sufficient for our framework. Note also that we consider a pair of counterfactual surrogate outcomes, where the surrogates can be either directly observed or synthetically generated, and are therefore assumed to be always observable regardless of the assigned treatment. In the two examples discussed in the Introduction, such conditions can be satisfied as long as surrogates are linearly predictive of primary outcomes after accounting for $\mathbf X_t$.  %The synthetic case study in Section \ref{sec:case} provides an example on how to construct $S_t(z)$'s satisfying \eqref{eq:surr_cov} via data splitting. 

 %We provide two examples below where surrogates are avaiable. 

\subsection{The oracle problem}
\label{subsec:oracle}
We aim at finding an adaptive treatment allocation strategy $\pi_t$ that leads to an ATE estimator with optimal variance, achieving the semiparametric efficiency bound. % for a given time horizon $T$.  
For any fixed treatment allocation $\pi$ that does not vary with $t$, according to Theorem 1 of \cite{hahn1998role},  the asymptotic variance of any regular estimator for the ATE based on the resulting collected data is lower bounded by
\begin{equation}
    \label{eq:varsigma_pi}
    \ell_1(\pi):=\frac{\var\{Y_t(1)|\bX_t,S_t(1)\}}{\pi}+\frac{\var\{Y_t(0)|\bX_t,S_t(0)\}}{1-\pi}+\eE_{(\bsX,\bsS)}\{\tau(\bX,\bS)-\tau_0\}^2,
\end{equation}
which corresponds to the semiparametric efficiency bound. % \citep{chamberlain1986asymptotic}. 
This naturally motivates us to find $\pi^*$ that minimizes the design objective:
%\begin{align*}
$\argmin_{\pi\in(0,1)}\ell_1(\pi)$. 
%\end{align*}
Intuitively, the objective function value $\ell_1(\pi^*)$  acts as an optimal benchmark to evaluate the effectiveness of different treatment allocations.
The minimization problem has a closed-form solution:
\begin{equation}
    \label{eq:ney_all}
    \begin{aligned}
        \pi^*%:=\argmin_{\pi\in(0,1)}\ell_1(\pi) %=\frac{\sqrt{\var\{Y_t(1)|\bX_t,S_t(1)\}}}{\sqrt{\var\{Y_t(1)|\bX_t,S_t(1)\}}+\sqrt{\var\{Y_t(0)|\bX_t,S_t(0)\}}}\\
        =\frac{\sqrt{1-\rho_1^2}\sqrt{\var\{Y_t(1)|\bX_t\}}}{\sqrt{1-\rho_1^2}\sqrt{\var\{Y_t(1)|\bX_t\}}+\sqrt{1-\rho_0^2}\sqrt{\var\{Y_t(0)|\bX_t\}}},
    \end{aligned}
\end{equation}
where $\rho_1$ and $\rho_0$ are defined in \eqref{eq:surr_cov}. The oracle treatment allocation $\pi^*$ in \eqref{eq:ney_all} is similar to the classical Neyman allocation in the literature \citep[e.g.,][]{cai2024performance,zhao2025adaptive}; however, in our setting, it depends on the conditional variances of primary outcomes given covariate and the effectiveness of surrogates as measured by $\rho_1$ and $\rho_0$.

% \subsection{Notation}\label{subsec:notation}

\subsection{A warm-up of our method under a linear model}
\label{subsec:linear}

In this section, we present our framework under the simplified setting of  linear models for clear elucidation. We start by introducing additional necessary notations.

\noindent\textit{Notation}. $\eR^d$ and $\eR_{\ge0}$ denote the $d$-dimensional real space and the set of non-negative real numbers, respectively. For a vector $\bx=(x_1,\dots,x_d)$, denote its Euclidean norm and $\ell_{\infty}$-norm by $\|\bx\|=(\sum_{i=1}^{d}x_i^2)^{1/2}$ and $\|\bx\|_{\infty}=\max_{1\le i\le d}|x_i|$, respectively. $I(\cdot)$ denotes the indicator function. For a positive integer $n,$ write $[n]=\{1,\dots,n\}$. For $x,y \in {\mathbb R},$ we use $x \wedge y = \min(x,y)$, and $x\vee y=\max(x,y)$. For two positive sequences $\{a_n\}$ and $\{b_n\}$, we write: (i) $a_n=O(b_n)$ or $b_n=\Omega(a_n)$ if there exists a positive constant $c$ such that $a_n/b_n\le c$; (ii) $a_n=o(b_n)$ if $a_n/b_n\to0$; (iii) $a_n\asymp b_n$ or $a_n=\Theta(b_n)$ if and only if $a_n=O(b_n)$ and $a_n=\Omega(b_n)$ hold simultaneously; and (iv) $a_n=\widetilde{O}(b_n)$ if $a_n=O(b_n\log^k b_n)$ for some $k\ge 0.$

To start, we assume the following joint linear models:
\begin{equation}
    \label{eq:model_linear}
    \begin{pmatrix}
    Y_t(0) \\
    S_t(0)
    \end{pmatrix}
    \Big| \bX_t \sim 
    \begin{pmatrix}
    \balpha_{Y,0}^{\T}\bX_t \\
    \balpha_{S,0}^{\T}\bX_t
    \end{pmatrix}
    +
    \begin{pmatrix}
    \varepsilon_{Y,t}(0) \\
    \varepsilon_{S,t}(0)
    \end{pmatrix},
    \quad
    \begin{pmatrix}
    Y_t(1) \\
    S_t(1)
    \end{pmatrix}
    \Big| \bX_t \sim 
    \begin{pmatrix}
    \balpha_{Y,1}^{\T}\bX_t \\
    \balpha_{S,1}^{\T}\bX_t
    \end{pmatrix}
    +
    \begin{pmatrix}
    \varepsilon_{Y,t}(1) \\
    \varepsilon_{S,t}(1)
    \end{pmatrix},
\end{equation}
where $\balpha_{Y,z},\balpha_{S,z}\in\eR^d$ are unknown coefficients. For $z\in\{0,1\}$, we assume  
\(\eE\{\varepsilon_{Y,t}(z)|\bX_t\} = \eE\{\varepsilon_{S,t}(z)|\bX_t\} = 0\), 
    \(\var\{\varepsilon_{Y,t}(z)|\bX_t\} = \sigma_{Y,z}^2 > 0\) and  
   \( \var\{\varepsilon_{S,t}(z)|\bX_t\} = \sigma_{S,z}^2 > 0.\)
 %We consider a single surrogate outcome with potential outcomes $S_t(0)$ (control) and $S_t(1)$ (treatment). In practice, multiple surrogates \citep[e.g.,][]{wang2020model} may be available, and the proposed methodology and theoretical results can be naturally extended. 

To characterize the relationships between the primary outcomes and the surrogates, we further assume that $\eE\{\varepsilon_{S,t}(0)\varepsilon_{S,t}(1)|\bX_t\}=0$ and 
\begin{align}\label{eq:model_varep}
\varepsilon_{Y,t}(0)|\varepsilon_{S,t}(0)\sim\gamma_0\varepsilon_{S,t}(0)+\varepsilon_t(0),\quad
 \varepsilon_{Y,t}(1)|\varepsilon_{S,t}(1)\sim \gamma_1\varepsilon_{S,t}(1)+\varepsilon_t(1),
\end{align}
where $\gamma_z$ is the linear coefficient, $\eE\{\varepsilon_t(z)|\varepsilon_{S,t}(z)\}=0$, and $\var\{\varepsilon_t(z)|\varepsilon_{S,t}(z)\}=\sigma_z^2>0$ is the conditional variance for  $z\in\{0,1\}$. These model assumptions require that, conditional on the covariate, the unexplained variation in the primary outcomes can be partially captured by that of the surrogates through a linear dependence structure, with remaining exogenous noises \citep[Cf.,][]{mikusheva2025linear}.

We note that models similar to \eqref{eq:model_linear} have been studied in \cite{mccaw2023leveraging,mccaw2024synthetic} in a non-causal offline setting. We provide a comparison %between our model setup and 
%\cite{mccaw2023leveraging,mccaw2024synthetic} 
in Remark \ref{rmk:biv_gau} below. %{\color{red}can you check the remark below? I remember they considered non-causal setting} {\color{blue}Zihan: please check my revisions below.}

\begin{remark}%[Comparison with \cite{mccaw2023leveraging,mccaw2024synthetic}]
    \label{rmk:biv_gau}
    While our use of surrogate in Model~\eqref{eq:model_linear} is similar to \cite{mccaw2023leveraging,mccaw2024synthetic}, there are three major differences: (i) \cite{mccaw2023leveraging} propose a surrogate phenotype regression in a non-causal offline setting, where the data have already been collected, and then \cite{mccaw2024synthetic} apply this framework to a population biobank study. By contrast, our Model~\eqref{eq:model_linear} is a counterfactual model under a causal framework, and the estimation procedure follows an online setting. (ii)  \cite{mccaw2023leveraging,mccaw2024synthetic} assume their  outcome variable is missing at random, whereas in our potential outcome framework, the missingness of our potential outcomes $Y_t(0)$ and $Y_t(1)$ depends on the assigned treatment $Z_t$ so may not be random. %This is a practical assumption because, for example, in clinical studies, patients receiving the active treatment may be more likely to return for follow-up visits, whereas those in the control group may delay their visits, leading to the missingness of primary outcomes. 
    (iii) \cite{mccaw2023leveraging,mccaw2024synthetic} assume the joint distribution of $\bveps_t(z):=(\varepsilon_{Y,t}(z),\varepsilon_{S,t}(z))^{\T}$ follows a bivariate Gaussian for each $z\in\{0,1\}$, whereas our Model \eqref{eq:model_linear} relaxes this assumption. %If we are willing to assume $\bveps_t(z)|\bX_t\sim\cN(\bzero,\bSigma_z)$ with $\bSigma_z\in \mathbb R^{2\times 2}$ a constant covariance matrix, then our assumption \eqref{eq:model_varep} follows automatically.  %Specifically, if we assume $\bveps_t(z)|\bX_t\sim\cN(\bzero,\bSigma_z)$, where $\bSigma_z=
%     \left(\begin{smallmatrix}
%     \Sigma_{YY,z} & \Sigma_{YS,z} \\
%     \Sigma_{SY,z} & \Sigma_{SS,z}
%     \end{smallmatrix}\right),$
%     then it follows that
%     $
%     \varepsilon_{Y,t}(z)|\varepsilon_{S,t}(z)\sim\cN(\gamma_z\varepsilon_{S,t}(z),\sigma_z^2),
%     $
%     where $\gamma_z=\rho_z\sigma_{Y,z}/\sigma_{S,z}$, 
%     $
%         \sigma_z^2=(1-\rho_z^2)\sigma_{Y,z}^2, 
%     $
% and $\rho_z=\Sigma_{YS,z}/(\sigma_{Y,z}\sigma_{S,z})$  denotes the conditional correlation, also known as the partial correlation under a linear model, between $\varepsilon_{Y,t}(z)$ and $\varepsilon_{S,t}(z)$ given $\bX_t$, which together correspond to the specification we adopt for the formulation of $\varepsilon_{Y,t}(z)|\varepsilon_{S,t}(z).$ %{\color{red} Are we trying to say our current model assumption yields the same form of  $\varepsilon_{Y,t}(z)|\varepsilon_{S,t}(z)$ as if we are assuming bivariate normal, but without explicitly assuming bivariate normal? }
%{\color{green}Zihan: exactly correct.}
\end{remark}

Based on the model assumptions in \eqref{eq:model_varep}, Model~\eqref{eq:model_linear} can be transformed into
\begin{equation}
    \label{eq:model_linear_cond_XS}
    \begin{aligned}
        Y_t(0)|X_t,S_t(0)\sim&~(\balpha_{Y,0}-\gamma_0\balpha_{S,0})^{\T}\bX_t+\gamma_0S_t(0)+\varepsilon_t(0)=:\mu_0(\bX_t,\bS_t)+\varepsilon_t(0),\\
        Y_t(1)|X_t,S_t(1)\sim&~(\balpha_{Y,1}-\gamma_1\balpha_{S,1})^{\T}\bX_t+\gamma_1S_t(1)+\varepsilon_t(1)=:\mu_1(\bX_t,\bS_t)+\varepsilon_t(1).
    \end{aligned}
\end{equation}
%where $\mu_0(\bX_t,\bS_t)$ and $\mu_1(\bX_t,\bS_t)$ respectively denote the conditional mean functions of $Y_t(0)$ and $Y_t(1)$ given $(\bX_t,\bS_t)$. 
We will be working with the transformed Model \eqref{eq:model_linear_cond_XS} to illustrate our proposed method. Under this model, for a fixed $t$ with the observed data $\{(\bX_j,Z_j,\bS_j,Y_j)\}_{j=1}^{t}$ , we adopt the following procedure to estimate Model~\eqref{eq:model_linear_cond_XS}. For each $z\in\{0,1\}$, we obtain the least squares estimates of the parameters in Model~\eqref{eq:model_linear_cond_XS}, denoted as $\hat{\balpha}_{Y,z,t},\hat{\balpha}_{S,z,t}$ and $\hat{\gamma}_{z,t}$, with their explicit expressions provided in Section~\ref{supsubsec:bound_linear} of the supplementary material.
Then, the conditional mean functions and variances can be estimated as:
\begin{align}
    &\hat{\mu}_{z,t}(\bx,\bs)=(\hat{\balpha}_{Y,z,t}-\hat{\gamma}_{z,t}\hat{\balpha}_{S,z,t})^{\T}\bx+\hat{\gamma}_{z,t}s(z),\label{eq:mu_hat_linear}\\
    &\hat{\sigma}_{z,t}^2=\frac{1}{\sum_{j=1}^{t}I(Z_j=z)}\sum_{j=1}^{t}\{Y_j-\hat{\mu}_{z,t}(\bX_j,\bS_j)\}^2I(Z_j=z).\label{eq:sigma_hat_linear}
\end{align}

% under Model \eqref{eq:model_linear_cond_XS}. 
Our proposed SLOACI algorithm under Model \eqref{eq:model_linear_cond_XS} works as follows.

\begin{itemize}
    \item[(i)] {\bf Initialization phase}: For $t=1,\dots,T_0$, %with $\hat{\mu}_{0,t}(\bx,\bs)=0$, and $\hat{\mu}_{1,t}(\bx,\bs)=0$, 
    assign the treatment $Z_t$ to 0 for odd $t$ and to 1 for even $t$. At stage $T_0$, obtain the conditional mean function and variance estimators $\hat{\mu}_{0,T_0}(\bx,\bs),$ $\hat{\mu}_{1,T_0}(\bx,\bs),$ $\hat{\sigma}_{0,T_0}^2$, and $\hat{\sigma}_{1,T_0}^2$ according to \eqref{eq:mu_hat_linear} and \eqref{eq:sigma_hat_linear}. %It is seen that for this phase $\pi_t$ is fixed at $1/2$.
    
    % \item[(ii)] {\bf Concentration phase}: For $t=T_0+1,\dots,T,$ update the adopted allocation $\pi_t={\rm CLIP}(\widetilde{\pi}_t,\zeta_t,1-\zeta_t)$, assign the treatment $Z_t$ as 1 with probability $\pi_t$ and 0 with probability $1-\pi_t$, and obtain the conditional mean function and variance estimators $\hat{\mu}_{0,t}(\bx,\bs),$ $\hat{\mu}_{1,t}(\bx,\bs),$ $\hat{\sigma}_{0,t}^2$, and $\hat{\sigma}_{1,t}^2$ according to~\eqref{eq:mu_hat_linear} and \eqref{eq:sigma_hat_linear}.

        \item[(ii)] {\bf Concentration phase}: For $t=T_0+1,\dots,T,$ update the adaptive treatment allocation as $\pi_t = \widetilde\pi_t$ with  $\widetilde\pi_t$ defined in \eqref{eq:init_pi}, assign the treatment $Z_t$ as 1 with probability $\pi_t$ and 0 with probability $1-\pi_t$, and obtain the conditional mean function and variance estimators $\hat{\mu}_{0,t}(\bx,\bs),$ $\hat{\mu}_{1,t}(\bx,\bs),$ $\hat{\sigma}_{0,t}^2$, and $\hat{\sigma}_{1,t}^2$ according to~\eqref{eq:mu_hat_linear} and \eqref{eq:sigma_hat_linear}.
\end{itemize}

%{\color{red} I'll reorganize the warm up example session by moving the adaptive design part to the next section.}

%\noindent \textbf{Initialization phase.} 
%For the initialization phase, we start with $T_0$ stages of experiments to obtain initial estimates of the conditional mean functions and variances under each treatment arm. 
The initialization phase uses a fixed treatment assignment to accumulate a non-trivial number of observations under each treatment arm so that a treatment allocation can be formed at Stage $T_0+1$ to start the concentration phase.
Motivated by the oracle treatment allocation in \eqref{eq:ney_all}, this allocation is computed as $\widetilde{\pi}_{T_0+1}$ with
\begin{equation}
    \label{eq:init_pi}
    \widetilde{\pi}_t=\hat{\sigma}_{1,t-1}/(\hat{\sigma}_{1,t-1}+\hat{\sigma}_{0,t-1}). %,
\end{equation}
%where $\hat{\sigma}_{0,t-1}$ and $\hat{\sigma}_{1,t-1}$ are calculated as \eqref{eq:sigma_hat_linear}. 
Note that in this phase, we only calculate \eqref{eq:mu_hat_linear}--\eqref{eq:init_pi} once at the end.

%\noindent \textbf{Concentration phase.} 
With the initial treatment allocation $\widetilde{\pi}_{T_0+1}$ at the beginning of the concentration phase, we calculate \eqref{eq:mu_hat_linear}--\eqref{eq:init_pi} for each $t=T_0+1,\dots,T$ to update the conditional mean function and variance estimators, as well as the treatment allocation. 
Since $\widetilde{\pi}_t$ in \eqref{eq:init_pi} can get close to 0 and 1 at early stages in this phase due to small sample size, instead of using $\widetilde{\pi}_t$ directly, we apply a clipping procedure to $\widetilde{\pi}_t$ in our formal construction of adaptive treatment allocation $\pi_t$, as detailed in \eqref{eq:clip} in Section~\ref{subsec:procedure}. However, we ignore this clipping procedure here to simplify the presentation. 
%{\color{red} Can we say the need for a concentration is because ``limited sample size"? Because if sample size is large, the random fluctuation at early stages is not a big problem.} 
%{\color{green}Zihan: The term "concentration phase" was used in \cite{neopane2025logarithmic,neopane2025optimistic}. I think "concentration" here means $\pi_t\to\pi^*.$ As a comparison, in the initialization phase, $\pi_t=0.5.$}

Although we assume both $S_t(0)$ and $S_t(1)$ observable at each Stage $t$, our proposal remains applicable when only the surrogate associated with the assigned treatment is observable. In such cases, $\hat{\balpha}_{S,z,t}$ is obtained using only the observed surrogate. The corresponding asymptotic theory still holds, with minor differences in finite-sample performance. Similar arguments apply to the nonlinear setting in Section~\ref{subsec:procedure}. %{\color{magenta}can we move this to the linear part which is easier to understand?} {\color{blue}Zihan: This suggestion is helpful.}
%We further remark that %our proposed design carries out in a fully-adaptive fashion. In real-world applications, however, fully adaptive designs might be infeasible due to implementation costs. Our 
Additionally, our design can be extended to a multi-stage adaptive setting, where allocations are updated only a small number of times (e.g. two experimental stages). Further details are provided in Section~\ref{supsubsec:batch_adaptive} of the supplementary material. Under a multi-stage setting, the theoretical results established in Section~\ref{sec:theory} can be extended naturally.

After Stage $T$, our final estimate of the ATE utilizing all the data collected is:
\begin{equation}
    \label{eq:proposed_non}
    \begin{aligned}
        \hat{\tau}_T = \frac{1}{T}\sum_{t=1}^T\phi_t,%=\frac{1}{T}\sum_{t=1}^{T}&\bigg[\hat{\mu}_{1,t-1}(\bX_t,\bS_t)-\hat{\mu}_{0,t-1}(\bX_t,\bS_t)\\
       % &+\frac{I(Z_t=1)\{Y_t-\hat{\mu}_{1,t-1}(\bX_t,\bS_t)\}}{\pi_t}-\frac{I(Z_t=0)\{Y_t-\hat{\mu}_{0,t-1}(\bX_t,\bS_t)\}}{1-\pi_t}\bigg].
    \end{aligned}
\end{equation}
where $\phi_t$ is the single-stage estimation for the ATE at Stage $t$ and is defined as
\begin{equation}\label{eq:single-stage-ATE}
 \begin{aligned}
\phi_t: = & \hat{\mu}_{1,t-1}(\bX_t,\bS_t)-\hat{\mu}_{0,t-1}(\bX_t,\bS_t)
 +\pi_t^{-1}I(Z_t=1)\{Y_t-\hat{\mu}_{1,t-1}(\bX_t,\bS_t)\}\\ & -(1-\pi_t)^{-1}I(Z_t=0)\{Y_t-\hat{\mu}_{0,t-1}(\bX_t,\bS_t)\},
\end{aligned}
\end{equation}
and $\pi_t$ is the adaptive treatment allocation at Stage $t$. Based on the estimator $\hat{\tau}_T$ and its sampling properties, we can further conduct the statistical inference for ATE; the details will be made clear in Section~\ref{sec:testing}.

\subsection{SLOACI under nonparametric framework}%{Surrogate-leveraged adaptive design under a nonparametric nonlinear model}
\label{subsec:procedure}

We now formalize our  framework in a general nonlinear setting. Concretely, we generalize Model~\eqref{eq:model_linear} to the following nonparametric nonlinear model:
\begin{equation}
\label{eq:model}
    \begin{pmatrix}
    Y_t(0) \\
    S_t(0)
    \end{pmatrix}
    \Big| \bX_t \sim 
    \begin{pmatrix}
    m_{Y,0}(\bX_t) \\
    m_{S,0}(\bX_t)
    \end{pmatrix}
    +
    \begin{pmatrix}
    \varepsilon_{Y,t}(0) \\
    \varepsilon_{S,t}(0)
    \end{pmatrix},
    \quad
    \begin{pmatrix}
    Y_t(1) \\
    S_t(1)
    \end{pmatrix}
    \Big| \bX_t \sim 
    \begin{pmatrix}
    m_{Y,1}(\bX_t) \\
    m_{S,1}(\bX_t)
    \end{pmatrix}
    +
    \begin{pmatrix}
    \varepsilon_{Y,t}(1) \\
    \varepsilon_{S,t}(1)
    \end{pmatrix},
\end{equation}
where $m_{Y,0}(\cdot),m_{S,0}(\cdot),m_{Y,1}(\cdot),m_{S,1}(\cdot)\in\eR^d\to\eR$ are unknown nonlinear functions, and the remaining setups are the same as those in Model \eqref{eq:model_linear}. Similar to Section~\ref{subsec:linear}, we characterize the dependence of $\varepsilon_{Y,t}(0)$ on $\varepsilon_{S,t}(0),$ and of $\varepsilon_{Y,t}(1)$ on $\varepsilon_{S,t}(1)$, by \eqref{eq:model_varep}.
% \begin{equation}
%     \label{eq:model_varep}
%     \varepsilon_{Y,t}(0)|\varepsilon_{S,t}(0)\sim \gamma_0\varepsilon_{S,t}(0)+\varepsilon_t(0),\quad \varepsilon_{Y,t}(1)|\varepsilon_{S,t}(1)\sim \gamma_1\varepsilon_{S,t}(1)+\varepsilon_t(1).
% \end{equation}
Note that we do not require independence between $\bX_t$ and $(\varepsilon_{Y,t}(z),\varepsilon_{S,t}(z))$.  We also assume that the conditional variances of $\varepsilon_t(0)$ and $\varepsilon_t(1)$ are constant and independent of $\big(\bX_t,S_t(0)\big)$ and $\big(\bX_t,S_t(1)\big)$, respectively. This assumption requires that the covariate and surrogates account for all systematic individual heteroskedasticity, leaving the residual variation constant across individuals \citep{mccaw2023leveraging,mccaw2024synthetic}.
%{\color{green}Zihan: Please check if my revision is appropriate.}
More generally, one may allow for heteroskedasticity by modeling the conditional variance functions as $\var\{\varepsilon_t(0)|\bX_t,S_t(0)\}=\sigma_0^2\big(\bX_t,S_t(0)\big)$ and $\var\{\varepsilon_t(1)|\bX_t,S_t(1)\}=\sigma_1^2\big(\bX_t,S_t(1)\big)$. Then, the proposed design can be extended to a covariate-adjusted surrogate-leveraged RAR design to address the heteroskedasticity. See Section~\ref{supsubsec:heter} of the supplementary material for discussion.

%For each $z\in\{0,1\},$ given that $\eE\{\varepsilon_t(z)|\varepsilon_{S,t}(z)\}=0$, we have $\sigma_z^2=(1-\rho_z^2)\sigma_{Y,z}^2$ and $\gamma_z=\rho_z\sigma_{Y,z}/\sigma_{S,z}.$ 
%where $\rho_z:=\corr\{\varepsilon_{Y,t}(z),\varepsilon_{S,t}(z)|\bX_t\}$ 
%is the conditional correlation between $\varepsilon_{Y,t}(z)$ and $\varepsilon_{S,t}(z)$ given $\bX_t$. 
Combining \eqref{eq:model} and \eqref{eq:model_varep}, we have
\begin{equation}
    \label{eq:model_cond_XS}
    \begin{aligned}
        Y_t(0)|X_t,S_t(0)\sim&~m_{Y,0}(\bX_t)-\gamma_0m_{S,0}(\bX_t)+\gamma_0S_t(0)+\varepsilon_t(0)=:\mu_0(\bX_t,\bS_t)+\varepsilon_t(0),\\
        Y_t(1)|X_t,S_t(1)\sim&~m_{Y,1}(\bX_t)-\gamma_1m_{S,1}(\bX_t)+\gamma_1S_t(1)+\varepsilon_t(1)=:\mu_1(\bX_t,\bS_t)+\varepsilon_t(1),
    \end{aligned}
\end{equation}
where 
\( \mu_0(\bX_t,\bS_t)=m_{Y,0}(\bX_t)-\gamma_0m_{S,0}(\bX_t)+\gamma_0S_t(0)\), and
\(\mu_1(\bX_t,\bS_t)=m_{Y,1}(\bX_t)-\gamma_1m_{S,1}(\bX_t)+\gamma_1S_t(1).\) 
Let $m_0(\bx):=m_{Y,0}(\bx)-\gamma_0m_{S,0}(\bx)$ and $m_1(\bx):=m_{Y,1}(\bx)-\gamma_1m_{S,1}(\bx)$. Model~\eqref{eq:model_cond_XS} can be rewritten as:
\begin{align}\label{eq:model_cond_XS_simplified}
Y_t(0)=m_0(\bX_t)+\gamma_0S_t(0)+\varepsilon_t(0),\quad Y_t(1)=m_1(\bX_t)+\gamma_1S_t(1)+\varepsilon_t(1),
\end{align}
which are two partially linear models with $m_0(\bX_t)$ and $m_1(\bX_t)$ the nonparametric parts, $\gamma_0S_t(0)$ and $\gamma_1S_t(1)$ the linear parts, and $\varepsilon_t(0)$ and $\varepsilon_t(1)$ error terms with conditional variances $\var\{\varepsilon_t(0)|\bX_t,S_t(0)\}=\sigma_0^2$ and $\var\{\varepsilon_t(1)|\bX_t,S_t(1)\}=\sigma_1^2$, in respective models. To estimate the partially linear models, we apply the classical residual-based method proposed by \citet{robinson1988root}. Other estimation approaches in the literature include profile least squares estimation \citep{speckman1988kernel}, series estimation \citep{newey1997convergence}, and double/debiased machine learning \citep{chernozhukov2018double}, among others. See Section~\ref{supsubsec:profile} of the supplementary material for detailed extensions. In what follows, we summarize our proposed design under Model \eqref{eq:model_cond_XS_simplified} in Algorithm \ref{alg:main}.

\begin{algorithm}[ht]
\spacingset{1.2}
\caption{SLOACI for nonparametric nonlinear model}
\begin{algorithmic}[1]
\label{alg:main}
\STATE \textbf{Input}: Clipping rate $\eta$, the length of initialization phase $T_0$, the total time horizon $T$, and the kernel function $K_h(\cdot)$.

\STATE \textbf{Set}: $\pi_0 \leftarrow 1/2$, $\hat{\mu}_{0,0}(\bx,\bs)\leftarrow 0$, $\hat{\mu}_{1,0}(\bx,\bs)\leftarrow 0$, $\hat{\sigma}_{0,0}^2\leftarrow 0$, and $\hat{\sigma}_{1,0}^2\leftarrow 0$.

\STATE \textbf{Initialization}: For $t=1,\dots,T_0,$ set $\pi_t \leftarrow 1/2$, $\hat{\mu}_{0,t}(\bx,\bs)\leftarrow 0$, and $\hat{\mu}_{1,t}(\bx,\bs)\leftarrow 0$, assign the treatment $Z_t$ to 0 for odd $t$ and to 1 for even $t$, observe the covariates $\bX_t$, the surrogates $\bS_t=(S_t(0),S_t(1))^{\T}$ and the outcome $Y_t=Y_t(1)Z_t+Y_t(0)(1-Z_t)$, and obtain the conditional mean function and variance estimators $\hat{\mu}_{0,T_0}(\bx,\bs),$ $\hat{\mu}_{1,T_0}(\bx,\bs),$ $\hat{\sigma}_{0,T_0}^2$, and $\hat{\sigma}_{1,T_0}^2$ according to \eqref{eq:m_S_hat}--\eqref{eq:sigma_hat}.

\FOR{$t=T_0+1,\dots,T,$}
    \STATE Calculate the initial allocation as $\widetilde{\pi}_t=\hat{\sigma}_{1,t-1}/(\hat{\sigma}_{1,t-1}+\hat{\sigma}_{0,t-1}).$
    \STATE Set the clipping threshold as $\zeta_t=(1/2)\cdot t^{-\eta}$. 
    \STATE Obtain the adaptive treatment allocation by clipping as $\pi_t={\rm CLIP}(\widetilde{\pi}_t,\zeta_t,1-\zeta_t)$.
    \STATE Assign the treatment $Z_t$ as 1 with probability $\pi_t$ and 0 with probability $1 - \pi_t$.
    \STATE Observe the covariates $\bX_t$, the surrogates $\bS_t=(S_t(0),S_t(1))^{\T}$, and the outcome $Y_t=Y_t(1)Z_t+Y_t(0)(1-Z_t)$.
    \STATE Calculate the estimated $\hat{m}_{S,0,t}(\cdot),\hat{m}_{Y,0,t}(\cdot),\hat{\gamma}_{0,t},\hat{m}_{S,1,t}(\cdot),\hat{m}_{Y,1,t}(\cdot)$, and $\hat{\gamma}_{1,t}$ according to \eqref{eq:m_S_hat}--\eqref{eq:gamma_hat}.
    \STATE Obtain the conditional mean function and variance estimators $\hat{\mu}_{0,t}(\bx,\bs),$ $\hat{\mu}_{1,t}(\bx,\bs),$ $\hat{\sigma}_{0,t}^2$, and $\hat{\sigma}_{1,t}^2$ according to~\eqref{eq:mu_hat} and \eqref{eq:sigma_hat}.
\ENDFOR
\STATE \textbf{Estimate}: Form an estimate of average treatment effect $\hat\tau_T$ using \eqref{eq:proposed_non}. 
% {\small $$
% \begin{aligned}
%     \hat{\tau}_T=\frac{1}{T}\sum_{t=1}^{T}&\bigg[\hat{\mu}_{1,t-1}(\bX_t,\bS_t)-\hat{\mu}_{0,t-1}(\bX_t,\bS_t)\\
%     &+\frac{I(Z_t=1)\{Y_t-\hat{\mu}_{1,t-1}(\bX_t,\bS_t)\}}{\pi_t}-\frac{I(Z_t=0)\{Y_t-\hat{\mu}_{0,t-1}(\bX_t,\bS_t)\}}{1-\pi_t}\bigg].
% \end{aligned}
% $$}
\STATE \textbf{Output}: The adaptive AIPW estimator $\hat{\tau}_T$ under the proposed SLOACI design.
\end{algorithmic}
\end{algorithm}

Similar to Section~\ref{subsec:linear}, an initialization phase of length $T_0$ (see \textbf{lines 2 and 3} of Algorithm~\ref{alg:main}; the same applies hereafter) is conducted before the concentration phase. At each Stage $t$ in the concentration phase, given the observations $\{(\bX_j,Z_j,\bS_j,Y_j)\}_{j=1}^{t}$, we estimate the conditional mean functions for $z\in\{0,1\}$ using the Nadaraya-Watson estimator with form (\textbf{line 10}):
    \begin{equation}
        \label{eq:m_S_hat}
        \hat{m}_{S,z,t}(\bx)=\frac{\sum_{j=1}^{t}K_h(\bx-\bX_j)S_j(z)}{\sum_{j=1}^{t}K_h(\bx-\bX_j)},\quad \bx\in\eR^d,
    \end{equation}
    where $K_h(\bx)=K(\bx/h)/h$ is a multivariate kernel function with bandwidth $h>0$, and we define $\hat{m}_{S,z,t}(\bx)=0$ if $\sum_{j=1}^{t}K_h(\bx-\bX_j)=0.$ Similarly, the Nadaraya-Watson estimator of $m_{Y,z}(\cdot)$ is (\textbf{line 10}):
    \begin{equation}
        \label{eq:m_Y_hat}
        \hat{m}_{Y,z,t}(\bx)=\frac{\sum_{j=1}^{t}K_h(\bx-\bX_j)Y_jI(Z_j=z)}{\sum_{j=1}^{t}K_h(\bx-\bX_j)I(Z_j=z)},\quad\bx\in\eR^d,
    \end{equation}
where we define $\hat{m}_{Y,z,t}(\bx)=0$ if $\sum_{j=1}^{t}K_h(\bx-\bX_j)I(Z_j=z)=0.$
    Note that the Nadaraya-Watson method here can be replaced with any nonparametric regression method (e.g., spline regression, local linear regression, $k$-nearest neighbors), or supervised machine learning methods (e.g., random forests, deep neural networks). 
    In the empirical studies of the existing literature on adaptive experimental design, \cite{kato2025efficient} adopted the Nadaraya-Watson regression, and \cite{cook2024semiparametric} employed $k$-nearest neighbors and random forests, with the latter also utilized in \cite{oprescu2025efficient}.
    % Machine learning methods are often computationally intensive, as new estimators need to be retrained at each stage. In practice, when the conditional mean functions are sufficiently smooth and the dimensionality of covariates is low, nonparametric regression methods can offer adequate approximations with substantially reduced computational cost.

    The linear coefficient $\gamma_z$ can be estimated via least squares as (\textbf{line 10}):
    \begin{equation}
        \label{eq:gamma_hat}
        \hat{\gamma}_{z,t}=\frac{\sum_{j=1}^{t}\big\{Y_j-\hat{m}_{Y,z,t}(\bX_j)\big\}\cdot\big\{S_j(z)-\hat{m}_{S,z,t}(\bX_j)\big\}I(Z_j=z)}{\sum_{j=1}^{t}\big\{S_j(z)-\hat{m}_{S,z,t}(\bX_j)\big\}^2I(Z_j=z)}.
    \end{equation}
The conditional mean functions and variances can be estimated as (\textbf{line 11}):
    \begin{align}
        &\hat{\mu}_{z,t}(\bx,\bs)=\hat{m}_{Y,z,t}(\bx)-\hat{\gamma}_{z,t}\hat{m}_{S,z,t}(\bx)+s(z)\hat{\gamma}_{z,t},\label{eq:mu_hat}\\
        &\hat{\sigma}_{z,t}^2=\frac{1}{\sum_{j=1}^{t}I(Z_j=z)}\sum_{j=1}^{t}\{Y_j-\hat{\mu}_{z,t}(\bX_j,\bS_j)\}^2I(Z_j=z).\label{eq:sigma_hat}
    \end{align} 
The initial allocation at Stage $t+1$ is updated as $\widetilde{\pi}_{t+1}$ according to \eqref{eq:init_pi} (\textbf{line 5}).
% $$
% \widetilde{\pi}_{t+1}=\hat{\sigma}_{1,t}/(\hat{\sigma}_{1,t}+\hat{\sigma}_{0,t}),
% $$ 
The adaptive treatment allocation is then obtained by clipping, defined in \eqref{eq:clip} below, as $\pi_{t+1}={\rm CLIP}(\widetilde{\pi}_{t+1},\zeta_{t+1},1-\zeta_{t+1}).$
Finally, after Stage $T$, the proposed adaptive AIPW estimator based on a cumulative sample size of $T$ is defined as \eqref{eq:proposed_non} (\textbf{line 13}).
% {\small \begin{equation}
%     \label{eq:proposed_non}
%     \begin{aligned}
%         \hat{\tau}_T=\frac{1}{T}\sum_{t=1}^{T}&\bigg[\hat{\mu}_{1,t-1}(\bX_t,\bS_t)-\hat{\mu}_{0,t-1}(\bX_t,\bS_t)\\
%         &+\frac{I(Z_t=1)\{Y_t-\hat{\mu}_{1,t-1}(\bX_t,\bS_t)\}}{\pi_t}-\frac{I(Z_t=0)\{Y_t-\hat{\mu}_{0,t-1}(\bX_t,\bS_t)\}}{1-\pi_t}\bigg].
%     \end{aligned}
% \end{equation}
% }

We conclude this section by providing full details on clipping. %Under a finite experimental budget where the length of initialization phase is small, $\widetilde{\pi}_t$ can be highly sensitive to random fluctuations at early stages in concentration phase. To address this issue, 
To prevent $\widetilde{\pi}_t$ getting arbitrarily close to 0 or 1  at early stages in concentration phase,
we apply a clipping procedure \citep{dai2023clip,cook2024semiparametric,neopane2025logarithmic}, denoted as CLIP, 
to stabilize it. The resulting adaptive treatment allocation is defined as:
\begin{equation}
    \label{eq:clip}
    \pi_t={\rm CLIP}(\widetilde{\pi}_t,\zeta_t,1-\zeta_t):=\zeta_t\vee\{\widetilde{\pi}_t\wedge(1-\zeta_t)\},
\end{equation}
where $\zeta_t=(1/2)\cdot t^{-\eta}$ is a clipping threshold with clipping rate $\eta>0$. It projects $\widetilde{\pi}_t$ onto an interval that initially collapses to the treatment allocation 1/2 and then grows in width as $t$ increases, allowing for increased exploitation at later stages. The form of $\zeta_t$ can be chosen flexibly and does not need to follow the polynomial decay used in \eqref{eq:clip}. For example, one can adopt the exponential decay as in \cite{cook2024semiparametric}. %, given that $\zeta_t\to0$, which guarantees the interval eventually contains the unknown oracle treatment allocation $\pi^*\in(0,1)$. %{\color{magenta}move to nonparametric setting}

The clipping procedure is necessary for both practical and theoretical considerations. For practical implementation, at the early concentration stages when the oracle treatment allocation $\pi^*$ cannot be estimated accurately, it is reasonable to assign treatment and control nearly evenly to ensure sufficient samples in both groups for more accurate statistical estimations. From a theoretical analysis perspective, controlling the magnitude of $\pi_t^{-1}$ is critical because otherwise, only probabilistic bounds on $\pi_t^{-1}$ can be derived, which are inadequate to establish the consistency and asymptotic normality of our proposed estimator.
%{\color{red} the estimator here refers to the nonparametric estimator in the next section? leave to waverly to revise.}
%{\color{green}Zihan: I think you are right. Because we do not provide the asymptotic theory for linear model. So how should we revise this?} {\color{magenta} Yingying: we move the clipping discussion to the next section and make the linear case as clean as possible.} {\color{blue}Zihan: This suggestion is helpful.}
To this end, an enforced upper bound on $\pi_t^{-1}$, such as $\pi_t^{-1}\le2t^{\eta}$, is needed for our technical analysis. 
In summary, along with the existing literature \citep[e.g.,][]{cook2024semiparametric,li2024optimal,oprescu2025efficient}, we employ both CLIP and initialization phase together to mitigate the efficiency loss caused by severely imbalanced allocations at early stages.
%although both CLIP and initialization phase help mitigate the efficiency loss caused by severely imbalanced allocations in early rounds, they are often used together in the literature \citep[e.g.,][]{cook2024semiparametric,li2024optimal,oprescu2025efficient}. 

% {\color{green}Zihan: yes, we also use together. Please check if my revision is appropriate.}
% In fact, as shown in Theorems~\ref{thm:asy_nor} and \ref{thm:regret} for the nonparametric model, CLIP and initialization phase respectively guarantee the asymptotic and non-asymptotic properties of our method.
% % \end{remark}

\section{Theoretical investigation}\label{sec:theory}

We now investigate the efficiency gain and theoretical properties of the proposed design and ATE estimator, with a particular focus on the non-asymptotic regime of $T<\infty$ that characterizes sampling properties under finite budgets.
%We now investigate the theoretical properties of the proposed design through two complementary lenses: the asymptotic regime of $T \to \infty$, and the non-asymptotic regime of $T < \infty$ that characterizes sampling properties under finite budgets. 
%In Section~\ref{subsec:theory}, we present the asymptotic analysis, and in Section~\ref{subsec:non_theory}, we develop a complementary non-asymptotic analysis. % by deriving a finite-sample regret bound that provides theoretical guarantees and elucidates the design’s performance under practical budget constraints.

\subsection{Efficiency gain from the surrogates}\label{ubsec:efficiency_gain}

%We will investigate the semiparametric efficiency bound of our proposed method (Proposition \ref{propos:efficiency_bound}) and the efficiency gain of incorporating surrogates in our design framework (Remark \ref{rmk:efficiency gain of incorporating surrogates}). 
To facilitate the analysis in this section, we first introduce an ATE estimator under the oracle treatment allocation $\pi^*:=\sigma_1/(\sigma_0+\sigma_1)$ as defined in \eqref{eq:varsigma_pi}, and then discuss how incorporating surrogates contributes to the efficiency gain. 

The oracle ATE estimator leverages the true conditional mean functions and the oracle treatment allocation, and is defined as:
\begin{equation}
    \label{eq:opt_est}
    \begin{aligned}
        \hat{\tau}_{T}^{*}= \frac{1}{T}\sum_{t=1}^{T}\phi_t^*,  %\frac{1}{T}\sum_{t=1}^{T}&\bigg[\mu_1(\bX_t,\bS_t)-\mu_0(\bX_t,\bS_t)\\
       % &+\frac{I(Z_t=1)\{Y_t-\mu_1(\bX_t,\bS_t)\}}{\pi^*}-\frac{I(Z_t=0)\{Y_t-\mu_0(\bX_t,\bS_t)\}}{1-\pi^*}\bigg].
    \end{aligned}
\end{equation}
where $\phi_t^*$ is defined analogously to $\phi_t$ in \eqref{eq:single-stage-ATE} with sample estimates $\hat{\mu}_{z,t-1}(\bX_t,\bS_t)$ replaced with their population targets $\mu_z(\bX_t,\bS_t)$ for $z\in\{0,1\}$, and the adaptive treatment allocation $\pi_t$ replaced with the oracle treatment allocation $\pi^*$, respectively.

\begin{proposition}[Semiparametric efficiency bound with surrogates]\label{propos:efficiency_bound}
Under Condition \ref{cond:sutva} of the supplementary material, the oracle estimator $\hat{\tau}_{T}^{*}$ in \eqref{eq:opt_est} has a variance $V_T^*$ satisfying:
    \begin{equation}
    \label{eq:ney_var}
    T\cdot V_T^*=\varsigma_*^2:=\ell_1(\pi^*)=(\sigma_0+\sigma_1)^2+\eE_{(\bsX,\bsS)}\{\tau(\bX,\bS)-\tau_0\}^2,
\end{equation}
where $\sigma_0^2=(1-\rho_0^2)\sigma_{Y,0}^2,\sigma_1^2=(1-\rho_1^2)\sigma_{Y,1}^2$, and $\ell_1(\cdot)$ is defined in \eqref{eq:varsigma_pi}.
\end{proposition}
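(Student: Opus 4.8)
The plan is to split the argument into two essentially independent pieces: an algebraic identity tying the residual variances $\sigma_z^2$ to $(1-\rho_z^2)\sigma_{Y,z}^2$, and an exact finite-sample variance computation for the oracle AIPW functional $\phi_t^*$. I would begin with the identity. From \eqref{eq:model_varep}, writing $\varepsilon_{Y,t}(z)=\gamma_z\varepsilon_{S,t}(z)+\varepsilon_t(z)$ with $\mathbb{E}\{\varepsilon_t(z)\mid\varepsilon_{S,t}(z)\}=0$, the tower rule makes $\varepsilon_t(z)$ uncorrelated with $\varepsilon_{S,t}(z)$, so taking variances gives $\sigma_{Y,z}^2=\gamma_z^2\sigma_{S,z}^2+\sigma_z^2$. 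Since $\cov\{Y_t(z),S_t(z)\mid\bX_t\}=\gamma_z\sigma_{S,z}^2$, the definition of $\rho_z$ in \eqref{eq:surr_cov} yields $\rho_z=\gamma_z\sigma_{S,z}/\sigma_{Y,z}$; eliminating $\gamma_z$ gives $\sigma_z^2=(1-\rho_z^2)\sigma_{Y,z}^2$. I would also record the consequence, read directly off the partially linear form \eqref{eq:model_cond_XS_simplified}, that $\var\{Y_t(z)\mid\bX_t,S_t(z)\}=\sigma_z^2$, which is exactly the quantity entering $\ell_1$ in \eqref{eq:varsigma_pi}.

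For the second piece, the key structural observation is that under the oracle design the allocation is the deterministic constant $\pi^\ast$, so there is no adaptivity and the observations $\{(\bX_t,Z_t,\bS_t,Y_t)\}_t$ are i.i.d.; consequently the $\phi_t^\ast$ are i.i.d., and it suffices to compute $\var(\phi_1^\ast)$ and multiply by $1/T$. I would evaluate $\var(\phi_1^\ast)$ by the law of total variance conditioning on $(\bX,\bS)$. Using consistency $Y=ZY(1)+(1-Z)Y(0)$, the indicator algebra $Z^2=Z$, $(1-Z)^2=1-Z$, $Z(1-Z)=0$, the randomization $Z\indep(Y(0),Y(1),\bX,\bS)$ (hence $\mathbb{E}\{Z\mid\bX,\bS\}=\pi^\ast$), and $\mathbb{E}\{Y(z)\mid\bX,\bS\}=\mu_z(\bX,\bS)$, the correction terms in \eqref{eq:single-stage-ATE} reduce to $\tfrac{Z}{\pi^\ast}\varepsilon(1)-\tfrac{1-Z}{1-\pi^\ast}\varepsilon(0)$ with conditional mean zero. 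This gives $\mathbb{E}\{\phi^\ast\mid\bX,\bS\}=\tau(\bX,\bS)$, whose variance is the between-group term $\mathbb{E}_{(\bsX,\bsS)}\{\tau(\bX,\bS)-\tau_0\}^2$, while the cross term in the conditional variance vanishes because $Z(1-Z)=0$, leaving $\var(\phi^\ast\mid\bX,\bS)=\sigma_1^2/\pi^\ast+\sigma_0^2/(1-\pi^\ast)$. Summing the two contributions yields $\var(\phi^\ast)=\ell_1(\pi^\ast)$. Substituting the Neyman value $\pi^\ast=\sigma_1/(\sigma_0+\sigma_1)$ collapses $\sigma_1^2/\pi^\ast+\sigma_0^2/(1-\pi^\ast)$ to $(\sigma_0+\sigma_1)^2$, and dividing by $T$ gives $T\cdot V_T^\ast=\varsigma_\ast^2$ as claimed.

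I expect the only delicate point to be the passage from the model's moment restrictions, which are stated conditionally on $(\bX_t,S_t(z))$, to the corresponding statements conditional on the full pair $\bS_t=(S_t(0),S_t(1))$ on which the total-variance step conditions. I would justify $\mathbb{E}\{\varepsilon_t(z)\mid\bX_t,\bS_t\}=0$ and $\var\{\varepsilon_t(z)\mid\bX_t,\bS_t\}=\sigma_z^2$ from the exogeneity and constant-conditional-variance assumptions on $\varepsilon_t(z)$ together with $\mathbb{E}\{\varepsilon_{S,t}(0)\varepsilon_{S,t}(1)\mid\bX_t\}=0$; equivalently, treating $\mu_z(\bX,\bS):=\mathbb{E}\{Y(z)\mid\bX,\bS\}$ as definitional makes the mean-zero property automatic and reduces the matter to the constant-variance assumption invoked under Condition~\ref{cond:sutva}. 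Everything else is routine bookkeeping with the indicator algebra and the i.i.d.\ structure.
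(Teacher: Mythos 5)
Your proposal is correct, but it reaches the result by a different route than the paper. The paper's own proof of Proposition~\ref{propos:efficiency_bound} is a one-line specialization: it invokes the general variance formula of Theorem~\ref{thm:variance} (valid for any adaptive allocation $\pi_t$ and any estimated nuisances $\hat{\mu}_{z,t-1}$), whose proof via Lemma~\ref{lem:condition_var} decomposes $\phi_t-\tau_0$ into five terms and computes all conditional second moments given $\cF_{t-1}$; setting $\hat{\mu}_{z,t-1}=\mu_z$ and $\pi_t=\pi^*$ kills the estimation-error terms and leaves exactly $\ell_1(\pi^*)$. You instead exploit the fact that the oracle design is non-adaptive: $\pi^*$ is a deterministic constant and the nuisances are fixed functions, so $\{\phi_t^*\}$ are i.i.d.\ and the whole martingale apparatus is unnecessary; a single law-of-total-variance computation conditioning on $(\bX,\bS)$ gives $\var(\phi_1^*)=\sigma_1^2/\pi^*+\sigma_0^2/(1-\pi^*)+\eE_{(\bsX,\bsS)}\{\tau(\bX,\bS)-\tau_0\}^2=\ell_1(\pi^*)$. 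The underlying algebra (indicator identities $Z^2=Z$, $Z(1-Z)=0$, unconfoundedness, vanishing cross terms) is the same as in Lemma~\ref{lem:condition_var}, but your organization is self-contained and more elementary, at the cost of not yielding the general adaptive formula the paper needs elsewhere (e.g.\ for the regret decomposition). Two further points in your favor: you actually derive the identity $\sigma_z^2=(1-\rho_z^2)\sigma_{Y,z}^2$ from \eqref{eq:model_varep} and \eqref{eq:surr_cov} (via $\sigma_{Y,z}^2=\gamma_z^2\sigma_{S,z}^2+\sigma_z^2$ and $\rho_z=\gamma_z\sigma_{S,z}/\sigma_{Y,z}$), which the paper asserts in the statement and in \eqref{eq:ney_all} without proof; and you flag the conditioning subtlety that the moment restrictions are stated given $(\bX_t,S_t(z))$ while the variance computation conditions on the full pair $\bS_t=(S_t(0),S_t(1))$ --- a step the paper's Lemma~\ref{lem:condition_var} uses silently in asserting $\eE\{(Y_t(z)-\mu_z(\bX_t,\bS_t))^2\mid\bX_t,\bS_t\}=\sigma_z^2$. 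Your handling of that point (via exogeneity, constancy of the conditional variance, and $\eE\{\varepsilon_{S,t}(0)\varepsilon_{S,t}(1)\mid\bX_t\}=0$) is adequate, and arguably makes your write-up more careful than the original on this detail.
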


Proposition~\ref{propos:efficiency_bound} implies that the oracle estimator $\hat{\tau}_{T}^{*}$ achieves the semiparametric efficiency bound defined in Section~\ref{subsec:oracle} for any finite $T$. Remark \ref{rmk:efficiency gain of incorporating surrogates} below highlights the efficiency gain from the use of surrogates.
%provides the characterization of the semiparametric efficiency bound, which motivates our objective function in \eqref{eq:varsigma_pi}. 
%To highlight the efficiency gain of incorporating surrogates, we compare the semiparametric efficiency bound to the estimator without incorporating surrogates in Remark \ref{rmk:efficiency gain of incorporating surrogates}.

\begin{remark}\label{rmk:efficiency gain of incorporating surrogates}
  The semiparametric efficiency bound without surrogates takes the form
$
    T\cdot\widetilde{V}_T^*=(\sigma_{Y,0}+\sigma_{Y,1})^2+\eE_{\bsX}\{\tau(\bX)-\tau_0\}^2,
$
where $\tau(\bx):=\eE\{Y_t(1)|\bX_t=\bx\}-\eE\{Y_t(0)|\bX_t=\bx\}$, %%$\tau(\bx):=m_{Y,1}(\bx)-m_{Y,0}(\bx)$, 
and $\eE_{\bsX}(\cdot)$ denotes the expectation over $\bX\sim\cX$. %, and $\bX$ is an i.i.d. copy of $\{\bX_t\}_{t\ge1}$. 
It can be derived that the efficiency gain from incorporating surrogates is
\begin{equation}
    \label{eq:ney_var_red}
    %\underbrace{T\cdot\widetilde{V}_T^*}_{\text{without surrogates}}-\underbrace{T\cdot V_T^*}_{\text{with surrogates}}
 T\cdot\widetilde{V}_T^* -  T\cdot V_T^*  =2\Big(1-\sqrt{1-\rho_0^2}\sqrt{1-\rho_1^2}\Big)\sigma_{Y,0}\sigma_{Y,1} > 0, 
\end{equation}
if and only if $\rho_0^2+\rho_1^2>0$. %Therefore, leveraging surrogates leads to a efficiency gain if and only if $\rho_0\neq 0$ or $\rho_1\neq 0.$ 
This suggests that, as long as the surrogate is conditionally correlated with the primary outcome under at least one treatment arm, incorporating surrogates in estimating ATE yields an efficiency gain.
\end{remark}

%Although the oracle estimator $\hat{\tau}_{T}^{*}$ achieves the theoretically optimal variance, it cannot be implemented because $\pi^*,\mu_0(\cdot,\cdot)$ and $\mu_1(\cdot,\cdot)$ are unknown. In the following, we show that the proposed estimator $\hat{\tau}_{T}$ in Section \ref{subsec:procedure} under our proposed design can asymptotically attain the semiparametric efficiency bound. The following theorem establishes the unbiasedness and variance of $\hat{\tau}_T$.

% Define the single-stage estimation for the ATE at Stage $t$ as
% $$
% \begin{aligned}
% \phi_t: = & \hat{\mu}_{1,t-1}(\bX_t,\bS_t)-\hat{\mu}_{0,t-1}(\bX_t,\bS_t)\\
% & +\pi_t^{-1}I(Z_t=1)\{Y_t-\hat{\mu}_{1,t-1}(\bX_t,\bS_t)\} -(1-\pi_t)^{-1}I(Z_t=0)\{Y_t-\hat{\mu}_{0,t-1}(\bX_t,\bS_t)\}.
% \end{aligned}
% $$ 
% As presented in \eqref{eq:proposed_non}, the proposed estimator is defined as $\hat{\tau}_T=T^{-1}\sum_{t=1}^{T}\phi_t$. 

The asymptotic properties of the proposed method and the required regularity conditions are provided in Section~\ref{supsec:asymptotic_theory} of the supplementary material; these results align with the standard results in the literature. Specifically, Proposition~\ref{propos:pi} implies that our treatment allocation converges in probability to the oracle one. Theorem~\ref{thm:asy_nor} shows that our proposed estimator $\hat{\tau}_T$ has an asymptotic variance attaining the semiparametric efficiency bound $\varsigma_*^2=T\cdot V_T^*$, demonstrating its optimality in the asymptotic framework.

We highlight two unique contributions of our asymptotic analysis here. \textit{First}, we relax the bounded outcome assumption commonly imposed in prior works \citep[e.g.,][]{kato2025efficient,neopane2025logarithmic,oprescu2025efficient}. \textit{Second}, we are the first to establish a theoretically valid range for the clipping rate $\eta$ under an asymptotic framework. This result provides formal guidance on a tuning parameter that has previously been chosen in an ad hoc or entirely empirical manner.

The asymptotic normality in Theorem~\ref{thm:asy_nor} allows for the construction of a fixed-time confidence interval for $\tau_0$ under the asymptotic regime. 
Let $u_t:=\phi_t-\tau_0$ with $\phi_t$ defined in \eqref{eq:single-stage-ATE}. By Lemmas~\ref{lem:mart} and \ref{lem:con_var_as} of the supplementary material, the sequence $\{u_t\}_{t\ge1}$ forms a square integrable martingale difference sequence with respect to (w.r.t.) $\{\cF_t\}_{t\ge1}$, whose conditional variance converges in probability to $\varsigma_*^2$. A plug-in estimator of $\varsigma_*^2$ is given by $\hat{\varsigma}_T^2=T^{-1}\sum_{t=1}^{T}(\phi_t-\hat{\tau}_T)^2.$
We establish the CLT-based asymptotic confidence interval for $\tau_0$ in Theorem \ref{thm:CI}.

\begin{theorem}
    \label{thm:CI}
    Suppose that the conditions of Theorem~\ref{thm:asy_nor} hold. % and Condition~\ref{cond:moment} hold, and the clipping rate satisfies $\eta<\theta/(2+2\theta)$. 
    Then, as $T\to\infty$, $(L_T^{\CLT},U_T^{\CLT}):=\big(\hat{\tau}_T\pm\hat{\varsigma}_T\Phi^{-1}(1-\alpha/2)/\sqrt{T}\big)$ forms a $(1-\alpha)$ fixed-time confidence interval for $\tau_0$, where $\Phi^{-1}(\cdot)$ is the inverse CDF of the standard Gaussian distribution. 
\end{theorem}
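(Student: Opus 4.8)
The plan is to reduce the coverage claim to a self-normalized central limit theorem: combine the asymptotic normality of $\hat\tau_T$ supplied by Theorem~\ref{thm:asy_nor} with consistency of the plug-in variance estimator $\hat\varsigma_T^2$, and then apply Slutsky's theorem. From Theorem~\ref{thm:asy_nor} I have $\sqrt{T}(\hat\tau_T-\tau_0)/\varsigma_*\toD N(0,1)$, where $\varsigma_*^2$ is the efficiency bound in \eqref{eq:ney_var}; in particular $\hat\tau_T\toP\tau_0$. The whole argument therefore hinges on establishing $\hat\varsigma_T^2\toP\varsigma_*^2$, after which the continuous mapping theorem gives $\varsigma_*/\hat\varsigma_T\toP1$ and Slutsky yields $\sqrt{T}(\hat\tau_T-\tau_0)/\hat\varsigma_T\toD N(0,1)$.

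To prove variance consistency I would write $\phi_t=u_t+\tau_0$ with $u_t=\phi_t-\tau_0$ and expand
\[
\hat\varsigma_T^2=\frac{1}{T}\sum_{t=1}^T\phi_t^2-\hat\tau_T^2=\frac{1}{T}\sum_{t=1}^T u_t^2+2\tau_0\Big(\frac{1}{T}\sum_{t=1}^T u_t\Big)+\tau_0^2-\hat\tau_T^2.
\]
The middle term vanishes in probability since $T^{-1}\sum_{t=1}^T u_t=\hat\tau_T-\tau_0\toP0$, and $\tau_0^2-\hat\tau_T^2\toP0$ by continuous mapping, so it remains to show $T^{-1}\sum_{t=1}^T u_t^2\toP\varsigma_*^2$. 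I would split this into its predictable part and a martingale remainder,
\[
\frac{1}{T}\sum_{t=1}^T u_t^2=\frac{1}{T}\sum_{t=1}^T\e[u_t^2\mid\cF_{t-1}]+\frac{1}{T}\sum_{t=1}^T\big(u_t^2-\e[u_t^2\mid\cF_{t-1}]\big),
\]
where the first sum converges to $\varsigma_*^2$ in probability by Lemmas~\ref{lem:mart} and \ref{lem:con_var_as}, and the second is a normalized sum of martingale differences $w_t:=u_t^2-\e[u_t^2\mid\cF_{t-1}]$ that I would show is $o_P(1)$ via a martingale strong law: by Kronecker's lemma it suffices to verify $\sum_{t\ge1}t^{-2}\e[w_t^2]<\infty$.

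The main obstacle is this last summability check, which reduces to a fourth-moment bound on $u_t$. Because the clipping only enforces $\pi_t^{-1}\le2t^\eta$, the inverse-probability weights in $\phi_t$ grow with $t$, so $\e[u_t^4]$ is not bounded but instead grows polynomially at a rate governed by $\eta$; I would bound it using the outcome and residual moment assumptions carried over from the conditions of Theorem~\ref{thm:asy_nor}, and then verify that the admissible range of the clipping rate $\eta$ makes $t^{-2}\e[w_t^2]$ summable. This is precisely where the relaxed bounded-outcome assumption and the valid range for $\eta$ play their role, and it is the only genuinely delicate estimate in the proof.

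With $\hat\varsigma_T^2\toP\varsigma_*^2$ established, the remaining steps are routine. Slutsky's theorem gives $\sqrt{T}(\hat\tau_T-\tau_0)/\hat\varsigma_T\toD N(0,1)$, and translating into coverage via continuity of the Gaussian CDF yields
\[
\mP\big(\tau_0\in(L_T^{\CLT},U_T^{\CLT})\big)=\mP\Big(\Big|\frac{\sqrt{T}(\hat\tau_T-\tau_0)}{\hat\varsigma_T}\Big|\le\Phi^{-1}(1-\alpha/2)\Big)\longrightarrow2\Phi\big(\Phi^{-1}(1-\alpha/2)\big)-1=1-\alpha,
\]
which is the desired conclusion.
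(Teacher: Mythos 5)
Your overall architecture --- writing $\hat\varsigma_T^2=T^{-1}\sum_{t}u_t^2-(\hat\tau_T-\tau_0)^2$ up to vanishing cross terms, splitting $T^{-1}\sum_t u_t^2$ into its compensator plus the martingale remainder $w_t:=u_t^2-\eE(u_t^2\mid\cF_{t-1})$, and finishing with Slutsky --- is exactly the paper's. The gap sits in the one step you yourself flag as delicate. You propose to kill the martingale remainder via the $L^2$ criterion $\sum_t t^{-2}\eE[w_t^2]<\infty$, which requires fourth moments of $u_t$. The moment bound the hypotheses actually deliver (Lemma~\ref{lem:lyapunov}(i) with $\theta=2$) is $\eE(u_t^4\mid\cF_{t-1})\le M_u\,t^{3\eta}(\log t)^2$, and this polynomial rate is sharp when the clipping is active, since the inverse weights contribute $\pi_t^{-3}\lesssim t^{3\eta}$. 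Hence $\sum_t t^{-2}\eE[w_t^2]\lesssim\sum_t t^{3\eta-2}(\log t)^2$, which is finite only when $\eta<1/3$. But Theorem~\ref{thm:asy_nor}, whose conditions you assume, permits any clipping rate $\eta<2\beta/(2\beta+\kappa)$; with $\kappa=2\beta+d$ and $d<2\beta$ this threshold exceeds $1/3$ (it equals $2/5$ in the baseline case $\beta=d=1$). So your summability check fails on part of the admissible range of $\eta$, and the proof as proposed does not establish the theorem as stated.

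The paper closes precisely this hole by running the martingale SLLN at a fractional exponent rather than at $p=2$: Lemma~\ref{lem:SLLN} holds for any $1\le p\le2$, and the paper takes $p=1+\theta/2$ with $\theta>2\eta/(1-2\eta)$ chosen according to $\eta$ (Lemma~\ref{lem:lyapunov}(ii)). Jensen's inequality gives $\eE(|w_t|^{1+\theta/2}\mid\cF_{t-1})\le 2\,\eE(|u_t|^{2+\theta}\mid\cF_{t-1})\le 2M_u\,t^{\eta(1+\theta)}(\log t)^{1+\theta/2}$, and the series $\sum_t t^{-(1+\theta/2)}\eE(|w_t|^{1+\theta/2}\mid\cF_{t-1})$ converges because $\eta(1+\theta)<\theta/2$; such a $\theta$ exists for every $\eta<1/2$, hence on the whole admissible range. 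If you replace your $L^2$ criterion with this fractional-moment criterion and keep the rest of your argument, the proof goes through. One smaller point: for the compensator you cite Lemmas~\ref{lem:mart} and \ref{lem:con_var_as}, but pointwise-in-$t$ convergence in probability of $\eE(u_t^2\mid\cF_{t-1})$ does not by itself give Cesàro convergence in probability; you need the uniform-integrability/Toeplitz argument from the proof of Theorem~\ref{thm:asy_nor}, which is what the paper invokes at that step.
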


%Theorem \ref{thm:CI} provides the standard CLT-based confidence interval for $\tau_0$, which we will provide more detailed discussion in Section \ref{sec:testing}. 
%Note that Theorem \ref{thm:CI} relies on Condition \ref{cond:moment}, which requires the uniformly bounded $(2+\theta)$-th moment conditions that are slightly stronger than the second moment conditions required in Theorem~\ref{thm:asy_nor}. 
%This is because by Lemma~\ref{lem:con_var_as} of the supplementary material, the conditional variance of the martingale sequence $\{\phi_t\}_{t\ge1}$ converges in probability to $\varsigma_*^2$. Thus, a natural estimator of $\varsigma_*^2$ is given by $\hat{\varsigma}_T^2=T^{-1}\sum_{t=1}^{T}(\phi_t-\hat{\tau}_T)^2.$  
%Therefore, to ensure the consistency of $\hat{\varsigma}_T^2$, we further leverage Condition \ref{cond:moment} to provide the theoretical results in Theorem \ref{thm:CI}.

%\subsection{The non-asymptotic framework}\label{subsec:non_theory}

%In this section, we investigate the properties of our proposed method under a non-asymptotic framework. 
%The asymptotic analysis in Section~\ref{subsec:theory} demonstrates that $\hat{\tau}_T$ attains the semiparametric efficiency bound under the asymptotic regime. However, from a non-asymptotic perspective, understanding how fast its normalized variance $T\cdot V_T$ converges to the oracle counterpart remains an underexplored question that is of practical importance in statistical inference in our adaptive experimental design framework. 

% \subsubsection{Metric for non-asymptotic analysis}\label{subsec:regret}

\subsection{The regret function}\label{subsec:regret}

To facilitate the non-asymptotic analysis, we introduce the notion of regret. Formally, let $\cG=\{\mu:\cX\times\cS\to\eR\}$ denote the function class containing the conditional mean functions $\mu_1(\cdot,\cdot)$ and $\mu_0(\cdot,\cdot)$. At Stage $t$, we define the action
for ATE estimation as 
$a_t=(\pi_t,\hat{\mu}_{0,t-1},\hat{\mu}_{1,t-1})\in\cF_{t-1}$, with action space $\cA\subset (0,1)\times \cG\times \cG.$ 
The realized loss at Stage $t$ is given by the conditional variance of $\phi_t$ under $a_t$, that is
$
\ell(a_t):=\var(\phi_t|a_t)=\var(\phi_t|\cF_{t-1}),
$
where $\ell:\cA\to\eR_{\ge0}$.
Based on the definition of $a_t$, the conditional variance of $\phi_t$ arises from two distinct sources: the variability induced by the adaptive treatment allocation and the estimation error of the conditional mean functions.

According to Theorem~\ref{thm:variance} of the supplementary material, the normalized variance of $\hat{\tau}_T$ can then be expressed as the expected average loss
$
    T\cdot V_T=\eE\Big\{T^{-1}\sum_{t=1}^{T}\ell(a_t)\Big\}.
$
The realized loss can also be decomposed as two parts: $\ell(a_t):=\ell_1(\pi_t)+\ell_2(a_t)$, where $\ell_1(\pi_t)$ denotes the loss depending on the adaptive treatment allocation as defined in  \eqref{eq:varsigma_pi}, and $\ell_2(a_t)$ denotes the estimation loss. As shown in Remark~\ref{rmk:ell_2} of the supplementary material, $\ell_2(a_t)\ge0$, with equality if $\hat{\mu}_{0,t-1}=\mu_0$ and $\hat{\mu}_{1,t-1}=\mu_1$ almost surely. Since $\ell_1(\pi_t)\ge\ell_1(\pi^*)$, we further derive that 
$
    \ell(a_t)\ge\ell_1(\pi^*),
$
where recall that $\ell_1(\pi^*)$ is the semiparametric efficiency bound. This motivates us to define the optimal action $a^*$  as
$
    a^*=(\pi^*,\mu_0,\mu_1),
$
the one that leads to the oracle estimator $\hat{\tau}_T^*$ in \eqref{eq:opt_est}. 
%{\color{red} Could you check if the definition of $a^*_t=(\pi^*,\mu_0,\mu_1)$ is correct?}
%{\color{green}Zihan: the definition is correct. But here we write $a^*_t$ or $a^*$? In equation (22) below, I write $a^*$.}
Formally, we define the regret and its expectation for a sequence of actions $\{a_t\}_{t\in[T]}$ respectively as
\begin{equation}
    \label{eq:ney_regret}
    \begin{aligned}
        \cR_T:=&\sum_{t=1}^{T}\ell(a_t)-\min_{a\in\cA}\sum_{t=1}^{T}\ell(a)=\sum_{t=1}^{T}\ell(a_t)-T\ell(a^*),\\
        \eE\cR_T=&T^2\cdot(V_T-V_T^*)=T^2\cdot\eE\big(\hat{\tau}_T-\tau_0\big)^2-T^2\cdot\eE\big(\hat{\tau}_{T}^*-\tau_0\big)^2.
    \end{aligned}
\end{equation}

The regret $\cR_T$ measures the gap between the cumulative loss incurred by an adopted action and that of the optimal action with access to the true models.
By \eqref{eq:ney_regret}, $\eE\cR_T\ge0$, and its order w.r.t. $T$ characterizes how fast the normalized variance of an estimator under an adaptive experimental design converges to the semiparametric efficiency bound. Therefore, deriving the finite-sample bound for the expected regret $\eE\cR_T$ is of particular significance, as a smaller regret implies better finite-sample performance of the design.

\subsection{Non-asymptotic properties of our proposed estimator}\label{subsec:nonasymptotic}

Under the definition of regret in Section \ref{subsec:regret}, we now establish a finite-sample bound in Theorem \ref{thm:regret} below for the expected regret of our proposed design. %The technical details are deferred to Section \ref{subsubsec:technical details}. %Throughout this section, the non-asymptotic analysis relies on the following condition:
%Our main theoretical insights are highlighted in Theorem \ref{thm:regret} below. %We shall first present the theoretical result, followed by the insights. We defer the technical details of Theorem  \ref{thm:regret} to Section \ref{subsubsec:technical details}.

\begin{theorem}
    \label{thm:regret}
    Suppose that Conditions~\ref{cond:sutva}--\ref{cond:varep} of the supplementary material hold, $\delta=T^{-2\beta/(2\beta+d)}$, $T_0$ is given by \eqref{eq:T0}, and there exists some constant $\ell_{\max}\ge\ell(a^*)$ such that $\max_{t\in[T]}\ell(a_t)\le\ell_{\max}$. Then, it holds that
    $$
    \begin{aligned}
        &\eE\cR_T\le\underbrace{\widetilde{C}_1T_0}_{\text{\small (I) initialization phase}}+
        \underbrace{\widetilde{C}_{\delta,T,2}\underline{\pi}^{-2\beta/(2\beta+d)-3}T^{d/(2\beta+d)}}_{\text{\small (II) cumulative loss from adaptive design}}
        \\
        &+\underbrace{\widetilde{C}_{\delta,T,3}\underline{\pi}^{-2\beta/(2\beta+d)-1}T^{d/(2\beta+d)}}_{\text{\small (III) estimation error of $\gamma_z$}}
        +\underbrace{\widetilde{C}_4\underline{\pi}^{-1}T^{d/(2\beta+d)}}_{\text{\small (IV) estimation error of $m_{S,z}(\cdot)$}}
        %+\underbrace{\widetilde{C}_5\underline{\pi}^{-2\beta/(2\beta+d)-1}T^{d/(2\beta+d)}}_{\text{\footnotesize (V) estimation errors of $m_{Y,z}(\cdot)$}}}_{\text{\small cumulative estimation loss}}\\
        +\underbrace{\widetilde{C}_5\underline{\pi}^{-2\beta/(2\beta+d)-1}T^{d/(2\beta+d)}}_{\text{\small (V) estimation error of $m_{Y,z}(\cdot)$}}\\
        &+\underbrace{2\{\ell_{\max}-\ell(a^*)\}T^{d/(2\beta+d)}}_{\text{\small (VI) expected regret over the bad event}}\\
        &=\widetilde{O}(\underline{\pi}^{-2\beta/(2\beta+d)-3}T^{d/(2\beta+d)}),
    \end{aligned}
    $$
    % $$
    % \begin{aligned}
    %     \eE\cR_T\le&\underbrace{\widetilde{C}_1T_0}_{\text{\small initialization phase}}+
    %     \underbrace{\widetilde{C}_{\delta,T,2}\underline{\pi}^{-2\beta/(2\beta+d)-3}T^{d/(2\beta+d)}}_{\text{\small discrepancy between $\pi_t$ and $\pi^*$}}
    %     \\
    %     &+\underbrace{\widetilde{C}_{\delta,T,3}\underline{\pi}^{-2\beta/(2\beta+d)-1}T^{d/(2\beta+d)}}_{\text{\small estimation errors of $\gamma_z$}}
    %     +\underbrace{\widetilde{C}_4\underline{\pi}^{-1}T^{d/(2\beta+d)}}_{\text{\small estimation errors of $m_{S,z}(\cdot)$}}
    %     +\underbrace{\widetilde{C}_5\underline{\pi}^{-2\beta/(2\beta+d)-1}T^{d/(2\beta+d)}}_{\text{\small estimation errors of $m_{Y,z}(\cdot)$}}\\
    %     &+\underbrace{2\{\ell_{\max}-\ell(a^*)\}T^{d/(2\beta+d)}}_{\text{\small expected regret over the bad event}},\\
    %     =&\widetilde{O}(\underline{\pi}^{-2\beta/(2\beta+d)-3}T^{d/(2\beta+d)}),
    % \end{aligned}
    % $$
    where $\beta$ measures the smoothness of the conditional mean functions as in Definition~\ref{def:holder} of the supplementary material, $\underline{\pi}=\pi^*\wedge(1-\pi^*)$, $\widetilde{C}_1,\widetilde{C}_4$ and $\widetilde{C}_5$ are positive constants independent of $\delta$ and $T$, $\widetilde{C}_{\delta,T,2}\asymp(\log \delta^{-1}T)^2,$ and $\widetilde{C}_{\delta,T,3}\asymp\log \delta^{-1}T$, with explicit expressions provided in Section~\ref{supsubsubsec:proof_regret} of the supplementary material.
    %$\widetilde{C}_1=(\sigma_1-\sigma_0)^2+2(\gamma_0^2\sigma_{S,0}^2+\gamma_1^2\sigma_{S,1}^2+M_{Y,0}^2+M_{Y,1}^2),$ 
    %$\widetilde{C}_{\delta,T,2}=2^{4\beta/(2\beta+d)+5}d^{-1}(2\beta+d)(\sigma_0+\sigma_1)^{-2}C_{\delta,T,4}^2,$
    %$\widetilde{C}_{\delta,T,3}=2^{4\beta/(2\beta+d)+4}d^{-1}(2\beta+d)\sigma_{S,\max}^2C_{\delta,T,2}^2,$ 
    %$\widetilde{C}_4=2^4d^{-1}(2\beta+d)(\gamma_{\max}^2+2^{-6}\sigma_0^2\sigma_1^2)(2dL^2+3\cdot 2^{1-d}\underline{k}^{-2}\underline{c}^{-1}\bar{k}^2v_{\max}^2),$ and $\widetilde{C}_5=2^{4\beta/(2\beta+d)+4}d^{-1}(2\beta+d)(2dL^2+3\cdot 2^{1-d}\underline{k}^{-2}\underline{c}^{-1}\bar{k}^2v_{\max}^2)$.
\end{theorem}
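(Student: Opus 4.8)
The plan is to bound $\eE\cR_T$ by matching each of the six displayed terms to a distinct error source, exploiting the additive loss decomposition $\ell(a_t)=\ell_1(\pi_t)+\ell_2(a_t)$ together with $\ell(a^*)=\ell_1(\pi^*)$. First I would split the horizon at $T_0$: over the initialization phase the allocation is frozen at $1/2$ and the mean estimators are set to zero, so each per-stage excess loss is at most $\ell_{\max}-\ell(a^*)$, giving the linear term (I) $\widetilde{C}_1 T_0$ after absorbing constants. All remaining work concerns the concentration phase $t=T_0+1,\dots,T$, where I would write the per-stage excess loss as $[\ell_1(\pi_t)-\ell_1(\pi^*)]+\ell_2(a_t)$ and treat the allocation regret and the estimation regret separately.

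The central device is a stagewise good event $\cE_t\in\cF_{t-1}$ on which (i) every local kernel neighborhood used in \eqref{eq:m_S_hat}--\eqref{eq:m_Y_hat} contains enough effective samples and (ii) the pilot estimators $\hat m_{S,z,t-1},\hat m_{Y,z,t-1},\hat\gamma_{z,t-1}$ and the variance estimators $\hat\sigma_{z,t-1}^2$ all lie within prescribed high-probability tolerances. Calibrating the per-stage failure level to $O(\delta)$, the complementary bad event contributes at most $(\ell_{\max}-\ell(a^*))\sum_t\mP(\cE_t^c)\lesssim(\ell_{\max}-\ell(a^*))\,\delta T$, and the choice $\delta=T^{-2\beta/(2\beta+d)}$ makes $\delta T=T^{d/(2\beta+d)}$, which is exactly term (VI). On $\cE_t$ I would bound $\ell_2(a_t)$ through the AIPW variance identity, expanding $\hat\mu_{z,t-1}-\mu_z$ as $(\hat m_{Y,z,t-1}-m_{Y,z})-(\hat\gamma_{z,t-1}\hat m_{S,z,t-1}-\gamma_z m_{S,z})+s(z)(\hat\gamma_{z,t-1}-\gamma_z)$. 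Each squared piece is multiplied by an inverse-propensity weight bounded by $\underline\pi^{-1}$, so the three sources align with terms (V), (IV), (III): because $\hat m_{S,z,t}$ in \eqref{eq:m_S_hat} pools all $t$ samples its rate carries no propensity penalty (yielding $\underline\pi^{-1}$, term IV), whereas $\hat m_{Y,z,t}$ in \eqref{eq:m_Y_hat} and $\hat\gamma_{z,t}$ in \eqref{eq:gamma_hat} use only arm-$z$ data with effective size $\asymp\underline\pi\,t$, producing the extra factor $\underline\pi^{-2\beta/(2\beta+d)}$ (terms V and III, the latter picking up a $\log$ factor from the least-squares concentration).

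For the allocation regret I would Taylor-expand $\ell_1(\pi)=\sigma_1^2/\pi+\sigma_0^2/(1-\pi)+\text{const}$ about its minimizer, using $\ell_1'(\pi^*)=0$ and $\ell_1''(\pi^*)=O(\underline\pi^{-3})$ to obtain $\ell_1(\pi_t)-\ell_1(\pi^*)\lesssim\underline\pi^{-3}(\pi_t-\pi^*)^2$ on $\cE_t$, where $\pi_t$ stays near $\pi^*$. Since $\widetilde\pi_t=\hat\sigma_{1,t-1}/(\hat\sigma_{1,t-1}+\hat\sigma_{0,t-1})$ while $\pi^*=\sigma_1/(\sigma_0+\sigma_1)$, a delta-method expansion reduces $(\pi_t-\pi^*)^2$ to the squared variance-estimation errors $(\hat\sigma_{z,t-1}^2-\sigma_z^2)^2$, whose dominant contribution is the arm-restricted plug-in rate $\underline\pi^{-2\beta/(2\beta+d)}t^{-2\beta/(2\beta+d)}$. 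The product $\underline\pi^{-3}\cdot\underline\pi^{-2\beta/(2\beta+d)}$ produces the exponent $-2\beta/(2\beta+d)-3$ of term (II), while squaring the high-probability deviation bound yields the constant $\widetilde{C}_{\delta,T,2}\asymp(\log\delta^{-1}T)^2$. Finally, summing every good-event bound via $\sum_{t=T_0+1}^T t^{-2\beta/(2\beta+d)}\lesssim T^{d/(2\beta+d)}$ converts each per-stage rate into the common factor $T^{d/(2\beta+d)}$, and collecting (I)--(VI) gives the stated bound with term (II) dominating.

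The main obstacle I anticipate is establishing the high-probability nonparametric rates under the adaptive, non-i.i.d. sampling induced by the feedback loop $\hat\sigma_{z,t}^2\to\widetilde\pi_{t+1}\to Z_{t+1}$: the indicators $I(Z_j=z)$ in the Nadaraya--Watson denominators of \eqref{eq:m_Y_hat} and \eqref{eq:gamma_hat} are history-dependent, so off-the-shelf i.i.d. kernel-regression bounds do not apply. I expect to replace them with martingale concentration (Freedman/Bernstein-type) arguments that first lower-bound the random local sample counts $\sum_j K_h(\bx-\bX_j)I(Z_j=z)$ with probability $1-O(\delta)$ --- this lower bound, guaranteed by the clipping \eqref{eq:clip} keeping $\pi_t\ge\zeta_t$, is precisely what defines the good event $\cE_t$ --- and then transfer the resulting rates through the ratio, the plug-in, and the clipping map into bounds on $\hat\sigma_{z,t}^2-\sigma_z^2$ and hence on $\pi_t-\pi^*$. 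Once these adaptive rates are secured, the exponent bookkeeping that assembles terms (II)--(V) is comparatively mechanical.
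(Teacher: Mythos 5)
Your proposal follows the same architecture as the paper's proof (Sections~\ref{supsubsubsec:dev}--\ref{supsubsubsec:proof_regret}): exponential-tail concentration for the kernel, $\gamma_z$, and $\hat\sigma_z$ estimators; a good event combining count and estimator deviations; a quadratic bound on $\ell_1(\pi_t)-\ell_1(\pi^*)$; the decomposition of $\hat\mu_{z,t-1}-\mu_z$ into $m_{Y,z}$, $m_{S,z}$, and $\gamma_z$ pieces; summation via $\sum_t t^{-2\beta/(2\beta+d)}\lesssim T^{d/(2\beta+d)}$; and the bad-event contribution $(\ell_{\max}-\ell(a^*))\delta T$ with $\delta=T^{-2\beta/(2\beta+d)}$. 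However, there is one substantive gap: you assert that the lower bound on the arm-specific sample counts is ``guaranteed by the clipping \eqref{eq:clip} keeping $\pi_t\ge\zeta_t$.'' Clipping alone gives only $N_{z,t}\gtrsim t^{1-\eta}$ (this is exactly Lemma~\ref{lem:N_t_lower}), and feeding $N_{z,t}\asymp t^{1-\eta}$ into your rates produces per-stage errors of order $t^{-(1-\eta)2\beta/(2\beta+d)}$, hence a cumulative bound of order $T^{1-(1-\eta)2\beta/(2\beta+d)}$, which exceeds the claimed $T^{d/(2\beta+d)}$ by the factor $T^{2\eta\beta/(2\beta+d)}$. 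The stated exponents require the linear-growth bound $N_{z,t}\gtrsim\underline{\pi}t$, and that cannot come from clipping: it needs the bootstrap the paper runs on its global good event. Concretely, initialization gives $N_{z,T_0}=T_0/2$ deterministically; the choice of $T_0$ in \eqref{eq:T0} then forces $|\hat{\sigma}_{z,t}-\sigma_z|\le\sigma_z/2$ for all $t\ge T_0$, hence $\pi_t\in[\underline{\pi}/2,1-\underline{\pi}/2]$ (Lemma~\ref{lem:low_bound_pi}); only then does the time-uniform count concentration of Lemma~\ref{lem:count}(ii) (the event $\cE_1(\delta)$ in \eqref{eq:good_pi}) yield $N_{z,t}\ge\underline{\pi}t/4$, which finally gives the refined rate $|\pi_t-\pi^*|=\widetilde{O}(\underline{\pi}^{-\beta/(2\beta+d)-1}C_{\delta,t,4}t^{-\beta/(2\beta+d)})$ of Lemma~\ref{lem:pi_rate}. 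Your plan does gesture at transferring rates into $\pi_t-\pi^*$, but the logical order you state (clipping $\Rightarrow$ counts $\Rightarrow$ rates) is wrong, and none of the three lower bounds composing \eqref{eq:T0} plays any role in your argument, even though the theorem assumes them.

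A second, smaller issue is that your bookkeeping for term (II) reaches the right exponent only through two compensating inaccuracies. Over the clipped range, the curvature of $\ell_1$ is $\Theta(\underline{\pi}^{-1})$, not $\underline{\pi}^{-3}$: in $\ell_1''(\pi)=2\sigma_1^2/\pi^3+2\sigma_0^2/(1-\pi)^3$ one has $\sigma_1\wedge\sigma_0\asymp\underline{\pi}(\sigma_0+\sigma_1)$, and the paper's Lemma~\ref{lem:pi_squa} proves the exact bound with constant $2(\sigma_0+\sigma_1)^2/\{\underline{\pi}(1-\underline{\pi})\}$. Conversely, your delta-method reduction of $(\pi_t-\pi^*)^2$ to the squared variance-estimation error drops the Jacobian of $(\hat{\sigma}_0,\hat{\sigma}_1)\mapsto\hat{\sigma}_1/(\hat{\sigma}_0+\hat{\sigma}_1)$, which is $\asymp\underline{\pi}^{-1}$ (the factor $\sigma_0/\sigma_1+\sigma_1/\sigma_0\asymp\underline{\pi}^{-1}$ in Lemma~\ref{lem:pi_rate}); squared, this supplies the missing $\underline{\pi}^{-2}$. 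The products coincide at $\underline{\pi}^{-2\beta/(2\beta+d)-3}$, but a written-out proof must get each factor right: correct curvature with your dropped Jacobian would claim the unproven, stronger bound $\underline{\pi}^{-2\beta/(2\beta+d)-1}$, while your curvature with the correct Jacobian gives only the weaker $\underline{\pi}^{-2\beta/(2\beta+d)-5}$. Your mapping of terms (I), (III)--(VI), including the pooled-versus-arm-specific distinction between $\hat{m}_{S,z}$ and $\hat{m}_{Y,z}$, is consistent with the paper's proof.
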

%Theorem \ref{thm:regret} provides comprehensive insights on the non-asymptotic properties of our proposed design. In what follows, we provide intuitions of each of the components in the regret bound in Theorem~\ref{thm:regret}.

Theorem \ref{thm:regret} decomposes the non-asymptotic upper bound on the expected regret into various components. We discuss intuitions for each of these components below. 

Term (I) corresponds to the initialization phase and increases linearly with $T_0$, where $T_0$ grows polynomially in $\log T$. In practice, $T_0$ should be chosen as small as possible while satisfying the lower bound specified in~\eqref{eq:T0} below, so as to obtain a reasonably good allocation $\pi_{T_0+1}$ at the beginning of the concentration phase. Specifically, %{\color{magenta}is this lower bound to obtain a reasonable allocation at the beginning of concentration stage?} {\color{blue}Zihan: correct. this is just for theoretical guarantee; for empirical results, $T_0$ can be very small to keep the good results}
\begin{equation}
    \label{eq:T0}
    T_0=2\left\{\frac{2C_{\delta,T,4}}{\sigma_0^2\wedge\sigma_1^2\wedge(2^{-\beta/(2\beta+d)-2}\sigma_0\sigma_1\underline{\pi}^{\beta/(2\beta+d)})}\right\}^{(2\beta+d)/\beta}\vee\frac{16C_{\delta,T,0}^2}{\underline{\pi}^2}\vee\frac{1}{(2\underline{\pi})^{1/\eta}},
\end{equation}
where $C_{\delta,T,4}\asymp \log \delta^{-1}T$ and $C_{\delta,T,0}\asymp\sqrt{\log\log T+\log(1/\delta)}$, with explicit expressions provided in Section~\ref{supsubsubsec:event} of the supplementary material.
The impact of choices of $T_0$ on performance is further examined through simulation studies, as reported in Figure~\ref{fig:T0} of the supplementary material.

Terms (II)--(V) represent the cumulative loss incurred during the adaptive experimentation process and the cumulative estimation errors under the ``good" event where various model nuisances can be estimated well with probability at least $1-2\delta$; see Section~\ref{supsubsubsec:event} of the supplementary for formal definition of such a good event. %Specifically, we define the good event in Section~\ref{supsubsubsec:event} of the supplementary material with probability at least $1-2\delta$, ensuring that the estimated conditional mean functions and variances lie in their respective small neighborhoods of the true values.  
%Term (II) captures the loss due to the discrepancy between the adaptive treatment allocation $\pi_t$ and its oracle counterpart $\pi^*$. Term (III) corresponds to the estimation errors of $\gamma_0$ and $\gamma_1$, while Term (IV) arises from the estimation errors of the compliance models $m_{S,0}(\cdot)$ and $m_{S,1}(\cdot)$. Term (V) represents the estimation errors of the outcome models $m_{Y,0}(\cdot)$ and $m_{Y,1}(\cdot)$. 
Term (II) captures the loss due to the discrepancy between the adaptive treatment allocation $\pi_t$ and its oracle counterpart $\pi^*$. Terms (III)--(V) arise from  the estimation errors of model nuisances $(\gamma_0, \gamma_1)$,  $(m_{S,0}(\cdot), m_{S,1}(\cdot))$, and $(m_{Y,0}(\cdot), m_{Y,1}(\cdot))$, respectively;
all four terms are of the same polynomial order in $T$, up to a logarithmic factor. For large $T$ and small $\underline{\pi}$, Term (II) dominates, as $\pi_t$, being adaptively updated, accumulates bias from the upstream estimators over time. Furthermore, when $\underline{\pi}$ is sufficiently small, Term (V) further dominates Term (IV). Intuitively, this is because estimating $m_{Y,z}(\cdot)$ relies only on data observed under the corresponding treatment arm up to the current stage, whereas estimating $m_{S,z}(\cdot)$ can leverage data from both treatment and control groups, leading to faster convergence.

Term (VI) accounts for the expected regret incurred on the ``bad" event (i.e., complement of the good event discussed above). The uniform boundedness assumption on the loss $\ell(a_t)$, commonly adopted in existing literature \citep[e.g.,][]{perchet2013multi,qian2016kernel}, ensures that the cumulative regret $\cR_T$ increases at most linearly with $T$. Within our framework, this condition further implies that the conditional variance of $\phi_t$ remains bounded. On the good event with probability at least $1-2\delta$, we derive a regret bound of $C_{\delta,T}T^{-2\beta/(2\beta+d)}$ (Terms (II)--(V)), where $C_{\delta,T}$ depends only logarithmically on $\delta$ and $T$. Consequently, by setting $\delta\asymp T^{-2\beta/(2\beta+d)}$, the regret incurred under the bad event will not dominate the overall expected regret bound.
%We set $\delta=T^{-2\beta/(2\beta+d)}$ to guarantee that the contribution of this term does not dominate the overall regret bound. 
%More generally, the uniform boundedness assumption on $\ell(a_t)$ can be relaxed to a polynomial bound in $t$; this relaxed condition holds whenever the estimated conditional mean functions
%$\hat{\mu}_{0,t-1}$ and $\hat{\mu}_{1,t-1}$ are themselves polynomially bounded. For large $T$, $\delta$ becomes small enough so that the effect of bad event on the expected regret is negligible. %Under this setting, one can select a smaller order of $\delta$ to ensure that the regret contribution from the bad event remains controlled at the desired asymptotic rate.

%We remark that, following the seminal work of \cite{chernozhukov2018double}, cross-fitting can be incorporated into our framework to mitigate bias inflation in estimating partially linear models. As discussed in Remark~\ref{rmk:corss} of the supplementary material, cross-fitting improves the convergence rates of $\hat{\gamma}_{z,t-1}$ and $\hat{\sigma}_{z,t-1}$, thereby reducing the magnitudes of Terms (II) and (III) in the expected regret bound. Specifically, when $d>2\beta$, these terms are of order $\widetilde{O}(T^{(d-2\beta)/(2\beta+d)})$, and when $d\le2\beta$, they grow at a polynomial rate in $\log T$. In both regimes, they remain dominated by Terms (IV) and (V). Hence, while cross-fitting can tighten the finite-sample regret by a constant factor, it does not alter the asymptotic order of the expected regret with respect to $T$.

To avoid technical difficulty caused by arbitrarily small denominators in \eqref{eq:m_S_hat} and \eqref{eq:m_Y_hat}, we replace Condition~\ref{cond:kernel}(ii) of the supplementary material by incorporating a stabilization device following the approach in Theorem 2 of \cite{qian2016kernel}. Specifically, we replace the kernel function in \eqref{eq:m_S_hat} with a uniform kernel $K(\bu)=\underline{k}\cdot I(\|\bu\|_{\infty}\le 1)$ whenever $\sum_{j=1}^{t}K_{h}(\bX_j-\bx)\le\underline{k}\sum_{j=1}^{t}I(\|\bX_j-\bx\|_{\infty}\le h)/h$ for some small constant $\underline{k}>0.$ A similar adjustment applies to the estimator in \eqref{eq:m_Y_hat}. Since $\underline{k}$ can be arbitrarily small, this adjustment does not affect the use of $K(\cdot)$ in implementation. Alternatively, one may directly use the uniform kernel $K(\bu)=I(|\bu|_{\infty}\le1)$, which is also employed in the non-asymptotic analysis of the Nadaraya-Watson estimator (see Theorem 5.2 of \cite{gyorfi2006distribution}). 
Let $N_{1,t}=\sum_{j=1}^{t}I(Z_j=1)$ and $N_{0,t}=t-N_{1,t}$ for $t\in[T]$. To balance the bias-variance tradeoff, the kernel bandwidths are selected as follows: $h_{S,z,t}=c_h t^{-1/(2\beta+d)}$ for estimating $\hat{m}_{S,z,t}(\cdot)$, and $h_{Y,z,t}=c_h N_{z,t}^{-1/(2\beta+d)}$ for estimating $\hat{m}_{Y,z,t}(\cdot)$. Here, $c_h\asymp 1$ is a pre-specified or data-driven constant. Without loss of generality, we set $c_h= 1$ at each stage. In practice, we may regard $\beta$ as 1, corresponding to the Lipschitz continuity case.
%For the determination of kernel bandwidths, we remark that the derivation of regret bound does not rely on the uniform convergence of the Nadaraya-Watson estimators, and thus the assumption $\kappa>2d$ required in Theorem~\ref{thm:asy_nor} is no longer needed.
%Without loss of generality, we set $\kappa=2\beta+d$ and $c_h= 1$ at each stage. 
%Given $0<\delta<1/2$ and the total time horizon $T,$ the length of initialization phase $T_0$ that m is determined by
% \begin{equation}
%     \label{eq:T0}
%     T_0=2\left\{\frac{2C_{\delta,T,4}}{\sigma_0^2\wedge\sigma_1^2\wedge(2^{-\beta/(2\beta+d)-2}\sigma_0\sigma_1\underline{\pi}^{\beta/(2\beta+d)})}\right\}^{(2\beta+d)/\beta}\vee\frac{16C_{\delta,T,0}^2}{\underline{\pi}^2}\vee\frac{1}{(2\underline{\pi})^{1/\eta}},
% \end{equation}
% where $C_{\delta,T,4}\asymp \log \delta^{-1}T$ and $C_{\delta,T,0}\asymp\sqrt{\log\log T+\log(1/\delta)}$, with explicit expressions provided in Section~\ref{supsubsubsec:event} of the supplementary material.

For the linear Model~\eqref{eq:model_linear}, the non-asymptotic analysis can be similarly obtained and we provide the details in Section~\ref{supsubsec:bound_linear} of the supplementary material.

\subsubsection{The role of surrogates in regret bound}\label{subsubsec:role_surrogate}

Our non-asymptotic analysis offers a new perspective on the role of surrogates in adaptive experimental design. We formalize this connection by relating the expected regret $\eE\cR_T$ to a variance-based efficiency metric. Building on the definition of expected regret, we define the Neyman ratio as
%\begin{align}\label{eq:neyman ratio}
$
\upsilon_T =(V_T - V_T^*)/V_T^* \asymp T^{-1}\eE\cR_T,
$
%\end{align}
which measures the relative efficiency loss of the proposed design compared to the oracle benchmark. This definition is consistent with the one used in \cite{dai2023clip}. Similar regret-based efficiency analyses appear in \cite{dai2023clip, neopane2025logarithmic, neopane2025optimistic, noarov2025stronger}.

According to Theorem~\ref{thm:regret}, 
$\eE\cR_T=\widetilde{O}(\underline{\pi}^{-2\beta/(2\beta+d)-3}T^{d/(2\beta+d)})=o(T)$. Since $\eE\cR_T = o(T)$, the ratio $\upsilon_T$ converges to zero, implying that our adaptive design asymptotically attains the same efficiency as the oracle design. Intuitively, the regret bound improves when the outcome model is smoother (larger $\beta$) and the covariate dimension ($d$) is smaller, but worsens when treatment probabilities are highly imbalanced (small $\underline{\pi}$).

In the asymptotic regime, incorporating surrogates has been shown to reduce the semiparametric efficiency bound. However, one might ask whether similar benefits could be achieved by simply including the surrogates as part of the covariates. For example, one could redefine the covariates under control and treatment as
$(\bX_t^{\T},S_t(0))^{\T}\in\eR^{d+1}$ and $(\bX_t^{\T},S_t(1))^{\T}\in\eR^{d+1}$, respectively. In this case, the resulting oracle variance still matches the semiparametric efficiency bound $\varsigma_*^2=T\cdot V_T^*$. 

However, the increased covariate dimension from $d$ to $d+1$ slows down the convergence rates of nonparametric estimators such as Nadaraya-Watson regression, which in turn worsens the regret bound. In fact, including surrogates as part of covariates in Nadaraya-Watson regression yields a slower rate of $
    \widetilde{O}(T^{(d+1)/(2\beta+d+1)})$ (assuming bounded $\underline{\pi}$), in contrast to the $\widetilde{O}(T^{d/(2\beta+d)})$ rate established in Theorem~\ref{thm:regret}.

Therefore, treating $S_t(0)$ and $S_t(1)$ as surrogates rather than as additional covariates is not merely a trivial modeling choice. The surrogate-based formulation preserves the semiparametric efficiency bound while avoiding unnecessary dimension inflation. This insight underscores the \textit{dual benefits} of surrogates: they enhance efficiency in the asymptotic sense and simultaneously preserve the convergence rate in the non-asymptotic sense.

\subsubsection{Comparison with existing results}\label{subsubsec:comparison}

We compare our regret bound with existing studies that derive the finite-sample regret bound for the efficient ATE estimation via adaptive experimental designs.

\cite{dai2023clip} pioneer this effort by introducing the notion of regret under a design-based setting and derive a regret bound of order $\widetilde{O}(\sqrt{T})$ based on the IPW estimator.
\cite{noarov2025stronger} establish a logarithmic regret under a similar setting but impose stronger assumptions.
\cite{neopane2025logarithmic} extend this line of work to the superpopulation setting and also obtain a logarithmic regret.
However, the IPW estimator is known to be suboptimal, and the regrets defined in the above studies are based on the minimum variance achievable by the IPW estimator.
To attain the semiparametric efficiency bound, \cite{neopane2025optimistic} generalize the previous work to the AIPW estimator and also obtain a logarithmic regret.
All of the aforementioned studies focus on models without covariates.

To investigate heterogeneous treatment effects, \cite{li2024optimal} and \cite{kato2025efficient} consider nonparametric models that incorporate covariates.
\cite{kato2025efficient} do not explicitly establish a regret bound in terms of the order of $T$, but provide a heuristic result.
Different from the fully adaptive experiments used in other works, \cite{li2024optimal} adopt a low-switching adaptive design, where the number of batches (i.e., the number of allocation updates) is denoted by $K$.
Assuming that the MSE convergence rates of the nonparametric regressions for the conditional mean and variance functions are $\alpha'$ and $\beta'$, respectively, they obtain a regret bound of order $\widetilde{O}(T^{1-\alpha'} + K T^{1-\beta'})$.
When $K=O(\log T)$, this bound simplifies to $\widetilde{O}(T^{1-\alpha'\wedge\beta'})$, which aligns with the result established in our Theorem~\ref{thm:regret}, since $\alpha'=\beta'=2\beta/(2\beta+d)$ in our framework (see Section~\ref{supsubsubsec:event} of the supplementary material). Furthermore, our proposed adaptive design under the linear Model~\eqref{eq:model_linear} achieves a regret bound of $\widetilde{O}((\log T)^2)$.
This bound further improves to $\widetilde{O}(\log T)$ when both $\mu_0(\cdot,\cdot)$ and $\mu_1(\cdot,\cdot)$ are constant functions, consistent with the setting considered in \cite{neopane2025optimistic}.
Detailed derivations are provided in Section~\ref{supsubsec:bound_linear} of the supplementary material.
A comprehensive comparison is summarized in Table~\ref{tab:comparison_regret}.

\begin{table}[ht]
\centering
\scriptsize
\caption{\small Comparisons between the existing adaptive experimental designs in literature for the ATE estimation and our proposed method.}
\label{tab:comparison_regret}
\vspace{1em}
%\resizebox{\textwidth}{!}{
\begin{tabular}{c|ccccc}
\hline
Method & Setting & Covariate & Estimator & Adaptivity & Regret Bound\\ 
\hline
\cite{dai2023clip} & design-based & not included & IPW & fully & $\widetilde{O}(\sqrt{T})$\\
\cite{noarov2025stronger} & design-based & not included & IPW  & fully & $\widetilde{O}(\log T)$\\
\cite{neopane2025logarithmic} & superpopulation & not included & IPW & fully & $\widetilde{O}(\log T)$\\
\cite{neopane2025optimistic} & superpopulation & not included & AIPW & fully & $\widetilde{O}(\log T)$\\
\cite{li2024optimal} & superpopulation & included & AIPW & low switching & $\widetilde{O}(T^{1-\alpha'}+KT^{1-\beta'})$\\
\cite{kato2025efficient} & superpopulation & included & AIPW & fully & / \\
SLOACI & superpopulation & included & AIPW & fully & $\widetilde{O}(T^{d/(2\beta+d)})$ \\
\hline
\end{tabular}
%}
\end{table}

Compared with prior work, our theoretical results address several technical challenges and offer distinct advantages.
\textit{First}, our regret bound is the first to explicitly incorporate a concrete nonparametric regression method, the Nadaraya-Watson estimator, into the theoretical analysis. This allows us to derive not only the asymptotic order but also the lower-order terms and constant factors, which have not been characterized in the existing literature.
\textit{Second}, in establishing the regret bound, we develop new exponential-tail concentration inequalities for estimating both the linear coefficient and the conditional variance in a partially linear model via Robinson’s transformation. These results are of independent theoretical interest and may be useful beyond the present context.
\textit{Third}, while most existing works focus on parametric models, the only comparable nonparametric result, due to \cite{li2024optimal}, relies on a low-switching and offline design that updates the treatment allocation $\pi_t$ only a limited number of times. This restriction stems from their reliance on polynomial-tail concentration inequalities, which simplify derivations but constrain adaptivity in practice. In contrast, our method operates in a fully online manner and updates $\pi_t$ at each stage, enabled by the sharper exponential-tail concentration results established in this work.

\section{Sequential testing}\label{sec:testing}

 %We start with introducing the sequential testing setup under adaptive experiments in Section \ref{subsec:testing_setup}. Our sequential testing framework encompasses both asymptotic (Section \ref{subsec:asy_cs}) and non-asymptotic (Section \ref{subsec:eb_cs}) regimes. We discuss the trade-off and provide practical guidance for each of the sequential testing procedure in Section \ref{subsec:tradeoff}. 

\subsection{Sequential testing setup}\label{subsec:testing_setup}

In this section, we propose a comprehensive sequential testing framework that complements our proposed design under both asymptotic and non-asymptotic regimes. We consider testing the following hypothesis defined for some $\tau_{H_0}\in\eR$:
%\begin{equation}
%    \label{eq:hypothesis}
$$
    H_0:\tau_0=\tau_{H_0}\quad{\rm v.s.}\quad H_1:\tau_0\neq\tau_{H_0}.
$$
%\end{equation}
Note that the confidence interval in Theorem~\ref{thm:CI} provides valid coverage only for a fixed total time horizon $T$. In contrast, in sequential testing \citep[Cf.,][]{siegmund2013sequential}, the experimenter continuously monitors the data and may stop the experiment once the test statistic becomes significant. Such “peeking” can inflate the type-I error rate when the statistic is constructed using a fixed-time method \citep{waudby2024time}.

A common remedy is to apply multiple-testing corrections such as the Bonferroni (BF) or Benjamini-Hochberg procedures. If the total time horizon $T$ is known in advance, the Bonferroni-corrected confidence intervals
\begin{equation}
    \label{eq:bf_ci}
    (L_t^{\BF},U_t^{\BF}):=\big(\hat{\tau}_t\pm\hat{\varsigma}_t\Phi^{-1}(1-T^{-1}\alpha/2)/\sqrt{t}\big)
\end{equation}
for $t\in[T]$ ensure type-I error control by rejecting $H_0$ whenever $\tau_{H_0}\notin(L_t^{\BF},U_t^{\BF})$ for some $t$. However, this approach ``spends'' the type-I error rate evenly across all stages, resulting in overly conservative tests; large $T$ makes the Bonferroni correction even more conservative \citep{kato2025efficient}.  %However, these corrections are often overly conservative and yield suboptimal power when $T$ is large \citep{kato2025efficient}. Moreover, it ``spends'' the type-I error rate evenly across all stages, resulting in overly conservative tests. 
Although the experimenter may consider using more advanced alpha-spending functions \citep[]{%demets1994interim,
lan2009further}, they still fail to exploit the martingale structure of $\{\phi_t\}_{t\ge1}$ and are therefore not fully leveraging the data structure.

To improve power while maintaining type-I error control, we leverage the martingale structure of $\{\phi_t\}_{t\ge1}$ and develop time-uniform confidence sequences that are fully adaptive and remain valid at arbitrary stopping times. These sequences enable anytime-valid inference under both asymptotic and non-asymptotic regimes (see Sections~\ref{subsec:asy_cs} and~\ref{subsec:eb_cs}), eliminating the need for fixed-horizon adjustments and avoiding penalties for data monitoring. To facilitate the discussion, we start with defining  ``confidence sequences" in Definition \ref{def:asycs} (as introduced in Definition 2.1 of \cite{waudby2024time}). 

\begin{definition}
    \label{def:asycs}
    %(Asymptotic and non-asymptotic confidence sequences) 
    We say that the intervals $(L_t,U_t)_{t\ge1}$, centered at the estimators $\{\hat{\tau}_t\}_{t\ge1}$ with lower and upper bounds $L_t,U_t,\forall t\ge1$, form a $(1-\alpha)$ asymptotic confidence sequence for a sequence of real parameters $\{\tau_t\}_{t\ge1}$ if there exists a (typically unknown) $(1-\alpha)$ non-asymptotic confidence sequence $(\widetilde{L}_t,\widetilde{U}_t)_{t\ge1}$ for $\{\tau_t\}_{t\ge1}$, that is, satisfying: (i) $\eP(\forall t\ge1,\tau_t\in[\widetilde{L}_t,\widetilde{U}_t])\ge1-\alpha$; (ii) $\widetilde{L}_t/L_t\to 1$ a.s. and $\widetilde{U}_t/U_t\to 1$ a.s. as $t\to\infty.$
\end{definition}

\subsection{Sequential testing under asymptotic regime}
\label{subsec:asy_cs}

% In this section, we propose our sequential testing procedure under asymptotic regimes. We further remark that an asymptotic confidence sequence provides time-uniform coverage under an asymptotic regime. 

We are now ready to present our sequential testing procedure under the asymptotic regime, where the asymptotic confidence sequence provides time-uniform coverage. For all $t\ge1$, recall that $\hat{\tau}_t=t^{-1}\sum_{j=1}^{t}\phi_j,$ and $\hat{\varsigma}_t^2=t^{-1}\sum_{j=1}^{t}(\phi_j-\hat{\tau}_t)^2$. Theorem \ref{thm:asy_cs} below provides a construction of an asymptotic (ASY) confidence sequence for $\tau_0$.

\begin{theorem}
    \label{thm:asy_cs}
    Under the conditions of Theorem~\ref{thm:CI}, for a pre-specified parameter $\varrho>0$,
    \begin{equation}
        \label{eq:asy_cs}
        (L_t^{\ASY},U_t^{\ASY}):=\left(\hat{\tau}_t\pm\sqrt{\frac{t\hat{\varsigma}_t^2\varrho^2+1}{t^2\varrho^2}\log\left(\frac{t\hat{\varsigma}_t^2\varrho^2+1}{\alpha^2}\right)}\right)
    \end{equation}
    forms a $(1-\alpha)$ asymptotic confidence sequence for $\tau_0.$
\end{theorem}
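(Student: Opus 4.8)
The plan is to recognize the boundary in \eqref{eq:asy_cs} as the classical normal-mixture (conjugate-mixture) boundary and to verify Definition~\ref{def:asycs} following the asymptotic-confidence-sequence recipe of \cite{waudby2024time}. Write $S_t=\sum_{j=1}^{t}u_j=t(\hat{\tau}_t-\tau_0)$ with $u_j=\phi_j-\tau_0$. By Lemma~\ref{lem:mart}, $\{u_j\}$ is a square-integrable martingale difference sequence w.r.t.\ $\{\cF_j\}$, so $S_t$ is a mean-zero martingale; let $V_t:=\sum_{j=1}^{t}\var(u_j\mid\cF_{j-1})$ denote its cumulative conditional variance. By Lemma~\ref{lem:con_var_as}, $V_t/t\toP\varsigma_*^2$ and $\hat{\varsigma}_t^2\toP\varsigma_*^2$, which (after strengthening to the almost-sure mode, see below) are the two ingredients driving the ratio condition of Definition~\ref{def:asycs}.

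First I would construct the exact Gaussian confidence sequence underlying the asymptotic one. For a standard Brownian motion $B$, the process $\exp\{\lambda B(s)-\lambda^2 s/2\}$ is a martingale in $s$; mixing $\lambda\sim N(0,\varrho^2)$ and integrating out $\lambda$ gives the closed-form mixture supermartingale $M(s)=(\varrho^2 s+1)^{-1/2}\exp\{\varrho^2 B(s)^2/[2(\varrho^2 s+1)]\}$. Applying Ville's inequality to $M$ yields, with probability at least $1-\alpha$ and uniformly in $s$, the boundary $B(s)^2<\varrho^{-2}(\varrho^2 s+1)\log\{(\varrho^2 s+1)/\alpha^2\}$. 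Evaluated along the variance clock $s=V_t$, and transported back to $S_t$ through a strong Gaussian approximation (a Strassen/KMT-type embedding for the martingale $\{u_j\}$ coupling $S_t$ with $B(V_t)$ up to an error negligible relative to the boundary width, which is of order $\sqrt{t\log t}$), this produces a genuine non-asymptotic $(1-\alpha)$ confidence sequence $(\widetilde{L}_t,\widetilde{U}_t)$ for $\tau_0$ whose radius is $t^{-1}\sqrt{\varrho^{-2}(\varrho^2 V_t+1)\log[(\varrho^2 V_t+1)/\alpha^2]}$.

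It then remains to show that this oracle radius and the plug-in radius in \eqref{eq:asy_cs}, which replaces the clock $V_t$ by $t\hat{\varsigma}_t^2$, are asymptotically equivalent, i.e.\ their ratio tends to $1$ almost surely. Since $V_t=t\varsigma_*^2(1+o(1))$ and $t\hat{\varsigma}_t^2=t\varsigma_*^2(1+o(1))$ almost surely, both arguments of the logarithm equal $t\varsigma_*^2\varrho^2(1+o(1))$, so the dominant $\sqrt{s\log s}$ factors agree to leading order and the radius ratio converges to $1$; the same comparison, together with the coupling error being lower order than the radius, gives the center alignment. This verifies conditions (i) and (ii) of Definition~\ref{def:asycs}, establishing that \eqref{eq:asy_cs} is a $(1-\alpha)$ asymptotic confidence sequence for $\tau_0$.

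The main obstacle is the strong Gaussian approximation step: because the asymptotic analysis here deliberately relaxes the bounded-outcome assumption, the coupling must be secured under moment/Lindeberg-type control on the conditional laws of $u_j$ rather than uniform boundedness, and the embedding error must be shown to vanish relative to the iterated-logarithm-order width of the mixture boundary. A secondary technical point is upgrading the in-probability convergences $V_t/t\toP\varsigma_*^2$ and $\hat{\varsigma}_t^2\toP\varsigma_*^2$ to the almost-sure statements required by the ratio condition, which I would handle via a martingale strong law, or alternatively by invoking the time-uniform CLT of \cite{waudby2024time} directly, whose hypotheses are precisely the martingale-difference structure, variance consistency, and a Lindeberg condition verified above.
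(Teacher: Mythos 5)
Your fallback route (invoke the time-uniform asymptotic-confidence-sequence result of \cite{waudby2024time} after verifying the martingale-difference structure, a Lindeberg-type condition, and variance-ratio consistency) is exactly what the paper does: its proof restates Proposition 2.5 of that reference as a lemma with three conditions (L1) variance growth, (L2) Lindeberg-type uniform integrability, (L3) $\hat{\varsigma}_t^2/\widetilde{\varsigma}_t^2\to 1$ a.s., and checks them. Your primary route is the same mathematics with the black box opened: the Gaussian-mixture supermartingale $M(s)=(\varrho^2s+1)^{-1/2}\exp\{\varrho^2B(s)^2/[2(\varrho^2s+1)]\}$, Ville's inequality, and a Strassen-type strong coupling of $S_t$ with $B(V_t)$ is precisely how the cited result is proved, and your boundary algebra is correct. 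So the two proofs buy different things: the paper gets brevity by citing, while your unpacked version makes visible why the coupling error $o(\sqrt{V_t\log\log V_t})$ is dominated by the boundary width $\asymp\sqrt{V_t\log V_t}$.

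Two concrete repairs are needed. First, your step ``$V_t=t\varsigma_*^2(1+o(1))$ and $t\hat{\varsigma}_t^2=t\varsigma_*^2(1+o(1))$ almost surely'' is not obtainable from the paper's results: Lemma~\ref{lem:con_var_as} and Proposition~\ref{propos:pi} give only convergence \emph{in probability} of $\eE(u_t^2|\cF_{t-1})$ and $\pi_t$ to their limits, and no martingale strong law upgrades the convergence of the conditional variance (which depends on the adaptively estimated $\pi_t$ and $\hat\mu_{z,t-1}$) to an almost-sure statement. The correct, and sufficient, statement is the \emph{clock ratio} $t\hat{\varsigma}_t^2/V_t\to1$ a.s.: this follows from the SLLN for martingale differences applied to $v_t=u_t^2-\eE(u_t^2|\cF_{t-1})$ (as in the proof of Theorem~\ref{thm:CI}), strong consistency $\hat{\tau}_t\to_{a.s.}\tau_0$ (Theorem~\ref{thm:asy_nor}), and the lower bound $V_t\ge t(\sigma_0+\sigma_1)^2$ from Theorem~\ref{thm:variance}. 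Since the boundary in \eqref{eq:asy_cs} is a fixed smooth function of the clock and $V_t\to\infty$, ratio convergence of clocks implies ratio convergence of the boundaries, which is all Definition~\ref{def:asycs} asks; route the comparison through $V_t$ directly, never through the constant $\varsigma_*^2$. Second, you assert rather than verify the Lindeberg condition; this is where the real work sits. Because $\pi_t^{-1}\le 2t^{\eta}$, the conditional moments satisfy $\eE(|u_t|^{2+\theta}|\cF_{t-1})\lesssim t^{\eta(1+\theta)}(\log t)^{1+\theta/2}$ (Lemma~\ref{lem:lyapunov}), and summability of $\eE(|u_t|^{2+\theta}|\cF_{t-1})/\widetilde{V}_t^{1+\theta/2}$ holds only when $\theta>2\eta/(1-2\eta)$, i.e., the clipping rate restriction is essential and the Lyapunov-implies-Lindeberg step must be invoked explicitly. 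Without this verification your coupling has no hypothesis to stand on, precisely because the outcomes are unbounded.
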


Given time $t$ and type-I error rate $\alpha,$ one can choose the parameter $\varrho$ to minimize the interval width in Theorem \ref{thm:asy_cs}. As discussed in Appendix B.2 of \cite{waudby2024time}, there exists an exact but implicit solution for the optimal $\varrho$, and a closed-form approximation is given by $\varrho=\sqrt{\{-2\log\alpha+\log(-2\log\alpha+1)\}/(t\hat{\varsigma}_t^2)}$.

\subsection{Sequential testing under non-asymptotic regime} \label{subsec:eb_cs}

% This section presents a non-asymptotic confidence sequence for the ATE, following Definition~\ref{def:asycs}, based on recent advanced methods for inference on the means of bounded random variables (``rewards'') in sequential settings (\citealp{waudby2024estimating};\citealp{waudby2024anytime}). 

We next present a non-asymptotic confidence sequence  (Cf. Definition~\ref{def:asycs}) for the ATE based on recent advances for inferring the means of bounded random variables, the so-called rewards, in sequential settings \citep{waudby2024estimating,waudby2024anytime}.
%(\citealp{waudby2024estimating}; \citealp{waudby2024anytime}). 
%The intuition is to employ a predictable sequence to truncate the reward predictors, thereby constructing a predictable plug-in empirical Bernstein supermartingale, and then applying Ville's inequality to guarantee the coverage of the confidence sequence. 
%{\color{blue}The intuition is to construct a predictable empirical Bernstein supermartingale based on the reward at each stage, and then apply Ville's inequality to guarantee the coverage of a non-asymptotic confidence sequence for the mean reward $\tau_0$.}
In our framework, the reward at Stage $t$ can be chosen as $\phi_t$. Without loss of generality, we assume that $Y_t(0),Y_t(1)\in[0,1]$ for all $t\ge1$; otherwise, this can be ensured by a rescaling technique. Accordingly, we also assume that $\hat{\mu}_{0,t-1}(\cdot,\cdot),\hat{\mu}_{1,t-1}(\cdot,\cdot)\in[0,1]$. Let $k_t=\{\pi_t\wedge (1-\pi_t)\}^{-1}+1$. Since $\pi_t\in\cF_{t-1}$, we have that $\{k_t\}_{t\ge1}$ is a nonnegative and predictable sequence. Define
$
\xi_t=\frac{\phi_t}{k_t+1},\quad \hat{\xi}_{t-1}=\left(\frac{1}{t-1}\sum_{j=1}^{t-1}\xi_j\right)\wedge\frac{1}{k_t+1},\quad \widebar{\xi}_t=\left(\frac{1}{t}\sum_{j=1}^{t}\xi_j\right)\wedge\frac{1}{k_t+1}.
$
Then $\xi_t$ is the reweighted reward, and $\hat{\xi}_{t-1}$ and $\widebar{\xi}_t$ are the average reweighted reward with truncation which  serve as estimates of the mean of $\{\xi_t\}_{t\ge1}$.  

%With the above notation in mind, we are ready to state the main result for the non-asymptotic empirical Bernstein (EB) confidence sequence. 
We provide the construction of a non-asymptotic  empirical Bernstein (EB)  confidence sequence (Cf. Definition \ref{def:asycs}) in Theorem \ref{thm:finite_cs}.

\begin{theorem}
    \label{thm:finite_cs}
    Suppose that Condition~\ref{cond:sutva} of the supplementary material holds, and $Y_t(0),$ $Y_t(1),$ $\hat{\mu}_{0,t-1}(\cdot,\cdot),$ $\hat{\mu}_{1,t-1}(\cdot,\cdot)\in[0,1]$ for all $t\ge1.$ For a given $\alpha\in(0,1)$, define the sequence $\{\lambda_{\alpha,t}\}_{t\ge1}$ as
    $$
    \lambda_{\alpha,t}=\sqrt{\frac{2\log(1/\alpha)}{\hat{\nu}_{t-1}^2t\log(1+t)}}\wedge c,\quad\hat{\nu}_t^2=\frac{\nu_0^2+\sum_{j=1}^{t}(\xi_j-\widebar{\xi}_j)^2}{t+1},
    $$
    where $c\in(0,1)$ is some truncation parameter, and $\nu_0^2$ is some pre-specified parameter that can be thought of as a prior guess for the variance of $\{\xi_t\}_{t\ge1}$. Then, 
    \begin{equation}
        \label{eq:non_cs}
            (L_t^{\EB},U_t^{\EB}):=\left(\frac{\sum_{j=1}^{t}\lambda_{\alpha,j}\xi_j}{\sum_{j=1}^{t}\lambda_{\alpha,j}/(k_j+1)}\pm\frac{\log(1/\alpha)+\sum_{j=1}^{t}(\xi_j-\hat{\xi}_{j-1})^2\psi_E(\lambda_{\alpha,j})}{\sum_{j=1}^{t}\lambda_{\alpha,j}/(k_j+1)}\right)
    \end{equation}
    forms a $(1-\alpha)$ non-asymptotic confidence sequence  for $\tau_0,$ where $\psi_E(\lambda):=-\log(1-\lambda)-\lambda.$
\end{theorem}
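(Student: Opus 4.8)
The plan is to cast the statement as sequential mean estimation for the bounded, reweighted rewards $\{\xi_t\}$, building a nonnegative test supermartingale (e-process) indexed by a candidate value of $\tau_0$ in the betting / empirical-Bernstein paradigm of \cite{waudby2024estimating}, then applying Ville's inequality and inverting. First I would record the two structural facts that make the bounded-mean machinery applicable. Since $Y_t(0),Y_t(1),\hat{\mu}_{0,t-1},\hat{\mu}_{1,t-1}\in[0,1]$ and $\pi_t^{-1}\vee(1-\pi_t)^{-1}\le\{\pi_t\wedge(1-\pi_t)\}^{-1}$, the single-stage AIPW term obeys $|\phi_t|\le 1+\{\pi_t\wedge(1-\pi_t)\}^{-1}=k_t$, so $\xi_t=\phi_t/(k_t+1)\in(-1,1)$ and, using $\tau_0\in[-1,1]$, the centered reward satisfies $\xi_t-\tau_0/(k_t+1)\ge -1$. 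This is exactly the range condition needed below and is the reason for the reweighting by $1/(k_t+1)$: it renormalizes the $k_t$-scale of $\phi_t$ into a unit-scale reward. Second, because $\pi_t\in\cF_{t-1}$ is the \emph{known} design propensity, the AIPW construction gives $\e[\phi_t\mid\cF_{t-1}]=\e\{\tau(\bX_t,\bS_t)\}=\tau_0$ with no correctness assumption on $\hat{\mu}_{z,t-1}$ (only boundedness), so $\{\phi_t\}$ is a martingale difference sequence with $\e[\xi_t\mid\cF_{t-1}]=\tau_0/(k_t+1)$; this is the source of robustness and is why only Condition~\ref{cond:sutva} is required.

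Next I would verify the predictability the construction needs: $k_t$ depends only on $\pi_t\in\cF_{t-1}$, while $\hat{\xi}_{t-1}$, $\bar\xi_{t-1}$, and $\hat{\nu}_{t-1}^2$ (hence $\lambda_{\alpha,t}$) are built from $\xi_1,\dots,\xi_{t-1}$ together with $k_t$, so all three are $\cF_{t-1}$-measurable. The truncations $\hat{\xi}_{t-1},\bar\xi_t\le(k_t+1)^{-1}$ confine these estimates to the range of the true conditional mean $\tau_0/(k_t+1)$, keeping the plug-in consistent. With these in hand, for a candidate $m$ I would form
$$
M_t(m)=\exp\Big\{\textstyle\sum_{j=1}^t\lambda_{\alpha,j}\big(\xi_j-\tfrac{m}{k_j+1}\big)-\sum_{j=1}^t\psi_E(\lambda_{\alpha,j})(\xi_j-\hat{\xi}_{j-1})^2\Big\}
$$
and show that, under $\tau_0=m$, it is a nonnegative supermartingale with $M_0(m)=1$. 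The per-step bound $\e[M_t(m)/M_{t-1}(m)\mid\cF_{t-1}]\le1$ is the predictable-plug-in empirical-Bernstein inequality: starting from the Fan--Grama--Liu bound $e^{\lambda y}\le 1+\lambda y+\psi_E(\lambda)y^2$ (valid for $y\ge-1$, $\lambda\in[0,1)$, applied to $y=\xi_j-m/(k_j+1)$), the centering of the reward at $\tau_0/(k_j+1)$ annihilates the linear term in conditional expectation, and the realized predictable surrogate $(\xi_j-\hat{\xi}_{j-1})^2$ replaces the conditional second moment via the WSR argument.

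Applying Ville's inequality to $M_t(\tau_0)$ yields $\eP(\exists t:M_t(\tau_0)\ge1/\alpha)\le\alpha$, so with probability at least $1-\alpha$ we have $\log M_t(\tau_0)<\log(1/\alpha)$ for all $t$ simultaneously. Rearranging this to isolate $\tau_0$, dividing through by the predictable positive weight $\sum_j\lambda_{\alpha,j}/(k_j+1)$, produces one side of \eqref{eq:non_cs}; running the same argument with the reward reflected about its mean (equivalently, the e-process with $-\lambda_{\alpha,j}$) produces the other side, and the two-sided interval $(L_t^{\EB},U_t^{\EB})$ follows, with the displayed $\log(1/\alpha)$ factor accounted for by the standard combination of the two one-sided e-processes.

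I expect the main obstacle to be the supermartingale step, specifically justifying the use of the \emph{observable} predictable variance surrogate $(\xi_j-\hat{\xi}_{j-1})^2$ in the exponent rather than the realized $(\xi_j-\tau_0/(k_j+1))^2$ or its conditional expectation; this substitution is what renders the interval width computable, and it is where the truncations and the predictable betting sequence $\lambda_{\alpha,j}$ (through $\hat{\nu}_{t-1}^2$) are essential. The secondary care is constant bookkeeping in passing from the per-side control to the two-sided level. The remaining pieces—the $|\phi_t|\le k_t$ bound, the martingale-difference identity from the known propensity, and the algebraic inversion of the Ville event—are routine given the earlier results of the paper.
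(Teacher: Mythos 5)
Your overall architecture coincides with the paper's: reweight $\phi_t$ into the unit-scale rewards $\xi_t$, check $\eE(\xi_t\mid\cF_{t-1})=\tau_0/(k_t+1)$ from Condition~\ref{cond:sutva} alone, build a predictable-plug-in empirical-Bernstein test supermartingale, apply Ville's inequality (Lemma~\ref{lem:ville}), invert, and obtain the second side by the mirroring trick. The paper does exactly this, importing the supermartingale step as a black box via Lemma~\ref{lem:supermart} (Lemma 1 of Waudby-Smith et al., 2024b). Your bound $|\phi_t|\le k_t$, the predictability of $k_t$, $\hat{\xi}_{t-1}$ and $\lambda_{\alpha,t}$, and the inversion/mirroring steps all match the paper's proof.

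However, the one step you yourself flag as the main obstacle is executed incorrectly, not merely left incomplete. You apply the Fan--Grama--Liu inequality with $y=\xi_j-m/(k_j+1)$ and then propose to \emph{replace} the resulting quadratic $(\xi_j-m/(k_j+1))^2$ by the observable $(\xi_j-\hat{\xi}_{j-1})^2$. That replacement is not a valid operation: the quadratic enters the exponent of $M_t$ with a negative sign, there is no almost-sure ordering between the two quantities, and on any event where $(\xi_j-\hat{\xi}_{j-1})^2$ is the smaller one the substitution increases $M_t$, so the supermartingale property can fail. The correct argument — which is what Lemma~\ref{lem:supermart} encodes — applies Fan--Grama--Liu centered at the \emph{predictable estimate}, $y=\xi_j-\hat{\xi}_{j-1}$, so the observable quadratic appears automatically rather than by substitution; one then writes $M_j/M_{j-1}\le\{1+\lambda_{\alpha,j}(\xi_j-\hat{\xi}_{j-1})\}\exp\{\lambda_{\alpha,j}(\hat{\xi}_{j-1}-\tau_0/(k_j+1))\}$ and takes conditional expectations to get $\{1+\lambda_{\alpha,j}(\tau_0/(k_j+1)-\hat{\xi}_{j-1})\}\exp\{-\lambda_{\alpha,j}(\tau_0/(k_j+1)-\hat{\xi}_{j-1})\}\le1$. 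Consequently, the range condition that must be verified is $\xi_j-\hat{\xi}_{j-1}\ge-1$, not your $\xi_j-\tau_0/(k_j+1)\ge-1$; this is precisely where the truncation $\hat{\xi}_{j-1}\le(k_j+1)^{-1}$ in the definition of $\hat{\xi}_{j-1}$ is used, since $\phi_j\ge-k_j$ gives $\xi_j-\hat{\xi}_{j-1}\ge-k_j/(k_j+1)-1/(k_j+1)=-1$ almost surely. Note also that your patched-at-the-mean supermartingale, even if retained, would not prove the stated theorem: its inversion involves $\sum_j\psi_E(\lambda_{\alpha,j})(\xi_j-m/(k_j+1))^2$, which depends on the candidate $m$ and does not yield the closed-form interval \eqref{eq:non_cs}.
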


The intuition of Theorem~\ref{thm:finite_cs} is to construct a predictable empirical Bernstein supermartingale, as in \eqref{eq:supermartingale} of the supplementary material, based on the reward $\phi_t$ at each stage, and then apply Ville's inequality to guarantee the coverage of a non-asymptotic confidence sequence for the mean reward $\tau_0=\eE(\phi_t|\cF_{t-1})$.
In Theorem~\ref{thm:finite_cs}, $\hat{\nu}_t^2$  serves as an estimate of the variance of $\{\xi_t\}_{t\ge1}$. The predictable truncation parameter $k_t$ is central to maintaining the validity of the confidence sequence: it, together with the function $\psi_E(\cdot)$, guarantees the construction of a valid supermartingale. $\lambda_{\alpha,j}$ is a tuning parameter, and $(\xi_j-\hat{\xi}_{j-1})^2$ captures the fluctuation of the process, thereby enabling the variance adaptivity of the resulting confidence sequence.

Unlike the approach in \cite{cook2024semiparametric}, which adopts a diverging truncation sequence $k_t=\zeta_t^{-1}\to\infty$, our method determines $k_t$ in a fully data-driven manner. By Proposition~\ref{propos:pi} of the supplementary material, $k_t$ converges in probability to $\underline{\pi}^{-1}+1$ as $t\to\infty$, providing an automatic and stable calibration mechanism. 
%A fixed-time confidence interval for $\tau_0$ can then be derived directly from Theorem~\ref{thm:finite_cs}, in a manner analogous to Corollary 2 of \cite*{waudby2024anytime}. 
Following Remark 2 of \cite{waudby2024anytime}, a larger $k_t$ increases variance adaptivity—enhancing robustness to model misspecification—but also yields wider confidence sequences at early stages. Consequently, when $\underline{\pi}$ is small, the method exhibits reduced power at the early stages, reflecting a trade-off between early precision and long-run adaptivity. Additionally, since the two proposed sequential testing procedures are both variance-adaptive, the resulting confidence intervals are narrower than those of existing methods \citep[e.g.,][]{cook2024semiparametric,kato2025efficient}, owing to the efficiency gain induced by surrogates.

%In the existing literature on sequential testing of the ATE, \cite{cook2024semiparametric,oprescu2025efficient} established asymptotic confidence sequences similar to our Theorem~\ref{thm:asy_cs}, also based on the Lindeberg-type martingale framework developed in \cite*{waudby2024time}. For non-asymptotic confidence sequences, \cite{kato2025efficient} proposed LIL-based confidence sequences leveraging the law of the iterated logarithm (LIL) as derived in \cite{balsubramani2016sequential}, and \cite{cook2024semiparametric} also developed empirical Bernstein confidence sequences.

\subsection{Trade-offs and practical guidance}
\label{subsec:tradeoff}

% Given the two sequential testing regimes, a natural question arises: which procedure is preferable in practice? We first provide intuition for each procedure, followed by practical guidance on their implementation.

% To facilitate the discussion, 
To compare the two sequential testing regimes, we introduce the concept of a stopping time. Given a confidence sequence $(L_t,U_t)_{t\ge1}$, the experimenter rejects $H_0$ once there exists a Stage $t$ such that $\tau_{H_0}\notin(L_t,U_t)$. To mitigate the influence of large estimation errors at early stages, the experimenter may begin inference only after a pre-specified number of stages. Formally, the stopping time (if it exists) is defined as $\vartheta:=\min\{t\ge t_0:\tau_{H_0}\notin(L_t,U_t)\}$, where $t_0$ denotes the initial peeking time \citep[]{waudby2024time, cook2024semiparametric}. In what follows, we shall discuss the trade-offs of the sequential testing procedures developed above.

The asymptotic confidence sequence in \eqref{eq:asy_cs} enables anytime-valid inference under the same conditions required by the CLT. Its width scales naturally with the data variance, mirroring the empirical variance in the CLT, thereby exhibiting strong variance adaptivity. This property further underscores the benefit of our surrogate-leveraged adaptive design in reducing the semiparametric efficiency bound. However, since it is valid only asymptotically, in practice, to maintain proper type-I error control, the experimenter may need to select a sufficiently large initial peeking time.

The non-asymptotic confidence sequence in \eqref{eq:non_cs} provides anytime-valid coverage guarantees that hold uniformly over all stages, ensuring that the type-I error rate is controlled regardless of the chosen initial peeking time. However, the resulting confidence sequence is generally more conservative than its asymptotic counterpart, as illustrated by our numerical experiments. Moreover, its validity relies on the boundedness of the primary outcomes, which can be restrictive in practice. In addition, when $\underline{\pi}$ is very small, the method can experience reduced power in small samples.

The trade-offs between the two sequential testing regimes offer practical insights for implementation. {\it On the one hand}, when the primary outcomes are heavy-tailed or when there is a large discrepancy in conditional variances between the treatment and control groups—corresponding to unbounded outcomes and a small $\underline{\pi}$, respectively—the asymptotic confidence sequence is generally preferred. {\it On the other hand}, when the number of experimental stages is too limited, the empirical Bernstein confidence sequence remains the more reliable choice. %Overall, the selection between the two methods should be guided jointly by the sample size and the distributional characteristics of the outcomes. %On the one hand, when the primary outcomes are heavy-tailed or when there is a large discrepancy in conditional variances between the treatment and control groups—corresponding to unbounded outcomes and a small $\underline{\pi}$, respectively—the asymptotic confidence sequence generally yields higher power and shorter stopping times, making it the preferred option. On the other hand, when the number of experimental stages is too limited to permit a sufficiently large initial peeking time, the empirical Bernstein confidence sequence remains the more reliable choice. Overall, the selection between the two methods should be guided jointly by the sample size and the distributional characteristics of the outcomes.
We evaluate the empirical performance of the proposed adaptive designs and sequential testing procedures via a series of simulation studies, with the results presented in Section~\ref{supsec:sim} of the supplementary material.

\section{A synthetic case study}
\label{sec:case}

We now investigate the performance of SLOACI through a synthetic case study using data from Project STAR (Student-Teacher Achievement Ratio), which has been extensively analyzed in prior research (e.g., \citealp{chetty2011does}; \citealp{athey2025combining}). Project STAR was a four-year randomized experiment conducted in 79 Tennessee public schools to examine the effect of class size on test scores. Following \cite{athey2025combining}, we examine whether assignment to a small class as opposed to a regular-size class significantly improves the reading and math scores via a linear model. %Approximately 11600 students were randomly assigned to a small class, regular-size class, or regular-size class with a teacher aide. 

\subsection{Setup}\label{subsec:case_setup}

Let $Y_t(1)$ and $Y_t(0)$ denote the potential outcomes for third-grade student $t$, representing the total standardized test scores in reading and math if assigned to a small class or a regular-size class, respectively. The observed outcome is standardized to have a mean of zero and unit variance. The covariates $\bX_t$ include (i) students' race and ethnicity, (ii) gender, and (iii) eligibility for free or reduced-price lunch. After excluding observations with missing values, the analysis is based on 3523 observations. To provide a self-contained illustration of how to use synthetic surrogates to improve inference power, we follow \cite{mccaw2024synthetic} by splitting the data into a  training dataset $\cD_1$ and an inference dataset $\cD_2$ with 1523 and 2000 observations, respectively. For $t\in\cD_1$, we use the observed $Y_t(0)$ and $Y_t(1)$ as dependent variables and the nine available independent variables\footnote{Including school type, highest degree of the teacher, teacher’s career ladder level, years of total teaching experience, teacher’s ethnicity, school system ID, and the three covariates in $\bX_t$.}, denoted by $\widetilde{\bX}_t$, to train two separate random forest models $\hat{f}_0(\cdot)$ and $\hat{f}_1(\cdot)$. Then, the synthetic surrogates are constructed as $S_t(0)=\hat{f}_0(\widetilde{\bX}_t)$ and $S_t(1)=\hat{f}_1(\widetilde{\bX}_t)$ for all $t\in\cD_1\cup\cD_2$. We assume that $(\bX_t,S_t(0),S_t(1),Y_t(0),Y_t(1))$ follows the linear Model~\eqref{eq:model_linear}, where model errors are assumed to follow the bivariate Gaussian distributions (see Remark~\ref{rmk:biv_gau} for details).

Since our data were collected from a randomized experiment, we need to impute the counterfactual outcomes to mimic the implementation of an adaptive design and evaluate the results. To address this issue, we propose the following novel approach, which exploits the property of the joint regression Model~\eqref{eq:model_linear} that for $z\in\{0,1\},$
$$
Y_t(z)|(\bX_t,S_t(z))\sim\cN\Big(\balpha_{Y,z}^{\T}\bX_t+\rho_z\sigma_{Y,z}\sigma_{S,z}^{-1}\big(S_t(z)-\balpha_{S,z}^{\T}\bX_t\big),(1-\rho_z^2)\sigma_{Y,z}^2\Big).
$$
\begin{itemize}
    \item[(i)] Using the training dataset $\{(\bX_t,Z_t,Y_t,S_t(0),S_t(1))\}_{t\in\cD_1}$, we fit Model~\eqref{eq:model_linear} and obtain the estimators $\hat{\balpha}_{Y,z},\hat{\balpha}_{S,z},\hat{\sigma}_{Y,z}^2,\hat{\sigma}_{S,z}^2$ and $\hat{\rho}_z$ for each $z\in\{0,1\}$.
    
    \item[(ii)] For each $t\in\cD_2,$ if $Y_t(0)$ is missing, then it is imputed as $Y_t(0)=\hat{\balpha}_{Y,0}^{\T}\bX_t+\hat{\rho}_1\hat{\sigma}_{Y,1}\hat{\sigma}_{S,1}^{-1}\big(S_t(1)-\hat{\balpha}_{S,0}^{\T}\bX_t\big)+\varepsilon_t(0)$, where $\varepsilon_t(0)$ is independently sampled from $\cN\big(0,(1-\hat{\rho}_0^2)\hat{\sigma}_{Y,1}^2\big)$. If $Y_t(1)$ is missing, an analogous formulation is applied.
\end{itemize}

Using the imputed data, we compare our SLOACI with two competing designs: RAR \citep{cook2024semiparametric,kato2025efficient} and complete randomization with a fixed allocation $\pi_t=0.5$ (denoted as ``RCT"). The detailed procedure of RAR is given in Section~\ref{supsubsec:rar} of the supplementary material. The sequential testing procedure involves three valid tests: BF, ASY, and EB.
The true value of the ATE is defined as the difference-in-means estimate based on $\{(Y_t(0),Y_t(1))\}_{t\in\cD_2}$, where counterfactual outcomes are imputed by their condition means without randomness, leading to $\tau_0=0.229$.
Since educational interventions have delayed outcomes, with a one-year delay in our case, a fully adaptive design is infeasible. Therefore, we adopt a batch-wise adaptive design with batch sizes $b=400$ and $b=50$. The former mimics a five-year randomized experiment with $b=400$ observations each year and is consistent with real-world practice, whereas the latter is used to assess the impact of reducing the batch size. 
To implement the adaptive design in Section~\ref{subsec:linear} and the sequential testing procedure extended from Section~\ref{sec:testing} to the linear model, we set $\eta=0.25$, $T_0=b,\tau_{H_0}=0,t_0=100,c=0.5,\nu_0^2=1$, and $\hat{\xi}_0=0$. The nominal type-I error rate is fixed at $\alpha=0.05$. We consider a simulation-based evaluation with 2000 replications.

\subsection{Results}\label{subsec:case_results}

Figure~\ref{fig:case_power} presents the normalized variance and empirical power curves for different designs and tests, with the corresponding means and standard errors of the stopping times reported in Table~\ref{tab:case_stop}. Moreover, a specific replication from the simulations is presented in Section~\ref{supsec:empirical} of the supplementary material to demonstrate the trajectory of confidence sequences against the cumulative sample size. 

\begin{figure}[ht]
\begin{center}
\includegraphics[width=0.8\linewidth]{./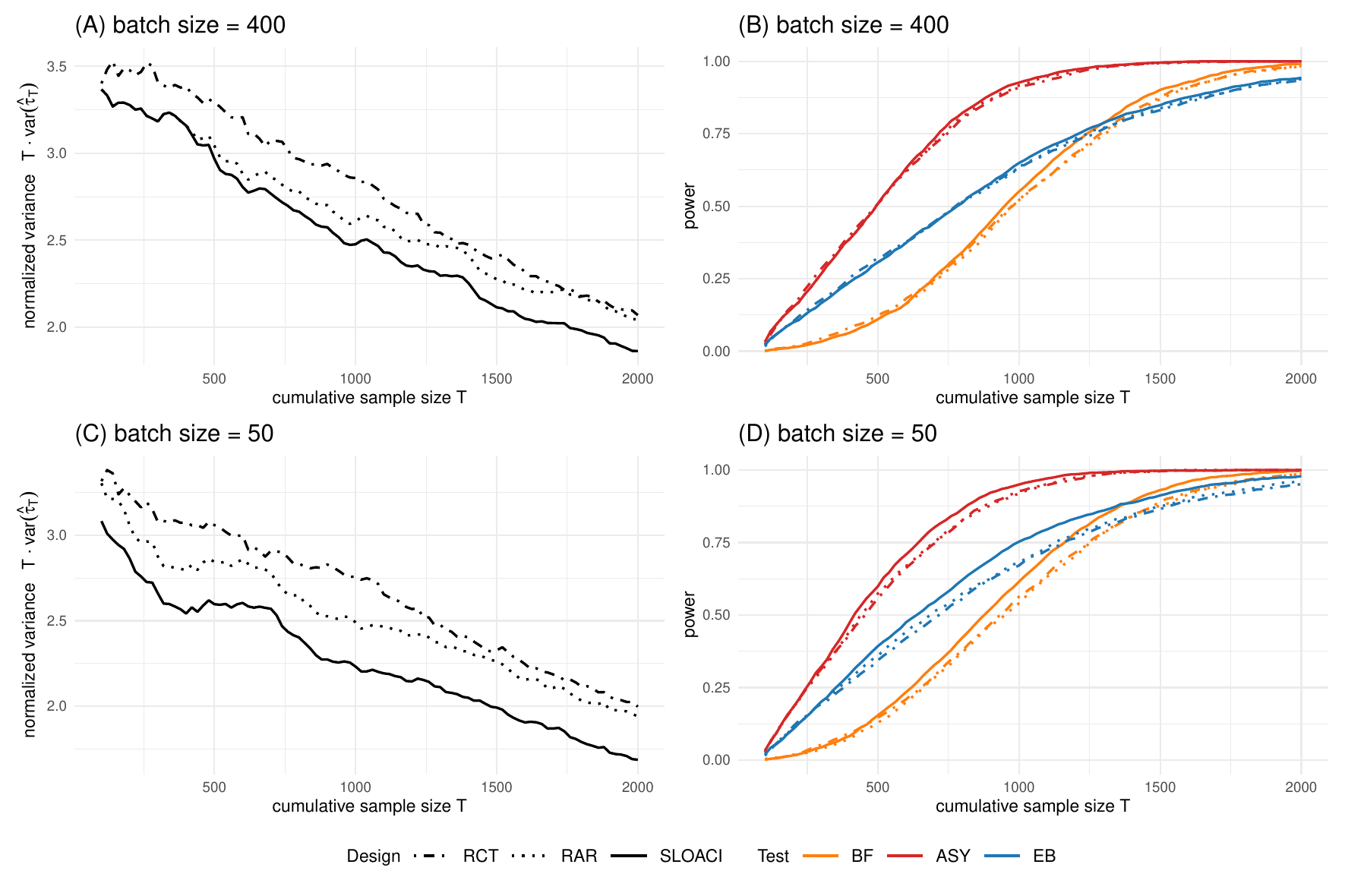}
\end{center}
\vspace{-1em}
\caption{\small Comparison of normalized variances of the (adaptive) AIPW estimator and empirical powers across different designs and tests for the STAR data, with batch size $b=400$ in panels (A)--(B) and $b=50$ in panels (C)--(D).}
\label{fig:case_power}
\end{figure}

\begin{table}[ht]
\footnotesize
\centering
\caption{\small Means and standard errors (in parentheses) of stopping times for the  STAR data. %, with batch sizes $b=400$ and $b=50$.
}
\label{tab:case_stop}
\vspace{1em}
\begin{tabular}{lcccccc}
\hline
Test & \multicolumn{2}{c}{BF} & \multicolumn{2}{c}{ASY} & \multicolumn{2}{c}{EB} \\
Batch size $b$ & 400 & 50 & 400 & 50 & 400 & 50 \\
\hline
\multicolumn{1}{l}{RCT} & 1000.0 (9.4)  & 960.3 (9.4)  & 536.2 (7.2) & 504.9 (6.9) & 872.7 (12.5) & 816.5 (12) \\
    \multicolumn{1}{l}{RAR} & 1013.8 (9.3) & 975.5 (9.4) & 535.7 (7.0) & 502.3 (6.9) & 874.4 (12.3) & 796.6 (11.8) \\
    \multicolumn{1}{l}{SLOACI} & 977.5 (8.8) & 901.3 (8.5) & 526.6 (6.8) & 470.1 (6.2) & 860.9 (12.1) & 730.7 (10.6) \\
\hline
\end{tabular}
\end{table}

Several patterns can be observed: \textit{First}, the proposed SLOACI consistently achieves the smallest variance, the highest power, and the shortest average stopping time across simulations. Among the three tests, ASY attains the highest power relative to BF and EB. \textit{Second}, when the batch size decreases from 400 to 50 (with a correspondingly shorter initialization phase), SLOACI demonstrates improved performance relative to the other benchmark methods. In particular, when $b=50$, SLOACI significantly reduces the average stopping time, accounting for the reported standard errors. \textit{Third}, sequential testing provides insight into whether the adaptive design and test can effectively shorten the experimental horizon. For example, in the case of $b=400$, where the pre-specified horizon corresponds to 5 years, a stopping time not larger than 1200 implies that the treatment effect can be detected within 3 years. Under the baseline RCT, the proportions of experiments in which the horizon is reduced to 3 years are 68.30\%, 96.05\%, and 72.65\% under the BF, ASY, and EB tests, respectively; under RAR, the corresponding proportions are 67.50\%, 96.50\%, and 73.40\%. By contrast, the proposed SLOACI further increases these proportions to 72.15\%, 97.20\%, and 74.55\%. This highlights that incorporating surrogates may reduce the experimental horizon, thereby yielding economic and social benefits.

\section{Discussion}
\label{sec:discuss}

We discuss several important directions for future research. {\it First}, SLOACI may be extended to the heteroskedastic setting, which generalizes the design to a covariate-adjusted surrogate-leveraged RAR design. {\it Second}, we may incorporate the double/debiased machine learning framework \citep{chernozhukov2018double}, which uses cross-fitting to avoid overfitting and leverages Neyman orthogonality to mitigate bias inflation. {\it Third}, our algorithm involves an exploration phase whose length $T_0$ must be pre-specified. We may consider replacing CLIP with more advanced techniques, such as the Optimistic Policy Tracking Algorithm proposed in \cite{neopane2025optimistic}, to allow $T_0$ to be determined adaptively within the experimentation process. The above topics are beyond the scope of the current paper and will be pursued elsewhere.

\paragraph{Data Availability Statement:} The data that support the findings of this study are openly available in R package ``AER" at \url{https://rdrr.io/cran/AER/man/STAR.html}. %, reference number ???.

%\appendix
%\section*{Appendix}
{\small 
\linespread{1}\selectfont
\bibliographystyle{chicago}
\bibliography{references}
}

\clearpage
\linespread{1.77}\selectfont

\begin{center}
	{\noindent \bf \large Supplementary Material to ``SLOACI: Surrogate-Leveraged Online Adaptive Causal Inference"}\\
\end{center}
	\begin{center}
		{Yingying Fan, Zihan Wang and Waverly Wei}
	\end{center}
\bigskip

\setcounter{page}{1}
\setcounter{section}{0}
\renewcommand{\thesection}{S.\arabic{section}}
\renewcommand{\theHsection}{S.\arabic{section}}
\setcounter{lemma}{0}
\renewcommand{\thelemma}{S.\arabic{lemma}}
\renewcommand{\theHlemma}{S.\arabic{lemma}}
\setcounter{equation}{0}
\renewcommand{\theequation}{S.\arabic{equation}}
\renewcommand{\theHequation}{S.\arabic{equation}}
\setcounter{theorem}{0}
\renewcommand{\thetheorem}{S.\arabic{theorem}}
\renewcommand{\theHtheorem}{S.\arabic{theorem}}
\setcounter{remark}{0}
\renewcommand{\theremark}{S.\arabic{remark}}
\renewcommand{\theHremark}{S.\arabic{remark}}
\setcounter{definition}{0}
\renewcommand{\thedefinition}{S.\arabic{definition}}
\renewcommand{\theHdefinition}{S.\arabic{definition}}
\setcounter{proposition}{0}
\renewcommand{\theproposition}{S.\arabic{proposition}}
\renewcommand{\theHproposition}{S.\arabic{proposition}}
\setcounter{figure}{0}
\renewcommand{\thefigure}{S.\arabic{figure}}
\renewcommand{\theHfigure}{S.\arabic{figure}}
\setcounter{table}{0}
\renewcommand{\thetable}{S.\arabic{table}}
\renewcommand{\theHtable}{S.\arabic{table}}
\setcounter{condition}{0}
\renewcommand{\thecondition}{S.\arabic{condition}}
\renewcommand{\theHcondition}{S.\arabic{condition}}
\setcounter{algorithm}{0}
\renewcommand{\thealgorithm}{S.\arabic{algorithm}}
\renewcommand{\theHalgorithm}{S.\arabic{algorithm}}

This supplementary material contains the asymptotic properties of our method and the regularity conditions in Section~\ref{supsec:asymptotic_theory}, the technical proofs of theoretical results in Section~\ref{supsec:proof}, several extensions in Section~\ref{supsec:method}, all simulation results in Section~\ref{supsec:sim}, and additional empirical results in Section~\ref{supsec:empirical}. We start by introducing some additional notation.
\noindent\textit{Notation}. For a sub-exponential random variable $\varepsilon_1$ and a sub-Gaussian random variable $\varepsilon_2$, denote the corresponding sub-exponential norm by $\|\varepsilon_1\|_{\psi_1}$ and the sub-Gaussian norm by $\|\varepsilon_2\|_{\psi_2}$. For a set $A,$ we use $A^c$ to denote its complement. The symbols `$\to_p$', `$\to_d$' and `$\to_{a.s.}$' mean convergence in probability, in distribution and almost surely, respectively. For random variables $X$ and $Y,$ $X\indep Y$ denotes that $X$ and $Y$ are independent. For $x\in {\mathbb R},$ we use $\lceil x\rceil$ and $\lfloor x\rfloor$ as the ceiling and floor functions of $x$, respectively.

\section{Asymptotic properties of our method}\label{supsec:asymptotic_theory}

The asymptotic properties of the proposed method in this section highlight two benefits of our proposal. \textit{First}, incorporating surrogates yields an efficiency gain compared to the design without surrogates (see Section \ref{ubsec:efficiency_gain}). \textit{Second}, our proposed estimator requires less stringent assumptions than those imposed in the existing literature to establish consistency and asymptotic normality results. %{\color{magenta}less stringent comparing to what? I feel that the asymptotic framework is relatively standard, how about we try to shorten this part? what do you think if we consolidate all the theoretical results in this part into two, one is the proposition about the oracle allocation and the other about the estimator? An alternative option is that we move the entire part to appendix. JASA only allows 35 pages.} {\color{blue}Zihan: Sure.}

%This construction shares a similar spirit to sample-splitting techniques \citep{chernozhukov2018double}.

%In this section, we provide the theoretical properties of the treatment allocation (Proposition \ref{propos:pi}) and the asymptotic normality of the proposed estimator (Theorem \ref{thm:asy_nor}). Our contributions are highlighted in Theorem \ref{thm:asy_nor}.

We consider the following regularity conditions commonly imposed in the  literature.
\begin{condition}
    \label{cond:sutva}
    (i) (SUTVA) The Stable Unit Treatment Value Assumption (Cf. Section 1.6 of \textcolor{blue}{Imbens and Rubin}, \textcolor{blue}{2015}) holds. 
    (ii) (Homogeneity) $(\bX_t^{\T},S_t(0),S_t(1),Y_t(0),Y_t(1))$ is independently and identically drawn from the same superpopulation for each Stage $t$. (iii) (Unconfoundedness) $(S_t(0), S_t(1),Y_t(0),Y_t(1))\indep Z_t|\bX_t,\cF_{t-1}$. (iv) (Positivity) $0<\pi_t<1$.
\end{condition}
%Condition \ref{cond:sutva}(i)--(iii) are mild causal assumptions, and (iv) requires that the treatment allocation at each Stage $t$ is strictly between 0 and 1.

We next show that our proposed adaptive AIPW estimator $\hat{\tau}_{T}$ in Section \ref{subsec:procedure}  asymptotically attains the semiparametric efficiency bound.

\begin{theorem}
    \label{thm:variance}
    Suppose that Condition~\ref{cond:sutva} holds. %, and $\pi_t,\hat{\mu}_{0,t-1},\hat{\mu}_{1,t-1}\in\cF_{t-1}$ for $t\in[T]$. 
    Then, the proposed estimator $\hat{\tau}_T$ defined in \eqref{eq:proposed_non} is unbiased, and has a variance $V_T$ satisfying:
    $$
    \begin{aligned}
    T\cdot V_T=&\frac{1}{T}\sum_{t=1}^{T}\eE\Big[\frac{\sigma_1^2}{\pi_t}+\frac{\sigma_0^2}{1-\pi_t}+\eE_{(\bsX,\bsS)}\{\tau(\bX,\bS)-\tau_0\}^2\\
    &+\frac{1-\pi_t}{\pi_t}\eE_{(\bsX_t,\bsS_t)}\big[\{\mu_1(\bX_t,\bS_t)-\hat{\mu}_{1,t-1}(\bX_t,\bS_t)\}^2|\cF_{t-1}\big]\\
    &+\frac{\pi_t}{1-\pi_t}\eE_{(\bsX_t,\bsS_t)}\big[\{\mu_0(\bX_t,\bS_t)-\hat{\mu}_{0,t-1}(\bX_t,\bS_t)\}^2|\cF_{t-1}\big]\\
    &+2\eE_{(\bsX_t,\bsS_t)}\big[\{\mu_1(\bX_t,\bS_t)-\hat{\mu}_{1,t-1}(\bX_t,\bS_t)\}\{\mu_0(\bX_t,\bS_t)-\hat{\mu}_{0,t-1}(\bX_t,\bS_t)\}|\cF_{t-1}\big]\Big].
    \end{aligned}
    $$
\end{theorem}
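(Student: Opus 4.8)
The plan is to exploit the martingale structure of the single-stage estimators and then compute the per-stage conditional variance by a nested variance decomposition. Write $u_t := \phi_t - \tau_0$ with $\phi_t$ as in \eqref{eq:single-stage-ATE}. The first step is to verify $\eE[\phi_t \mid \cF_{t-1}] = \tau_0$, which simultaneously yields unbiasedness of $\hat\tau_T$ and shows that $\{u_t\}$ is a martingale difference sequence w.r.t. $\{\cF_t\}$. To do this I would condition on $(\cF_{t-1}, \bX_t, \bS_t)$: since $\hat\mu_{z,t-1}(\bX_t,\bS_t)$ and $\pi_t$ are determined given $(\cF_{t-1},\bX_t,\bS_t)$, and since $\eP(Z_t=1\mid \bX_t,\bS_t,\cF_{t-1})=\pi_t$ with $Z_t \indep (Y_t(0),Y_t(1)) \mid \bX_t,\bS_t,\cF_{t-1}$ (a consequence of Condition~\ref{cond:sutva}(iii) and the fact that the fresh unit is independent of $\cF_{t-1}$ by Condition~\ref{cond:sutva}(ii)), each inverse-probability term has conditional mean $\mu_z(\bX_t,\bS_t)-\hat\mu_{z,t-1}(\bX_t,\bS_t)$. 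The augmentation terms then cancel the plug-in terms, giving $\eE[\phi_t\mid\cF_{t-1},\bX_t,\bS_t]=\mu_1(\bX_t,\bS_t)-\mu_0(\bX_t,\bS_t)=\tau(\bX_t,\bS_t)$; averaging over the fresh $(\bX_t,\bS_t)$ gives $\tau_0$.

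Given the martingale-difference property, $\eE[u_s u_t]=0$ for $s\neq t$, so $V_T = T^{-2}\sum_{t=1}^T \eE[u_t^2] = T^{-2}\sum_{t=1}^T \eE[\Var(\phi_t\mid\cF_{t-1})]$. The core of the proof is then an explicit formula for $\Var(\phi_t\mid\cF_{t-1})$, which I would obtain through the three-level variance decomposition associated with the nested filtration $\cF_{t-1}\subseteq \sigma(\cF_{t-1},\bX_t,\bS_t)\subseteq\sigma(\cF_{t-1},\bX_t,\bS_t,Y_t(0),Y_t(1))$, namely
\begin{align*}
\Var(\phi_t\mid\cF_{t-1})
&=\eE\big[\Var(\phi_t\mid \cF_{t-1},\bX_t,\bS_t,Y_t(0),Y_t(1))\mid\cF_{t-1}\big]\\
&\quad+\eE\big[\Var\big(\eE[\phi_t\mid\cF_{t-1},\bX_t,\bS_t,Y_t(0),Y_t(1)]\,\big|\,\cF_{t-1},\bX_t,\bS_t\big)\mid\cF_{t-1}\big]\\
&\quad+\Var\big(\eE[\phi_t\mid\cF_{t-1},\bX_t,\bS_t]\mid\cF_{t-1}\big).
\end{align*}

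For the innermost term, conditioning on the potential outcomes leaves only $Z_t$ random; using $Z_t(1-Z_t)=0$ and $\Var(Z_t\mid\cdot)=\pi_t(1-\pi_t)$ gives a Bernoulli computation equal to $\frac{1-\pi_t}{\pi_t}(Y_t(1)-\hat\mu_{1,t-1})^2+\frac{\pi_t}{1-\pi_t}(Y_t(0)-\hat\mu_{0,t-1})^2+2(Y_t(1)-\hat\mu_{1,t-1})(Y_t(0)-\hat\mu_{0,t-1})$. For the middle term, $\eE[\phi_t\mid\cdots]=Y_t(1)-Y_t(0)$, whose conditional variance given $(\bX_t,\bS_t)$ is $\sigma_1^2+\sigma_0^2-2\Cov(\varepsilon_t(1),\varepsilon_t(0)\mid\bX_t,\bS_t)$. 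The outermost term is $\Var(\tau(\bX_t,\bS_t)\mid\cF_{t-1})=\eE_{(\bsX,\bsS)}\{\tau(\bX,\bS)-\tau_0\}^2$ since the fresh unit is independent of $\cF_{t-1}$. Taking $\eE[\cdot\mid\cF_{t-1}]$ of the first two lines and splitting $Y_t(z)-\hat\mu_{z,t-1}=\varepsilon_t(z)+(\mu_z-\hat\mu_{z,t-1})$, the cross terms $\eE[\varepsilon_t(z)(\mu_{z'}-\hat\mu_{z',t-1})\mid\cF_{t-1}]$ vanish because $\eE[\varepsilon_t(z)\mid\bX_t,\bS_t]=0$ and the fresh unit is independent of $\cF_{t-1}$, while $\eE[\varepsilon_t(z)^2\mid\cF_{t-1}]=\sigma_z^2$. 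Collecting terms, the coefficients combine as $\frac{1-\pi_t}{\pi_t}\sigma_1^2+\sigma_1^2=\sigma_1^2/\pi_t$ and $\frac{\pi_t}{1-\pi_t}\sigma_0^2+\sigma_0^2=\sigma_0^2/(1-\pi_t)$, yielding the stated formula once summed and normalized by $T$.

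The main obstacle—and the step deserving the most care—is the bookkeeping of the covariance $\eE[\varepsilon_t(1)\varepsilon_t(0)]$: it enters the innermost term with a $+2$ coefficient (through the surviving cross product in the Bernoulli variance) and the middle term with a $-2$ coefficient (through $\Var(Y_t(1)-Y_t(0)\mid\bX_t,\bS_t)$), and these must cancel exactly so that no joint-distribution quantity of the two (never jointly observed) potential outcomes survives in the final expression. Verifying this cancellation, together with justifying each conditional-independence reduction—in particular that conditioning additionally on the surrogate $\bS_t$ does not disturb $Z_t \indep (Y_t(0),Y_t(1))$, which follows from the \emph{joint} unconfoundedness in Condition~\ref{cond:sutva}(iii)—is where the argument must be handled precisely; the remaining algebra is routine.
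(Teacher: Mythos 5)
Your proposal is correct, and it reaches the stated formula by a genuinely different route from the paper. The paper (Lemma~\ref{lem:condition_var}) expands $u_t$ into five pieces — the two IPW error terms $R_{1,t},R_{3,t}$, the two augmentation-mismatch terms $R_{2,t},R_{4,t}$, and the CATE-deviation term $R_{5,t}$ — and then computes all second moments and pairwise cross moments directly; the decisive simplification there is that the cross term $\eE(R_{1,t}R_{3,t}\mid\cF_{t-1})$ vanishes \emph{identically} because $I(Z_t=1)I(Z_t=0)=0$, so the joint distribution of the two potential outcomes never enters at any stage, while the $+2$ cross term in the final formula arises from $\eE(R_{2,t}R_{4,t}\mid\cF_{t-1})$. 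Your nested law-of-total-variance argument instead conditions on $(Y_t(0),Y_t(1))$ as an intermediate layer, which is more structured and interpretable (randomization variance, counterfactual noise given $(\bX_t,\bS_t)$, and CATE heterogeneity appear as three separate layers), but at the cost of routing the never-identified quantity $\Cov\{\varepsilon_t(1),\varepsilon_t(0)\mid\bX_t,\bS_t\}$ through two terms with opposite signs; you correctly identify this cancellation as the step requiring care, and your bookkeeping of it checks out ($\frac{1-\pi_t}{\pi_t}\sigma_1^2+\sigma_1^2=\sigma_1^2/\pi_t$, etc., all verify). You also correctly handle the one conditional-independence subtlety your route needs beyond what the paper uses: that $Z_t$ remains Bernoulli$(\pi_t)$ after conditioning on $(\bX_t,\bS_t,Y_t(0),Y_t(1))$, which follows from the joint unconfoundedness in Condition~\ref{cond:sutva}(iii) by weak union together with the design fact $\eP(Z_t=1\mid\bX_t,\cF_{t-1})=\pi_t$, and that the fresh unit is i.i.d.\ (Condition~\ref{cond:sutva}(ii)) so conditional variances given $\cF_{t-1}$ reduce to population ones. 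In short: the paper's proof is more mechanical but stays entirely within identified quantities; yours is more conceptual but must justify that the counterfactual covariance is well defined (it is, under the superpopulation assumption) and cancels exactly. Both are valid; the final summation step via martingale orthogonality $\eE(u_tu_s)=0$, $s<t$, is identical.
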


Let $u_t:=\phi_t-\tau_0$ with $\phi_t$ defined in \eqref{eq:single-stage-ATE}. Leveraging the property that $\{u_t\}_{t\ge1}$ forms a square integrable martingale difference sequence w.r.t. $\{\cF_t\}_{t\ge1}$, we derive the theoretical properties of $\hat{\tau}_T$ using the law of large numbers and the central limit theorem for martingale difference sequences. Furthermore, to characterize the smoothness of the conditional mean functions, we introduce the class of functions known as the H\"older class, denoted by $\Sigma(\beta,L)$ and we provide the definition below in Definition \ref{def:holder} (as given in Definition 1.2 of \textcolor{blue}{Tsybakov} (\textcolor{blue}{2009})).
\begin{definition}
    \label{def:holder}
    Let $\beta$ and $L$ be two positive numbers. The H\"older class $\Sigma(\beta,L)$ is defined as the set of $k=\lfloor\beta\rfloor$  times differentiable functions $m(\cdot):\eR^d\to\eR$ whose derivative $m^{(k)}(\cdot)$ satisfies $|m^{(k)}(\bx)-m^{(k)}(\bx')|\le L\|\bx-\bx'\|^{\beta-k},\forall\bx,\bx'\in\eR^d.$
\end{definition}
%Definition \ref{def:holder} indicates that when the parameter $\beta$  is an integer, the class $\Sigma(\beta,L)$ contains functions with $\beta$ derivatives whose absolute values are bounded by $L$.

We need some additional conditions for the theoretical results. 

\begin{condition}
    \label{cond:mean_fun}
    The conditional mean functions $m_{Y,0}(\cdot),m_{Y,1}(\cdot),m_{S,0}(\cdot)$ and $m_{S,1}(\cdot)$ belong to the function class $\Sigma(\beta,L)$ for some $0<\beta\le 1.$
\end{condition}

\begin{condition}
    \label{cond:X}
    The covariate domain $\cX$ is a compact set. The distribution of $\bX_t$ is dominated by the Lebesgue measure, with a continuous density $p(\bx)$ uniformly bounded above and away from zero on $\cX$, i.e., $\underline{c}\le p(\bx)\le\bar{c}$ for some positive constants $\underline{c}\le\bar{c}$.
\end{condition}

\begin{condition}
    \label{cond:kernel}
    The multivariate kernel function satisfies: (i) $K(\bx)=0$ for $\|\bx\|_{\infty}>1,$ and there exists some constant $\bar{k}>0$ such that $0\le K(\bx)< \bar{k},\forall\bx\in\eR^d$; (ii) there exists some constant $\check{k}>0$ such that $|K(\bx)-K(\bx')|\le\check{k}\|\bx-\bx'\|_{\infty},\forall\bx,\bx'\in\eR^d$.
\end{condition}

%\begin{condition}
%    \label{cond:moment}
%    There exist constants $\theta>0$ and $M_{\varepsilon}>0$ such that 
%    $\eE\{|\varepsilon_{Y,t}(z)|^{2+\theta}|\bX_t\}\le M_{\varepsilon}$ and $\eE\{|\varepsilon_{S,t}(z)|^{2+\theta}|\bX_t\}\le M_{\varepsilon}$ for all $t\ge1$ and each $z\in\{0,1\}.$
%\end{condition}

\begin{condition}
    \label{cond:varep}
    The model errors $\{\varepsilon_{Y,t}(z)\}$ and $\{\varepsilon_{S,t}(z)\}$ for $z\in\{0,1\}$ satisfy that: (i) there exist constants $v_{Y,z},v_{S,z}>0$ such that $\eE\{\exp\big(\lambda\varepsilon_{Y,t}(z))|X_t\}\le \exp(v_{Y,z}^2\lambda^2/2\big)$ and $\eE\{\exp\big(\lambda\varepsilon_{S,t}(z))|X_t\}\le \exp(v_{S,z}^2\lambda^2/2\big)$ for all $t\ge1$ and $\lambda\in\eR;$ (ii) there exist constants $K_Y,K_S>0$ such that $\|\varepsilon_{Y,t}(z)\|_{\psi_2}\le K_Y$ and $\|\varepsilon_{S,t}(z)\|_{\psi_2}\le K_S$ for all $t\ge1$.
\end{condition}

% \begin{remark}
%     \label{rmk:cond}
    Condition~\ref{cond:mean_fun} is adopted in \textcolor{blue}{Tsybakov} (\textcolor{blue}{2009}); \textcolor{blue}{Perchet and Rigollet} (\textcolor{blue}{2013}), and corresponds to Lipschitz continuity when $\beta=1.$ This smoothness condition leads to a more challenging estimation problem and is weaker than the conventional condition in the kernel regression literature (e.g., \textcolor{blue}{Fan and Gijbels}, \textcolor{blue}{1996}), which typically requires second-order smoothness. 
    Condition~\ref{cond:X} imposes a natural assumption on the support and density of the covariates (\textcolor{blue}{Yang and Zhu}, \textcolor{blue}{2002}; \textcolor{blue}{Perchet and Rigollet}, \textcolor{blue}{2013}; \textcolor{blue}{Qian and Yang}, \textcolor{blue}{2016}). This assumption avoids the use of the trimming technique when estimating linear coefficients $\gamma_0$ and $\gamma_1$, as in \textcolor{blue}{Robinson} (\textcolor{blue}{1988}), where selecting the trimming parameter is practically challenging; see also the discussion in \textcolor{blue}{Linton} (\textcolor{blue}{1995}). 
    Condition~\ref{cond:kernel} concerns a kernel function $K(\bx):\eR^d\to\eR$ that is nonnegative, bounded, compactly supported, and Lipschitz, which is commonly used in the literature (e.g., \textcolor{blue}{Tsybakov}, \textcolor{blue}{2009}; \textcolor{blue}{Qian and Yang}, \textcolor{blue}{2016}).  
Condition \ref{cond:varep} provides sub-Gaussian tail conditions on the model errors that are commonly assumed in the literature (\textcolor{blue}{Cai and Rafi}, \textcolor{blue}{2024}; \textcolor{blue}{Li et al.}, \textcolor{blue}{2024}).
This assumption ensures the uniform consistency of the Nadaraya-Watson regression (\textcolor{blue}{Qian and Yang}, \textcolor{blue}{2016}), and thereby guarantees the consistency of both the conditional variance estimators and our adaptive treatment allocation. We remark that for establishing the $L^2$ convergence of the Nadaraya-Watson regression, a finite second moment condition would already suffice.

For each $z\in\{0,1\}$, to implement the Nadaraya-Watson regressions, the bandwidths are selected as $h_{S,z,t}=c_h t^{-1/\kappa}$ for estimating $\hat{m}_{S,z,t}(\cdot)$, and $h_{Y,z,t}=c_h N_{z,t}^{-1/\kappa}$ for estimating $\hat{m}_{Y,z,t}(\cdot)$. Here, we assume $\kappa>2d$ to ensure the uniform consistency of the Nadaraya-Watson regressions (\textcolor{blue}{Qian and Yang}, \textcolor{blue}{2016}). To balance the bias-variance tradeoff, we take $\kappa=2\beta+d$ when $d<2\beta,$ and $\kappa=2d+\epsilon$ with a sufficiently small positive constant $\epsilon$ when $d\ge2\beta$.
%In practice, we may regard $\beta$ as 1, corresponding to the Lipschitz continuity case. 

%First, we show the convergence of our treatment allocation in Proposition \ref{propos:pi}.
\begin{proposition}
    \label{propos:pi}
    Under Conditions \ref{cond:sutva}--\ref{cond:varep}, it holds that $\pi_t\to_p\pi^*$ as $t\to\infty.$
\end{proposition}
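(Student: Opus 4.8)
The plan is to reduce the claim to the consistency of the two conditional-variance estimators and then invoke the continuous mapping theorem together with the non-expansiveness of the clipping map. Since $\pi^*=\sigma_1/(\sigma_0+\sigma_1)$ with $\sigma_0,\sigma_1>0$, the map $(a,b)\mapsto a/(a+b)$ is continuous at $(\sigma_1,\sigma_0)$, so if I can show $\hat\sigma_{z,t}^2\to_p\sigma_z^2$ for $z\in\{0,1\}$ then $\widetilde\pi_t\to_p\pi^*$. Because $\zeta_t=(1/2)t^{-\eta}\to 0$ and $\pi^*\in(0,1)$, eventually $\pi^*\in[\zeta_t,1-\zeta_t]$; as $\mathrm{CLIP}(\cdot,\zeta_t,1-\zeta_t)$ is the Euclidean projection onto a convex interval containing $\pi^*$, it is non-expansive, giving $|\pi_t-\pi^*|\le|\widetilde\pi_t-\pi^*|$ on that event. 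Hence $\pi_t\to_p\pi^*$ follows from $\widetilde\pi_t\to_p\pi^*$, and the whole proof rests on establishing variance consistency.

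The first ingredient is that both arms accumulate infinitely many observations. The clipping guarantees $\pi_j\ge\zeta_j$ and $1-\pi_j\ge\zeta_j$, and the admissible range of $\eta$ ensures $\sum_j\zeta_j$ diverges, so that $\sum_j\pi_j=\infty$ and $\sum_j(1-\pi_j)=\infty$. Since $N_{1,t}-\sum_{j=1}^t\pi_j$ is a martingale with bounded increments adapted to $\{\cF_t\}$, a conditional Borel--Cantelli / martingale strong law gives $N_{z,t}\to\infty$ almost surely for both $z\in\{0,1\}$. With this in hand I would establish uniform consistency of the Nadaraya--Watson estimators in \eqref{eq:m_S_hat}--\eqref{eq:m_Y_hat}. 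For $\hat m_{S,z,t}(\cdot)$ the summands use every stage regardless of treatment, so by Condition~\ref{cond:sutva}(ii) the pairs $\{(\bX_j,S_j(z))\}$ are i.i.d., and under Conditions~\ref{cond:mean_fun}--\ref{cond:varep} the classical uniform consistency of kernel regression applies directly. For $\hat m_{Y,z,t}(\cdot)$ the effective sample is the adaptively selected subsequence $\{j:Z_j=z\}$; here I would use unconfoundedness (Condition~\ref{cond:sutva}(iii)) together with the clipping lower bound on the propensities to control the kernel denominator $\sum_j K_h(\bx-\bX_j)I(Z_j=z)$ uniformly from below, and invoke the adaptive extension of the uniform consistency result to obtain $\sup_{\bx\in\cX}|\hat m_{Y,z,t}(\bx)-m_{Y,z}(\bx)|\to_p 0$.

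Given uniform consistency of the two regression surfaces, consistency of the linear coefficient follows next. Writing \eqref{eq:gamma_hat} as a ratio of empirical averages over $\{j:Z_j=z\}$ and substituting the partially linear structure $Y_j=m_{Y,z}(\bX_j)-\gamma_z m_{S,z}(\bX_j)+\gamma_z S_j(z)+\varepsilon_j(z)$, the denominator converges to $\sigma_{S,z}^2>0$ and the numerator to $\gamma_z\sigma_{S,z}^2$, with the contributions of $\hat m-m$ vanishing by the uniform bounds and a martingale law of large numbers for the $\varepsilon$-cross terms; hence $\hat\gamma_{z,t}\to_p\gamma_z$. Finally, I would decompose the residual in \eqref{eq:sigma_hat} as $Y_j-\hat\mu_{z,t}(\bX_j,\bS_j)=\varepsilon_j(z)+R_{j,t}$, where $R_{j,t}$ collects the errors $m_{Y,z}-\hat m_{Y,z,t}$, $\hat\gamma_{z,t}\hat m_{S,z,t}-\gamma_z m_{S,z}$, and $(\gamma_z-\hat\gamma_{z,t})S_j(z)$. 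Expanding the square, the leading term $N_{z,t}^{-1}\sum_{j:Z_j=z}\varepsilon_j(z)^2\to_p\sigma_z^2$ by a martingale strong law (using Condition~\ref{cond:varep} for the requisite moments), while the cross term and the $R_{j,t}^2$ term are $o_p(1)$ because $\sup_j|R_{j,t}|=o_p(1)$. This yields $\hat\sigma_{z,t}^2\to_p\sigma_z^2$ and closes the argument.

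The main obstacle is the uniform consistency of $\hat m_{Y,z,t}(\cdot)$ under the fully adaptive design: the indices entering this estimator form a data-dependent subsequence whose propensities are driven by the entire past, so the kernel sums are martingale-dependent rather than i.i.d., and one must bound the denominator away from zero uniformly over the compact support $\cX$. The clipping device $\zeta_t\asymp t^{-\eta}$ is exactly what makes this tractable, since it keeps both propensities bounded below and thereby guarantees enough effective samples per arm at every localization scale; the remaining work is to marshal the sub-Gaussian tail bounds of Condition~\ref{cond:varep} into a concentration statement for these dependent kernel sums.
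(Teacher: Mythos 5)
Your proposal follows essentially the same route as the paper's proof: establish that both arms accumulate observations, obtain uniform consistency of the Nadaraya--Watson fits and consistency of $\hat{\gamma}_{z,t}$ via the Robinson residual structure, prove $\hat{\sigma}_{z,t}^2\to_p\sigma_z^2$ through the same three-term decomposition (squared noise average, squared estimation error, cross term), and conclude by the continuous mapping theorem with the clipping becoming inactive since $\zeta_t\to0$. The one step you flag as ``remaining work''---concentration of the adaptively collected kernel sums for $\hat{m}_{Y,z,t}$---is exactly what the paper dispatches by first proving the polynomial rate $N_{z,t}=\Omega(t^{1-\eta})$ a.s.\ (Lemma~\ref{lem:N_t_lower}, via Lemma~\ref{lem:count} and Borel--Cantelli, which strengthens your mere divergence $N_{z,t}\to\infty$) and then invoking the uniform consistency theorem for kernel regression under adaptive allocation of Qian and Yang (2016).
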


%{\color{green}It is noteworthy that the asymptotic analysis cannot yield a convergence rate for $\pi_t$ toward $\pi^*$, but only the consistency as in Proposition \ref{propos:pi}. To characterize the explicit rate, the non-asymptotic analysis in the next Section~\ref{subsec:non_theory} is necessary. Specifically,}
%by Lemma~\ref{lem:pi_rate} of the supplementary material, we have $|\pi_t-\pi^*|=\widetilde{O}(t^{-2\beta/(2\beta+d)})$ with high probability. If $\beta\to\infty$, which corresponds to infinitely differentiable conditional mean functions, then $|\pi_t-\pi^*|=\widetilde{O}(t^{-1/2})$ with high probability, which aligns with Lemma 4.2 of \cite{neopane2025logarithmic} under a simplified model without covariates. 

Leveraging Proposition \ref{propos:pi}, we have the following asymptotical normality results for our proposed estimator in Theorem \ref{thm:asy_nor}. 

\begin{theorem}
    \label{thm:asy_nor}
    Suppose that Conditions \ref{cond:sutva}--\ref{cond:varep} hold, and there exists some constant $M_{\gamma}>0$ such that $\max_{z\in\{0,1\}}|\hat{\gamma}_{z,t}|\le M_{\gamma}$ almost surely for $t\ge1$. Then, it holds that: (i) the proposed estimator $\hat{\tau}_T$ defined in \eqref{eq:proposed_non} is strongly consistent; (ii) moreover, if the clipping rate satisfies $0<\eta<2\beta/(2\beta+\kappa)$, then $\hat{\tau}_T$ is asymptotically normal:
    $$
    \sqrt{T}(\hat{\tau}_T-\tau_0)\to_d \cN(0,\varsigma_*^2),
    $$
    as $T\to\infty,$ where $\varsigma_*^2$ is defined in \eqref{eq:ney_var}.
\end{theorem}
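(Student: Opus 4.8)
The plan is to center the estimator as $\hat\tau_T-\tau_0=T^{-1}\sum_{t=1}^T u_t$ with $u_t:=\phi_t-\tau_0$, and to reduce both assertions to limit theorems for the martingale difference sequence (MDS) $\{u_t\}$. That $\{u_t\}$ is a square-integrable MDS with respect to $\{\cF_t\}$ is the content of Lemma~\ref{lem:mart}: conditioning $\phi_t$ on $\cF_{t-1}$ and then on $(\bX_t,\bS_t)$, unconfoundedness (Condition~\ref{cond:sutva}(iii)) and the $\cF_{t-1}$-measurability of $(\pi_t,\hat\mu_{0,t-1},\hat\mu_{1,t-1})$ give $\eE[\pi_t^{-1}I(Z_t=1)\{Y_t-\hat\mu_{1,t-1}\}\mid\cF_{t-1},\bX_t,\bS_t]=\mu_1-\hat\mu_{1,t-1}$, so the augmentation terms cancel the plug-in bias and $\eE[\phi_t\mid\cF_{t-1}]=\tau_0$. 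Everything then hinges on the conditional second moment $\eE[u_t^2\mid\cF_{t-1}]=\ell(a_t)$, for which Theorem~\ref{thm:variance} provides an exact decomposition.

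For the strong consistency in (i), I would use the martingale strong law: it suffices that the predictable series $\sum_{t\ge1}t^{-2}\,\eE[u_t^2\mid\cF_{t-1}]$ converge almost surely, since then the martingale convergence theorem applied to $\sum_t u_t/t$ and Kronecker's lemma give $T^{-1}\sum_{t=1}^T u_t\to0$ a.s. The needed estimate is that $\ell(a_t)$ grows at most polynomially in $t$: the clipping device forces $\pi_t^{-1}\le 2t^{\eta}$, so the inverse-propensity contributions are $O(t^{\eta})$, while the estimation-error pieces in Theorem~\ref{thm:variance} stay bounded because $\hat m_{Y,z,t},\hat m_{S,z,t}$ are uniformly bounded on the compact support $\cX$ (Condition~\ref{cond:X}) and $|\hat\gamma_{z,t}|\le M_{\gamma}$, whence $\hat\mu_{z,t-1}$ is uniformly bounded. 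Thus $\sum_t t^{-2}\ell(a_t)=O(\sum_t t^{\eta-2})<\infty$ for $\eta<1$, giving (i).

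For the asymptotic normality in (ii), I would apply a martingale central limit theorem (the Lindeberg/Hall--Heyde version) to $T^{-1/2}\sum_{t=1}^T u_t$, which needs two inputs. The first is the convergence of the normalized conditional variance, $T^{-1}\sum_{t=1}^T\eE[u_t^2\mid\cF_{t-1}]\to_p\varsigma_*^2$; this is Lemma~\ref{lem:con_var_as}, and its proof combines the variance decomposition of Theorem~\ref{thm:variance}, the allocation consistency $\pi_t\to_p\pi^*$ of Proposition~\ref{propos:pi}, and the uniform consistency of the Nadaraya--Watson nuisance estimators under Conditions~\ref{cond:mean_fun}--\ref{cond:varep}. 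The second is the conditional Lindeberg condition, which I would discharge through a Lyapunov bound: the sub-Gaussian tails in Condition~\ref{cond:varep} yield $\eE[|u_t|^{2+\delta}\mid\cF_{t-1}]=O(\pi_t^{-(1+\delta)})=O(t^{\eta(1+\delta)})$ for every $\delta>0$, so the normalized sum is $O(T^{\eta(1+\delta)-\delta/2})\to0$ once $\delta$ is chosen large; since $\kappa>2\beta$ the admissible range $\eta<2\beta/(2\beta+\kappa)$ already forces $\eta<1/2$, so Lindeberg holds. The limiting variance $\varsigma_*^2$ is the semiparametric bound of Proposition~\ref{propos:efficiency_bound}.

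The crux is the first CLT input: showing the estimation-error terms of Theorem~\ref{thm:variance} are negligible in the Cesàro average even though they are amplified by inverse-propensity weights. The difficulty is a genuine coupling---whenever the adaptive allocation under-samples an arm, $\pi_t$ (or $1-\pi_t$) is near its clipping floor $\tfrac12 t^{-\eta}$, so that arm carries both a large weight of order $t^{\eta}$ and, having accumulated only of order $t^{1-\eta}$ observations, a large nuisance error of order $(t^{1-\eta})^{-2\beta/\kappa}$ in squared $L^2$ norm. Their product $t^{\eta-(1-\eta)2\beta/\kappa}$ has a Cesàro average that vanishes precisely when $\eta(2\beta+\kappa)<2\beta$, i.e.\ $\eta<2\beta/(2\beta+\kappa)$, which is exactly the stated range; the boundedness $|\hat\gamma_{z,t}|\le M_{\gamma}$ is what lets one reduce the error of $\hat\mu_{z,t-1}$ to the separately controllable errors of $\hat m_{Y,z,t-1}$, $\hat m_{S,z,t-1}$, and $\hat\gamma_{z,t-1}$. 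Making the random, adaptively determined arm sizes and the bandwidth--allocation interplay rigorous here is the main obstacle.
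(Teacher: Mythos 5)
Your proposal follows essentially the same route as the paper's proof: write $\hat\tau_T-\tau_0=T^{-1}\sum_t u_t$ for the martingale difference sequence $u_t=\phi_t-\tau_0$, obtain (i) from the martingale strong law after a polynomial growth bound on $\eE(u_t^2|\cF_{t-1})$, and obtain (ii) from the martingale CLT, with the clipping bound $\pi_t^{-1}\le 2t^{\eta}$ and the trade-off exponent $t^{\eta-(1-\eta)2\beta/\kappa}$ playing exactly the roles you describe (your ``crux'' computation is precisely the paper's Lemma~\ref{lem:con_var_as}). The one place you genuinely deviate is the Lindeberg step: the paper truncates, setting $b_{T,t}=u_t^2I(|u_t|>\epsilon\sqrt{T})$, and uses Markov, dominated convergence, and the Toeplitz lemma on unconditional expectations, whereas you verify a conditional Lyapunov condition from $\eE(|u_t|^{2+\delta}|\cF_{t-1})=O(t^{\eta(1+\delta)})$. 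This route is legitimate and is in fact supported by the paper's own Lemma~\ref{lem:lyapunov}, which delivers exactly that bound (with an extra $(\log t)^{1+\delta/2}$ factor) and requires $\delta>2\eta/(1-2\eta)$; such a $\delta$ exists because your observation that the admissible range forces $\eta<1/2$ is correct ($\kappa>2\beta$ under both bandwidth regimes).

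Two of your justifications need repair, one of them substantively. First, the Nadaraya--Watson fits are \emph{not} uniformly bounded on $\cX$: they are convex combinations of observed outcomes whose sub-Gaussian errors are unbounded, so almost-sure bounds on $\eE(u_t^2|\cF_{t-1})$ carry a $\log t$ factor (the paper's bound is $M_u t^{\eta}\log t$). This is harmless for (i), since $\sum_t t^{\eta-2}\log t<\infty$ for $\eta<1$, but the boundedness claim as stated is false. Second, and this is the genuine gap: Lemma~\ref{lem:con_var_as} is a \emph{per-term} statement, $\eE(u_t^2|\cF_{t-1})\to_p\varsigma_*^2$ as $t\to\infty$, whereas the CLT needs the Ces\`aro average $T^{-1}\sum_{t=1}^T\eE(u_t^2|\cF_{t-1})\to_p\varsigma_*^2$. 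Term-by-term convergence in probability does not imply this: the terms are not uniformly bounded (only bounded a.s.\ by a quantity growing like $t^{\eta}\log t$, and Proposition~\ref{propos:pi} supplies no rate), so the exceptional events can in principle contribute non-negligibly to the average. The paper closes this by proving uniform integrability of $\eE(u_t^2|\cF_{t-1})$ (following Cook et al.), upgrading the per-term convergence to $L^1$ convergence, and then applying Markov together with the Toeplitz lemma. Your write-up needs this uniform integrability step (or an equivalent domination argument) before citing Lemma~\ref{lem:con_var_as} as the first CLT input.
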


Theorem \ref{thm:asy_nor} shows that our proposed estimator $\hat{\tau}_T$ has an asymptotic variance attaining the semiparametric efficiency bound $\varsigma_*^2=T\cdot V_T^*$, demonstrating its optimality in the asymptotic framework.
Our strong consistency result is new to the literature and is crucial for constructing the asymptotic confidence sequence in Theorem~\ref{thm:asy_cs}. 

It is worth noting that some prior works (e.g., \textcolor{blue}{Cook et al.}, \textcolor{blue}{2024}; \textcolor{blue}{Kato et al.}, \textcolor{blue}{2025}) assume a convergence rate for $\pi_t$ when establishing the asymptotic normality of the ATE estimator, while our Proposition \ref{propos:pi} does not provide an explicit rate.  While such convergence rate assumption is indeed necessary under heteroskedastic models, it is not required in our case due to the homoskedasticity assumption. In our non-asymptotic framework, we derive explicitly that $|\pi_t-\pi^*|=\widetilde{O}_p(t^{-2\beta/(2\beta+d)})$;
See Lemma~\ref{lem:pi_rate} below.  For infinitely differentiable conditional mean functions with $\beta\to\infty$, the convergence rate becomes $|\pi_t-\pi^*|=\widetilde{O}_p(t^{-1/2})$, which aligns with Lemma 4.2 of \textcolor{blue}{Neopane et al.} (\textcolor{blue}{2025a}) who studied a simplified model without covariates.

\section{Proofs of theoretical results}
\label{supsec:proof}

\subsection{Proof of Theorem~\ref{thm:variance}}

Recall that $\hat{\tau}_T=T^{-1}\sum_{t=1}^{T}\phi_t$, and $u_t=\phi_t-\tau_0.$ We first show the conditional mean and conditional variance of $u_t$ given the history $\cF_{t-1}.$

\begin{lemma}
    \label{lem:condition_var}
    Suppose that the conditions of Theorem~\ref{thm:variance} hold. Then, we have: (i) $\eE(u_t|\cF_{t-1})=0;$ (ii) the conditional variance of $u_t$ satisfies that
    $$
    \begin{aligned}
        \eE(u_t^2|\cF_{t-1})
        =&\frac{\sigma_1^2}{\pi_t}+\frac{\sigma_0^2}{1-\pi_t}+\eE_{(\bsX,\bsS)}\{\tau(\bX,\bS)-\tau_0\}^2\\
        &+\frac{1-\pi_t}{\pi_t}\eE_{(\bsX_t,\bsS_t)}\big[\{\mu_1(\bX_t,\bS_t)-\hat{\mu}_{1,t-1}(\bX_t,\bS_t)\}^2|\cF_{t-1}\big]\\
        &+\frac{\pi_t}{1-\pi_t}\eE_{(\bsX_t,\bsS_t)}\big[\{\mu_0(\bX_t,\bS_t)-\hat{\mu}_{0,t-1}(\bX_t,\bS_t)\}^2|\cF_{t-1}\big]\\
        &+2\eE_{(\bsX_t,\bsS_t)}\big[\{\mu_1(\bX_t,\bS_t)-\hat{\mu}_{1,t-1}(\bX_t,\bS_t)\}\{\mu_0(\bX_t,\bS_t)-\hat{\mu}_{0,t-1}(\bX_t,\bS_t)\}|\cF_{t-1}\big].
    \end{aligned}
    $$
\end{lemma}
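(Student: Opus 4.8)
The plan is to condition on the enlarged information generated by $(\bX_t,\bS_t,\cF_{t-1})$ and then apply the tower property, exploiting three structural facts that hold under Condition~\ref{cond:sutva}. \emph{First}, the allocation $\pi_t$ and the fitted functions $\hat\mu_{0,t-1},\hat\mu_{1,t-1}$ are $\cF_{t-1}$-measurable, so once we condition on $\cF_{t-1}$ they act as constants (respectively, as fixed functions evaluated at the fresh point $(\bX_t,\bS_t)$). \emph{Second}, by homogeneity the stage-$t$ draw $(\bX_t,\bS_t,Y_t(0),Y_t(1))$ is independent of $\cF_{t-1}$, so $\eE\{Y_t(z)\mid\bX_t,\bS_t,\cF_{t-1}\}=\mu_z(\bX_t,\bS_t)$ and $\var\{Y_t(z)\mid\bX_t,\bS_t,\cF_{t-1}\}=\sigma_z^2$. \emph{Third}, unconfoundedness together with the design gives $\eP(Z_t=1\mid\bX_t,\bS_t,\cF_{t-1})=\pi_t$ and $Z_t\indep(Y_t(0),Y_t(1))\mid\bX_t,\bS_t,\cF_{t-1}$, which is precisely what is needed to pull the indicators through the inverse-probability weights.

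For part (i) I would first establish the inner identity $\eE\{\phi_t\mid\bX_t,\bS_t,\cF_{t-1}\}=\tau(\bX_t,\bS_t)$. Writing $Y_t=Y_t(1)$ on $\{Z_t=1\}$ and $Y_t=Y_t(0)$ on $\{Z_t=0\}$, the treatment augmentation term has conditional mean $\pi_t^{-1}\eE\{I(Z_t=1)(Y_t(1)-\hat\mu_{1,t-1}(\bX_t,\bS_t))\mid\bX_t,\bS_t,\cF_{t-1}\}=\mu_1(\bX_t,\bS_t)-\hat\mu_{1,t-1}(\bX_t,\bS_t)$, since $Z_t$ is conditionally independent of $Y_t(1)$ with success probability $\pi_t$; the control term is handled analogously. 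Adding back $\hat\mu_{1,t-1}-\hat\mu_{0,t-1}$, the fitted pieces cancel and leave $\mu_1-\mu_0=\tau(\bX_t,\bS_t)$. Taking a further expectation over $(\bX_t,\bS_t)\indep\cF_{t-1}$ yields $\eE\{\phi_t\mid\cF_{t-1}\}=\eE_{(\bsX,\bsS)}\tau(\bX,\bS)=\tau_0$, hence $\eE(u_t\mid\cF_{t-1})=0$.

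For part (ii), since $\eE(u_t\mid\cF_{t-1})=0$ we have $\eE(u_t^2\mid\cF_{t-1})=\var(\phi_t\mid\cF_{t-1})$, which I would evaluate by the law of total variance across $(\bX_t,\bS_t)$. The outer term $\var\{\eE(\phi_t\mid\bX_t,\bS_t,\cF_{t-1})\mid\cF_{t-1}\}=\var\{\tau(\bX_t,\bS_t)\mid\cF_{t-1}\}$ equals $\eE_{(\bsX,\bsS)}\{\tau(\bX,\bS)-\tau_0\}^2$, producing the CATE-dispersion term. For the inner term I would compute $\var(\phi_t\mid\bX_t,\bS_t,\cF_{t-1})$ directly. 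The crucial simplification is $I(Z_t=1)I(Z_t=0)=0$, which annihilates the cross term in the square, so the second moment of the augmentation part splits into $\pi_t^{-1}\{\sigma_1^2+(\mu_1-\hat\mu_{1,t-1})^2\}+(1-\pi_t)^{-1}\{\sigma_0^2+(\mu_0-\hat\mu_{0,t-1})^2\}$; subtracting the square of its conditional mean $(\mu_1-\hat\mu_{1,t-1})-(\mu_0-\hat\mu_{0,t-1})$ collapses the residual-bias coefficients to $\tfrac{1-\pi_t}{\pi_t}$ and $\tfrac{\pi_t}{1-\pi_t}$ and leaves the surviving cross term $+2(\mu_1-\hat\mu_{1,t-1})(\mu_0-\hat\mu_{0,t-1})$. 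Averaging over $(\bX_t,\bS_t)$ given $\cF_{t-1}$ and adding the outer term reproduces the stated formula.

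The two conditional-moment evaluations are routine; the step demanding the most care is the variance bookkeeping in part (ii) — keeping straight which quantities are $\cF_{t-1}$-measurable constants ($\pi_t,\sigma_z^2$) versus functions of the fresh point $(\bX_t,\bS_t)$ (the fitted $\hat\mu_{z,t-1}$, which is exactly why the estimation-error contributions survive inside $\eE_{(\bsX_t,\bsS_t)}\{\cdot\mid\cF_{t-1}\}$ rather than appearing as deterministic terms), and verifying that the $(\eE[\cdot])^2$ subtraction yields precisely the coefficients $\tfrac{1-\pi_t}{\pi_t},\tfrac{\pi_t}{1-\pi_t}$ and the $+2$ cross term. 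A minor subtlety worth flagging is that $\mu_z$ and $\sigma_z^2$ must be read as the conditional mean and variance of $Y_t(z)$ given the \emph{full} pair $(\bX_t,\bS_t)$ rather than only $(\bX_t,S_t(z))$, as guaranteed by Model~\eqref{eq:model_cond_XS_simplified}.
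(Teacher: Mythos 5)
Your proof is correct. Part (i) coincides with the paper's own argument: condition on $(\bX_t,\bS_t,\cF_{t-1})$, use unconfoundedness to replace the indicators by $\pi_t$ and $1-\pi_t$, and average over the fresh draw. Part (ii), however, takes a genuinely different route. The paper expands $u_t$ into five pieces $R_{1,t},\dots,R_{5,t}$ (two inverse-probability-weighted pure-noise terms, two terms of the form centered indicator times estimation error, and the CATE deviation) and then computes all five conditional second moments plus all ten cross moments, of which only $\eE(R_{2,t}R_{4,t}|\cF_{t-1})$ survives and yields the $+2$ cross term. You instead apply the law of total variance given $(\bX_t,\bS_t,\cF_{t-1})$: the outer variance produces the CATE-dispersion term immediately, and the inner variance is a single second-moment-minus-squared-mean computation in which the indicator orthogonality $I(Z_t=1)I(Z_t=0)=0$ does the work. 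Your algebra checks out: writing $a=\mu_1(\bX_t,\bS_t)-\hat{\mu}_{1,t-1}(\bX_t,\bS_t)$ and $b=\mu_0(\bX_t,\bS_t)-\hat{\mu}_{0,t-1}(\bX_t,\bS_t)$, the inner variance is $\pi_t^{-1}\sigma_1^2+(1-\pi_t)^{-1}\sigma_0^2+\pi_t^{-1}a^2+(1-\pi_t)^{-1}b^2-(a-b)^2$, which rearranges exactly to the coefficients $\tfrac{1-\pi_t}{\pi_t}$, $\tfrac{\pi_t}{1-\pi_t}$ and the cross term $2ab$. Your organization is more economical, since it avoids the ten cross-term checks. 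What the paper's decomposition buys is downstream reuse: Remark~\ref{rmk:ell_2} identifies the estimation loss $\ell_2(a_t)$ with $\eE\{(R_{2,t}+R_{4,t})^2|\cF_{t-1}\}$, making its nonnegativity immediate; under your organization the analogous observation is that $\tfrac{1-\pi_t}{\pi_t}a^2+\tfrac{\pi_t}{1-\pi_t}b^2+2ab$ is itself a perfect square, which serves the same purpose. Your closing caveat, that $\mu_z$ and $\sigma_z^2$ must be read as moments of $Y_t(z)$ conditional on the full pair $(\bX_t,\bS_t)$, is exactly the implicit reading the paper uses in its own step (a).
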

\begin{proof}
    (i) By the definitions of $\phi_t$ in \eqref{eq:single-stage-ATE} and $\pi_t=\eP(Z_t=1|\cF_{t-1})$, and the iterated expectation we can show that
    $$
    \begin{aligned}
    &\eE(\phi_t|\cF_{t-1},\bX_t,\bS_t)\\
    =&\frac{\eE\{I(Z_t=1)|\cF_{t-1}\}\eE\{Y_t(1)-\hat{\mu}_{1,t-1}(\bX_t,\bS_t)|\cF_{t-1},\bX_t,\bS_t\}}{\pi_t}\\
    &-\frac{\eE\{I(Z_t=0)|\cF_{t-1}\}\eE\{Y_t(0)-\hat{\mu}_{0,t-1}(\bX_t,\bS_t)|\cF_{t-1},\bX_t,\bS_t\}}{1-\pi_t}\\
    &+\hat{\mu}_{1,t-1}(\bX_t,\bS_t)-\hat{\mu}_{0,t-1}(\bX_t,\bS_t)\\
    =&\frac{\pi_t\eE\{Y_t(1)-\hat{\mu}_{1,t-1}(\bX_t,\bS_t)|\cF_{t-1},\bX_t,\bS_t\}}{\pi_t}-\frac{(1-\pi_t)\eE\{Y_t(0)-\hat{\mu}_{0,t-1}(\bX_t,\bS_t)|\cF_{t-1},\bX_t,\bS_t\}}{1-\pi_t}\\
    &+\hat{\mu}_{1,t-1}(\bX_t,\bS_t)-\hat{\mu}_{0,t-1}(\bX_t,\bS_t)\\
    =&\eE\{Y_t(1)-Y_t(0)|\bX_t,\bS_t\}=\tau(\bX_t,\bS_t),
    \end{aligned}
    $$
    which implies that
    $$
    \eE(u_t|\cF_{t-1})=\eE_{(\bsX_t,\bsS_t)}\{\eE(\phi_t|\cF_{t-1},\bX_t,\bS_t)\}-\tau_0=\eE_{(\bsX_t,\bsS_t)}\{\tau(\bX_t,\bS_t)\}-\tau_0=0.
    $$
    (ii) By the definition of $\phi_t$ in Section~\ref{subsec:procedure}, we have
    $$
    \begin{aligned}
    u_t=&\frac{I(Z_t=1)\{Y_t-\hat{\mu}_{1,t-1}(\bX_t,\bS_t)\}}{\pi_t}-\frac{I(Z_t=0)\{Y_t-\hat{\mu}_{0,t-1}(\bX_t,\bS_t)\}}{1-\pi_t}\\
    &+\hat{\mu}_1(\bX_t,\bS_t)-\hat{\mu}_0(\bX_t,\bS_t)-\tau_0\\
    =&\frac{I(Z_t=1)\{Y_t-\mu_1(\bX_t,\bS_t)\}}{\pi_t}+\frac{\{I(Z_t=1)-\pi_t\} \cdot \{\mu_1(\bX_t,\bS_t)-\hat{\mu}_{1,t-1}(\bX_t,\bS_t)\}}{\pi_t}\\
    &-\frac{I(Z_t=0)\{Y_t-\mu_0(\bX_t,\bS_t)\}}{1-\pi_t}-\frac{\{I(Z_t=0)-1+\pi_t\} \cdot \{\mu_0(\bX_t,\bS_t)-\hat{\mu}_{0,t-1}(\bX_t,\bS_t)\}}{1-\pi_t}\\
    &+\{\mu_1(\bX_t,\bS_t)-\mu_0(\bX_t,\bS_t)-\tau_0\}\\
    =:&R_{1,t}+R_{2,t}+R_{3,t}+R_{4,t}+R_{5,t}.
    \end{aligned}
    $$
    Then, it can be respectively shown that:\\
    (a) $\eE(R_{1,t}^2|\cF_{t-1})=\pi_t^{-1}\eE_{(\bsX_t,\bsS_t)}\big[\eE\{(Y_t(1)-\mu_1(\bX_t,\bS_t))^2|\bX_t,\bS_t\}\big]=\pi_t^{-1}\sigma_1^2.$\\
    (b) $\eE(R_{2,t}^2|\cF_{t-1})=\pi_t^{-2}(\pi_t-\pi_t^2)\eE_{(\bsX_t,\bsS_t)}\big[\{\mu_1(\bX_t,\bS_t)-\hat{\mu}_{1,t-1}(\bX_t,\bS_t)\}^2|\cF_{t-1}\big]\\=\pi_t^{-1}(1-\pi_t)\eE_{(\bsX_t,\bsS_t)}\big[\{\mu_1(\bX_t,\bS_t)-\hat{\mu}_{1,t-1}(\bX_t,\bS_t)\}^2|\cF_{t-1}\big]$.\\
    (c) $\eE(R_{3,t}^2|\cF_{t-1})=(1-\pi_t)^{-1}\eE_{(\bsX_t,\bsS_t)}\big[\eE\{(Y_t(0)-\mu_0(\bX_t,\bS_t))^2|\bX_t,\bS_t\}\big]=(1-\pi_t)^{-1}\sigma_0^2.$\\
    (d) $\eE(R_{4,t}^2|\cF_{t-1})=(1-\pi_t)^{-2}(\pi_t-\pi_t^2)\eE_{(\bsX_t,\bsS_t)}\big[\{\mu_0(\bX_t,\bS_t)-\hat{\mu}_{0,t-1}(\bX_t,\bS_t)\}^2|\cF_{t-1}\big]\\=(1-\pi_t)^{-1}\pi_t\eE_{(\bsX_t,\bsS_t)}\big[\{\mu_0(\bX_t,\bS_t)-\hat{\mu}_{0,t-1}(\bX_t,\bS_t)\}^2|\cF_{t-1}\big]$.\\
    (e) $\eE(R_{5,t}^2|\cF_{t-1})=\eE_{(\bsX,\bsS)}\{\tau(\bX,\bS)-\tau_0\}^2.$\\
    (f) $\eE(R_{1,t}R_{2,t}|\cF_{t-1})=0$ and $\eE(R_{1,t}R_{4,t}|\cF_{t-1})=0$ by $\eE\{Y_t(1)-\mu_1(\bX_t,\bS_t)|\bX_t,\bS_t\}=0$.\\
    (g) $\eE(R_{3,t}R_{2,t}|\cF_{t-1})=0$ and $\eE(R_{3,t}R_{4,t}|\cF_{t-1})=0$ by $\eE\{Y_t(0)-\mu_0(\bX_t,\bS_t)|\bX_t,\bS_t\}=0$.\\
    (h) $\eE(R_{1,t}R_{3,t}|\cF_{t-1})=0$ by $\eP\{I(Z_t=1)I(Z_t=0)|\cF_{t-1}\}=0.$\\
    (l) $\eE(R_{5,t}R_{1,t}|\cF_{t-1})=0,\eE(R_{5,t}R_{2,t}|\cF_{t-1})=0,\eE(R_{5,t}R_{3,t}|\cF_{t-1})=0,$ and $\eE(R_{5,t}R_{4,t}|\cF_{t-1})=0$ by $\eE_{(\bsX_t,\bsS_t)}\{\mu_1(\bX_t,\bS_t)-\mu_0(\bX_t,\bS_t)-\tau_0\}=0.$\\
    (m) $\eE(R_{2,t}R_{4,t}|\cF_{t-1})=\{\pi_t(1-\pi_t)\}^{-1}\pi_t(1-\pi_t)\eE_{(\bsX_t,\bsS_t)}\big[\{\mu_1(\bX_t,\bS_t)-\hat{\mu}_{1,t-1}(\bX_t,\bS_t)\}\{\mu_0(\bX_t,\bS_t)-\hat{\mu}_{0,t-1}(\bX_t,\bS_t)\}|\cF_{t-1}\big]=\eE_{(\bsX_t,\bsS_t)}\big[\{\mu_1(\bX_t,\bS_t)-\hat{\mu}_{1,t-1}(\bX_t,\bS_t)\}\{\mu_0(\bX_t,\bS_t)-\hat{\mu}_{0,t-1}(\bX_t,\bS_t)\}|\cF_{t-1}\big]$.
    Combining above, we have
    $$
    \begin{aligned}
        \eE(u_t^2|\cF_{t-1})=&\frac{\sigma_1^2}{\pi_t}+\frac{\sigma_0^2}{1-\pi_t}+\eE_{(\bsX,\bsS)}\{\tau(\bX,\bS)-\tau_0\}^2\\
        &+\frac{1-\pi_t}{\pi_t}\eE_{(\bsX_t,\bsS_t)}\big[\{\mu_1(\bX_t,\bS_t)-\hat{\mu}_{1,t-1}(\bX_t,\bS_t)\}^2|\cF_{t-1}\big]\\
        &+\frac{\pi_t}{1-\pi_t}\eE_{(\bsX_t,\bsS_t)}\big[\{\mu_0(\bX_t,\bS_t)-\hat{\mu}_{0,t-1}(\bX_t,\bS_t)\}^2|\cF_{t-1}\big]\\
        &+2\eE_{(\bsX_t,\bsS_t)}\big[\{\mu_1(\bX_t,\bS_t)-\hat{\mu}_{1,t-1}(\bX_t,\bS_t)\}\{\mu_0(\bX_t,\bS_t)-\hat{\mu}_{0,t-1}(\bX_t,\bS_t)\}|\cF_{t-1}\big].
    \end{aligned}
    $$
    which is the desired result.
\end{proof}
\begin{remark}
    \label{rmk:ell_2}
    From the proof above, we have 
    $$
    \ell_2(a_t)=\eE(R_{2,t}^2|\cF_{t-1})+\eE(R_{4,t}^2|\cF_{t-1})+2\eE(R_{2,t}R_{4,t}|\cF_{t-1})=\eE\{(R_{2,t}+R_{4,t})^2|\cF_{t-1}\}\ge0,
    $$ 
    where $\ell_2(a_t)$ is defined in Section~\ref{subsec:regret}. If $\hat{\mu}_{0,t-1}=\mu_0$ and $\hat{\mu}_{1,t-1}=\mu_1$ a.s., then $R_{2,t}=R_{4,t}=0$ a.s., which further implies that $\ell_2(a_t)=0$. 
\end{remark}

Now we are ready to show Theorem~\ref{thm:variance}.\\
(i) According to Lemma~\ref{lem:condition_var}(i), we have 
$$
\eE(\hat{\tau}_T)=T^{-1}\sum_{t=1}^{T}\eE\{\eE(\phi_t|\cF_{t-1})\}=\tau_0,
$$ 
which implies that $\hat{\tau}_T$ is unbiased.\\
(ii) For the variance, note that $T\cdot V_T=T^{-1}\eE\big(\sum_{t=1}^{\T}u_t^2+2\sum_{t=1}^{\T}\sum_{s=1}^{t-1}u_tu_s\big)$. Since $u_s\in\cF_{t-1}$ for $s<t$, by iterated expectations, it can be shown that 
$$
\eE(u_tu_s)=\eE\{u_s\eE(u_t|\cF_{t-1})\}=0,\quad \forall s<t.
$$
Then, by combining the result in Lemma~\ref{lem:condition_var}(ii), the proof is completed.

\subsection{Proof of Proposition~\ref{propos:efficiency_bound}}
The desired result follows immediately from Theorem~\ref{thm:variance} by letting $\hat{\mu}_{1,t-1}(\cdot,\cdot)=\mu_1(\cdot,\cdot),\hat{\mu}_{0,t-1}(\cdot,\cdot)=\mu_0(\cdot,\cdot),$ and $\pi_t=\pi^*$ for $t\in[T].$

\subsection{Proof of Proposition~\ref{propos:pi}}

\begin{lemma}
    \label{lem:N_t_lower}
    Suppose that Conditions \ref{cond:sutva}--\ref{cond:kernel} hold. Recall that $N_{1,t}=\sum_{j=1}^{t}I(Z_j=1)$ and $N_{0,t}=t-N_{1,t}$. Then, we have $N_{1,t}=\Omega(t^{1-\eta})$ a.s., and $N_{0,t}=\Omega(t^{1-\eta})$ a.s.
\end{lemma}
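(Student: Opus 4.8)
The plan is to exploit the clipping device, which guarantees the deterministic bound $\pi_j\ge\zeta_j=\tfrac12 j^{-\eta}$ (and symmetrically $1-\pi_j\ge\zeta_j$, since $\zeta_j\le\tfrac12$) throughout the concentration phase, and then to show that the random fluctuation of $N_{1,t}$ about its predictable drift is of strictly smaller order. Write $p_j:=\eP(Z_j=1\mid\cF_{j-1})$, so that $p_j=I(Z_j=1)\in\{0,1\}$ during the deterministic initialization phase and $p_j=\pi_j$ during the concentration phase. Decompose $N_{1,t}=S_t+M_t$, where $S_t:=\sum_{j=1}^{t}p_j$ is predictable and $M_t:=\sum_{j=1}^{t}\{I(Z_j=1)-p_j\}$ is a martingale with respect to $\{\cF_t\}$ whose increments vanish for $j\le T_0$ and are bounded by $1$ in absolute value otherwise.

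First I would establish the drift lower bound. Since initialization contributes a nonnegative amount and $p_j=\pi_j\ge\zeta_j$ for $j>T_0$, we have $S_t\ge\sum_{j=T_0+1}^{t}\tfrac12 j^{-\eta}$, and comparing this sum with $\int x^{-\eta}\,dx$ gives $S_t\ge c_1 t^{1-\eta}$ for all large $t$ and some constant $c_1>0$ (using $0<\eta<1$). The crucial observation is the pathwise bound on the predictable quadratic variation, $\langle M\rangle_t=\sum_{j=T_0+1}^{t}\pi_j(1-\pi_j)\le\sum_{j=T_0+1}^{t}\pi_j\le S_t$; together with $\langle M\rangle_t\ge\tfrac14\sum_{j=T_0+1}^{t}j^{-\eta}\to\infty$, this ensures $\langle M\rangle_\infty=\infty$ almost surely.

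Next I would control $M_t$. Because $\{M_t\}$ is an $L^2$-martingale with bounded increments and $\langle M\rangle_\infty=\infty$, a strong law for martingales yields $M_t/\langle M\rangle_t\to 0$ a.s.; since $0<\langle M\rangle_t\le S_t$, this gives $|M_t|/S_t\le|M_t|/\langle M\rangle_t\to 0$ a.s., i.e.\ $M_t=o(S_t)$. Hence $N_{1,t}=S_t\{1+o(1)\}\ge\tfrac12 S_t\ge\tfrac{c_1}{2}t^{1-\eta}$ eventually, which is the claim $N_{1,t}=\Omega(t^{1-\eta})$ a.s. The identical argument applied to $I(Z_j=0)$, using the symmetric clipping bound $1-\pi_j\ge\zeta_j$, gives $N_{0,t}=\Omega(t^{1-\eta})$ a.s. One could instead replace the martingale strong law by a Freedman-type tail bound of the form $\eP(M_t\le-\tfrac12 S_t)\le\exp(-c\,S_t)$ combined with $S_t=\Omega(t^{1-\eta})$ and Borel--Cantelli, or invoke the martingale law of the iterated logarithm.

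The main obstacle is guaranteeing that the fluctuation term is negligible \emph{uniformly} over the entire range of possible allocation behavior: when $\pi_j$ hovers near $\zeta_j$ the drift $S_t$ is only of order $t^{1-\eta}$, so a crude a priori bound such as $\langle M\rangle_t\le t$ (which would permit $|M_t|$ of order $\sqrt{t}$) is far too weak and would break down once $\eta\ge\tfrac12$. The resolution is precisely the self-normalizing inequality $\langle M\rangle_t\le S_t$, which ties the magnitude of the fluctuation to that of the drift and forces $M_t=o(S_t)$ regardless of how close the clipped allocation sits to its boundary.
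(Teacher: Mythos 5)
Your proposal is correct, and its primary engine differs from the paper's. The skeleton is shared: both arguments lower-bound the predictable drift $S_t=\sum_{j\le t}\pi_j\gtrsim t^{1-\eta}$ via the clipping threshold $\zeta_j=\tfrac12 j^{-\eta}$, and both must show the fluctuation of $N_{1,t}$ around that drift is negligible. The paper does this by defining the bad event $\cE_t=\{N_{1,t}\le \sum_{j\le t}\pi_j/2\}$ and invoking a ready-made conditional multiplicative Chernoff bound (its Lemma~\ref{lem:count}(i)), $\eP(\cE_t)\le\exp\bigl(-3\sum_{j\le t}\pi_j/28\bigr)$, whose exponent is itself proportional to the drift; summability of $\exp(-ct^{1-\eta})$ plus Borel--Cantelli then gives the a.s.\ claim. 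You instead decompose $N_{1,t}=S_t+M_t$, observe the self-normalizing bound $\langle M\rangle_t\le S_t$ together with $\langle M\rangle_\infty=\infty$, and apply the martingale strong law $M_t/\langle M\rangle_t\to0$ a.s.; this avoids any explicit concentration inequality and in fact yields the slightly stronger conclusion $N_{1,t}=S_t\{1+o(1)\}$ a.s., not merely a constant-factor lower bound. What the paper's route buys is a quantitative, non-asymptotic tail bound that its machinery reuses elsewhere (Lemma~\ref{lem:count}(ii) and the good-event construction in the regret analysis), and a shorter proof given the imported lemma; your alternative remark (Freedman-type bound plus Borel--Cantelli) is essentially the paper's proof verbatim. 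One further point in your favor: you treat the initialization phase correctly (deterministic assignments, so $p_j\in\{0,1\}$ and the martingale increments vanish there), whereas the paper writes $\eP(Z_j=1|\cF_{j-1})=\pi_j$ and $\pi_j\ge\tfrac12 j^{-\eta}$ for all $j\ge1$, which is not literally accurate for $j\le T_0$ — harmless, since the post-$T_0$ sum already delivers the $\Omega(t^{1-\eta})$ drift, but your version is cleaner on this detail.
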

\begin{proof}
    Note that $N_{1,t}=\sum_{j=1}^{t}I(Z_j=1)$, where $\eP(Z_t=1|\cF_{t-1})=\pi_t$. Define the event $\cE_t=\big\{N_{1,t}\le\sum_{j=1}^{t}\pi_j/2\big\}$. By using Lemma~\ref{lem:count}(i) below, we have $\eP(\cE_t)\le\exp\big(-3\sum_{j=1}^{t}\pi_j/28\big)$. Since $\pi_j\ge (1/2)\cdot j^{-\eta}$ for each $j\ge1$ by \eqref{eq:clip}, it can be shown that $\sum_{j=1}^{t}\pi_j=\Theta(t^{1-\eta})$. Then, it can be seen that $\sum_{t=1}^{\infty}\eP(\cE_t)<\infty$, which implies that $N_{1,t}=\Omega(t^{1-\eta})$ a.s. by Borel-Cantelli Lemma. By using similar arguments, we can show that $N_{0,t}=\Omega(t^{1-\eta})$ a.s., by noticing that $\pi_j\le 1-(1/2)\cdot j^{-\eta}$ for each $j\ge1$.
\end{proof}

\begin{lemma}
    \label{lem:mu_hat_asy}
    Suppose that Conditions \ref{cond:sutva}--\ref{cond:kernel} hold. Then, $$\eE_{(\bsX_{t+1},\bsS_{t+1})}\big[\{\mu_z(\bX_{t+1},\bS_{t+1})-\hat{\mu}_{z,t}(\bX_{t+1},\bS_{t+1})\}^2|\cF_t\big]={O_p(N_{z,t}^{-2\beta/\kappa})}$$ for each $z\in\{0,1\}.$
\end{lemma}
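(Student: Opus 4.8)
The plan is to reduce everything to the $L^2$ convergence rates of the three nuisance estimators that build $\hat\mu_{z,t}$. Using $\mu_z(\bx,\bs)=m_{Y,z}(\bx)-\gamma_z m_{S,z}(\bx)+s(z)\gamma_z$ from \eqref{eq:model_cond_XS}, the definition \eqref{eq:mu_hat} of $\hat\mu_{z,t}$, and adding and subtracting $\gamma_z\hat m_{S,z,t}(\bx)$, I would first record the algebraic identity
\[
\mu_z(\bx,\bs)-\hat\mu_{z,t}(\bx,\bs)=\big[m_{Y,z}(\bx)-\hat m_{Y,z,t}(\bx)\big]-\gamma_z\big[m_{S,z}(\bx)-\hat m_{S,z,t}(\bx)\big]+(\gamma_z-\hat\gamma_{z,t})\big[s(z)-\hat m_{S,z,t}(\bx)\big].
\]
Evaluating at $(\bx,\bs)=(\bX_{t+1},\bS_{t+1})$, applying $(a+b+c)^2\le 3(a^2+b^2+c^2)$, and taking $\eE_{(\bsX_{t+1},\bsS_{t+1})}[\,\cdot\,|\cF_t]$, it then suffices to bound the three resulting conditional second moments separately.

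For the first two terms I would invoke the integrated squared-error rate of the Nadaraya--Watson estimator. Under Conditions~\ref{cond:mean_fun}--\ref{cond:kernel} (H\"older smoothness $\beta$, a covariate density bounded away from $0$ and $\infty$, and a bounded, Lipschitz, compactly supported kernel), the bias--variance balance with bandwidths $h_{Y,z,t}\asymp N_{z,t}^{-1/\kappa}$ and $h_{S,z,t}\asymp t^{-1/\kappa}$ gives $\eE[\{m_{Y,z}(\bX_{t+1})-\hat m_{Y,z,t}(\bX_{t+1})\}^2|\cF_t]=O_p(N_{z,t}^{-2\beta/\kappa})$ and $\eE[\{m_{S,z}(\bX_{t+1})-\hat m_{S,z,t}(\bX_{t+1})\}^2|\cF_t]=O_p(t^{-2\beta/\kappa})$; note that $\hat m_{Y,z,t}$ in \eqref{eq:m_Y_hat} uses only the $N_{z,t}$ arm-$z$ observations while $\hat m_{S,z,t}$ uses all $t$ surrogate observations, which is exactly why their effective sample sizes differ. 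Since $N_{z,t}\le t$, the second rate is dominated by $N_{z,t}^{-2\beta/\kappa}$, and $\gamma_z$ is a fixed constant, so both terms are $O_p(N_{z,t}^{-2\beta/\kappa})$.

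For the third term, $\hat\gamma_{z,t}$ and $\hat m_{S,z,t}$ are $\cF_t$-measurable, so $(\gamma_z-\hat\gamma_{z,t})^2$ factors out of the conditional expectation. Writing $S_{t+1}(z)-\hat m_{S,z,t}(\bX_{t+1})=\varepsilon_{S,t+1}(z)+\{m_{S,z}(\bX_{t+1})-\hat m_{S,z,t}(\bX_{t+1})\}$ and using that the fresh draw is independent of $\cF_t$ with $\eE\{\varepsilon_{S,t+1}(z)|\bX_{t+1}\}=0$ and $\var\{\varepsilon_{S,t+1}(z)|\bX_{t+1}\}=\sigma_{S,z}^2$ (so the cross term vanishes), the conditional second moment equals $\sigma_{S,z}^2+O_p(t^{-2\beta/\kappa})=O_p(1)$. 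Combined with the (at least) parametric rate $(\gamma_z-\hat\gamma_{z,t})^2=O_p(N_{z,t}^{-1})$ for Robinson's estimator \eqref{eq:gamma_hat}, the third term is $O_p(N_{z,t}^{-1})$. Because $\kappa>2\beta$ in both regimes $\kappa=2\beta+d$ and $\kappa=2d+\epsilon$, we have $N_{z,t}^{-1}=o(N_{z,t}^{-2\beta/\kappa})$, so collecting the three bounds yields $\eE_{(\bsX_{t+1},\bsS_{t+1})}[\{\mu_z-\hat\mu_{z,t}\}^2|\cF_t]=O_p(N_{z,t}^{-2\beta/\kappa})$.

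The hard part will be justifying the two displayed rates under the \emph{adaptive} sampling scheme rather than taking them off the shelf: the indices with $Z_j=z$ are selected by the data-dependent allocation, so the arm-$z$ sample is not i.i.d., and both the $L^2$ convergence of the Nadaraya--Watson estimators and the root-$N_{z,t}$ consistency of $\hat\gamma_{z,t}$ must be established for dependent, adaptively selected data. This is what the unconfoundedness in Condition~\ref{cond:sutva}(iii) (which preserves the correct within-arm conditional law), the almost-sure lower bound $N_{z,t}=\Omega(t^{1-\eta})$ from Lemma~\ref{lem:N_t_lower} (guaranteeing a growing effective sample under each arm), and the finite conditional-variance structure built into the model are used for; I expect the verification of these two rates in the adaptive regime to be the only nonroutine ingredient, the rest being the elementary decomposition above.
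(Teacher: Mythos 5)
Your decomposition and your treatment of the two Nadaraya--Watson terms coincide with the paper's own proof: the paper's bound \eqref{eq:hat_mu_deco_asy} is exactly your three-term split (with the third term further separated via $s(z)-\hat m_{S,z,t}(\bx)=\varepsilon_{S}(z)+\{m_{S,z}(\bx)-\hat m_{S,z,t}(\bx)\}$), and it likewise invokes the MISE bound for the Nadaraya--Watson estimator (Tsybakov/Gy\"orfi) with bandwidths $h_{Y,z,t}\asymp N_{z,t}^{-1/\kappa}$ and $h_{S,z,t}\asymp t^{-1/\kappa}$, combined with Condition~\ref{cond:X} and Markov's inequality. The genuine gap is your claim that Robinson's estimator attains the parametric rate, $(\gamma_z-\hat\gamma_{z,t})^2=O_p(N_{z,t}^{-1})$. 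That claim is unjustified---and in general false---under the paper's conditions: the estimator $\hat\gamma_{z,t}$ in \eqref{eq:gamma_hat} reuses the same observations for the first-stage Nadaraya--Watson fits and the second-stage least squares, so the first-stage estimation errors enter the numerator \emph{linearly}, through terms such as $N_{z,t}^{-1}\sum_{j\in G_{z,t}}\{\hat m_{S,z,t}(\bX_j)-m_{S,z}(\bX_j)\}\varepsilon_j(z)$ and $N_{z,t}^{-1}\sum_{j\in G_{z,t}}\{S_j(z)-\hat m_{S,z,t}(\bX_j)\}\{\hat m_{Y,z,t}(\bX_j)-m_{Y,z}(\bX_j)\}$, which can only be bounded (by Cauchy--Schwarz/Markov) at the rate $O_p(N_{z,t}^{-\beta/\kappa})$. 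Root-$N$ consistency would require either cross-fitting (so that these linear terms become conditionally mean-zero and only quadratic bias terms survive) together with the smoothness restriction $d<2\beta$, or higher-order smoothness with undersmoothing; none of this is assumed here, since Condition~\ref{cond:mean_fun} only gives $\beta\le1$, and the paper's Remark~\ref{rmk:corss} states explicitly that without cross-fitting the attainable rate is $\widetilde{O}_p(N_{z,t}^{-\beta/(2\beta+d)})$.

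The correction is benign for the final statement: the paper proves $|\hat\gamma_{z,t}-\gamma_z|=O_p(N_{z,t}^{-\beta/\kappa})$ and substitutes it into the same decomposition, so your third term becomes $(\gamma_z-\hat\gamma_{z,t})^2\cdot O_p(1)=O_p(N_{z,t}^{-2\beta/\kappa})$---no longer negligible relative to the other two terms, but exactly of the target order, and the lemma follows. You should therefore replace the parametric-rate assertion with a proof of the slower rate (the paper does this by expanding $\hat\gamma_{z,t}-\gamma_z$ into the bias terms above, controlling the denominator by the weak law of large numbers, and handling the term $N_{z,t}^{-1}\sum_{j\in G_{z,t}}\varepsilon_{S,j}(z)\varepsilon_j(z)=O_p(N_{z,t}^{-1/2})$ by the CLT). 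Your closing remark about the adaptive sampling being the "hard part" is fair, but note the paper itself handles it only implicitly, through unconfoundedness and the almost-sure growth of $N_{z,t}$ (Lemma~\ref{lem:N_t_lower}); the one step in your proposal that actually fails is the $\hat\gamma_{z,t}$ rate.
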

\begin{proof}
    Without loss of generality, we only prove the case $z=0$. By the definition of $\hat{\mu}_{0,t}$ in \eqref{eq:mu_hat}, we have $\hat{\mu}_{0,t}(\bX_{t+1},\bS_{t+1})-\mu_0(\bX_{t+1},\bS_{t+1})=\hat{m}_{Y,0,t}(\bX_{t+1})-m_{Y,0}(\bX_{t+1})+\{s_{t+1}(0)-m_{S,0}(\bX_{t+1})\}(\hat{\gamma}_{0,t}-\gamma_0)+\gamma_0\{m_{S,0}(\bX_{t+1})-\hat{m}_{S,0,t}(\bX_{t+1})\}+(\gamma_0-\hat{\gamma}_{0,t})\{\hat{m}_{S,0,t}(\bX_{t+1})-m_{S,0}(\bX_{t+1})\}$. This implies that
    \begin{equation}
        \label{eq:hat_mu_deco_asy}
        \begin{aligned}
            &\eE_{(\bsX_{t+1},\bsS_{t+1})}\big[\{\mu_0(\bX_{t+1},\bS_{t+1})-\hat{\mu}_{0,t}(\bX_{t+1},\bS_{t+1})\}^2|\cF_t\big]\\
            \le&4\gamma_0^2\eE_{\bsX_{t+1}}\big[\{\hat{m}_{S,0,t}(\bX_{t+1})-m_{S,0}(\bX_{t+1})\}^2|\cF_t\big]\\
            &+4(\hat{\gamma}_{0,t}-\gamma_0)^2\eE_{\bsX_{t+1}}\big[\{\hat{m}_{S,0,t}(\bX_{t+1})-m_{S,0}(\bX_{t+1})\}^2|\cF_t\big]\\
            &+4\eE_{\bsX_{t+1}}\big[\{\hat{m}_{Y,0,t}(\bX_{t+1})-m_{Y,0}(\bX_{t+1})\}^2|\cF_t\big]+4\sigma_{S,0}^2(\hat{\gamma}_{0,t}-\gamma_0)^2.
        \end{aligned}
    \end{equation}
    By applying Proposition 1.13 of 
    \textcolor{blue}{Tsybakov} (\textcolor{blue}{2009}) or Theorem 5.2 of \textcolor{blue}{Gy{\"o}rfi et al.} (\textcolor{blue}{2006}), which provide upper bounds on the mean integrated squared error of the Nadaraya-Watson regression, we can derive that there exists some constant $M_S$, independent of $t$ and $h_{S,0,t}$, such that 
    \begin{equation}
        \label{eq:m_S_rate}
        \eE\left[\int_{\cX}\{\hat{m}_{S,0,t}(\bx)-m_{S,0}(\bx)\}^2{\rm d}\bx\right]\le M_S(h_{S,0,t}^{2\beta}+t^{-1}h_{S,0,t}^{-d}).
    \end{equation}
    Then, it follows from Condition~\ref{cond:X} and the Markov inequality that $\eE_{\bsX_{t+1}}\big[\{\hat{m}_{S,0,t}(\bX_{t+1})-m_{S,0}(\bX_{t+1})\}^2|\cF_t\big]=O_p(h_{S,0,t}^{2\beta}+t^{-1}h_{S,0,t}^{-d})$. Similarly, $\eE_{\bsX_{t+1}}\big[\{\hat{m}_{Y,0,t}(\bX_{t+1})-m_{Y,0}(\bX_{t+1})\}^2|\cF_t\big]=O_p(h_{Y,0,t}^{2\beta}+N_{0,t}^{-1}h_{Y,0,t}^{-d})$. These results further imply that
    $$
    \begin{aligned}
        &\eE_{\bsX_{t+1}}\big[\{\hat{m}_{S,0,t}(\bX_{t+1})-m_{S,0}(\bX_{t+1})\}^2|\cF_t\big]=O_p(t^{-2\beta/\kappa}),\\
        &\eE_{\bsX_{t+1}}\big[\{\hat{m}_{Y,0,t}(\bX_{t+1})-m_{Y,0}(\bX_{t+1})\}^2|\cF_t\big]=O_p(N_{0,t}^{-2\beta/\kappa}),
    \end{aligned}
    $$
    since we set {$h_{S,0,t}\asymp t^{-1/\kappa}$ and $h_{Y,0,t}\asymp N_{0,t}^{-1/\kappa}$ with $\kappa>2d$, and in particular, $\kappa=2\beta+d$ when $d<2\beta$.} 

    For the linear coefficient $\gamma_0,$ for each $j\in G_{0,t}=\{j\in[t]:Z_j=0\}$, denote $\widetilde{Y}_j=Y_j-\hat{m}_{Y,0,t}(\bX_j),$ $\widetilde{S}_j(0)=S_j(0)-\hat{m}_{S,0,t}(\bX_j),$ $\epsilon_{Y,0}(\bX_j)=\hat{m}_{Y,0,t}(\bX_j)-m_{Y,0}(\bX_j),$ and $\epsilon_{S,0}(\bX_j)=\hat{m}_{S,0,t}(\bX_j)-m_{S,0}(\bX_j)$, where we omit the subscript $t$ for simplicity. By the definition of $\hat{\gamma}_{0,t}$ in \eqref{eq:gamma_hat}, it follows that
    $$
    \hat{\gamma}_{0,t}-\gamma_0=\frac{\sum_{j\in G_{0,t}}\{\epsilon_{S,0}(\bX_j)+\varepsilon_{S,j}(0)\}\varepsilon_j(0)}{\sum_{j\in G_{0,t}}\widetilde{S}_j^2(0)}-\frac{\sum_{j\in G_{0,t}}\widetilde{S}_j(0)\{\epsilon_{Y,0}(\bX_j)+\epsilon_{S,0}(\bX_j)\gamma_0\}}{\sum_{j\in G_{0,t}}\widetilde{S}_j^2(0)}.
    $$
    Noticing that $\var\{\varepsilon_{S,t}(0)|\bX_t\}=\sigma_{S,0}^2>0$, we have $N_{0,t}^{-1}\sum_{j\in G_{0,t}}\widetilde{S}_j^2(0)$ bounded away from zero with probability tending to 1 by the Weak Law of Large Numbers. By Condition~\ref{cond:X} and \eqref{eq:m_S_rate}, $\eE\epsilon_{S,0}^2(\bX_j)=O(t^{-2\beta/\kappa})$ for each $j\in[t],$ which, together with the Markov inequality, implies that $N_{0,t}^{-1}\sum_{j\in G_{0,t}}\epsilon_{S,0}(\bX_j)\varepsilon_j(0)=O_p(t^{-\beta/\kappa})$. Furthermore, since $\eE\{\varepsilon_{S,j}(0)\varepsilon_j(0)|\bX_t\}=0$ and $\eE\{\varepsilon_{S,j}^2(0)\varepsilon_j^2(0)|\bX_t\}=\sigma_{S,0}^2\sigma_0^2<\infty$, we have $N_{0,t}^{-1}\sum_{j\in G_{0,t}}\varepsilon_{S,j}(0)\varepsilon_j(0)=O_p(N_{0,t}^{-1/2})$ by the Central Limit Theorem. Thus, we have shown that the numerator of the first term in the decomposition above satisfies that $N_{0,t}^{-1}\sum_{j\in G_{0,t}}\{\epsilon_{S,0}(\bX_j)+\varepsilon_{S,j}(0)\}\varepsilon_j(0)=O_p(N_{0,t}^{-\beta/\kappa})$, where we also used the fact that $t> N_{0,t}$ and $\kappa>2\beta$. Similar arguments can apply to the second term of the decomposition. Thus, $|\hat{\gamma}_{0,t}-\gamma_0|=O_p(N_{0,t}^{-\beta/\kappa}).$
    
    Combining above, we have $\eE_{(\bsX_{t+1},\bsS_{t+1})}\big[\{\mu_0(\bX_{t+1},\bS_{t+1})-\hat{\mu}_{0,t}(\bX_{t+1},\bS_{t+1})\}^2|\cF_t\big]=O_p(N_{0,t}^{-2\beta/\kappa})$ by \eqref{eq:hat_mu_deco_asy}. The proof is completed.
\end{proof}

Now we are ready to show Proposition~\ref{propos:pi}. We first show that $\hat{\sigma}_{0,t}^2\to_p\sigma_0^2$ and $\hat{\sigma}_{1,t}^2\to_p\sigma_1^2$ as $t\to\infty$. Recall that $G_{0,t}=\{j\in[t]:Z_j=0\}$. By \eqref{eq:sigma_hat}, it follows that
$$
\begin{aligned}
    \hat{\sigma}_{0,t}^2=&\frac{1}{N_{0,t}}\sum_{j\in G_{0,t}}\{Y_j-\hat{\mu}_{0,t}(\bX_j,\bS_j)\}^2\\
    =&\frac{1}{N_{0,t}}\sum_{j\in G_{0,t}}\{Y_j(0)-\mu_0(\bX_j,\bS_j)\}^2
    +\frac{1}{N_{0,t}}\sum_{j\in G_{0,t}}\{\hat{\mu}_{0,t}(\bX_j,\bS_j)-\mu_0(\bX_j,\bS_j)\}^2\\
    &+\frac{2}{N_{0,t}}\sum_{j\in G_{0,t}}\{Y_j(0)-\mu_0(\bX_j,\bS_j)\}\{\mu_0(\bX_j,\bS_j)-\hat{\mu}_{0,t}(\bX_j,\bS_j)\}\\
    =:&\widetilde{R}_{1,t}+\widetilde{R}_{2,t}+\widetilde{R}_{3,t}.
\end{aligned}
$$

Note that $\varepsilon_j(0)=Y_j(0)-\mu_0(\bX_j,\bS_j)$ and $\var\{\varepsilon_j(0)|\bX_j,\bS_j\}=\sigma_0^2$. By Condition~\ref{cond:sutva} and Lemma~\ref{lem:N_t_lower}, it follows from the SLLN for i.i.d. random variables that $\widetilde{R}_{1,t}\to_{a.s.}\sigma_0^2$ as $t\to\infty$.
{For the second term: (i) since $th_{S,0,t}^{2d}/\log t\to\infty$ and Conditions~\ref{cond:mean_fun}--\ref{cond:varep} hold, Theorem 1 of \textcolor{blue}{Qian and Yang} (\textcolor{blue}{2016}) implies that $\sup_{\bx\in\cS}|\hat{m}_{S,0,t}(\bx)-m_{S,0}(\bx)|=o_p(1)$; (ii) likewise, combined with Lemma~\ref{lem:N_t_lower}, $\sup_{\bx\in\cS}|\hat{m}_{Y,0,t}(\bx)-m_{Y,0}(\bx)|=o_p(1)$; (iii) $|\hat{\gamma}_{0,t}-\gamma_0|=O_p(N_{0,t}^{-\beta/\kappa})$ follows from the proof of Lemma~\ref{lem:mu_hat_asy}; (iv) $\max_{j\in G_{0,t}}\varepsilon_{S,j}^2(0)=O_p(\log N_{0,t})$ by Condition~\ref{cond:varep}. Combining above, using the decomposition similar to \eqref{eq:hat_mu_deco_asy}, we can show that $\widetilde{R}_{2,t}=o_p(1)$.}
The third term can be proved that $\widetilde{R}_{3,t}=o_p(1)$ by using the Cauchy-Schwarz inequality. Thus, $\hat{\sigma}_{0,t}^2\to_{p}\sigma_0^2$ as $t\to\infty$. The second argument $\hat{\sigma}_{1,t}^2\to_{p}\sigma_1^2$ as $t\to\infty$ can be proved in a similar manner. Moreover, it can be seen that $\pi_t\to_p\widetilde{\pi}_t$. By the definitions of the Neyman allocation and the initial allocation, that $\pi^*=\sigma_1/(\sigma_1+\sigma_0)$ and $\widetilde{\pi}_t=\hat{\sigma}_{1,t-1}/(\hat{\sigma}_{1,t-1}+\hat{\sigma}_{0,t-1})$, we can immediately obtain that $\pi_t\to_{p}\pi^*$ as $t\to\infty$ by using the Continuous Mapping Theorem.

\subsection{Proof of Theorem~\ref{thm:asy_nor}}

The strong consistency of the proposed estimator $\hat{\tau}_T=T^{-1}\sum_{t=1}^{T}\phi_t$ follows from the Strong Law of Large Numbers (SLLN) for martingale differences. 

\begin{lemma}
    \label{lem:SLLN}
    (SLLN for martingale differences, Theorem 2.18 of \textcolor{blue}{Hall and Heyde} (\textcolor{blue}{1980})) Let $\{X_t\}_{t\ge1}$ be a martingale sequence w.r.t. filtration $\{\cF_t\}_{t\ge1}$. For $1\le p\le2,n^{-1}\sum_{t=1}^{n}X_t\to0$ as $n\to\infty$ almost surely on the set $\big\{\sum_{t=1}^{\infty}t^{-p}\eE(|X_t|^p|\cF_{t-1})<\infty\big\}$. 
\end{lemma}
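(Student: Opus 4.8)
The statement is the classical strong law for martingale differences (Hall--Heyde, Theorem 2.18), so the plan is to reproduce its standard two-ingredient proof: a martingale convergence theorem operating on a random set, followed by Kronecker's lemma. I read $\{X_t\}$ in the usual Hall--Heyde sense, namely as the difference sequence of a martingale $\{S_n=\sum_{t\le n}X_t,\,\cF_n\}$, so that $\eE(X_t\mid\cF_{t-1})=0$; this is the reading under which the asserted conclusion $n^{-1}\sum_{t=1}^n X_t\to0$ is meaningful.

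First I would pass to the normalized increments $a_t:=X_t/t$ and their partial sums $T_n:=\sum_{t=1}^n a_t$. Since $t$ is deterministic, $\{a_t\}$ is again a martingale difference sequence and $\{T_n,\cF_n\}$ is a martingale, with $\eE(|a_t|^p\mid\cF_{t-1})=t^{-p}\eE(|X_t|^p\mid\cF_{t-1})$. Hence the hypothesis set can be rewritten as
\[
A:=\Big\{\textstyle\sum_{t=1}^\infty t^{-p}\,\eE(|X_t|^p\mid\cF_{t-1})<\infty\Big\}=\Big\{V_\infty<\infty\Big\},\qquad V_n:=\sum_{t=1}^n\eE(|a_t|^p\mid\cF_{t-1}),
\]
where $\{V_n\}$ is predictable and nondecreasing ($V_n$ is $\cF_{n-1}$-measurable), so $A$ is precisely the set on which the conditional $p$-th moment increments of $\{T_n\}$ are summable.

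Second I would prove that $T_n$ converges almost surely on $A$ by a stopping-time localization. For each $K>0$ set $\tau_K:=\inf\{n:V_{n+1}>K\}$, which is a stopping time by predictability of $\{V_n\}$. On the stopped martingale $\{T_{n\wedge\tau_K}\}$ the predictable conditional $p$-th moment sum is bounded by $K$; combining the Burkholder--Davis--Gundy inequality with the elementary bound $(\sum\Delta_t^2)^{p/2}\le\sum|\Delta_t|^p$ (valid because $p/2\le1$) gives $\sup_n\eE|T_{n\wedge\tau_K}|^p\le C_pK<\infty$. Thus $\{T_{n\wedge\tau_K}\}$ is $L^p$-bounded and converges a.s.\ by Doob's martingale convergence theorem. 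Since $A=\bigcup_K\{\tau_K=\infty\}$ up to a null set and $T_n=T_{n\wedge\tau_K}$ on $\{\tau_K=\infty\}$, it follows that $\sum_t X_t/t$ converges a.s.\ on $A$.

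Finally I would apply Kronecker's lemma with the increasing weights $b_t=t\uparrow\infty$: a.s.\ on $A$, convergence of $\sum_t a_t=\sum_t X_t/t$ forces $b_n^{-1}\sum_{t=1}^n b_t a_t=n^{-1}\sum_{t=1}^n X_t\to0$, which is the claim. I expect the convergence step on $A$ to be the only delicate point: because $A$ is merely an a.s.-finiteness event rather than a uniform moment bound, the localization by $\tau_K$ is essential to convert conditional-moment control into an honest $L^p$ bound, and the endpoints deserve separate care---$p=2$ reduces to the plain $L^2$ bracket argument, whereas $p=1$ must invoke Doob's $L^1$ convergence theorem rather than the $L^p$, $p>1$, version.
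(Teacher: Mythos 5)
Your proof is correct: the paper does not prove this lemma at all but simply quotes it as Theorem 2.18 of Hall and Heyde (1980), and your argument---normalizing to $a_t=X_t/t$, localizing with the predictable stopping times $\tau_K$ so that BDG (Davis' inequality at $p=1$) plus subadditivity of $x\mapsto x^{p/2}$ yields $\sup_n\eE|T_{n\wedge\tau_K}|^p\le C_pK$, invoking a.s.\ martingale convergence on each $\{\tau_K=\infty\}$, and finishing with Kronecker's lemma---is precisely the standard proof underlying that citation (their Theorem 2.17 combined with Kronecker). Your reading of $\{X_t\}$ as the martingale \emph{difference} sequence is also the correct interpretation of the paper's slightly loose statement, and all the measurability details ($V_n\in\cF_{n-1}$, $\{t\le\tau_K\}\in\cF_{t-1}$, $V_{n\wedge\tau_K}\le K$) check out.
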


The following Lemma~\ref{lem:mclt} generalizes the Lindeberg-Feller central limit theorem for a martingale difference sequence. The statement and proof of the Martingale Central Limit Theorem can be found in Chapter 3 of \textcolor{blue}{Hall and Heyde} (\textcolor{blue}{1980}).

\begin{lemma}
    \label{lem:mclt}
    (Martingale Central Limit Theorem) For each $n\ge1,$ let $\{S_{n,i}\}_{1\le i\le k_n}$ be a square integrable martingale sequence w.r.t. filtration $\{\cF_{n,i}\}_{1\le i\le k_n}$ with corresponding martingale difference $X_{n,i}=S_{n,i}-S_{n,i-1}$ and conditional variance $\sigma_{n,i}^2=\eE(X_{n,i}^2|\cF_{n,i-1})$. Assume that: (i) $\sum_{i=1}^{k_n}\sigma_{n,i}^2\to_p\sigma^2$ as $n\to\infty$ for some $\sigma^2>0;$ and (ii) the conditional Lindeberg condition holds, that is, $\forall\epsilon>0,\sum_{i=1}^{k_n}\eE\{X_{n,i}^2I(|X_{n,i}|>\epsilon)|\cF_{t-1}\}\to_p0$ as $n\to\infty.$ Then, $S_{n,k_n}\to_d\cN(0,\sigma^2)$ as $n\to\infty.$
\end{lemma}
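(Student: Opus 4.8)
The plan is to prove convergence of characteristic functions and then invoke the L\'evy continuity theorem: it suffices to show that for every fixed $\theta\in\eR$,
$\eE[\exp(i\theta S_{n,k_n})]\to\exp(-\theta^2\sigma^2/2)$ as $n\to\infty$. The engine is McLeish's device of comparing $\exp(i\theta S_{n,k_n})$ with the complex product $T_n=\prod_{i=1}^{k_n}(1+i\theta X_{n,i})$. The point of $T_n$ is that, by the martingale-difference property, $\eE[1+i\theta X_{n,i}\mid\cF_{n,i-1}]=1$, so $\prod_{i\le j}(1+i\theta X_{n,i})$ is a martingale in $j$ and $\eE[T_n]=1$ exactly. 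Using $\log(1+i\theta x)=i\theta x+\tfrac{\theta^2 x^2}{2}+r(\theta,x)$, with $|r(\theta,x)|\le C|\theta x|^3$ whenever $|\theta x|\le\tfrac12$, the product factorizes as $\exp(i\theta S_{n,k_n})=T_n\exp\bigl(-\tfrac{\theta^2}{2}\sum_i X_{n,i}^2-R_n\bigr)$, where $R_n=\sum_i r(\theta,X_{n,i})$. The whole argument then reduces to showing that the exponential factor converges in probability to the deterministic constant $\exp(-\theta^2\sigma^2/2)$ while $T_n$ is well controlled in the mean.

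Before assembling this, I would extract two consequences of the hypotheses. From the conditional Lindeberg condition (ii), a union bound followed by Markov's inequality gives $\max_{1\le i\le k_n}|X_{n,i}|\to_p 0$. Next, a weak law for the martingale-difference array $\{X_{n,i}^2-\sigma_{n,i}^2\}$ upgrades (i) to $\sum_i X_{n,i}^2\to_p\sigma^2$; this is the standard two-step truncation at a level $\epsilon$, bounding the centered truncated part in $L^2$ (its variance is $\le\epsilon^2\sum_i\eE X_{n,i}^2$) and the tail part in $L^1$ directly from Lindeberg, then letting $n\to\infty$ and $\epsilon\to0$. These two facts control the remainder: on the high-probability event $\{\max_i|\theta X_{n,i}|\le\tfrac12\}$ one has $|R_n|\le C|\theta|^3\bigl(\max_i|X_{n,i}|\bigr)\sum_i X_{n,i}^2\to_p 0$, and on its vanishing complement $R_n$ is irrelevant in probability. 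Hence $\exp\bigl(-\tfrac{\theta^2}{2}\sum_i X_{n,i}^2-R_n\bigr)\to_p\exp(-\theta^2\sigma^2/2)$.

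To conclude I would write $\eE[\exp(i\theta S_{n,k_n})]=e^{-\theta^2\sigma^2/2}\,\eE[T_n]+\eE\bigl[T_n\bigl(e^{-\frac{\theta^2}{2}\sum_i X_{n,i}^2-R_n}-e^{-\theta^2\sigma^2/2}\bigr)\bigr]$. The first term is exactly $e^{-\theta^2\sigma^2/2}$ because $\eE[T_n]=1$; the bracketed factor in the second term is bounded in modulus by $2$, since a direct computation of the real part shows $\bigl|\exp(-\tfrac{\theta^2}{2}\sum_i X_{n,i}^2-R_n)\bigr|=\prod_i(1+\theta^2 X_{n,i}^2)^{-1/2}\le1$, and it tends to $0$ in probability by the previous paragraph, so the second term vanishes \emph{provided} $\{T_n\}$ is uniformly integrable. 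Establishing this uniform integrability is the main obstacle, because $\eE|T_n|^2=\eE\prod_i(1+\theta^2 X_{n,i}^2)$ need not be bounded without extra control on the accumulated conditional variance. I would secure it by a stopping-time truncation: set $\tau_n=\min\{j:\sum_{i\le j}\sigma_{n,i}^2\ge\sigma^2+1\}$ and replace $X_{n,i}$ by $X_{n,i}I(i\le\tau_n)$, whose $\cF_{n,i-1}$-measurable indicator keeps it a martingale-difference array. Since $\sum_i\sigma_{n,i}^2\to_p\sigma^2$ and (from Lindeberg) $\max_i\sigma_{n,i}^2\to_p0$, stopping alters the array only on an event of probability $\to0$, preserves (i)--(ii), and forces $\sum_i\sigma_{n,i}^2\le K$ for a constant $K$ on a high-probability event. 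For the stopped array, factoring $\prod_i(1+\theta^2 X_{n,i}^2)$ through the nonnegative martingale $\prod_{i\le j}(1+\theta^2 X_{n,i}^2)\big/\prod_{i\le j}(1+\theta^2\sigma_{n,i}^2)$ yields $\eE\prod_i(1+\theta^2 X_{n,i}^2)\le\eE\prod_i(1+\theta^2\sigma_{n,i}^2)\le e^{\theta^2 K}$, so $\{T_n\}$ is bounded in $L^2$ and hence uniformly integrable. A Cauchy--Schwarz bound then drives the second term to $0$, giving $\eE[\exp(i\theta S_{n,k_n})]\to e^{-\theta^2\sigma^2/2}$ and, by L\'evy continuity, $S_{n,k_n}\to_d\cN(0,\sigma^2)$.
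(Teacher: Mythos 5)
The paper does not actually prove this lemma: it is quoted verbatim as a classical result, with the proof deferred to Chapter 3 of Hall and Heyde (1980) (the statement is essentially their Corollary 3.1). What you have written is a genuine reconstruction of that classical proof via McLeish's product-martingale method: compare $e^{i\theta S_{n,k_n}}$ with $T_n=\prod_i(1+i\theta X_{n,i})$, exploit $\eE[T_n]=1$, show the exponential correction factor converges in probability to $e^{-\theta^2\sigma^2/2}$, and secure uniform integrability of $T_n$ by a predictable stopping-time truncation. This is the same lineage as the textbook argument (McLeish 1974, absorbed into Hall--Heyde), so you are not taking a conceptually different route from the cited source; what your write-up buys is self-containedness, and the overall architecture is sound, including the correct observation that $\bigl|T_n^{-1}e^{i\theta S_{n,k_n}}\bigr|=\prod_i(1+\theta^2X_{n,i}^2)^{-1/2}\le1$, which is what makes the bracketed factor boundedly convergent.

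Three wrinkles need tightening, all standard and fixable. \emph{First}, your stopping time $\tau_n=\min\{j:\sum_{i\le j}\sigma_{n,i}^2\ge\sigma^2+1\}$ leaves the overshoot term $\sigma_{n,\tau_n}^2$ in the stopped variance sum, and this term is not almost surely bounded; since the $L^2$ bound $\eE\prod_i(1+\theta^2X_{n,i}^2)\le\eE\prod_i(1+\theta^2\sigma_{n,i}^2)$ requires an \emph{almost sure} bound $\sum_i\sigma_{n,i}^2\le K$ (a high-probability bound does not suffice inside the expectation), you should instead multiply $X_{n,i}$ by the indicator $I\bigl(\sum_{j\le i}\sigma_{n,j}^2\le\sigma^2+1\bigr)$, which is $\cF_{n,i-1}$-measurable, keeps the array a martingale difference, excludes the overshoot, and agrees with the original array on an event of probability tending to one. \emph{Second}, your weak-law step for $\sum_i X_{n,i}^2\to_p\sigma^2$ bounds the truncated part by $\epsilon^2\sum_i\eE X_{n,i}^2$, but hypotheses (i)--(ii) give only convergence in probability of conditional sums, not boundedness of $\sum_i\eE X_{n,i}^2$; this bound is legitimate only \emph{after} the stopping reduction, so the reduction must come first in the argument rather than being invoked at the end (integrability of $T_n$ and of its partial products, needed even to assert $\eE[T_n]=1$, likewise rests on the stopped array). \emph{Third}, passing from the conditional Lindeberg sum to statements about the unconditional quantities $\max_i|X_{n,i}|$ and $\sum_i X_{n,i}^2I(|X_{n,i}|>\epsilon)$ is not a bare union bound plus Markov; it needs a Lenglart-type domination lemma (e.g., Hall--Heyde Theorem 2.23) showing that a sum of nonnegative terms is small in probability whenever its predictable compensator is. With these repairs your proposal is a correct and complete proof of the lemma. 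As a side note, the conditioning $\sigma$-field in the Lindeberg condition of the statement should read $\cF_{n,i-1}$ rather than $\cF_{t-1}$; your proof implicitly uses the correct version.
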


To apply Lemmas~\ref{lem:SLLN} and \ref{lem:mclt} to show Theorem~\ref{thm:asy_nor}, we establish some basic results.

\begin{lemma}
    \label{lem:mart}
    Suppose that the conditions of Theorem~\ref{thm:asy_nor}(i) hold. Then, $\{u_t\}_{t\ge1}$ forms a square integrable martingale difference sequence w.r.t. $\{\cF_t\}_{t\ge1}$. 
\end{lemma}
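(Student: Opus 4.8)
The plan is to verify the two defining properties of a square integrable martingale difference sequence: (a) adaptedness together with the conditional mean-zero property $\eE(u_t\mid\cF_{t-1})=0$, and (b) square integrability, i.e. $\eE(u_t^2)<\infty$ for each fixed $t$. The first part is essentially already available from earlier work, so the genuine content of the argument lies in the integrability bound.

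First I would record adaptedness. By \eqref{eq:single-stage-ATE}, $\phi_t$ is assembled from the stage-$t$ observations $(\bX_t,\bS_t,Z_t,Y_t)$, which are $\cF_t$-measurable, together with $\pi_t$ and $\hat\mu_{z,t-1}(\cdot,\cdot)$, both computed from the history up to stage $t-1$ and hence $\cF_{t-1}\subset\cF_t$-measurable. Therefore $u_t=\phi_t-\tau_0\in\cF_t$. The conditional mean-zero property $\eE(u_t\mid\cF_{t-1})=0$ is precisely Lemma~\ref{lem:condition_var}(i), which holds under Condition~\ref{cond:sutva}; this also yields integrability of $u_t$ once square integrability is established.

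Next, for square integrability, I would take the expectation of the conditional-variance identity in Lemma~\ref{lem:condition_var}(ii), so that $\eE(u_t^2)=\eE\{\eE(u_t^2\mid\cF_{t-1})\}$, and bound each resulting term for fixed $t$. The terms $\sigma_1^2/\pi_t$ and $\sigma_0^2/(1-\pi_t)$ are controlled because the clipping rule \eqref{eq:clip} enforces $\tfrac12 t^{-\eta}\le\pi_t\le 1-\tfrac12 t^{-\eta}$, whence $\pi_t^{-1}\vee(1-\pi_t)^{-1}\le 2t^\eta<\infty$, while $\sigma_0^2,\sigma_1^2$ are finite by the model; the same clipping bounds the prefactors $\tfrac{1-\pi_t}{\pi_t}$ and $\tfrac{\pi_t}{1-\pi_t}$ multiplying the estimation-error terms. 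The CATE term $\eE_{(\bsX,\bsS)}\{\tau(\bX,\bS)-\tau_0\}^2$ is finite because, under Model~\eqref{eq:model_cond_XS_simplified} with $m_{Y,z},m_{S,z}$ bounded on the compact support $\cX$ (Conditions~\ref{cond:mean_fun} and~\ref{cond:X}) and $S_t(z)$ sub-Gaussian (Condition~\ref{cond:varep}), $\tau=\mu_1-\mu_0$ has a finite second moment.

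The remaining, and main, obstacle is the estimation-error terms built from $\eE_{(\bsX_t,\bsS_t)}[\{\mu_z(\bX_t,\bS_t)-\hat\mu_{z,t-1}(\bX_t,\bS_t)\}^2\mid\cF_{t-1}]$ (the cross term being handled by Cauchy--Schwarz once the two squared terms are controlled). Conditionally on $\cF_{t-1}$ the estimator $\hat\mu_{z,t-1}$ is a fixed function, and by \eqref{eq:mu_hat} the difference splits into an $\bX_t$-part and the part $(\gamma_z-\hat\gamma_{z,t-1})S_t(z)$. For the latter, $|\hat\gamma_{z,t-1}|\le M_\gamma$ by the hypothesis of Theorem~\ref{thm:asy_nor}(i) and $\eE\{S_t(z)^2\}<\infty$. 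For the former, the Nadaraya--Watson weights in \eqref{eq:m_S_hat}--\eqref{eq:m_Y_hat} are nonnegative and sum to one, so $|\hat m_{Y,z,t-1}(\bx)|\le\max_{j\le t-1}|Y_j|$ and $|\hat m_{S,z,t-1}(\bx)|\le\max_{j\le t-1}|S_j(z)|$ uniformly in $\bx$; combined with boundedness of $m_{Y,z},m_{S,z}$ on $\cX$ and $|\hat\gamma_{z,t-1}|\le M_\gamma$, the squared difference is dominated by $C(1+\max_{j\le t-1}Y_j^2+\max_{j\le t-1}S_j(z)^2)$ for a constant $C$ depending only on $t$. Taking expectations and using $\eE[\max_{j\le t-1}Y_j^2]\le\sum_{j\le t-1}\eE Y_j^2<\infty$, and likewise for $S_j(z)$, gives a finite bound. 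Summing the finitely many terms yields $\eE(u_t^2)<\infty$, completing the verification; the delicate point throughout is precisely that $\hat\mu_{z,t-1}$ depends on the unbounded past outcomes, which is resolved by the convexity of the kernel weights together with the finite-second-moment property of the observations.
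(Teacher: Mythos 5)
Your proof is correct, and the martingale-difference part (via Lemma~\ref{lem:condition_var}(i)) and the use of the clipping bound $\pi_t^{-1}\vee(1-\pi_t)^{-1}\le 2t^{\eta}$ coincide with the paper's argument; where you genuinely diverge is in how you control the estimation-error terms $\eE\{\mu_z(\bX_t,\bS_t)-\hat{\mu}_{z,t-1}(\bX_t,\bS_t)\}^2$. The paper bounds these \emph{uniformly in $t$} by a constant $M_{\mu}$, combining the decomposition \eqref{eq:hat_mu_deco_asy} with the mean-integrated-squared-error bound \eqref{eq:m_S_rate} for the Nadaraya--Watson estimator (Tsybakov/Gy\"orfi), Condition~\ref{cond:X}, and the assumed boundedness of $\hat{\gamma}_{z,t-1}$. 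You instead exploit the elementary fact that the estimators in \eqref{eq:m_S_hat}--\eqref{eq:m_Y_hat} are convex combinations of past responses, so $|\hat{m}_{Y,z,t-1}|\le\max_{j\le t-1}|Y_j|$ and $|\hat{m}_{S,z,t-1}|\le\max_{j\le t-1}|S_j(z)|$ pointwise, and then invoke finite second moments of the observations; this is self-contained, avoids importing a nonparametric risk bound (and in particular sidesteps the delicate issue that the data entering $\hat{m}_{Y,z,t-1}$ are collected adaptively, so i.i.d.-based MISE theorems do not apply verbatim), but it only delivers a bound on $\eE(u_t^2)$ that grows with $t$ (roughly linearly, via $\eE\max_{j\le t-1}Y_j^2\le\sum_{j\le t-1}\eE Y_j^2$). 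That is exactly enough for square integrability as stated, but note the trade-off: the paper's uniform bound $\eE\{\mu_z-\hat{\mu}_{z,t-1}\}^2=O(1)$ is reused downstream (in the Lindeberg/uniform-integrability step of Theorem~\ref{thm:asy_nor} and in Lemma~\ref{lem:lyapunov}), so your weaker per-$t$ bound could not simply be substituted there without further work.
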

\begin{proof}
    By Lemma~\ref{lem:condition_var}(i), we have $\eE(u_t|\cF_{t-1})=0.$ It suffices to show that $\eE(u_t^2)<\infty.$ Note that $\pi_t\ge(1/2)\cdot t^{-\eta},1-\pi_t\ge(1/2)\cdot t^{-\eta}$ by CLIP defined in \eqref{eq:clip}. Combining \eqref{eq:hat_mu_deco_asy} and \eqref{eq:m_S_rate}, Condition~\ref{cond:X}, and the almost surely boundedness of $\hat{\gamma}_{z,t-1}$ as assumed in Theorem~\ref{thm:asy_nor}, we obtain that there exists some constant $M_{\mu}>0$ such that $\eE\{\mu_z(\bX_t,\bS_t)-\hat{\mu}_{z,t-1}(\bX_t,\bS_t)\}^2\le M_{\mu}$ for all $t\ge1$ and each $z\in\{0,1\}$. Together with Lemma~\ref{lem:condition_var}, these imply that $\eE(u_t^2)<\infty$, which completes the proof.
\end{proof}

\begin{lemma}
    \label{lem:con_var_as}
    Suppose that the conditions of Theorem~\ref{thm:asy_nor} hold. Then, $\eE(u_t^2|\cF_{t-1})\to_p\varsigma_*^2$ as $t\to\infty$, where $\varsigma_*^2$ is defined in \eqref{eq:ney_var}.
\end{lemma}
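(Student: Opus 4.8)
The plan is to start directly from the closed-form expression for $\eE(u_t^2|\cF_{t-1})$ established in Lemma~\ref{lem:condition_var}(ii) and to show that it decomposes into a deterministic limiting part plus three remainder terms that are $o_p(1)$. Concretely, the first two terms $\sigma_1^2/\pi_t + \sigma_0^2/(1-\pi_t)$ together with the deterministic term $\eE_{(\bsX,\bsS)}\{\tau(\bX,\bS)-\tau_0\}^2$ will produce the limit $\varsigma_*^2$, while the three terms involving $\mu_z - \hat{\mu}_{z,t-1}$ will be shown to vanish.

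First I would handle the two leading terms. By Proposition~\ref{propos:pi}, $\pi_t \to_p \pi^* = \sigma_1/(\sigma_0+\sigma_1)$, and since $\pi^*\in(0,1)$ the maps $\pi\mapsto \sigma_1^2/\pi$ and $\pi\mapsto\sigma_0^2/(1-\pi)$ are continuous at $\pi^*$; the Continuous Mapping Theorem then gives $\sigma_1^2/\pi_t \to_p \sigma_1(\sigma_0+\sigma_1)$ and $\sigma_0^2/(1-\pi_t)\to_p \sigma_0(\sigma_0+\sigma_1)$, whose sum is $(\sigma_0+\sigma_1)^2$. The third term, $\eE_{(\bsX,\bsS)}\{\tau(\bX,\bS)-\tau_0\}^2$, is deterministic and is exactly the remaining piece of $\varsigma_*^2$ in \eqref{eq:ney_var}, so these three contributions recover $\varsigma_*^2$.

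It then remains to show that the three estimation-error terms are $o_p(1)$. For the term $\frac{1-\pi_t}{\pi_t}\,\eE_{(\bsX_t,\bsS_t)}[\{\mu_1-\hat{\mu}_{1,t-1}\}^2|\cF_{t-1}]$, I would bound the prefactor using the clipping device: since $\pi_t \ge \frac{1}{2}t^{-\eta}$ by \eqref{eq:clip}, we have $\frac{1-\pi_t}{\pi_t}\le \pi_t^{-1}\le 2t^{\eta}$. Applying Lemma~\ref{lem:mu_hat_asy} at stage $t-1$ gives $\eE_{(\bsX_t,\bsS_t)}[\{\mu_1-\hat{\mu}_{1,t-1}\}^2|\cF_{t-1}] = O_p(N_{1,t-1}^{-2\beta/\kappa})$, and Lemma~\ref{lem:N_t_lower} gives $N_{1,t-1}=\Omega(t^{1-\eta})$ a.s., so this factor is $O_p(t^{-(1-\eta)2\beta/\kappa})$. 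The product is therefore $O_p(t^{\eta-(1-\eta)2\beta/\kappa})$, which is $o_p(1)$ precisely because the condition $0<\eta<2\beta/(2\beta+\kappa)$ of Theorem~\ref{thm:asy_nor}(ii) is algebraically equivalent to $\eta<(1-\eta)2\beta/\kappa$. The symmetric term with $\frac{\pi_t}{1-\pi_t}$ and $\hat{\mu}_{0,t-1}$ is handled identically using $1-\pi_t\ge\frac{1}{2}t^{-\eta}$ and $N_{0,t-1}=\Omega(t^{1-\eta})$. Finally, the cross term I would control by Cauchy-Schwarz, bounding its absolute value by twice the product of the two conditional $L^2$ error norms, each of order $O_p(t^{-(1-\eta)\beta/\kappa})$, so the cross term is $O_p(t^{-(1-\eta)2\beta/\kappa})=o_p(1)$ directly, with no prefactor needed. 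Combining the deterministic limit with the three vanishing remainders via Slutsky's theorem yields $\eE(u_t^2|\cF_{t-1})\to_p\varsigma_*^2$.

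The main obstacle is the delicate interplay between the clipping-induced blow-up $\pi_t^{-1}\lesssim t^{\eta}$ and the nonparametric estimation rate $t^{-(1-\eta)2\beta/\kappa}$: the whole argument hinges on verifying that the admissible clipping-rate window $\eta<2\beta/(2\beta+\kappa)$ makes the product of these competing powers of $t$ decay, so I would take particular care that the exponent inequality linking the stated range of $\eta$ to the negativity of $\eta-(1-\eta)2\beta/\kappa$ is airtight.
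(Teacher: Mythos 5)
Your proposal is correct and follows essentially the same route as the paper's proof: both start from the decomposition in Lemma~\ref{lem:condition_var}(ii), send the leading terms to $\varsigma_*^2$ via Proposition~\ref{propos:pi} and the Continuous Mapping Theorem, and kill the estimation-error terms by combining the clipping bound $\pi_t^{-1}\le 2t^{\eta}$ with Lemma~\ref{lem:mu_hat_asy}, Lemma~\ref{lem:N_t_lower}, and the exponent inequality $\eta<2\beta/(2\beta+\kappa)\iff \eta<2\beta(1-\eta)/\kappa$. The only cosmetic difference is that the paper absorbs the cross term into the two squared-error terms via Cauchy--Schwarz (obtaining the single bound $\pi_t^{-1}A_1+(1-\pi_t)^{-1}A_0$), whereas you bound the cross term separately; both yield the same $O_p(t^{\eta-2\beta(1-\eta)/\kappa})=o_p(1)$ conclusion.
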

\begin{proof}
    By Proposition~\ref{propos:pi} and the Continuous Mapping Theorem, we have 
    $$
    \frac{\sigma_1^2}{\pi_t}+\frac{\sigma_0^2}{1-\pi_t}+\eE_{(\bsX,\bsS)}\{\tau(\bX,\bS)-\tau_0\}^2\to_{p}\frac{\sigma_1^2}{\pi^*}+\frac{\sigma_0^2}{1-\pi^*}+\eE_{(\bsX,\bsS)}\{\tau(\bX,\bS)-\tau_0\}^2=\varsigma_*^2
    $$
    as $t\to\infty.$ Then, define $\widetilde{R}_{4,t}=\eE(u_t^2|\cF_{t-1})-\pi_t^{-1}\sigma_1^2-(1-\pi_t)^{-1}\sigma_0^2-\eE_{(\bsX,\bsS)}\{\tau(\bX,\bS)-\tau_0\}^2,$ which corresponds to the sum of the last three terms in the decomposition of $\eE(u_t^2|\cF_{t-1})$ as given in Lemma~\ref{lem:condition_var}(ii). It follows from the Cauchy-Schwarz inequality that
    $$
    \begin{aligned}
        \widetilde{R}_{4,t}\le&\frac{1}{\pi_t}\eE_{(\bsX_t,\bsS_t)}\big[\{\mu_1(\bX_t,\bS_t)-\hat{\mu}_{1,t-1}(\bX_t,\bS_t)\}^2|\cF_{t-1}\big]\\
        &+\frac{1}{1-\pi_t}\eE_{(\bsX_t,\bsS_t)}\big[\{\mu_0(\bX_t,\bS_t)-\hat{\mu}_{0,t-1}(\bX_t,\bS_t)\}^2|\cF_{t-1}\big]\\
        =&O_p(t^{\eta}\cdot N_{0,t-1}^{-2\beta/\kappa})+ O_p(t^{\eta}\cdot N_{1,t-1}^{-2\beta/\kappa}) \\
        =&O_p(t^{\eta-2\beta(1-\eta)/\kappa})=o_p(1),
    \end{aligned}
    $$
    where the second line follows from CLIP such that $\pi_t\ge(1/2)\cdot t^{-\eta},1-\pi_t\ge(1/2)\cdot t^{-\eta}$ and Lemma~\ref{lem:mu_hat_asy}, and the third line follows from Lemma~\ref{lem:N_t_lower} and the assumption $\eta<2\beta/(2\beta+\kappa).$ Combining above, it follows that $\eE(u_t^2|\cF_{t-1})\to_p \varsigma_*^2$ as $t\to\infty$, which completes the proof. 
\end{proof}

Now we are ready to show Theorem~\ref{thm:asy_nor}. \\
(i) By Lemma~\ref{lem:mart}, $\{u_t\}_{t\ge1}$ forms a square integrable martingale difference sequence w.r.t. $\{\cF_t\}_{t\ge1}$. 
%By arguments similar to those in the proof of Lemma~\ref{lem:lyapunov} below, replacing the $(2+\theta)$-th moment conditions with the second moment conditions, it can be shown that 
By using Lemma~\ref{lem:lyapunov}(i) below with $\theta=0,$
there exists some constant $M_u>0$ such that $\eE(u_t^2|\cF_{t-1})\le M_ut^{\eta}\log t$ almost surely. Then, since $\eta<1,$ it follows that $\sum_{t=1}^{\infty}t^{-2}\eE(u_t^2|\cF_{t-1})<\infty$ a.s. By using the SLLN for martingale differences in Lemma~\ref{lem:SLLN}, $T^{-1}\sum_{t=1}^{T}u_t\to_{a.s.}0$, which implies that $\hat{\tau}_T\to_{a.s.}\tau_0$ as $T\to\infty.$ \\
(ii) To show $\hat{\tau}_T-\tau_0=T^{-1}\sum_{t=1}^{T}u_t$ is asymptotically normal, we apply the Martingale Central Limit Theorem in Lemma~\ref{lem:mclt}. Let $V_{T,t}=\var(T^{-1/2}u_t|\cF_{t-1})=T^{-1}\eE(u_t^2|\cF_{t-1})$. Following similar arguments as in the proof of Theorem 2 in \textcolor{blue}{Cook et al.} (\textcolor{blue}{2024}), we can show that $\eE(u_t^2|\cF_{t-1})$ is uniformly integrable. Together with the result $\eE(u_t^2|\cF_{t-1})\to_p\varsigma_*^2$ as $t\to\infty$ from Lemma~\ref{lem:con_var_as}, this implies that $\eE(u_t^2|\cF_{t-1})\to\varsigma_*^2$ in $L^1$, i.e., $\eE\big|\eE(u_t^2|\cF_{t-1})-\varsigma_*^2\big|\to0$ as $t\to\infty$. By the Markov inequality and Toeplitz Lemma, for any $\delta>0,$
$$
\eP\left(\Big|\sum_{t=1}^{T}V_{T,t}-\varsigma_*^2\Big|>\delta\right)\le\frac{\sum_{t=1}^{T}\eE\big|\eE(u_t^2|\cF_{t-1})-\varsigma_*^2\big|}{T\delta}=o(1),
$$
which leads to the desired results that  $\sum_{t=1}^{T}V_{T,t}\to_p\varsigma_*^2$ as $T\to\infty.$ 

Therefore, it suffices to verify the conditional Lindeberg condition, namely, that for any $\epsilon>0$, we have $T^{-1}\sum_{t=1}^{T}\eE\{u_t^2I(|u_t|>\epsilon\sqrt{T})|\cF_{t-1}\}\to_p0$ as $T\to\infty.$ To see this, define $b_{T,t}=u_t^2I(|u_t|>\epsilon\sqrt{T})$. According to the proof of Lemma~\ref{lem:mart}, we have $\eE\{\mu_z(\bX_t,\bS_t)-\hat{\mu}_{z,t-1}(\bX_t,\bS_t)\}^2=O(1)$ for each $z\in\{0,1\}.$ It follows from the Markov inequality that, for each $t\in[T],$ $\eP(|u_t|>\epsilon\sqrt{T})\le\eE(u_t^2)/(\epsilon^2T)\to0$ as $T\to\infty,$ since $\pi_t^{-1}\le2 t^\eta$ and $(1-\pi_t)^{-1}\le2 t^\eta$. This further implies that $b_{T,t}=o_p(1)$ for each $t\in[T].$ Since $b_{T,t}\le u_t^2$ and $\eE(u_t^2)<\infty$, it follows from the Dominated Convergence Theorem that $\lim_{T\to\infty}\eE(b_{T,t})=\eE(\lim_{T\to\infty}b_{T,t})=0$. By the Markov inequality and Toeplitz Lemma, for any $\delta>0,$
$$
\eP\left\{T^{-1}\sum_{t=1}^{T}\eE(b_{T,t}|\cF_{t-1})>\delta\right\}\le\frac{\sum_{t=1}^{T}\eE(b_{T,t})}{T\delta }=o(1).
$$
Thus, the conditional Lindeberg condition holds, which further implies that $\sqrt{T}(\hat{\tau}_T-\tau_0)\to_d \cN(0,\varsigma_*^2)$ as $T\to\infty.$ The proof is completed.

\subsection{Proof of Theorem~\ref{thm:CI}}

\begin{lemma}
    \label{lem:lyapunov}
    Suppose that the conditions of Theorem~\ref{thm:CI} hold. Then: (i) for any $\theta\ge0,$ there exists some constant $M_u>0$ such that $\eE(|u_t|^{2+\theta}|\cF_{t-1})\le M_ut^{\eta(1+\theta)}(\log t)^{1+\theta/2}$ a.s.; (ii) for $\theta>2\eta/(1-2\eta)$, it follows that $\sum_{t=1}^{\infty}t^{-1-\theta/2}\eE(|u_t|^{2+\theta}|\cF_{t-1})<\infty$ a.s.
\end{lemma}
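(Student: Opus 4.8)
The plan is to start from the additive decomposition $u_t = R_{1,t}+R_{2,t}+R_{3,t}+R_{4,t}+R_{5,t}$ already derived in the proof of Lemma~\ref{lem:condition_var}, and to control the conditional $(2+\theta)$-th moment of each piece separately, recombining them through the power-mean inequality $|u_t|^{2+\theta}\le 5^{1+\theta}\sum_{k=1}^{5}|R_{k,t}|^{2+\theta}$ (valid since $2+\theta\ge1$). Two structural facts drive the whole argument: the CLIP lower bounds $\pi_t\ge\tfrac12 t^{-\eta}$ and $1-\pi_t\ge\tfrac12 t^{-\eta}$ from \eqref{eq:clip}, so that $\pi_t^{-1}\vee(1-\pi_t)^{-1}\le 2t^{\eta}$, and the fact that, conditional on $\cF_{t-1}$, the assignment $Z_t$ is Bernoulli$(\pi_t)$ independent of $(\bX_t,\bS_t)$ by Condition~\ref{cond:sutva}.

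For the inverse-probability noise terms $R_{1,t},R_{3,t}$, conditioning on $Z_t$ and using unconfoundedness gives $\eE(|R_{1,t}|^{2+\theta}|\cF_{t-1})=\pi_t^{-(1+\theta)}\eE|\varepsilon_t(1)|^{2+\theta}$, and symmetrically for $R_{3,t}$; since $\varepsilon_t(z)=\varepsilon_{Y,t}(z)-\gamma_z\varepsilon_{S,t}(z)$ is sub-Gaussian by Condition~\ref{cond:varep}, its moments are uniformly bounded, so these contribute $O(t^{\eta(1+\theta)})$. The delicate pieces are the augmentation terms $R_{2,t},R_{4,t}$. Factorizing over the conditionally independent $Z_t$ yields $\eE(|R_{2,t}|^{2+\theta}|\cF_{t-1})=\pi_t^{-(2+\theta)}\,\eE(|I(Z_t=1)-\pi_t|^{2+\theta}|\cF_{t-1})\,\eE(|\mu_1-\hat\mu_{1,t-1}|^{2+\theta}|\cF_{t-1})$, and the key observation is that the centered-Bernoulli moment satisfies $\eE(|I(Z_t=1)-\pi_t|^{2+\theta}|\cF_{t-1})=\pi_t(1-\pi_t)\{(1-\pi_t)^{1+\theta}+\pi_t^{1+\theta}\}\le 2\pi_t$. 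This extra factor of $\pi_t$ reduces the exponent from the naive $\pi_t^{-(2+\theta)}$ to $\pi_t^{-(1+\theta)}\le(2t^{\eta})^{1+\theta}$; extracting it is essential, since the weaker bound would produce $t^{\eta(2+\theta)}$ and break the claimed rate.

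It then remains to bound $\eE(|\mu_1-\hat\mu_{1,t-1}|^{2+\theta}|\cF_{t-1})$ almost surely by $O((\log t)^{1+\theta/2})$; this is where the logarithmic factor enters. I would expand $\mu_1-\hat\mu_{1,t-1}$ exactly as in the proof of Lemma~\ref{lem:mu_hat_asy} into the Nadaraya--Watson errors $m_{Y,1}(\bX_t)-\hat m_{Y,1,t-1}(\bX_t)$ and $\hat m_{S,1,t-1}(\bX_t)-m_{S,1}(\bX_t)$, the coefficient error $\hat\gamma_{1,t-1}-\gamma_1$ (bounded a.s.\ by $M_\gamma+|\gamma_1|$ under Theorem~\ref{thm:asy_nor}), and a term $(\hat\gamma_{1,t-1}-\gamma_1)\varepsilon_{S,t}(1)$ whose conditional moment is finite by sub-Gaussianity. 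Because each Nadaraya--Watson estimator is a convex combination of the observed $Y_j$'s (resp.\ $S_j(1)$'s), its sup-norm over the compact $\cX$ is at most $\max_{j\le t-1}|Y_j|$ (resp.\ $\max_{j\le t-1}|S_j(1)|$); since $Y_j,S_j(z)$ are sub-Gaussian with uniformly bounded norm (bounded mean functions on compact $\cX$ plus sub-Gaussian errors), a maximal inequality together with Borel--Cantelli gives $\max_{j\le t-1}|Y_j|\vee\max_{j\le t-1}|S_j(1)|=O(\sqrt{\log t})$ a.s., and raising to the power $2+\theta$ produces precisely the $(\log t)^{1+\theta/2}$ factor (finitely many early terms being absorbed into the constant). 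Finally $R_{5,t}=\tau(\bX_t,\bS_t)-\tau_0$ has a uniformly bounded conditional $(2+\theta)$-th moment by the same sub-Gaussianity, contributing $O(1)$. Assembling the five bounds gives part (i). Part (ii) is then immediate: substituting part (i) into $\sum_t t^{-1-\theta/2}\eE(|u_t|^{2+\theta}|\cF_{t-1})$ yields a series with general term of order $t^{-1-\theta/2+\eta(1+\theta)}(\log t)^{1+\theta/2}$, whose exponent falls strictly below $-1$ exactly when $\eta(1+\theta)<\theta/2$, i.e.\ $\theta>2\eta/(1-2\eta)$, so the logarithmic factor is harmless and the sum converges a.s.

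I expect the main obstacle to be the a.s.\ control of $\eE(|\mu_1-\hat\mu_{1,t-1}|^{2+\theta}|\cF_{t-1})$ at the sharp $(\log t)^{1+\theta/2}$ rate: one must \emph{not} route through sup-norm consistency of the kernel estimators (which needs extra bandwidth conditions and does not by itself deliver moment control of the correct order), but instead exploit only the convex-combination structure together with the sub-Gaussian maximal bound, while simultaneously retaining the Bernoulli-moment reduction in $R_{2,t},R_{4,t}$ so that the final exponent is $\eta(1+\theta)$ rather than $\eta(2+\theta)$ --- the latter being what makes the threshold in part (ii) come out to $2\eta/(1-2\eta)$.
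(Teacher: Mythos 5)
Your proposal is correct and follows essentially the same route as the paper's proof: decompose $u_t$ into IPW-noise, estimation-error, and CATE-deviation pieces, use the CLIP bound $\pi_t^{-1}\vee(1-\pi_t)^{-1}\le 2t^{\eta}$, reduce the inverse-propensity exponent to $\eta(1+\theta)$ by extracting one power of $\pi_t$ from the indicator moment, obtain the $(\log t)^{1+\theta/2}$ factor from a sub-Gaussian maximal bound on the Nadaraya--Watson noise, and sum the resulting bound for part (ii) under $\theta>2\eta/(1-2\eta)$. The only differences are cosmetic: you work with the centered five-term decomposition of Lemma~\ref{lem:condition_var} and the Bernoulli central moment $\eE|I(Z_t=1)-\pi_t|^{2+\theta}\le 2\pi_t$, whereas the paper uses an uncentered three-term split with $I(Z_t=z)^{2+\theta}=I(Z_t=z)$, and you bound the kernel-estimator error by the convex-combination/data-maximum argument where the paper splits it into a H\"older bias term plus a noise average --- both yielding the same rates.
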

\begin{proof}
    (i) {Since sub-Gaussian random variables have finite moments of all orders, by Condition~\ref{cond:varep}, there exists some constant $M_{\varepsilon}>0$ such that 
    %Since $\varepsilon_{Y,t}(z)|\varepsilon_{S,t}(z)\sim\gamma_z\varepsilon_{S,t}(z)+\varepsilon_t(z)$ for $z\in\{0,1\}$, it can be shown that the $(2+\theta)$-th moment conditions %in Condition~\ref{cond:moment} are also satisfied for $\varepsilon_t(z).$ Without loss of generality, we suppose that 
    $\eE\big\{|\varepsilon_t(z)|^{2+\theta}|\bX_t,S_t(z)\big\}<M_{\varepsilon}$ and $\eE\big\{|\varepsilon_{S,t}(z)|^{2+\theta}|\bX_t\big\}<M_{\varepsilon}$ for all $t\ge1$ and each $z\in\{0,1\}$.} By the definition of $\phi_t$, we have
    $$
    \begin{aligned}
        \phi_t:=&\left[\frac{I(Z_t=1)\{Y_t-\mu_1(\bX_t,\bS_t)\}}{\pi_t}-\frac{I(Z_t=0)\{Y_t-\mu_0(\bX_t,\bS_t)\}}{1-\pi_t}\right]\\
        &+\left[\frac{I(Z_t=1)\{\mu_1(\bX_t,\bS_t)-\hat{\mu}_{1,t-1}(\bX_t,\bS_t)\}}{\pi_t}-\frac{I(Z_t=0)\{\mu_0(\bX_t,\bS_t)-\hat{\mu}_{0,t-1}(\bX_t,\bS_t)\}}{1-\pi_t}\right]\\
        &+\left[\hat{\mu}_{1,t-1}(\bX_t,\bS_t)-\hat{\mu}_{0,t-1}(\bX_t,\bS_t)\right]\\
        =:&\widetilde{R}_{5,t}+\widetilde{R}_{6,t}+\widetilde{R}_{7,t}.
    \end{aligned}
    $$
    Noticing that $|I(Z_t=z)\{Y_t-\mu_z(\bX_t,\bS_t)\}|^{2+\theta}=I(Z_t=z)|\varepsilon_t(z)|^{2+\theta}$, it follows from Condition~\ref{cond:sutva}(iii) and the fact $\pi_t=\eP(Z_t=1|\cF_{t-1})$ that
    $$
    \eE(|\widetilde{R}_{5,t}|^{2+\theta}|\cF_{t-1})\le\frac{\eE\big\{|\varepsilon_t(1)|^{2+\theta}|\bX_t,S_t(1)\big\}}{\pi_t^{1+\theta}}+\frac{\eE\big\{|\varepsilon_t(0)|^{2+\theta}|\bX_t,S_t(0)\big\}}{(1-\pi_t)^{1+\theta}}
        \le2^{2+\theta}M_{\varepsilon}t^{\eta(1+\theta)}
    $$
    almost surely. For the second term $\widetilde{R}_{6,t}$, considering that the Nadaraya-Watson estimator in \eqref{eq:m_S_hat}, it follows that, for each $z\in\{0,1\}$ and any $\bx\in\cX,$
    $$
    \hat{m}_{S,z,t-1}(\bx)-m_{S,z}(\bx)=\frac{\sum_{j=1}^{t-1}K_h(\bx-\bX_j)\{m_{S,z}(\bX_j)-m_{S,z}(\bx)\}}{\sum_{j=1}^{t-1}K_h(\bx-\bX_j)}+\frac{\sum_{j=1}^{t-1}K_h(\bx-\bX_j)\varepsilon_{S,j}(z)}{\sum_{j=1}^{t-1}K_h(\bx-\bX_j)}.
    $$
    If $\sum_{j=1}^{t}K_h(\bx-\bX_j)=0$, then $|\hat{m}_{S,z,t-1}(\bx)-m_{S,z}(\bx)|\le\sup_{\bx\in\cX}|m_{S,z}(\bx)|<\infty$ by Conditions~\ref{cond:mean_fun} and \ref{cond:X}. If $\sum_{j=1}^{t}K_h(\bx-\bX_j)>0,$ then by Conditions~\ref{cond:mean_fun} and \ref{cond:kernel}(i), we have
    $$
    \left|\frac{\sum_{j=1}^{t}K_h(\bx-\bX_j)\{m_{S,z}(\bX_j)-m_{S,z}(\bx)\}}{\sum_{j=1}^{t}K_h(\bx-\bX_j)}\right|
    \le\sqrt{d}Lh_{S,z,t-1}^{\beta}.
    $$
    By the sub-Gaussian tail conditions in Condition~\ref{cond:varep}, there exists some constant $M_{\varepsilon}'>0$ such that $\max_{1\le j\le t-1}|\varepsilon_{S,j}(z)|^{2+\theta}\le M_{\varepsilon}'(\log t)^{1+\theta/2}$ almost surely. Combining above, 
    $$
    \begin{aligned}
        &\eE_{\bsX_t}\big\{|\hat{m}_{S,z,t-1}(\bX_t)-m_{S,z}(\bX_t)|^{2+\theta}|\cF_{t-1}\big\}\\
        \le&\sup_{\bx\in\cX}|m_{S,z}(\bx)|^{2+\theta}+2^{1+\theta}(\sqrt{d}Lh_{S,z,t-1}^{\beta})^{2+\theta}+2^{1+\theta}M_{\varepsilon}'(\log t)^{1+\theta/2}
    \end{aligned}
    $$
    almost surely. Similar arguments apply to $\eE_{\bsX_t}\big\{|\hat{m}_{Y,z,t-1}(\bX_t)-m_{Y,z}(\bX_t)|^{2+\theta}|\cF_{t-1}\big\}$. Thus, using the decomposition in \eqref{eq:hat_mu_deco_asy} together with the almost surely boundedness of $\hat{\gamma}_{z,t-1}$ as assumed in Theorem~\ref{thm:asy_nor}, we obtain that 
    %there exists some constant $M_{\mu}'>0$ such that $\eE\big\{|\hat{\mu}_{z,t-1}(\bX_t,\bS_t)-\mu_z(\bX_t,\bS_t)|^{2+\theta}|\cF_{t-1}\big\}\le M_{\mu}'(\log t)^{1+\theta/2}$ almost surely. Thus, 
    there exists some constant $M_{\mu}'>0$ such that $\eE(|\widetilde{R}_{6,t}|^{2+\theta}|\cF_{t-1})\le M_{\mu}'t^{\eta(1+\theta)}{(\log t)^{1+\theta/2}}$ almost surely. For the third term $\widetilde{R}_{7,t}$, using the basic inequality that $(a+b)^p\le 2^{p-1}(a^p+b^p)$ for $a,b\ge0$ and $p\ge1$, we have that $\eE|\mu_z(\bX_t,\bS_t)|^{2+\theta}\le 2^{1+\theta}\eE|m_{Y,z}(\bX_t)|^{2+\theta}+2^{1+\theta}|\gamma_z|^{2+\theta}\eE\big\{|\varepsilon_{S,t}(z)|^{2+\theta}|\bX_t\big\}\le2^{1+\theta}\sup_{\bx\in\cX}|m_{Y,z}(\bx)|^{2+\theta}+2^{1+\theta}|\gamma_z|^{2+\theta}M_{\varepsilon}$ is uniformly bounded, which further demonstrates that $\eE(|\widetilde{R}_{7,t}|^{2+\theta}|\cF_{t-1})\le 2^{1+\theta}\eE\big\{|\hat{\mu}_{z,t-1}(\bX_t,\bS_t)-\mu_z(\bX_t,\bS_t)|^{2+\theta}|\cF_{t-1}\big\}+2^{1+\theta}\eE|\mu_z(\bX_t,\bS_t)|^{2+\theta}\le M_{\mu}''(\log t)^{1+\theta/2}$ almost surely for some $M_{\mu}''>0$. Combining above, there exists some constant $M_u>0$ such that $\eE(|u_t|^{2+\theta}|\cF_{t-1})\le M_ut^{\eta(1+\theta)}(\log t)^{1+\theta/2}$ almost surely.
    
    \noindent (ii) Using the result in part (i), there exists some constant $M_u'>0$ such that 
    $$
    \begin{aligned}
        \sum_{t=1}^{\infty}\frac{\eE(|u_t|^{2+\theta}|\cF_{t-1})}{t^{1+\theta/2}}\le&\sum_{t=1}^{\infty}\frac{2^{1+\theta}\tau_0^{2+\theta}+2^{1+\theta}\eE(|\phi_t|^{2+\theta}|\cF_{t-1})}{t^{1+\theta/2}}\\
        \le&M_u'\sum_{t=1}^{\infty}t^{\eta(1+\theta)-1-\theta/2}(\log t)^{1+\theta/2}<\infty
    \end{aligned}
    $$
    almost surely, since $\eta<\theta/(2+2\theta)$. The proof is completed. 
\end{proof}

Now we are ready to show Theorem~\ref{thm:CI}. Let $v_t:=u_t^2-\eE(u_t^2|\cF_{t-1})\in\cF_t$. Noticing that $\eE(v_t|\cF_{t-1})=0$ and $\eE|v_t|\le 2\eE(u_t^2)<\infty$ by Lemma~\ref{lem:mart}, we have $\{v_t\}_{t\ge1}$ forms a martingale difference sequence w.r.t. $\{\cF_t\}_{t\ge1}$. By the definition of $\hat{\varsigma}_T^2$, it follows that
$$
\hat{\varsigma}_T^2=\frac{1}{T}\sum_{t=1}^{T}(\phi_t-\hat{\tau}_T)^2=\frac{1}{T}\sum_{t=1}^{T}(\phi_t-\tau_0+\tau_0-\hat{\tau}_T)^2=\frac{1}{T}\sum_{t=1}^{T}u_t^2-(\hat{\tau}_T-\tau_0)^2. 
$$
By Theorem~\ref{thm:asy_nor}, we have $\hat{\tau}_T\to_{a.s.}\tau_0$, and $\sqrt{T}(\hat{\tau}_T-\tau_0)\to_d \cN(0,\varsigma_*^2)$. Then, it suffices to show that $T^{-1}\sum_{t=1}^{T}u_t^2\to_p \varsigma_*^2$ as $T\to\infty$. By the proof of Theorem~\ref{thm:asy_nor} above, it is seen that $T^{-1}\sum_{t=1}^{T}\eE(u_t^2|\cF_{t-1})\to_p\varsigma_*^2$. 

Therefore, it suffices to show that $T^{-1}\sum_{t=1}^{T}v_t^2\to_p 0$. 
To see this, we apply the SLLN for martingale differences in Lemma~\ref{lem:SLLN} to show a stronger result, that $T^{-1}\sum_{t=1}^{T}v_t^2\to_{a.s.}0$. By Jensen's inequality, $\eE(|v_t|^{1+\theta/2}|\cF_{t-1})\le\eE(|u_t|^{2+\theta}|\cF_{t-1})+\eE\big\{|\eE(u_t^2|\cF_{t-1})|^{1+\theta/2}|\cF_{t-1}\big\}\le 2\eE(|u_t|^{2+\theta}|\cF_{t-1})$ a.s, where $\theta$ is specified in Lemma~\ref{lem:lyapunov}(ii). Then, it follows from Lemma~\ref{lem:lyapunov} that
$$
\sum_{t=1}^{\infty}\frac{\eE(|v_t|^{1+\theta/2}|\cF_{t-1})}{t^{1+\theta/2}}\le\sum_{t=1}^{\infty}\frac{2\eE(|u_t|^{2+\theta}|\cF_{t-1})}{t^{1+\theta/2}}<\infty
$$
almost surely. The proof is completed.

\subsection{Proof of Theorem~\ref{thm:regret}}
\label{supsubsec:proof_regret}
This section derives a finite-sample upper bound for the expected regret $\eE\cR_T$.

\subsubsection{Deviation inequalities}
\label{supsubsubsec:dev}
In this section, we drive some deviation inequalities for the non-asymptotic analysis. Before that, we provide some useful technical lemmas. Lemma~\ref{lem:bounded} is a special case of Lemma 1 in \textcolor{blue}{Qian and Yang} (\textcolor{blue}{2016}), obtained by setting $c=0$ in their formulation. In Lemma~\ref{lem:count}, parts (i) and (ii) correspond to Lemma 2 of \textcolor{blue}{Qian and Yang} (\textcolor{blue}{2016}) and Lemma B.5 of \textcolor{blue}{Neopane et al.} (\textcolor{blue}{2025a}), respectively. Thus, their proofs are omitted.
\begin{lemma}
    \label{lem:bounded}
    Suppose $\{\cF_j\}_{j\ge 1}$ is an increasing filtration of $\sigma$-fields. For each $j\ge 1$, let $\varepsilon_j$ be an $\cF_{j+1}$-measurable,  random variable satisfying $\eE(\varepsilon_j|\cF_j)=0,$ and let $Z_j$ be an $\cF_j$-measurable random variable with $|Z_j|\le M$ almost surely. If there exists constant $v>0$ such that for all $j\ge 1$ and $\lambda\in\eR,\eE\{\exp(\lambda\varepsilon_j)|\cF_j\}\le\exp(v^2\lambda^2/2)$, then for every $\epsilon>0$,
    $$
    \eP\left(\sum_{j=1}^{n}Z_j\varepsilon_j\ge n\epsilon\right)\le\exp\left(-\frac{n\epsilon^2}{2M^2v^2}\right).
    $$
\end{lemma}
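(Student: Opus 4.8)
The plan is to apply the standard Chernoff (exponential Markov) method together with an iterated-conditioning argument that exploits the predictability of $Z_j$ relative to $\varepsilon_j$. First I would fix $\lambda>0$ and write
\[
\eP\left(\sum_{j=1}^{n}Z_j\varepsilon_j\ge n\epsilon\right)\le e^{-\lambda n\epsilon}\,\eE\left[\exp\left(\lambda\sum_{j=1}^{n}Z_j\varepsilon_j\right)\right],
\]
which reduces the problem to bounding the moment generating function of the sum $S_n:=\sum_{j=1}^{n}Z_j\varepsilon_j$.

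The key step is to control the conditional MGF of a single increment. Since $Z_j$ is $\cF_j$-measurable, the quantity $\lambda Z_j$ is itself $\cF_j$-measurable, so it may play the role of the free parameter in the sub-Gaussian hypothesis: conditioning on $\cF_j$ and applying the assumed bound with effective parameter $\lambda Z_j$ gives
\[
\eE\left[\exp(\lambda Z_j\varepsilon_j)\mid\cF_j\right]\le\exp\!\left(\frac{v^2\lambda^2 Z_j^2}{2}\right)\le\exp\!\left(\frac{v^2\lambda^2 M^2}{2}\right),
\]
where the last inequality uses $|Z_j|\le M$ almost surely. I would then peel off the terms of $S_n$ one at a time via the tower property. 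Because $S_{n-1}$ is $\cF_n$-measurable (as $\varepsilon_{n-1}\in\cF_n$ and $Z_{n-1}\in\cF_{n-1}\subset\cF_n$), we may factor $\eE[\exp(\lambda S_n)]=\eE\big[\exp(\lambda S_{n-1})\,\eE[\exp(\lambda Z_n\varepsilon_n)\mid\cF_n]\big]$; substituting the single-increment bound yields $\eE[\exp(\lambda S_n)]\le\exp(v^2\lambda^2M^2/2)\,\eE[\exp(\lambda S_{n-1})]$, and induction on $n$ gives $\eE[\exp(\lambda S_n)]\le\exp(nv^2\lambda^2M^2/2)$.

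Combining the two displays, $\eP(S_n\ge n\epsilon)\le\exp(-\lambda n\epsilon+nv^2\lambda^2M^2/2)$ for every $\lambda>0$. The final step is to optimize the exponent over $\lambda$: the quadratic $-\lambda n\epsilon+nv^2\lambda^2M^2/2$ is minimized at $\lambda^\star=\epsilon/(v^2M^2)$, and substituting back produces exactly $-n\epsilon^2/(2M^2v^2)$, which is the claimed bound.

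I do not anticipate a genuine obstacle, since this is a textbook supermartingale-type estimate; the only point requiring care is the measurability bookkeeping in the key step—one must use that $Z_j\in\cF_j$ while $\varepsilon_j\in\cF_{j+1}$, so that the sub-Gaussian hypothesis (stated for all $\lambda\in\eR$) applies with the $\cF_j$-measurable multiplier $\lambda Z_j$ pulled inside the conditional expectation. An essentially equivalent route is to verify directly that $\exp(\lambda S_k-kv^2\lambda^2M^2/2)$ is a supermartingale in $k$ and then invoke the Markov argument, but the iterated-conditioning version above is the cleanest.
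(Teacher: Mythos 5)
Your proof is correct, and the argument is the standard one: Chernoff bound, peel off increments by the tower property using that $S_{n-1}$ is $\cF_n$-measurable, bound each conditional MGF by substituting the $\cF_j$-measurable multiplier $\lambda Z_j$ into the sub-Gaussian hypothesis, then optimize the quadratic exponent at $\lambda^\star=\epsilon/(v^2M^2)$. Note that the paper itself does not prove this lemma at all: it is stated as a special case ($c=0$) of Lemma 1 in Qian and Yang (2016), with the proof omitted, so your write-up supplies a self-contained argument that the paper only cites. The one point you correctly flag — replacing the deterministic $\lambda$ in the hypothesis by the random $\cF_j$-measurable quantity $\lambda Z_j$ inside the conditional expectation — is the only step needing an extra line in a fully rigorous version (prove it first for simple $\cF_j$-measurable $Z_j$, then pass to the limit by dominated convergence, using $|Z_j|\le M$ and integrability of $\exp(\lambda M|\varepsilon_j|)$); with that observation your proof is complete.
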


\begin{lemma}
    \label{lem:count}
    Suppose $\{\cF_j\}_{j\ge 1}$ is an increasing filtration of $\sigma$-fields. For each $j\ge 1,$ let $W_j$ be an $\cF_j$-measurable Bernoulli random variable with $\eP(W_j=1|\cF_{j-1})=p_j$ for some $0\le p_j\le 1.$ Then, we have \\
    (i) $\eP\big(\sum_{j=1}^{t}W_j\le\sum_{j=1}^{t}p_j/2\big)\le\exp\big(-3\sum_{j=1}^{t}p_j/28\big);$\\
    (ii) with probability at least $1-\delta,\big|\sum_{j=1}^{t}W_j-\sum_{j=1}^{t}p_j\big|\le C_{\delta,t,0}\sqrt{t}$ holds for all $t\ge1,$ where $C_{\delta,t,0}=0.85\sqrt{\log\log t+0.72\log(5.2/\delta)}$.
\end{lemma}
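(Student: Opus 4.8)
The plan is to handle the two parts separately, both through the exponential (super)martingale method: part (i) is a fixed-time multiplicative Chernoff bound for the lower tail of a sum of conditionally Bernoulli variables, whereas part (ii) is a time-uniform (anytime-valid) deviation bound that I would obtain from a line-crossing argument rather than a union bound.

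For part (i), I would write $S_t=\sum_{j=1}^t W_j$ and $M_t=\sum_{j=1}^t p_j$. Since $W_j$ is conditionally Bernoulli$(p_j)$, for any $\lambda>0$ the conditional moment generating function of $-W_j$ satisfies
\[
\eE\{e^{-\lambda W_j}\mid\cF_{j-1}\}=1-p_j(1-e^{-\lambda})\le\exp\{-p_j(1-e^{-\lambda})\},
\]
using $1-x\le e^{-x}$. Hence the process $Z_t:=\exp\{-\lambda S_t+(1-e^{-\lambda})M_t\}$ is a nonnegative supermartingale with $\eE Z_t\le Z_0=1$, because $\eE\{Z_t\mid\cF_{t-1}\}=Z_{t-1}\,e^{(1-e^{-\lambda})p_t}\,\eE\{e^{-\lambda W_t}\mid\cF_{t-1}\}\le Z_{t-1}$. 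Rewriting the event as $\{S_t\le M_t/2\}=\{Z_t\ge\exp[(1-e^{-\lambda}-\lambda/2)M_t]\}$ and applying Markov's inequality yields
\[
\eP(S_t\le M_t/2)\le\exp\{-(1-e^{-\lambda}-\tfrac{\lambda}{2})M_t\}.
\]
Maximizing the exponent over $\lambda>0$ gives $\lambda=\log 2$ and exponent $(1-\log 2)/2\approx 0.153$; since $(1-\log 2)/2\ge 3/28$, the weaker but cleaner bound $\exp(-3M_t/28)$ follows. The only care needed here is confirming that numerical comparison.

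For part (ii), I would set $D_t:=S_t-M_t=\sum_{j=1}^t(W_j-p_j)$, a martingale with respect to $\{\cF_t\}$ whose increments are bounded, $|W_j-p_j|\le 1$, and whose conditional variances obey $\Var(W_j\mid\cF_{j-1})=p_j(1-p_j)\le 1/4$, so its intrinsic time satisfies $\langle D\rangle_t=\sum_{j=1}^t p_j(1-p_j)\le t/4$. I would then invoke a time-uniform Bernstein/Hoeffding boundary of law-of-the-iterated-logarithm type: building a single mixture (or geometrically stitched line-crossing) nonnegative supermartingale and applying Ville's inequality produces a curved boundary $\psi(\cdot)$ with $\eP(\exists t\ge 1:|D_t|\ge\psi(\langle D\rangle_t))\le\delta$ and $\psi(v)\asymp\sqrt{v(\log\log v+\log(1/\delta))}$. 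Substituting $\langle D\rangle_t\le t/4$ and collecting the constants produced by optimizing the boundary yields $|D_t|\le 0.85\sqrt{\log\log t+0.72\log(5.2/\delta)}\,\sqrt{t}$ uniformly in $t$, which is exactly the assertion and matches Lemma B.5 of Neopane et al.\ (2025a), from which the explicit constants are inherited.

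The main obstacle will be part (ii): the time-uniform guarantee forbids a naive union bound over $t$, which would cost a $\log t$ factor and destroy the $\sqrt{\log\log t}$ rate, so the boundary must come from one nonnegative supermartingale whose crossing probability is controlled by Ville's inequality. The explicit constants $0.85$, $0.72$, and $5.2$ emerge only after carefully choosing the sub-Bernoulli mixing distribution and the stitching of linear boundaries, which is the delicate part; by contrast, part (i) is routine once the exponential supermartingale is in place.
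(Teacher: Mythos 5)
Your proposal is correct, and it is worth noting that the paper itself does not prove this lemma at all: it cites Lemma 2 of Qian and Yang (2016) for part (i) and Lemma B.5 of Neopane et al.\ (2025a) for part (ii) and omits the proofs. For part (i) you therefore do strictly more than the paper: your exponential-supermartingale/Chernoff argument is complete and valid, and after optimizing at $\lambda=\log 2$ it yields the exponent $(1-\log 2)/2\approx 0.153$, which dominates the stated $3/28\approx 0.107$; the weaker constant $3/28$ in the cited source arises naturally from a Bernstein--Freedman martingale inequality with variance proxy $\sum_{j\le t}p_j$ and deviation $\tfrac12\sum_{j\le t}p_j$, i.e.\ $\exp\{-(\,M_t/2)^2/(2M_t+M_t/3)\}=\exp(-3M_t/28)$, so your route is both more elementary and sharper. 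For part (ii) your sketch identifies the right mechanism --- a single stitched/mixture nonnegative supermartingale plus Ville's inequality, with intrinsic time $\langle D\rangle_t=\sum_{j\le t}p_j(1-p_j)\le t/4$, which is exactly how the constants arise ($0.85=1.7/2$ and $0.72$, $5.2$ are the defaults of the Howard et al.\ stitching boundary evaluated at $v\le t/4$) --- but you do not derive the constants yourself and instead inherit them from the same reference the paper cites; that is on par with, not beyond, the paper's treatment.

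One refinement you should make explicit in part (i): in the paper's downstream use (Lemma~\ref{lem:N_t_lower}) the $p_j$'s are the predictable allocations $\pi_j\in\cF_{j-1}$, hence random, so $M_t=\sum_{j=1}^{t}p_j$ is a random threshold and your Markov step $\{S_t\le M_t/2\}\subseteq\{Z_t\ge \exp[(1-e^{-\lambda}-\lambda/2)M_t]\}$ cannot be closed by $\eP(Z_t\ge a)\le 1/a$ with random $a$. Your supermartingale property still holds for predictable $p_j$, so the fix is standard: intersect with $\{M_t\ge m\}$ for a deterministic $m$ and use monotonicity of the bound in $M_t$, which suffices for the paper's application since clipping gives the deterministic lower bound $\pi_j\ge (1/2)j^{-\eta}$ and hence $M_t=\Omega(t^{1-\eta})$ almost surely. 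With that caveat addressed, your argument fully substantiates a statement the paper leaves to external references.
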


\begin{lemma}
    \label{lem:sub_gau_exp}
    (Lemma 2.8.6 of \textcolor{blue}{Vershynin} (\textcolor{blue}{2025})) If $X$ and $Y$ are sub-Gaussian, then $XY$ is sub-exponential, and $\|XY\|_{\psi_1}=\|X\|_{\psi_2}\|Y\|_{\psi_2}.$
\end{lemma}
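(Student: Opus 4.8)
The plan is to reduce to a normalized case via homogeneity of the Orlicz norms and then combine a pointwise Young-type inequality with Cauchy--Schwarz. I would first recall the characterizations $\|W\|_{\psi_2}=\inf\{t>0:\eE\exp(W^2/t^2)\le 2\}$ and $\|V\|_{\psi_1}=\inf\{t>0:\eE\exp(|V|/t)\le 2\}$, both of which are positively homogeneous. Scaling $X$ and $Y$, I may therefore assume $\|X\|_{\psi_2}=\|Y\|_{\psi_2}=1$, so that it suffices to prove $\|XY\|_{\psi_1}\le 1$, i.e. $\eE\exp(|XY|)\le 2$; undoing the normalization then delivers the stated bound $\|XY\|_{\psi_1}\le\|X\|_{\psi_2}\|Y\|_{\psi_2}$.

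The key step I would carry out is a pointwise bound followed by a split of the expectation. Using $ab\le (a^2+b^2)/2$ with $a=|X|$ and $b=|Y|$, one has $|XY|\le (X^2+Y^2)/2$ almost surely, hence
\begin{equation*}
\exp(|XY|)\le \exp\!\Big(\tfrac{X^2}{2}\Big)\exp\!\Big(\tfrac{Y^2}{2}\Big).
\end{equation*}
Taking expectations and applying Cauchy--Schwarz yields
\begin{equation*}
\eE\exp(|XY|)\le \sqrt{\eE\exp(X^2)}\,\sqrt{\eE\exp(Y^2)}\le \sqrt{2}\cdot\sqrt{2}=2,
\end{equation*}
where the final inequality uses the normalization and the defining property $\eE\exp(X^2)\le 2$, $\eE\exp(Y^2)\le 2$ of the unit $\psi_2$-ball. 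This establishes $\|XY\|_{\psi_1}\le 1$ and completes the argument after rescaling.

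I expect no conceptual obstacle, since the statement is a standard textbook fact quoted from Vershynin and the work is essentially bookkeeping. The two points requiring care are: fixing a single convention for the Orlicz norms (the ``$\le 2$'' threshold above) and noting that $t\mapsto\eE\exp(|XY|/t)$ is continuous and decreasing, so the infimum defining $\|XY\|_{\psi_1}$ behaves as expected; and observing that the genuinely sharp universal statement is the inequality $\|XY\|_{\psi_1}\le\|X\|_{\psi_2}\|Y\|_{\psi_2}$, so the equality sign in the lemma is to be read in that sense (up to the universal constants implicit in the equivalent definitions of the norms). In any case the inequality is exactly what the later proofs require when bounding cross-product terms such as $\varepsilon_{S,j}(z)\,\varepsilon_j(z)$ by the product of the corresponding sub-Gaussian norms.
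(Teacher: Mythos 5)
Your proof is correct. Note that the paper itself gives no argument for this statement: it is quoted verbatim as Lemma 2.8.6 of Vershynin (2025), so there is no internal proof to compare against. Your reconstruction (normalize by homogeneity, bound $|XY|\le (X^2+Y^2)/2$ pointwise, then close with Cauchy--Schwarz) is essentially the textbook proof; Vershynin's version replaces your Cauchy--Schwarz step with a second application of Young's inequality, $\exp(X^2/2)\exp(Y^2/2)\le \tfrac12\{\exp(X^2)+\exp(Y^2)\}$, which yields the same bound $\eE\exp(|XY|)\le 2$ — the difference is immaterial. You are also right to flag the equality sign: the correct universal statement is the one-sided bound $\|XY\|_{\psi_1}\le\|X\|_{\psi_2}\|Y\|_{\psi_2}$ (the reverse direction is false in general, e.g.\ for $Y$ independent of $X$), and this inequality form is exactly how the paper invokes the lemma later, as in $\|\varepsilon_{S,j}\varepsilon_j\|_{\psi_1}\le\|\varepsilon_{S,j}\|_{\psi_2}\|\varepsilon_j\|_{\psi_2}$ and $\|\varepsilon_j^2\|_{\psi_1}\le\|\varepsilon_j\|_{\psi_2}^2$ in the proof of its deviation inequalities, so the ``$=$'' in the displayed lemma should be read as a typo for ``$\le$''.
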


\begin{lemma}
    \label{lem:exp_ber}
    (Proposition 4.2 of \textcolor{blue}{Zhang and Chen} (\textcolor{blue}{2020})) Let $\{\varepsilon_j\}_{j\ge1}$ be independent, mean zero, sub-exponential random variables. If there exists constant $K>0$ such that $\|\varepsilon_j\|_{\psi_1}\le K$ for all $j\ge1$. Then for every $\epsilon>0,$
    $$
    \eP\left(\sum_{j=1}^{n}\varepsilon_j\ge n\epsilon\right)\le\exp\left\{-\left(\frac{n\epsilon^2}{8K^2}\wedge\frac{n\epsilon}{4K}\right)\right\}.
    $$
\end{lemma}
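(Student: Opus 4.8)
The plan is to establish this one-sided Bernstein-type tail bound for a sum of independent, mean-zero sub-exponential summands by the classical Chernoff (exponential-Markov) argument, with the two terms in the exponent emerging from a two-regime optimization over the free parameter. First I would fix $\lambda>0$ and write, using Markov's inequality and independence,
$$
\eP\Big(\sum_{j=1}^{n}\varepsilon_j\ge n\epsilon\Big)\le e^{-\lambda n\epsilon}\prod_{j=1}^{n}\eE\big[e^{\lambda\varepsilon_j}\big],
$$
so that everything reduces to controlling the individual moment generating functions.

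The key analytic input is a uniform sub-exponential MGF bound: for a mean-zero $\varepsilon_j$ with $\|\varepsilon_j\|_{\psi_1}\le K$ there exist absolute constants $c_0,C_0>0$ such that $\eE[e^{\lambda\varepsilon_j}]\le e^{C_0\lambda^2K^2}$ whenever $|\lambda|\le c_0/K$. I would derive this from the definition $\|\varepsilon\|_{\psi_1}=\inf\{t>0:\eE e^{|\varepsilon|/t}\le2\}$, which yields the moment growth $\eE|\varepsilon_j|^p\le p!\,K^p$ up to constants; expanding $e^{\lambda\varepsilon_j}=1+\lambda\varepsilon_j+\sum_{p\ge2}\lambda^p\varepsilon_j^p/p!$, taking expectations (the linear term vanishes by the mean-zero assumption), and summing the resulting geometric series for $|\lambda|K<1$ gives the quadratic-in-$\lambda$ bound on the admissible range. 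Substituting into the product bound gives, for $0<\lambda\le c_0/K$,
$$
\eP\Big(\sum_{j=1}^{n}\varepsilon_j\ge n\epsilon\Big)\le\exp\big(-\lambda n\epsilon+C_0 n\lambda^2K^2\big).
$$

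It remains to optimize the exponent over the admissible range of $\lambda$. The unconstrained minimizer is $\lambda^\ast=\epsilon/(2C_0K^2)$. When $\epsilon$ is small, $\lambda^\ast$ lies in $(0,c_0/K]$ and the optimal value yields a Gaussian-type tail $\exp(-n\epsilon^2/(8K^2))$. When $\epsilon$ is large, $\lambda^\ast$ exceeds the admissible range, so I would instead take $\lambda$ at the boundary $c_0/K$, where the linear term $-\lambda n\epsilon$ dominates and produces the exponential tail $\exp(-n\epsilon/(4K))$. Taking the weaker of the two bounds, which is valid over all $\epsilon>0$, gives $\exp\{-(n\epsilon^2/(8K^2)\wedge n\epsilon/(4K))\}$; the two expressions cross at $\epsilon=2K$, below which the quadratic term governs and above which the linear term governs.

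The main obstacle is purely one of constants: the exponent factors $8K^2$ and $4K$ are not free, so one must carry the precise MGF constant $C_0$ and the admissible-$\lambda$ threshold $c_0$ through the Taylor-series estimate and ensure the boundary case matches the interior case. Any looseness in the moment bound $\eE|\varepsilon_j|^p\le p!\,K^p$ or in the geometric summation propagates directly into these constants, so calibrating $c_0,C_0$ to the stated values is the delicate step, while everything else is routine. In the paper this lemma is simply imported as Proposition 4.2 of Zhang and Chen (2020), so one may alternatively cite it verbatim rather than reproduce the calibration.
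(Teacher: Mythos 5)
The paper offers no proof of this lemma at all: it is imported verbatim as Proposition 4.2 of Zhang and Chen (2020), so your closing remark---that one may simply cite it---is exactly what the paper does, and in that sense your fallback coincides with the paper's approach.

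Your Chernoff reconstruction is the right standard argument in outline (Markov plus independence, a quadratic MGF bound on a restricted $\lambda$-range, then two-regime optimization), and your accounting of the regimes and the crossover at $\epsilon=2K$ is consistent with the target constants. The genuine soft spot is the one you flag but do not resolve: the exact constants $8K^2$ and $4K$ do not survive a moment bound that holds only ``up to constants.'' Concretely, from the Orlicz definition $\|\varepsilon\|_{\psi_1}=\inf\{t>0:\eE e^{|\varepsilon|/t}\le 2\}$ that you state, one gets $\eE|\varepsilon_j|^p\le 2\,p!\,K^p$ (note the factor $2$), and the geometric-series step then yields $\eE e^{\lambda\varepsilon_j}\le 1+\tfrac{2\lambda^2K^2}{1-|\lambda|K}\le e^{4\lambda^2K^2}$ for $|\lambda|\le 1/(2K)$, i.e.\ $C_0=4$, $c_0=1/2$. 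Optimizing with $C_0=4$ gives the Gaussian-regime exponent $n\epsilon^2/(16K^2)$, a factor of $2$ weaker than claimed. To land exactly $n\epsilon^2/(8K^2)\wedge n\epsilon/(4K)$ you need $C_0=2$ with $c_0=1/2$, which requires the clean Bernstein moment bound $\eE|\varepsilon_j|^p\le p!\,K^p$ (no factor $2$)---that is, the normalization Zhang and Chen effectively build into their statement, not the one implied by your stated definition of the norm. So as a self-contained proof your proposal delivers the right shape of the bound but, under your own definitions, not the stated constants; either carry out the calibration against Zhang and Chen's actual definition, or do what the paper does and cite the result.
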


We first derive the deviation inequality for the Nadaraya-Watson estimator. Suppose that $Y_j=m(\bX_j)+\varepsilon_j$ for $j\ge1$. Let $\hat{m}_n(\bx)=\sum_{j=1}^{n}K_h(\bx-\bX_j)Y_j/\sum_{j=1}^{n}K_h(\bx-\bX_j)$ be the Nadaraya-Watson estimator of $m(\bx)$ for a given $\bx$, where $K_h(\bx-\bX_j)=K\{(\bx-\bX_j)/h\}/h$. We impose the following conditions.\\
(C1): $\{\bX_j\}$ are i.i.d. with support contained in a compact set $\cX\subset\eR^d$. The distribution of $\bX_j$ is dominated by the Lebesgue measure, with a continuous density $p(\bx)$ uniformly bounded above and away from zero on $\cX$, i.e., $\underline{c}\le p(\bx)\le\bar{c}$ for some positive constants $\underline{c}\le\bar{c}$.\\
(C2): The function $m(\cdot)\in\Sigma(\beta,L)$ for some $0<\beta\le1$ (Cf. Definition \ref{def:holder}).\\
(C3): The model errors $\{\varepsilon_j\}$ satisfy that $\eE(\varepsilon_j|\bX_j)=0$, and there exists constant $v>0$ such that $\eE\{\exp(\lambda\varepsilon_j)|\bX_j\}\le\exp(v^2\lambda^2/2)$ for all $\lambda\in\eR.$\\
(C4): The multivariate kernel function satisfies $K(\bx)=0$ for $\|\bx\|_{\infty}>1,$ and there exists some constant $\bar{k}>0$ such that $0\le K(\bx)< \bar{k},\forall\bx\in\eR^d$. Moreover, $K(\cdot)$ is replaced with the uniform kernel $K(\bu)=\underline{k}\cdot I(\|\bu\|_{\infty}\le 1)$ whenever $\sum_{j=1}^{n}K_{h}(\bx-\bX_j)\le\underline{k}\sum_{j=1}^{n}I(\|\bx-\bX_j\|_{\infty}\le h)/h$ for some small constant $\underline{k}>0.$

\begin{proposition}
    \label{propos:NW_est}
    Suppose that Conditions (C1)--(C4) hold. For a given $\bx\in\cX$ and every small $\epsilon>\sqrt{d}Lh^{\beta}$, we have
    $$
    \eP(|\hat{m}_n(\bx)-m(\bx)|\ge\epsilon)\le3\exp\left(-\frac{2^d\underline{k}^2\underline{c}nh^d(\epsilon-\sqrt{d}Lh^{\beta})^2}{\bar{k}^2v^2}\right).
    $$
\end{proposition}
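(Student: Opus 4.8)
The plan is to follow the classical bias--variance decomposition of the Nadaraya--Watson estimator and then concentrate each piece separately, the key being to exploit the sparsity of the kernel weights so that only the $O(nh^d)$ covariates falling in the $\ell_\infty$-ball around $\bx$ contribute. Writing $D_n:=\sum_{j=1}^n K_h(\bx-\bX_j)$ for the denominator, I would first split
$$
\hat m_n(\bx)-m(\bx)=\frac{\sum_{j=1}^n K_h(\bx-\bX_j)\{m(\bX_j)-m(\bx)\}}{D_n}+\frac{\sum_{j=1}^n K_h(\bx-\bX_j)\varepsilon_j}{D_n}=:B_n+V_n.
$$
For the bias $B_n$, note that $K_h(\bx-\bX_j)\ne 0$ forces $\|\bx-\bX_j\|_\infty\le h$, hence $\|\bx-\bX_j\|\le\sqrt d\,h$; combining Condition (C2) (which for $0<\beta\le1$ states $|m(\bx)-m(\bx')|\le L\|\bx-\bx'\|^\beta$, Cf.\ Definition~\ref{def:holder}) with $d^{\beta/2}\le\sqrt d$ gives $|m(\bX_j)-m(\bx)|\le\sqrt d\,Lh^\beta$ on the support, so $|B_n|\le\sqrt d\,Lh^\beta$ deterministically whenever $D_n>0$. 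Setting $\epsilon':=\epsilon-\sqrt d\,Lh^\beta>0$, the event $\{|\hat m_n(\bx)-m(\bx)|\ge\epsilon\}$ is therefore contained in $\{|V_n|\ge\epsilon'\}$.

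To control $V_n$, I would condition on $\cG:=\sigma(\bX_1,\dots,\bX_n)$, under which the normalized weights $w_j:=K_h(\bx-\bX_j)/D_n$ become fixed with $w_j\ge0$ and $\sum_j w_j=1$. By Condition (C3) and the i.i.d.\ structure in (C1), the $\varepsilon_j$ are conditionally independent, mean zero, and sub-Gaussian with proxy $v^2$, so the same moment-generating-function computation underlying Lemma~\ref{lem:bounded} shows $V_n\mid\cG$ is sub-Gaussian with proxy $v^2\sum_j w_j^2$, giving $\eP(|V_n|\ge\epsilon'\mid\cG)\le 2\exp\{-(\epsilon')^2/(2v^2\sum_j w_j^2)\}$. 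The crucial step is bounding $\sum_j w_j^2$: since only the $N_n(\bx):=\sum_j I(\|\bx-\bX_j\|_\infty\le h)$ nonzero terms survive, each at most $(\bar k/h)^2$, I have $\sum_j K_h^2\le(\bar k/h)^2 N_n(\bx)$, while the stabilization in Condition (C4) guarantees $D_n\ge(\underline k/h)N_n(\bx)$; dividing yields $\sum_j w_j^2\le\bar k^2/(\underline k^2 N_n(\bx))$. This is exactly where keeping $N_n(\bx)$ rather than applying a crude Hoeffding bound over all $n$ terms produces the correct $h^d$ (not $h^{2d}$) scaling in the exponent.

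It remains to lower-bound the count $N_n(\bx)$. The indicators $I(\|\bx-\bX_j\|_\infty\le h)$ are i.i.d.\ Bernoulli with mean $p:=\eP(\|\bx-\bX_1\|_\infty\le h)\ge 2^d\underline c\,h^d$ by Condition (C1) and the volume $(2h)^d$ of the $\ell_\infty$-ball (for $\bx$ in the interior of $\cX$). Lemma~\ref{lem:count}(i) then gives $\eP(N_n(\bx)\le np/2)\le\exp(-3np/28)$, and on the complementary event $\sum_j w_j^2\le 2\bar k^2/(\underline k^2 np)$. Plugging this into the conditional sub-Gaussian tail, taking expectation over $\cG$, and adding the count-failure probability produces a bound of the form $\exp(-c_1 nh^d)+2\exp(-c_2 nh^d(\epsilon')^2)$; since $\epsilon$ (hence $\epsilon'$) is small, the second term dominates, so the first is absorbed, yielding the stated $3\exp(\cdot)$ with the factor $3=1+2$. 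Careful bookkeeping of the constants from $p\ge 2^d\underline c h^d$, the factor $1/2$ in the count threshold, and the $1/2$ in the sub-Gaussian exponent delivers the explicit constant $2^d\underline k^2\underline c/(\bar k^2 v^2)$.

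The main obstacle is the random, data-dependent denominator $D_n$, which couples the numerator noise to the design and can be arbitrarily small. The remedy is the three-way interplay above: the stabilization device converts the denominator into a deterministic multiple of $N_n(\bx)$, conditioning on $\cG$ decouples the sub-Gaussian errors from the now-fixed weights, and the self-normalizing identity $\sum_j w_j=1$ together with kernel sparsity keeps the variance proxy at the correct order $1/N_n(\bx)$. Obtaining the $h^d$ rather than $h^{2d}$ scaling---i.e.\ not paying for the $n-N_n(\bx)$ vanishing weights---is the one place where a naive application of Lemma~\ref{lem:bounded} to the unnormalized sum would fail, and is the step I would be most careful about.
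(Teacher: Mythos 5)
Your proposal is correct and follows essentially the same route as the paper's proof: the same bias--variance decomposition with the deterministic $\sqrt{d}Lh^{\beta}$ bias bound, the same use of the stabilization device in (C4) to lower-bound the denominator by $\underline{k}N_n(\bx)/h$, the same conditional sub-Gaussian/MGF concentration given the design (the paper packages this as Lemma~\ref{lem:bounded}, you phrase it via normalized weights and the proxy $v^2\sum_j w_j^2$, which is the identical computation), the same application of Lemma~\ref{lem:count}(i) to lower-bound the neighbor count with $p\ge 2^d\underline{c}h^d$, and the same absorption of the count-failure term into the final factor of $3$ for small $\epsilon$.
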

\begin{proof}
    Define $m_n(\bx)=\sum_{j=1}^{n}K_h(\bx-\bX_j)m(\bX_j)/\sum_{j=1}^{n}K_h(\bx-\bX_j),$ and $B_n(\bx)=|\{j\in[n]:\|\bx-\bX_j\|_{\infty}\le h\}|$. Note that for each $x\in\eR^d,|\hat{m}_n(\bx)-m(\bx)|\le |m_n(\bx)-m(\bx)|+|\hat{m}_n(\bx)-m_n(\bx)|.$
    For the first term, by Condition (C2),
    \begin{equation}
        \label{eq:m_bias}
        \begin{aligned}
            |m_n(\bx)-m(\bx)|=&\left|\frac{\sum_{j=1}^{n}K_h(\bx-\bX_j)\{m(\bX_j)-m(\bx)\}}{\sum_{j=1}^{n}K_h(\bx-\bX_j)}\right|\\
            \le&\sqrt{d}Lh^{\beta}I\{B_n(\bx)\ge1\}+ |m(\bx)|\cdot I\{B_n(\bx)=0\}.
        \end{aligned}
    \end{equation}
    For the second term, by Condition (C4),
    $$
    |\hat{m}_n(\bx)-m_n(\bx)|=\left|\frac{\sum_{j=1}^{n}K_h(\bx-\bX_j)(Y_j-m(\bX_j))}{\sum_{j=1}^{n}K_h(\bx-\bX_j)}\right|\le\left|\frac{1}{\underline{k}B_n(\bx)}\sum_{j=1}^{n}K\Big(\frac{\bx-\bX_j}{h}\Big)\varepsilon_j\right|.
    $$
    Note that there are $B_n(\bx)$ nonzero terms in $\sum_{j=1}^{n}K\{(\bx-\bX_j)/h\}\varepsilon_j$. Then, by Lemma~\ref{lem:bounded} and Conditions (C3)--(C4),
    \begin{equation}
        \label{eq:m2}
        \eP\big(|\hat{m}_n(\bx)-m_n(\bx)|\ge \epsilon-\sqrt{d}Lh^{\beta}\big|\{\bX_j\}_{j\in[n]}\big)\le 2\exp\left(-\frac{B_n(\bx)\underline{k}^2(\epsilon-\sqrt{d}Lh^{\beta})^2}{2\bar{k}^2v^2}\right),
    \end{equation}
    where $\eP(\cdot|\{\bX_j\}_{j\in[n]})$ denotes the conditional probability given $\{\bX_j\}_{j\in[n]}.$ 
    By Condition (C1), $\eP(\|\bx-\bX_j\|_{\infty}\le h)\ge\underline{c}(2h)^d$ for $j\in[n],$ which together with Lemma~\ref{lem:count}(i) implies that
    \begin{equation}
        \label{eq:B}
    \eP\Big(B_n(\bx)\le\frac{\underline{c}n(2h)^d}{2}\Big)\le\exp\left(-\frac{3\underline{c}n(2h)^d}{28}\right). 
    \end{equation}
    Thus, we have
    $$
    \begin{aligned}
        &\eP(|\hat{m}_n(\bx)-m(\bx)|\ge\epsilon)\\
        \le&\eP\Big(B_n(\bx)\le\frac{\underline{c}n(2h)^d}{2}\Big)+\eP\Big(|\hat{m}_n(\bx)-m(\bx)|\ge\epsilon,B_n(\bx)>\frac{\underline{c}n(2h)^d}{2}\Big)\\
        \le&\exp\left(-\frac{3\underline{c}n(2h)^d}{28}\right)+\eP\Big(|\hat{m}_n(\bx)-m_n(\bx)|\ge\epsilon-\sqrt{d}Lh^{\beta},B_n(\bx)>\frac{\underline{c}n(2h)^d}{2}\Big)\\
        \le&\exp\left(-\frac{3\underline{c}n(2h)^d}{28}\right)+2\exp\left(-\frac{2^d\underline{k}^2\underline{c}nh^d(\epsilon-\sqrt{d}Lh^{\beta})^2}{\bar{k}^2v^2}\right)\\
        \le&3\exp\left(-\frac{2^d\underline{k}^2\underline{c}nh^d(\epsilon-\sqrt{d}Lh^{\beta})^2}{\bar{k}^2v^2}\right),
    \end{aligned}
    $$
    where the second inequality follows from \eqref{eq:m_bias} and \eqref{eq:m2}, and the third inequality follows from \eqref{eq:B}. The proof is completed. 
\end{proof}

Next, we provide the deviation inequalities for the estimations of $\gamma_z$ and the conditional variance of $\varepsilon_t(z)$ for $z\in\{0,1\}$ in Model~\eqref{eq:model_varep}. For simplicity, we suppose 
$$
Y_j=m_Y(\bX_j)-\gamma m_S(\bX_j)+\gamma S_j+\varepsilon_j=\mu(\bX_j,S_j)+\varepsilon_j,\quad j\ge1,
$$
where $Y_j=m_Y(\bX_j)+\varepsilon_{Y,j},S_j=m_S(\bX_j)+\varepsilon_{S,j},$ and $\mu(\bX_j,S_j)=m_Y(\bX_j)-\gamma m_S(\bX_j)+\gamma S_j$. Assume that $\{\varepsilon_{Y,j}\}$ and $\{\varepsilon_{S,j}\}$ satisfy Condition (C3) with positive constants $v_Y$ and $v_S$, respectively. Additional, we impose the following conditions.\\
(C5): $\var(\varepsilon_{S,j}|\bX_j)=\sigma_S^2,\eE(\varepsilon_j|\varepsilon_{S,j})=0,\var(\varepsilon_j|\varepsilon_{S,j})=\sigma^2$, and there exist constants $K_Y,K_S>0$ such that $\|\varepsilon_{Y,j}\|_{\psi_2}\le K_Y$ and $\|\varepsilon_{S,j}\|_{\psi_2}\le K_S$ for all $j\ge1$.\\
(C6): Let $\hat{m}_{Y,n}(\cdot)$ and $\hat{m}_{S,n'}(\cdot)$ be some estimators of $m_Y(\cdot)$ and $m_S(\cdot)$, with sample sizes $n$ and $n'$, respectively. For each given $\bx$ and every small $\epsilon_1,\epsilon_2>0$, $\eP(|\hat{m}_{Y,n}(\bx)-m_Y(\bx)|\ge \epsilon_1)\le\delta_{1n}(\epsilon_1)$, and $\eP(|\hat{m}_{S,n'}(\bx)-m_S(\bx)|\ge \epsilon_2)\le\delta_{2n'}(\epsilon_2)$. %{\color{red} are the estimators based on the same covariates $\bX_j$'s? or in other words, are they independent of each other? are they independent of $\bX_j$'s used in the definition of $\hat\gamma_n$ below?}

\begin{proposition}
    \label{propos:sigma_est}
    Suppose that Conditions (C5)--(C6) hold. \\
    (i) Let $\hat{\gamma}_n=\sum_{j=1}^{n}\{Y_j-\hat{m}_{Y,n}(\bX_j)\}\{S_j-\hat{m}_{S,n'}(\bX_j)\}/\sum_{j=1}^{n}\{S_j-\hat{m}_{S,n'}(\bX_j)\}^2$ be the least square estimate of $\gamma$. Then, for every small $\epsilon_3>4^{-1}\sigma_S^2\{3\sigma\epsilon_2+2\sigma_S(\epsilon_1+|\gamma|\epsilon_2)\},$
    $$
    \begin{aligned}
        \delta_{3n}(\epsilon_3):=&\eP(|\hat{\gamma}_n-\gamma|\ge\epsilon_3)\\
        \le& n\delta_{1n}(\epsilon_1)+n\delta_{2n'}(\epsilon_2)+3\exp\left(-\frac{n\{4\epsilon_3/\sigma_S^2-3\sigma\epsilon_2-2\sigma_S(\epsilon_1+|\gamma|\epsilon_2)\}^2}{8K_S^2(K_Y+|\gamma|K_S)^2}\right).
    \end{aligned}
    $$
    (ii) Let $\hat{\mu}_n(\bx,s)=\hat{m}_{Y,n}(\bx)-\hat{\gamma}_n\hat{m}_{S,n'}(\bx)+\hat{\gamma}_ns$ be the plug-in estimator of $\mu(\cdot,\cdot).$ Then, for every small $\epsilon_4>\epsilon_1+2|\gamma|\epsilon_2$, we have
    $$
    \begin{aligned}
        \delta_{4n}(\epsilon_4):=&\eP\left(\max_{j\in[n]}|\hat{\mu}_n(\bX_j,S_j)-\mu(\bX_j,S_j)|\ge\epsilon_4\right)\\
        \le&\delta_{3n}(\epsilon_3)
        +2n\exp\left(-\frac{(\epsilon_4-\epsilon_1-2|\gamma|\epsilon_2)^2}{2\epsilon_3^2v_S^2}\right).
    \end{aligned}
    $$
    (iii) Let $\hat{\sigma}_n^2=n^{-1}\sum_{j=1}^{n}\{Y_j-\hat{\mu}_n(\bX_j,S_j)\}^2$ be the estimate of the conditional variance $\sigma^2$ for $\varepsilon_j$. Then, for every small $\epsilon_5>3\epsilon_4$, we have
    $$
    \eP\left(|\hat{\sigma}_n-\sigma|>\epsilon_5\right)\le \delta_{4n}(\epsilon_4)+3\exp\left(-\frac{n(\sigma\epsilon_5-3\sigma\epsilon_4)^2}{8(K_Y+|\gamma|K_S)^4}\right).
    $$
\end{proposition}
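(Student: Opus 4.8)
The plan is to prove all three parts with a common template: isolate a single ``pure-noise'' average that involves only the true errors (and is therefore an independent, mean-zero sum to which the Bernstein-type inequality of Lemma~\ref{lem:exp_ber} applies), and to control every remaining term deterministically on a high-probability ``good event'' on which the nonparametric nuisances are uniformly accurate. Throughout I write $\epsilon_Y(\bX_j)=\hat m_{Y,n}(\bX_j)-m_Y(\bX_j)$ and $\epsilon_S(\bX_j)=\hat m_{S,n'}(\bX_j)-m_S(\bX_j)$, so that $\widetilde Y_j:=Y_j-\hat m_{Y,n}(\bX_j)=\varepsilon_{Y,j}-\epsilon_Y(\bX_j)$ and $\widetilde S_j:=S_j-\hat m_{S,n'}(\bX_j)=\varepsilon_{S,j}-\epsilon_S(\bX_j)$, and I use the structural identity $\varepsilon_{Y,j}=\gamma\varepsilon_{S,j}+\varepsilon_j$ from \eqref{eq:model_varep}. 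Define the good event $\cE=\{\max_{j\in[n]}|\epsilon_Y(\bX_j)|\le\epsilon_1\}\cap\{\max_{j\in[n]}|\epsilon_S(\bX_j)|\le\epsilon_2\}$; by Condition (C6) and a union bound over $j\in[n]$, $\eP(\cE^c)\le n\delta_{1n}(\epsilon_1)+n\delta_{2n'}(\epsilon_2)$, which produces the first two terms of part~(i).

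For part~(i), I would first record the exact identity
$$
\hat\gamma_n-\gamma=\frac{\frac1n\sum_{j=1}^n\widetilde S_j\{\varepsilon_j-\epsilon_Y(\bX_j)+\gamma\epsilon_S(\bX_j)\}}{\frac1n\sum_{j=1}^n\widetilde S_j^2},
$$
obtained from $\widetilde Y_j-\gamma\widetilde S_j=\varepsilon_j-\epsilon_Y(\bX_j)+\gamma\epsilon_S(\bX_j)$ and the definition \eqref{eq:gamma_hat}. Expanding the numerator with $\widetilde S_j=\varepsilon_{S,j}-\epsilon_S(\bX_j)$ splits it into the single pure-noise average $\frac1n\sum_j\varepsilon_{S,j}\varepsilon_j$ plus cross terms, each carrying at least one factor $\epsilon_Y$ or $\epsilon_S$ and hence bounded on $\cE$ by $\epsilon_1,\epsilon_2$ times an empirical average of $|\varepsilon_{S,j}|$ or $|\varepsilon_j|$; these yield the bias budget $3\sigma\epsilon_2+2\sigma_S(\epsilon_1+|\gamma|\epsilon_2)$. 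The denominator is bounded below by concentrating $\frac1n\sum_j\varepsilon_{S,j}^2$ around $\sigma_S^2$ (a lower-tail application of Lemma~\ref{lem:exp_ber} to the sub-exponential variables $\varepsilon_{S,j}^2$), while the pure-noise average is handled by Lemma~\ref{lem:exp_ber} itself after noting, via Lemma~\ref{lem:sub_gau_exp}, that $\varepsilon_{S,j}\varepsilon_j$ is mean zero (by Condition (C5)) and sub-exponential with $\|\varepsilon_{S,j}\varepsilon_j\|_{\psi_1}\le K_S(K_Y+|\gamma|K_S)$, since $\|\varepsilon_j\|_{\psi_2}\le K_Y+|\gamma|K_S$. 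Combining the two-sided bound on the pure-noise average with the denominator event turns $\{|\hat\gamma_n-\gamma|\ge\epsilon_3\}$ into a deviation of $\frac1n\sum_j\varepsilon_{S,j}\varepsilon_j$ of the effective size $4\epsilon_3/\sigma_S^2-3\sigma\epsilon_2-2\sigma_S(\epsilon_1+|\gamma|\epsilon_2)$ recorded in the statement (kept positive precisely by the hypothesis $\epsilon_3>4^{-1}\sigma_S^2\{\cdots\}$), and the three exponential events (two-sided numerator plus denominator tail) are each dominated by the single stated exponential with denominator $8K_S^2(K_Y+|\gamma|K_S)^2$.

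Parts~(ii) and~(iii) then propagate this bound. For part~(ii), substituting \eqref{eq:mu_hat} gives the clean decomposition $\hat\mu_n(\bX_j,S_j)-\mu(\bX_j,S_j)=\epsilon_Y(\bX_j)-\gamma\epsilon_S(\bX_j)+(\hat\gamma_n-\gamma)\{\varepsilon_{S,j}-\epsilon_S(\bX_j)\}$, in which the $(\hat\gamma_n-\gamma)m_S(\bX_j)$ contributions cancel exactly. On $\cE\cap\{|\hat\gamma_n-\gamma|\le\epsilon_3\}$ (the latter contributing $\delta_{3n}(\epsilon_3)$, whose bound already subsumes $\eP(\cE^c)$ through the same $\epsilon_1,\epsilon_2$) the deterministic part is at most $\epsilon_1+2|\gamma|\epsilon_2$ after absorbing the higher-order $\epsilon_3\epsilon_2$ term, and the only remaining randomness is $(\hat\gamma_n-\gamma)\varepsilon_{S,j}$; a union bound over $j\in[n]$ with the sub-Gaussian tail of $\varepsilon_{S,j}$ (parameter $v_S$, using $|\hat\gamma_n-\gamma|\le\epsilon_3$) yields $2n\exp(-(\epsilon_4-\epsilon_1-2|\gamma|\epsilon_2)^2/(2\epsilon_3^2 v_S^2))$. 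For part~(iii), I would write $Y_j-\hat\mu_n(\bX_j,S_j)=\varepsilon_j-\Delta_j$ with $\Delta_j:=\hat\mu_n(\bX_j,S_j)-\mu(\bX_j,S_j)$, so that $\hat\sigma_n^2-\sigma^2=(\frac1n\sum_j\varepsilon_j^2-\sigma^2)-\frac2n\sum_j\varepsilon_j\Delta_j+\frac1n\sum_j\Delta_j^2$; on the event of part~(ii), $\max_j|\Delta_j|\le\epsilon_4$, and Cauchy--Schwarz (together with the concentration of $\frac1n\sum_j\varepsilon_j^2$) bounds the last two terms by $3\sigma\epsilon_4$. Using $|\hat\sigma_n-\sigma|\le|\hat\sigma_n^2-\sigma^2|/\sigma$ and applying Lemma~\ref{lem:exp_ber} to $\frac1n\sum_j(\varepsilon_j^2-\sigma^2)$, with $\|\varepsilon_j^2\|_{\psi_1}\le(K_Y+|\gamma|K_S)^2$ by Lemma~\ref{lem:sub_gau_exp}, delivers the stated bound.

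The main obstacle is part~(i): one must extract exactly one independent pure-noise average so that Lemma~\ref{lem:exp_ber} applies to true (not estimated) errors, whereas the leave-in residuals $\hat m_{Y,n}(\bX_j),\hat m_{S,n'}(\bX_j)$ otherwise couple all summands. The good-event device resolves this by bounding every estimator-dependent cross term deterministically rather than probabilistically, but it demands simultaneous control of numerator and denominator and careful tracking of the bias budget so that the threshold conditions ($\epsilon_3>4^{-1}\sigma_S^2\{\cdots\}$, and the analogous $\epsilon_4>\epsilon_1+2|\gamma|\epsilon_2$ and $\epsilon_5>3\epsilon_4$ later) keep each effective deviation positive. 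A secondary technical point is the repeated use of Lemma~\ref{lem:sub_gau_exp} to pass from sub-Gaussian errors to sub-exponential products, which fixes the $K_S(K_Y+|\gamma|K_S)$ and $(K_Y+|\gamma|K_S)^2$ scales appearing in the exponents.
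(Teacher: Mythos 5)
Your proposal is correct and follows essentially the same route as the paper's proof: the same exact identity for $\hat\gamma_n-\gamma$, the same good event built from Condition (C6) via a union bound, the same use of Lemma~\ref{lem:sub_gau_exp} and Lemma~\ref{lem:exp_ber} on the pure-noise averages $n^{-1}\sum_j\varepsilon_{S,j}\varepsilon_j$ and $n^{-1}\sum_j(\varepsilon_j^2-\sigma^2)$, the same Cauchy--Schwarz/empirical-second-moment device for the cross terms and the denominator, and the same decompositions in parts (ii) and (iii) (your regrouping of the $\hat\mu_n-\mu$ expansion is algebraically identical to the paper's after cancellation). The only cosmetic differences are the sign convention for the nuisance errors and your phrasing that the cross terms are controlled ``deterministically''; as you implicitly acknowledge, they also require the probabilistic bounds on $n^{-1}\sum_j\varepsilon_j^2$ and $n^{-1}\sum_j\varepsilon_{S,j}^2$ that the paper folds into the exponential terms.
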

\begin{proof}
    (i) Denote $U_{j,n}=m_Y(\bX_j)-\hat{m}_{Y,n}(\bX_j)$ and $V_{j,n'}=m_S(\bX_j)-\hat{m}_{S,n'}(\bX_j)$ for $j\in[n].$ Note that $Y_j-\hat{m}_{Y,n}(\bX_j)=U_{j,n}+\varepsilon_{Y,j},S_j-\hat{m}_{S,n'}(\bX_j)=V_{j,n'}+\varepsilon_{S,j}$ and $\varepsilon_j=\varepsilon_{Y,j}-\gamma\varepsilon_{S,j}$.
    By the triangle inequality and the Cauchy-Schwarz inequality, we respectively have $\|\varepsilon_j\|_{\psi_2}\le\|\varepsilon_{Y,j}\|_{\psi_2}+|\gamma|\|\varepsilon_{S,j}\|_{\psi_2}\le K_Y+|\gamma|K_S$.%, and 
    %\begin{equation}
    %    \label{eq:sub_gau_var}
    %    \begin{aligned}
    %    \eE\{\exp(\lambda\varepsilon_j)|\bX_j\}\le&[\eE\{\exp(2\lambda\varepsilon_{Y,j})|\bX_j\}]^{1/2}[\eE\{\exp(-2\gamma\lambda\varepsilon_{S,j})|\bX_j\}]^{1/2}\\
    %    \le&\exp\big\{(2v_Y^2+2\gamma^2v_S^2)\lambda^2/2\big\}
    %\end{aligned}
    %\end{equation}
    %for all $j\ge1$ and $\lambda\in\eR,$ which implies that $\{\varepsilon_j\}$ also satisfy Condition (C3).
    Then, we write
    $$
    \begin{aligned}
        \hat{\gamma}_n-\gamma=&\frac{n^{-1}\sum_{j=1}^{n}(U_{j,n}+\varepsilon_{Y,j})(V_{j,n'}+\varepsilon_{S,j})}{n^{-1}\sum_{j=1}^{n}(V_{j,n'}+\varepsilon_{S,j})^2}-\gamma\\
        =&\frac{n^{-1}\sum_{j=1}^{n}(V_{j,n'}+\varepsilon_{S,j})(U_{j,n}-\gamma V_{j,n'}+\varepsilon_j)}{n^{-1}\sum_{j=1}^{n}(V_{j,n'}+\varepsilon_{S,j})^2}.
    \end{aligned}
    $$
    Define the event $A(\epsilon_1,\epsilon_2):=\big\{\max_{j\in[n]}|U_{j,n}|\ge\epsilon_1\big\}\bigcup\big\{\max_{j\in[n]}|V_{j,n'}|\ge\epsilon_2\big\}$. By Condition (C6), it follows that
    $$
    \eP\Big(\max_{j\in[n]}|U_{j,n}|\ge\epsilon_1\Big)\le\sum_{j=1}^{n}\eP(|U_{j,n}|\ge\epsilon_1)\le n\delta_{1n}(\epsilon_1),
    $$ 
    and also $\eP\big(\max_{j\in[n]}|V_{j,n'}|\ge\epsilon_2\big)\le n\delta_{2n'}(\epsilon_2),$ which together imply that $\eP(A(\epsilon_1,\epsilon_2))\le n\delta_{1n}(\epsilon_1)+n\delta_{2n'}(\epsilon_2)$. Under $A(\epsilon_1,\epsilon_2)^c,\big|n^{-1}\sum_{j=1}^{n}V_{j,n'}(U_{j,n}-\gamma V_{j,n'})\big|\le\epsilon_1\epsilon_2+|\gamma|\epsilon_2^2<\sigma\epsilon_2$, and $n^{-1}\sum_{j=1}^{n}V_{j,n'}^2\le\epsilon_2^2<\sigma_S^2/8$. By Lemma~\ref{lem:sub_gau_exp}, $\|\varepsilon_j^2\|_{\psi_1}\le \|\varepsilon_j\|_{\psi_2}^2$ and $\|\varepsilon_{S,j}^2\|_{\psi_1}\le \|\varepsilon_{S,j}\|_{\psi_2}^2.$ Then, combining Conditions (C5)--(C6), the Cauchy-Schwarz inequality and Lemma~\ref{lem:exp_ber}, we have
    {\small
    \begin{align}
            &\eP\left(\Big|n^{-1}\sum_{j=1}^{n}V_{j,n'}\varepsilon_j\Big|\ge2\sigma\epsilon_2,A(\epsilon_1,\epsilon_2)^c\right)\le\eP\left(n^{-1}\sum_{j=1}^{n}\varepsilon_j^2\ge4\sigma^2\right)\label{eq:V_var_1}\\
            =&\eP\left(n^{-1}\sum_{j=1}^{n}(\varepsilon_j^2-\sigma^2)\ge3\sigma^2\right)
            \le\exp\left\{-\left(\frac{9n\sigma^4}{8(K_Y+|\gamma|K_S)^4}\wedge\frac{3n\sigma^2}{4(K_Y+|\gamma|K_S)^2}\right)\right\},\nonumber\\
            &\eP\left(\Big|n^{-1}\sum_{j=1}^{n}(U_{j,n}-\gamma V_{j,n'})\varepsilon_{S,j}\Big|
            \ge2\sigma_S(\epsilon_1+|\gamma|\epsilon_2),A(\epsilon_1,\epsilon_2)^c\right)
            \le\eP\left(n^{-1}\sum_{j=1}^{n}\varepsilon_{S,j}^2\ge4\sigma_S^2\right)\label{eq:V_var_2}\\
            =&\eP\left(n^{-1}\sum_{j=1}^{n}(\varepsilon_{S,j}^2-\sigma_S^2)\ge3\sigma_S^2\right)
            \le\exp\left\{-\left(\frac{9n\sigma_S^4}{8K_S^4}\wedge\frac{3n\sigma_S^2}{4K_S^2}\right)\right\},\nonumber\\
            &\eP\left(\Big|n^{-1}\sum_{j=1}^{n}V_{j,n'}\varepsilon_{S,j}\Big|\ge\frac{\sigma_S^2}{16},A(\epsilon_1,\epsilon_2)^c\right)\le\eP\left(n^{-1}\sum_{j=1}^{n}\varepsilon_{S,j}^2\ge\frac{\sigma_S^4}{256\epsilon_2^2}\right)\nonumber\\
            \le&\eP\left(n^{-1}\sum_{j=1}^{n}(\varepsilon_{S,j}^2-\sigma_S^2)\ge\frac{\sigma_S^4-256\epsilon_2^2\sigma_S^2}{256\epsilon_2^2}\right)\le\exp\left(-\frac{n(\sigma_S^4-256\epsilon_2^2\sigma_S^2)}{1024\epsilon_2^2K_S^2}\right).\nonumber
    \end{align}
    }
    By Lemma~\ref{lem:sub_gau_exp}, $\|\varepsilon_{S,j}\varepsilon_j\|_{\psi_1}\le\|\varepsilon_{S,j}\|_{\psi_2}\|\varepsilon_j\|_{\psi_2}$. Then it follows from Lemma~\ref{lem:exp_ber} that
    $$
    \begin{aligned}
        &\eP\left(\Big|n^{-1}\sum_{j=1}^{n}\varepsilon_{S,j}\varepsilon_j\Big|\ge4\epsilon_3/\sigma_S^2-3\sigma\epsilon_2-2\sigma_S(\epsilon_1+|\gamma|\epsilon_2)\right)\\
        \le &2\exp\left(-\frac{n\{4\epsilon_3/\sigma_S^2-3\sigma\epsilon_2-2\sigma_S(\epsilon_1+|\gamma|\epsilon_2)\}^2}{8K_S^2(K_Y+|\gamma|K_S)^2}\right),\\
        &\eP\left(n^{-1}\sum_{j=1}^{n}(\varepsilon_{S,j}^2-\sigma_S^2)\le-\frac{\sigma_S^2}{2}\right)\le\exp\left\{-\left(\frac{n\sigma_S^4}{32K_S^4}\wedge\frac{n\sigma_S^2}{16K_S^2}\right)\right\}.
    \end{aligned}
    $$
    Combining above, we have
    {\small$$
    \begin{aligned}
        &\eP(|\hat{\gamma}_n-\gamma|\ge\epsilon_3)\\
        \le&\eP(A(\epsilon_1,\epsilon_2))+\eP\left(n^{-1}\sum_{j=1}^{n}(V_{j,n'}+\varepsilon_{S,j})^2<\frac{\sigma_S^2}{4},A(\epsilon_1,\epsilon_2)^c\right)\\
        &+\eP\left(n^{-1}\sum_{j=1}^{n}(V_{j,n'}+\varepsilon_{S,j})(U_{j,n}-\gamma V_{j,n'}+\varepsilon_j)\ge 4\epsilon_3/\sigma_S^2,n^{-1}\sum_{j=1}^{n}(V_{j,n'}+\varepsilon_{S,j})^2\ge\frac{\sigma_S^2}{4},A(\epsilon_1,\epsilon_2)^c\right)\\
        \le&n\delta_{1n}(\epsilon_1)+n\delta_{2n'}(\epsilon_2)+3\exp\left(-\frac{n\{4\epsilon_3/\sigma_S^2-3\sigma\epsilon_2-2\sigma_S(\epsilon_1+|\gamma|\epsilon_2)\}^2}{8K_S^2(K_Y+|\gamma|K_S)^2}\right),
    \end{aligned}
    $$
    }which completes the proof.\\
    (ii) Consider the following decomposition:
    $$
    \hat{\mu}_n(\bX_j,S_j)-\mu(\bX_j,S_j)=\{\hat{m}_{Y,n}(\bX_j)-m_Y(\bX_j)\}-\hat{\gamma}_n\{\hat{m}_{S,n}(\bX_j)-m_S(\bX_j)\}+(\hat{\gamma}_n-\gamma)\varepsilon_{S,j}.
    $$
    Define the event $A(\epsilon_1,\epsilon_2,\epsilon_3):=A(\epsilon_1,\epsilon_2)\bigcup\big\{|\hat{\gamma}_n-\gamma|\ge\epsilon_3\big\}.$ By the proof of part (i), $\eP(A(\epsilon_1,\epsilon_2,\epsilon_3))\le\delta_{3n}(\epsilon_3)$. Under $A(\epsilon_1,\epsilon_2,\epsilon_3)^c,$ we have $\max_{j\in[n]}|\hat{m}_{Y,n}(\bX_j)-m_Y(\bX_j))|\le\epsilon_1,$ and $\max_{j\in[n]}|\hat{\gamma}_n\{\hat{m}_{S,n'}(\bX_j)-m_S(\bX_j)\}|\le2|\gamma|\epsilon_2.$ By using the tail probability of the sub-Gaussian random variable $\varepsilon_{S,j}$ with proxy variance $v_S^2,$ it follows that
    $$
    \eP\left(\max_{j\in[n]}|(\hat{\gamma}_n-\gamma)\varepsilon_{S,j}|\ge \epsilon_4-\epsilon_1-2|\gamma|\epsilon_2,A(\epsilon_1,\epsilon_2,\epsilon_3)^c\right)\le 2n\exp\left(-\frac{(\epsilon_4-\epsilon_1-2|\gamma|\epsilon_2)^2}{2\epsilon_3^2v_S^2}\right),
    $$
    which implies the desired result. \\
    (iii) Consider the following decomposition:
    $$
    \begin{aligned}
        \hat{\sigma}_n^2-\sigma^2=&n^{-1}\sum_{j=1}^{n}(\varepsilon_j^2-\sigma^2)+n^{-1}\sum_{j=1}^{n}\{\mu(\bX_j,S_j)-\hat{\mu}_n(\bX_j,S_j)\}^2\\
        &+2n^{-1}\sum_{j=1}^{n}\{\mu(\bX_j,S_j)-\hat{\mu}_n(\bX_j,S_j)\}\varepsilon_j.
    \end{aligned}
    $$
    Define the event $A(\epsilon_4):=\big\{\max_{j\in[n]}|\mu(\bX_j,S_j)-\hat{\mu}_n(\bX_j,S_j)|\ge\epsilon_4\big\}.$ By part (ii), we have $\eP(A(\epsilon_4))\le \delta_{4n}(\epsilon_4).$ Under $A(\epsilon_4)^c,\big|n^{-1}\sum_{j=1}^{n}\{\mu(\bX_j,S_j)-\hat{\mu}_n(\bX_j,S_j)\}^2\big|\le \epsilon_4^2<\sigma\epsilon_4.$ Then, Condition (C5), the Cauchy-Schwarz inequality and Lemma~\ref{lem:exp_ber}, we have
    $$
    \begin{aligned}
        &\eP\left(\Big|n^{-1}\sum_{j=1}^{n}\{\mu(\bX_j,S_j)-\hat{\mu}_n(\bX_j,S_j)\}\varepsilon_j\Big|\ge2\sigma\epsilon_4,A(\epsilon_4)^c\right)\le \eP\left(n^{-1}\sum_{j=1}^{n}\varepsilon_j^2\ge4\sigma^2\right)\\
        =&\eP\left(n^{-1}\sum_{j=1}^{n}(\varepsilon_j^2-\sigma^2)\ge3\sigma^2\right)
        \le\exp\left\{-\left(\frac{9n\sigma^4}{8(K_Y+|\gamma|K_S)^4}\wedge\frac{3n\sigma^2}{4(K_Y+|\gamma|K_S)^2}\right)\right\}.
    \end{aligned}
    $$
    By Lemma~\ref{lem:sub_gau_exp},  $\|\varepsilon_j^2\|_{\psi_1}\le \|\varepsilon_j\|_{\psi_2}^2.$ Then by Lemma~\ref{lem:exp_ber},
    $$
    \eP\left(\Big|n^{-1}\sum_{j=1}^{n}(\varepsilon_j^2-\sigma^2)\Big|\ge\sigma\epsilon_5-3\sigma\epsilon_4\right)\le2\exp\left(-\frac{n(\sigma\epsilon_5-3\sigma\epsilon_4)^2}{8(K_Y+|\gamma|K_S)^4}\right).
    $$
    Combining above, we have
    $$
    \eP\left(|\hat{\sigma}_n^2-\sigma^2|>\sigma\epsilon_5\right)\le \delta_{4n}(\epsilon_4)+3\exp\left(-\frac{n(\sigma\epsilon_5-3\sigma\epsilon_4)^2}{8(K_Y+|\gamma|K_S)^4}\right),
    $$
    which, together with the fact that $|\hat{\sigma}_n-\sigma|\le|\hat{\sigma}_n^2-\sigma^2|/\sigma$, implies the desired result.
\end{proof}
\begin{remark}
    \label{rmk:corss}
    Note that Condition (C6) does not require $\hat{m}_{Y,n}(\cdot)$ and $\hat{m}_{S,n'}(\cdot)$ to be independent of $\{(X_j,S_j,Y_j)\}_{j\in[n]}$. Motivated by the cross-fitting technique of \textcolor{blue}{Chernozhukov et al.} (\textcolor{blue}{2018}), if $\hat{m}_{Y,n}(\cdot)$ and $\hat{m}_{S,n'}(\cdot)$ are estimated using data independent of the evaluation data $\{(X_j,S_j,Y_j)\}_{j\in[n]}$, then the convergence rates established in Proposition~\ref{propos:sigma_est} can be improved. Specifically, under such independence, when deriving the deviation inequalities~\eqref{eq:V_var_1} and \eqref{eq:V_var_2}, Lemma~\ref{lem:bounded} can be applied instead of Lemma~\ref{lem:exp_ber}. Consequently, the biases associated with $n^{-1}\sum_{j=1}^{n}V_{j,n'}\varepsilon_j$ and $n^{-1}\sum_{j=1}^{n}(U_{j,n}-\gamma V_{j,n'})\varepsilon_{S,j}$ are both dominated by that of $n^{-1}\sum_{j=1}^{n}V_{j,n'}(U_{j,n}-\gamma V_{j,n'})$. This implies that the bias term of $\hat{\gamma}_n$ in Proposition~\ref{propos:sigma_est}(i) is asymptotically of the second order, i.e., dominated by quadratic terms of the first-step nonparametric regression bias, namely $\epsilon_1\epsilon_2+|\gamma|\epsilon_2^2$, in the same spirit as the double machine learning framework where first-step biases affect the second-step estimator only through quadratic terms. A similar refinement applies to $\hat{\sigma}_n$ in Proposition~\ref{propos:sigma_est}(iii). As a result, the convergence rates of $\hat{\gamma}_n$ and $\hat{\sigma}_n$ established in the following section are improved from $\widetilde{O}_p(n^{-\beta/(2\beta+d)})$ to $\widetilde{O}_p(n^{-2\beta/(2\beta+d)})+O_p(n^{-1/2})$, attaining the parametric rate $O_p(n^{-1/2})$ when $d<2\beta$, and $\widetilde{O}_p(n^{-1/2})$ when $d\le2\beta$. Nevertheless, in this paper we refrain from employing cross-fitting, since it does not ultimately improve the asymptotic order of the expected regret bound w.r.t $T$. The reason is that, although cross-fitting may sharpen the finite-sample bound up to a constant factor, the gain from the new rates is dominated by other terms, yielding the same asymptotic order as in Theorem~\ref{thm:regret}. Moreover, cross-fitting is not used in the classical Robinson’s method either.
\end{remark}

\subsubsection{Good event}
\label{supsubsubsec:event}

To facilitate an explicit formulation of the regret bound in Theorem~\ref{thm:regret}, we define the following constants: $v_{\max}=v_{Y,0}^2\vee v_{Y,1}^2\vee v_{S,0}^2\vee v_{S,1}^2,$ $\gamma_{\max}=|\gamma_0|\vee|\gamma_1|,$ $\sigma_{S,\max}^2=\sigma_{S,1}^2\vee\sigma_{S,0}^2$, 
$\sigma_{\max}^2=\sigma_1^2\vee\sigma_0^2$, $M_{Y,0}=\sup_{\bsx\in\cX}|m_{Y,0}(\bx)|,$ $M_{Y,1}=\sup_{\bsx\in\cX}|m_{Y,1}(\bx)|,$ $M_{S,0}=\sup_{\bsx\in\cX}|m_{S,0}(\bx)|,$ $M_{S,1}=\sup_{\bsx\in\cX}|m_{S,1}(\bx)|.$ %$\kappa=\beta/(2\beta+d),$ 
%$\underline{\pi}=\pi^*\wedge(1-\pi^*)$. 
Given $0<\delta<1/2$ and for $t\in[T]$, we then define the following quantities:
$$
\begin{aligned}
    &C_{\delta,t,0}=0.85\sqrt{\log\log t+0.72\log(5.2/\delta)},\\
    &C_{\delta,t,1}'=\sqrt{\frac{\bar{k}^2v_{\max}^2\log(30Tt/\delta)}{2^d\underline{k}^2\underline{c}}},\quad C_{\delta,t,1}=C_{\delta,t,1}'+\sqrt{d}L,\\
    &C_{\delta,t,2}'=\frac{\sigma_{S,\max}^2}{4}\sqrt{8K_S^2(K_Y+\gamma_{\max}K_S)^2\log(30T/\delta)},\\ 
    &C_{\delta,t,2}=C_{\delta,t,2}'+\frac{\sigma_{S,\max}^2}{4}\{3\sigma_{\max}+2\sigma_{S,\max}(1+\gamma_{\max})\}C_{\delta,t,1},\\
    &C_{\delta,t,3}'=\sqrt{2v_{\max}^2\log(20Tt/\delta)},\quad C_{\delta,t,3}=(1+2\gamma_{\max})C_{\delta,t,1}+C_{\delta,t,3}'C_{\delta,t,2},\\
    &C_{\delta,t,4}'=\sqrt{8(K_Y+\gamma_{\max}K_S)^4\log(30T/\delta)},\quad C_{\delta,t,4}=C_{\delta,t,4}'+3\sigma_{\max}C_{\delta,t,3}.
\end{aligned}
$$
It can be shown that $C_{\delta,T,4}\asymp \log \delta^{-1}T$ and $C_{\delta,T,0}\asymp\sqrt{\log\log T+\log(1/\delta)}$. Next, we define the good event by using the deviation inequalities obtained in Section~\ref{supsubsubsec:dev}.

{\bf Step 1}. By Proposition~\ref{propos:NW_est}, with probability at least $1-2\delta/(5T),$
$$
\begin{aligned}
    &\max_{j\in G_{1,t}}\big|\hat{m}_{Y,1,t}(\bX_j)-m_{Y,1}(\bX_j)\big|\le\frac{\sqrt{d}L}{N_{1,t}^{\beta/(2\beta+d)}}+\frac{C_{\delta,t,1}'}{N_{1,t}^{\beta/(2\beta+d)}}=\frac{C_{\delta,t,1}}{N_{1,t}^{\beta/(2\beta+d)}}=:\Psi_{Y,1}(t,\delta),\\
    &\max_{j\in G_{0,t}}\big|\hat{m}_{Y,0,t}(\bX_j)-m_{Y,0}(\bX_j)\big|\le\frac{\sqrt{d}L}{N_{0,t}^{\beta/(2\beta+d)}}+\frac{C_{\delta,t,1}'}{N_{0,t}^{\beta/(2\beta+d)}}=\frac{C_{\delta,t,1}}{N_{0,t}^{\beta/(2\beta+d)}}=:\Psi_{Y,0}(t,\delta),\\
    &\max_{j\in[t]}\big|\hat{m}_{S,1,t}(\bX_j)-m_{S,1}(\bX_j)\big|\le\frac{\sqrt{d}L}{t^{\beta/(2\beta+d)}}+\frac{C_{\delta,t,1}'}{t^{\beta/(2\beta+d)}}=\frac{C_{\delta,t,1}}{t^{\beta/(2\beta+d)}}=:\Psi_{S,1}(t,\delta),\\
    &\max_{j\in[t]}\big|\hat{m}_{S,0,t}(\bX_j)-m_{S,0}(\bX_j)\big|\le\frac{\sqrt{d}L}{t^{\beta/(2\beta+d)}}+\frac{C_{\delta,t,1}'}{t^{\beta/(2\beta+d)}}=\frac{C_{\delta,t,1}}{t^{\beta/(2\beta+d)}}=:\Psi_{S,0}(t,\delta).
\end{aligned}
$$

{\bf Step 2}. By Proposition~\ref{propos:sigma_est}(i), with probability at least $1-3\delta/(5T),$
$$
\begin{aligned}
    |\hat{\gamma}_{1,t}-\gamma_1|\le&\frac{\sigma_{S,1}^2}{4}\big[3\sigma_1\Psi_{S,1}(t,\delta)+2\sigma_{S,1}\{\Psi_{Y,1}(t,\delta)+|\gamma_1|\Psi_{S,1}(t,\delta)\}\big]\\
    &+\frac{C_{\delta,t,2}'}{\sqrt{N_{1,t}}}
    \le\frac{C_{\delta,t,2}}{N_{1,t}^{\beta/(2\beta+d)}}=:\Psi_{\gamma,1}(t,\delta),\\
    |\hat{\gamma}_{0,t}-\gamma_0|\le&\frac{\sigma_{S,0}^2}{4}\big[3\sigma_0\Psi_{S,0}(t,\delta)+2\sigma_{S,0}\{\Psi_{Y,0}(t,\delta)+|\gamma_1|\Psi_{S,0}(t,\delta)\}\big]\\
    &+\frac{C_{\delta,t,2}'}{\sqrt{N_{0,t}}}
    \le\frac{C_{\delta,t,2}}{N_{0,t}^{\beta/(2\beta+d)}}=:\Psi_{\gamma,0}(t,\delta).
\end{aligned}
$$

{\bf Step 3}. By Proposition~\ref{propos:sigma_est}(ii), with probability at least $1-4\delta/(5T),$
$$
\begin{aligned}
    &\max_{j\in G_{1,t}}\big|\hat{\mu}_{1,t}(\bX_j,\bS_j)-\mu_1(\bX_j,\bS_j)\big|\\
    \le&\Psi_{Y,1}(t,\delta)+2\gamma_{\max}\Psi_{S,1}(t,\delta)+C_{\delta,t,3}'\Psi_{\gamma,1}(t,\delta)
    \le\frac{C_{\delta,t,3}}{N_{1,t}^{\beta/(2\beta+d)}}=:\Psi_{\mu,1}(t,\delta),\\
    &\max_{j\in G_{0,t}}\big|\hat{\mu}_{0,t}(\bX_j,\bS_j)-\mu_0(\bX_j,\bS_j)\big|\\
    \le&\Psi_{Y,0}(t,\delta)+2\gamma_{\max}\Psi_{S,0}(t,\delta)+C_{\delta,t,3}'\Psi_{\gamma,0}(t,\delta)
    \le\frac{C_{\delta,t,3}}{N_{0,t}^{\beta/(2\beta+d)}}=:\Psi_{\mu,0}(t,\delta).
\end{aligned}
$$

{\bf Step 4}. By Proposition~\ref{propos:sigma_est}(iii), with probability at least $1-\delta/T,$
\begin{equation}
    \label{eq:C_3}
    \begin{aligned}
        |\hat{\sigma}_{1,t}-\sigma_1|\le& 3\Psi_{\mu,1}(t,\delta)+\frac{C_{\delta,t,4}'}{\sigma_1\sqrt{N_{1,t}}}\le\frac{C_{\delta,t,4}}{\sigma_1N_{1,t}^{\beta/(2\beta+d)}}:=\Psi_{\sigma,1}(t,\delta),\\
        |\hat{\sigma}_{0,t}-\sigma_0|\le& 3\Psi_{\mu,0}(t,\delta)+\frac{C_{\delta,t,4}'}{\sigma_0\sqrt{N_{0,t}}}\le\frac{C_{\delta,t,4}}{\sigma_0N_{0,t}^{\beta/(2\beta+d)}}:=\Psi_{\sigma,0}(t,\delta).
    \end{aligned}
\end{equation}

Then, define the following events:
\begin{align}
    &\cE_1(\delta)=\bigcap_{t=1}^{T}\left\{N_{1,t}\in\left[\sum_{j=1}^{t}\pi_j-C_{\delta,t,0}\sqrt{t},\sum_{j=1}^{t}\pi_j+C_{\delta,t,0}\sqrt{t}\right]\right\}, \label{eq:good_pi}\\
    &\cE_2(\delta)=\bigcap_{z\in\{0,1\}}\bigcap_{t=1}^{T}\left\{\sigma_z\in\left[\hat{\sigma}_{z,t}-\frac{C_{\delta,t,4}}{\sigma_zN_t(z)^{\beta/(2\beta+d)}},\hat{\sigma}_{z,t}+\frac{C_{\delta,t,4}}{\sigma_zN_t(z)^{\beta/(2\beta+d)}}\right]\right\},\label{eq:good_sigma}
\end{align}
where $C_{\delta,t,0}$ is defined in Lemma~\ref{lem:count}(ii). The definition of $\cE_2(\delta)$ in \eqref{eq:good_sigma} is introduced for notational convenience; in fact, it includes all the relations from {\bf Step 1} to {\bf Step 4} outlined above. By Lemma~\ref{lem:count}(ii) and \eqref{eq:C_3}, the good event $\cE=\cE_1(\delta)\bigcap\cE_2(\delta)$ occurs with probability at least $1-2\delta.$ Throughout the following Section~\ref{supsubsubsec:high_regret}, we assume the good event $\cE$ always holds.

\subsubsection{High-probability regret bound}
\label{supsubsubsec:high_regret}

In this section, we first establish a high-probability bound for the regret $\cR_T$, which serves as the basis for the expected regret bound derived in Section~\ref{supsubsubsec:proof_regret}. The analysis in this section proceeds under the good event $\cE$ defined above.

\begin{lemma}
    \label{lem:low_bound_pi}
    Suppose that the conditions of Theorem~\ref{thm:regret} hold. For $T_0\le t<T,$
    $$\underline{\pi}/2\le\pi_{t+1}\le1-\underline{\pi}/2.$$
\end{lemma}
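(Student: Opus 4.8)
The plan is to reduce the claim to a single deviation bound on the \emph{unclipped} allocation $\widetilde\pi_{t+1}=\hat\sigma_{1,t}/(\hat\sigma_{1,t}+\hat\sigma_{0,t})$, and then to control that deviation on the good event $\cE=\cE_1(\delta)\cap\cE_2(\delta)$ by an induction on $t$. For the reduction, recall $\pi_{t+1}={\rm CLIP}(\widetilde\pi_{t+1},\zeta_{t+1},1-\zeta_{t+1})=\zeta_{t+1}\vee\{\widetilde\pi_{t+1}\wedge(1-\zeta_{t+1})\}$ with $\zeta_{t+1}=(1/2)(t+1)^{-\eta}\le 1/2$. Since $\underline\pi\le 1/2$, we have $1-\zeta_{t+1}\ge 1/2\ge\underline\pi/2$ and $\zeta_{t+1}\le 1/2\le 1-\underline\pi/2$; hence once $\widetilde\pi_{t+1}\in[\underline\pi/2,1-\underline\pi/2]$, the inner truncation and the outer maximum both keep $\pi_{t+1}$ inside $[\underline\pi/2,1-\underline\pi/2]$. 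Because $\underline\pi=\pi^*\wedge(1-\pi^*)$ gives $\underline\pi\le\pi^*\le 1-\underline\pi$, it in fact suffices to prove the single estimate $|\widetilde\pi_{t+1}-\pi^*|\le\underline\pi/2$.

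To bound this deviation I would use the algebraic identity
\[
\widetilde\pi_{t+1}-\pi^*=\frac{\sigma_0(\hat\sigma_{1,t}-\sigma_1)-\sigma_1(\hat\sigma_{0,t}-\sigma_0)}{(\hat\sigma_{1,t}+\hat\sigma_{0,t})(\sigma_0+\sigma_1)},
\]
which follows from $\pi^*=\sigma_1/(\sigma_0+\sigma_1)$ and $\widetilde\pi_{t+1}=\hat\sigma_{1,t}/(\hat\sigma_{1,t}+\hat\sigma_{0,t})$. Writing $p=\beta/(2\beta+d)$, the good event $\cE_2(\delta)$ in \eqref{eq:good_sigma} supplies $|\hat\sigma_{z,t}-\sigma_z|\le C_{\delta,t,4}/(\sigma_z N_{z,t}^{p})$ for $z\in\{0,1\}$. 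Thus the numerator is at most $\sigma_0|\hat\sigma_{1,t}-\sigma_1|+\sigma_1|\hat\sigma_{0,t}-\sigma_0|$, and once each $|\hat\sigma_{z,t}-\sigma_z|\le\sigma_z/2$ is established, the denominator is bounded below by $\tfrac12(\sigma_0+\sigma_1)^2$.

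The crux is therefore a linear-in-$t$ lower bound $N_{z,t}\ge t\underline\pi/4$, which I would obtain by induction on $t\in\{T_0,\dots,T-1\}$. Assuming $\pi_j\ge\underline\pi/2$ for all $j\le t$ — true deterministically on the initialization block where $\pi_j=1/2$, and by the induction hypothesis on the concentration block — gives $\sum_{j=1}^t\pi_j\ge t\underline\pi/2$; then on $\cE_1(\delta)$ from \eqref{eq:good_pi}, $N_{1,t}\ge\sum_{j=1}^t\pi_j-C_{\delta,t,0}\sqrt t\ge t\underline\pi/2-t\underline\pi/4=t\underline\pi/4$, where the last step uses $C_{\delta,t,0}\sqrt t\le t\underline\pi/4$, a consequence of $t\ge T_0\ge 16C_{\delta,T,0}^2/\underline\pi^2$ (second branch of \eqref{eq:T0}) together with the monotonicity $C_{\delta,t,0}\le C_{\delta,T,0}$. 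The bound for $N_{0,t}$ is identical after replacing $\pi_j$ by $1-\pi_j$, and the base case $t=T_0$ needs no probabilistic input since the alternating initialization makes $N_{z,T_0}\in\{\lfloor T_0/2\rfloor,\lceil T_0/2\rceil\}\ge T_0\underline\pi/4$ deterministic.

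Plugging $N_{z,t}\ge t\underline\pi/4$ into the $\cE_2$ bounds yields $|\hat\sigma_{z,t}-\sigma_z|\le C_{\delta,t,4}/(\sigma_z(t\underline\pi/4)^{p})$, and the first branch of the $T_0$ choice in \eqref{eq:T0} is calibrated precisely so that, for $t\ge T_0$, this is at most $\sigma_z/2$ (giving the denominator bound) and small enough that the numerator-over-denominator ratio is $\le\underline\pi/2$; this is where the factor $2^{-\beta/(2\beta+d)-2}\sigma_0\sigma_1\underline\pi^{\beta/(2\beta+d)}$ appearing in the denominator of \eqref{eq:T0} is used. Combining these gives $|\widetilde\pi_{t+1}-\pi^*|\le\underline\pi/2$, closing the induction. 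I expect the main obstacle to be the apparent circularity: the lower bound on $N_{z,t}$ that makes $\hat\sigma_{z,t}$ accurate itself presupposes that the $\pi_j$ are bounded away from $0$ and $1$, which is exactly the conclusion being proved. This is resolved by the induction on $t$, with the deterministic initialization phase ($\pi_j\equiv 1/2$) supplying an unconditional base case; the remaining work — checking that the three branches of \eqref{eq:T0} simultaneously deliver the threshold for $N_{z,t}$, the accuracy $|\hat\sigma_{z,t}-\sigma_z|\le\sigma_z/2$, and the final ratio bound $\le\underline\pi/2$ — is a routine but bookkeeping-heavy matching of explicit constants.
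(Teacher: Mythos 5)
Your reduction to the two-sided bound $|\widetilde{\pi}_{t+1}-\pi^*|\le\underline{\pi}/2$ is where the proof breaks: that bound is strictly stronger than the lemma's conclusion, and it is not deliverable by the constants in \eqref{eq:T0}. Check what the good event actually gives at $t=T_0$ (the worst stage, which the lemma must cover). The first branch of \eqref{eq:T0}, through the $\sigma_0^2\wedge\sigma_1^2$ terms, is calibrated to give $|\hat{\sigma}_{z,t}-\sigma_z|\le\sigma_z/2$, i.e.\ relative error $1/2$. Plugging relative errors $1/2$ into your identity gives numerator $\le\sigma_0\sigma_1$ and denominator $\ge(\sigma_0+\sigma_1)^2/2$, hence only $|\widetilde{\pi}_{t+1}-\pi^*|\le 2\underline{\pi}(1-\underline{\pi})$, which exceeds $\underline{\pi}/2$ for every $\underline{\pi}\le1/2$. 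The third factor $2^{-\beta/(2\beta+d)-2}\sigma_0\sigma_1\underline{\pi}^{\beta/(2\beta+d)}$ that you invoke does not rescue this: writing $p=\beta/(2\beta+d)$, it yields the absolute accuracy $|\hat{\sigma}_{1,t}-\sigma_1|\le 2^{-p-3}\sigma_0\underline{\pi}^{p}$, whose ratio to $\sigma_1$ is $2^{-p-3}(1-\underline{\pi})\underline{\pi}^{p-1}$ and blows up as $\underline{\pi}\to0$. In the paper that factor is calibrated for the much weaker later requirement $|\pi_{t+1}-\pi^*|\le(\sigma_0\vee\sigma_1)/\{2(\sigma_0+\sigma_1)\}$ needed to apply Lemma~\ref{lem:pi_squa} in the regret decomposition (via Lemma~\ref{lem:pi_rate}) — a threshold of constant order, not of order $\underline{\pi}$. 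To close your route one needs relative accuracy roughly $1/8$ on each $\hat{\sigma}_{z,t}$, i.e.\ a $T_0$ inflated by a factor of about $4^{(2\beta+d)/\beta}$; so the step you describe as routine bookkeeping in fact fails with the paper's $T_0$.

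The paper avoids this entirely by never asking for a deviation bound around $\pi^*$: assuming WLOG $\sigma_1\le\sigma_0$, it sandwiches $\widetilde{\pi}_{t+1}\ge(\sigma_1-\Psi_{\sigma,1})/(\sigma_0+\Psi_{\sigma,0}+\sigma_1-\Psi_{\sigma,1})$ and uses only $\Psi_{\sigma,1}\le\sigma_1/2$ (numerator $\ge\sigma_1/2$) together with $\Psi_{\sigma,0}\le\Psi_{\sigma,1}$ (denominator $\le\sigma_0+\sigma_1$) to get $\widetilde{\pi}_{t+1}\ge\pi^*/2$, with a symmetric argument for the upper bound; constant relative accuracy is exactly what \eqref{eq:T0} provides, and that is all this one-sided argument needs. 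A second, more minor point: your induction giving $N_{z,t}\ge t\underline{\pi}/4$ is unnecessary here and counterproductive — near $t=T_0$ it is weaker than the deterministic count $N_{z,t}\ge\lfloor T_0/2\rfloor$ coming from the alternating initialization (which is all the paper uses in this lemma), and it costs you an additional factor $(\underline{\pi}/2)^{p}$ of accuracy, making the quantitative gap above even worse. The linear-in-$t$ bound $N_{z,t}\ge t\underline{\pi}/4$, together with $\cE_1(\delta)$ and the second branch of \eqref{eq:T0}, is the engine of the \emph{next} lemma, Lemma~\ref{lem:pi_rate}; there the circularity you worry about is resolved simply by citing the present lemma, not by induction.
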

\begin{proof}
    Under the event $\cE,$ we have $|\hat{\sigma}_{1,t}-\sigma_1|\le\Psi_{\sigma,1}(t,\delta)$ and $|\hat{\sigma}_{0,t}-\sigma_0|\ge\Psi_{\sigma,0}(t,\delta)$ by \eqref{eq:good_sigma}. Then for $t\ge T_0,$ we have
    \begin{equation}
        \label{eq:range_pi}
        \widetilde{\pi}_{t+1}\in\left[\frac{\sigma_1-\Psi_{\sigma,1}(t,\delta)}{\sigma_0+\Psi_{\sigma,0}(t,\delta)+\sigma_1-\Psi_{\sigma,1}(t,\delta)},\frac{\sigma_1+\Psi_{\sigma,1}(t,\delta)}{\sigma_0-\Psi_{\sigma,0}(t,\delta)+\sigma_1+\Psi_{\sigma,1}(t,\delta)}\right].
    \end{equation}
 Without loss of generality, we assume $\pi^*\le 1/2$, which also implies that $\underline{\pi}=\pi^*,\sigma_1\le\sigma_0$ and $\Psi_{\sigma,1}(t,\delta)\ge\Psi_{\sigma,0}(t,\delta)$. By the definition of $T_0$ in \eqref{eq:T0}, it follows that, for $t\ge T_0,$
    \begin{equation}
        \label{eq:l_delta_1}
        \Psi_{\sigma,1}(t,\delta)\le\frac{C_{\delta,t,4}}{\sigma_1(T_0/2)^{\beta/(2\beta+d)}}\le \frac{\sigma_1}{2},
    \end{equation}
    which implies that
    $$
    \frac{\sigma_1-\Psi_{\sigma,1}(t,\delta)}{\sigma_0+\Psi_{\sigma,0}(t,\delta)+\sigma_1-\Psi_{\sigma,1}(t,\delta)}\ge\frac{\sigma_1-\sigma_1/2}{\sigma_0+\sigma_1}\ge\frac{\pi^*}{2}=\frac{\underline{\pi}}{2}.
    $$
    Similar arguments can show that $\widetilde{\pi}_{t+1}\le1-\underline{\pi}/2$. By the definition of CLIP in \eqref{eq:clip}, we have $\underline{\pi}/2\le\pi_{t+1}\le1-\underline{\pi}/2$ also holds. The proof is completed. 
\end{proof}

\begin{lemma}
    \label{lem:pi_rate}
    Suppose that the conditions of Theorem~\ref{thm:regret} hold. For $T_0\le t<T,$
    $$
    |\pi_{t+1}-\pi^*|\le\frac{2^{2\beta/(2\beta+d)+2}}{(\sigma_0+\sigma_1)^2\underline{\pi}^{\beta/(2\beta+d)+1}}C_{\delta,t,4}t^{-\beta/(2\beta+d)}=\Theta(\underline{\pi}^{-\beta/(2\beta+d)-1}C_{\delta,t,4}t^{-\beta/(2\beta+d)}).
    $$
\end{lemma}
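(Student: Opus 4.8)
The plan is to reduce the claim to a deviation bound for the \emph{unclipped} allocation $\widetilde{\pi}_{t+1}=\hat{\sigma}_{1,t}/(\hat{\sigma}_{1,t}+\hat{\sigma}_{0,t})$, exploiting that the clipping map is a nonexpansive projection. First I would record that, by the third term $(2\underline{\pi})^{-1/\eta}$ in the definition of $T_0$ in \eqref{eq:T0}, one has $\zeta_{t+1}=(1/2)(t+1)^{-\eta}\le\underline{\pi}$ for every $t\ge T_0$. Assuming without loss of generality $\pi^*\le 1/2$ (so that $\underline{\pi}=\pi^*$), this places $\pi^*\in[\zeta_{t+1},1-\zeta_{t+1}]$, and since ${\rm CLIP}(\cdot,\zeta_{t+1},1-\zeta_{t+1})$ is the Euclidean projection onto that interval, it is nonexpansive toward any point of the interval; hence $|\pi_{t+1}-\pi^*|\le|\widetilde{\pi}_{t+1}-\pi^*|$, and it suffices to bound the right-hand side.

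Second, I would expand the difference using the elementary identity $\frac{a}{a+b}-\frac{c}{c+d}=\frac{ad-cb}{(a+b)(c+d)}$ with $a=\hat{\sigma}_{1,t},\,b=\hat{\sigma}_{0,t},\,c=\sigma_1,\,d=\sigma_0$, which yields the clean cancellation
\[
\widetilde{\pi}_{t+1}-\pi^*=\frac{\sigma_0(\hat{\sigma}_{1,t}-\sigma_1)-\sigma_1(\hat{\sigma}_{0,t}-\sigma_0)}{(\hat{\sigma}_{1,t}+\hat{\sigma}_{0,t})(\sigma_0+\sigma_1)}.
\]
On the good event $\cE$, the numerator is controlled by the concentration bounds \eqref{eq:good_sigma}, giving a numerator modulus at most $\sigma_0\Psi_{\sigma,1}(t,\delta)+\sigma_1\Psi_{\sigma,0}(t,\delta)$, while the denominator is bounded below using $\hat{\sigma}_{z,t}\ge\sigma_z-\Psi_{\sigma,z}(t,\delta)\ge\sigma_z/2$ (the same inequality \eqref{eq:l_delta_1} already used in the proof of Lemma~\ref{lem:low_bound_pi}), so that $\hat{\sigma}_{1,t}+\hat{\sigma}_{0,t}\ge(\sigma_0+\sigma_1)/2$ and the whole denominator is at least $(\sigma_0+\sigma_1)^2/2$.

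Third --- and this is where the real work lies --- I would convert $\Psi_{\sigma,z}(t,\delta)=C_{\delta,t,4}/(\sigma_z N_{z,t}^{\beta/(2\beta+d)})$ into a bound in $t$ by lower bounding the arm counts. On $\cE_1(\delta)$ in \eqref{eq:good_pi} we have $N_{1,t}\ge\sum_{j=1}^{t}\pi_j-C_{\delta,t,0}\sqrt{t}$, and Lemma~\ref{lem:low_bound_pi} gives $\pi_j\ge\underline{\pi}/2$ and $1-\pi_j\ge\underline{\pi}/2$ for $j>T_0$, whence $\sum_{j=1}^{t}\pi_j\ge\underline{\pi} t/2$ and likewise for the control arm. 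The second term $16C_{\delta,T,0}^2/\underline{\pi}^2$ in \eqref{eq:T0} is exactly what guarantees $C_{\delta,t,0}\sqrt{t}\le\underline{\pi} t/4$ for $t\ge T_0$ (using that $C_{\delta,t,0}$ is nondecreasing in $t$ and $C_{\delta,t,0}\le C_{\delta,T,0}$), so both $N_{1,t}$ and $N_{0,t}$ are at least $\underline{\pi} t/4$, yielding $N_{z,t}^{-\beta/(2\beta+d)}\le 4^{\beta/(2\beta+d)}\underline{\pi}^{-\beta/(2\beta+d)}t^{-\beta/(2\beta+d)}$.

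Finally I would assemble the pieces. Substituting these bounds and using $\sigma_0/\sigma_1+\sigma_1/\sigma_0\le 2\sigma_0/\sigma_1=2(1-\underline{\pi})/\underline{\pi}\le 2/\underline{\pi}$ (valid since $\pi^*=\underline{\pi}$ gives $\sigma_1=\underline{\pi}(\sigma_0+\sigma_1)$, $\sigma_0=(1-\underline{\pi})(\sigma_0+\sigma_1)$, and $\sigma_1\le\sigma_0$), the numerator bound becomes $2\cdot 4^{\beta/(2\beta+d)}C_{\delta,t,4}\underline{\pi}^{-\beta/(2\beta+d)-1}t^{-\beta/(2\beta+d)}$; dividing by the denominator $(\sigma_0+\sigma_1)^2/2$ and collapsing $4\cdot 4^{\beta/(2\beta+d)}=2^{2\beta/(2\beta+d)+2}$ reproduces the stated constant and rate exactly, and the $\Theta(\cdot)$ claim follows since the matching lower bound uses the same chain of inequalities with reversed signs. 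The main obstacle I anticipate is the bookkeeping in the third step: correctly chaining $\cE_1(\delta)$, Lemma~\ref{lem:low_bound_pi}, and the three defining terms of $T_0$ so that the lower-order fluctuation $C_{\delta,t,0}\sqrt{t}$ is provably absorbed for all $t\ge T_0$, while tracking the $\sigma_z$-ratios without losing the precise power of $\underline{\pi}$.
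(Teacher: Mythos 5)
Your proposal is correct and takes essentially the same route as the paper's proof: both reduce the claim to the unclipped allocation $\widetilde{\pi}_{t+1}$ (the paper brackets it in the interval \eqref{eq:range_pi} rather than invoking nonexpansiveness of the clipping map), control the numerator $\sigma_0\Psi_{\sigma,1}(t,\delta)+\sigma_1\Psi_{\sigma,0}(t,\delta)$ and the denominator on the good event exactly as you do, lower-bound $N_{z,t}\ge\underline{\pi}t/4$ using the same two terms in the definition of $T_0$, and assemble the identical constant $2^{2\beta/(2\beta+d)+2}$. One small caveat: the $\Theta(\cdot)$ in the lemma qualifies the explicit deterministic bound itself (which is trivially of that order), not $|\pi_{t+1}-\pi^*|$, so your closing remark about a matching lower bound via reversed inequalities is unnecessary --- and read as a statement about the deviation itself it would be false.
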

\begin{proof}
    By \eqref{eq:good_pi} and Lemma~\ref{lem:low_bound_pi}, we have $N_{1,t}\ge\sum_{j=1}^{t}\pi_j-C_{\delta,t,0}\sqrt{t}\ge\underline{\pi}t/2-C_{\delta,t,0}\sqrt{t}$ and also $N_{1,t}\ge\underline{\pi}t/2-C_{\delta,t,0}\sqrt{t}$. Since $T_0>16C_{\delta,T,0}^2/\underline{\pi}^2,$ it can be shown that $N_{1,t}\ge\underline{\pi}t/4$ and $N_{0,t}\ge\underline{\pi}t/4$. Moreover, the clipping threshold $\zeta_{t+1}=(1/2)\cdot (t+1)^{-\eta}<\underline{\pi}$ for $t\ge T_0$ as $T_0\ge (2\underline{\pi})^{-1/\eta}$. Combining above, it follows that
    $$
    \begin{aligned}
        \pi^*-\pi_{t+1}\le&\pi^*-\widetilde{\pi}_{t+1} \le\frac{\sigma_1\sigma_0^{-1}C_{\delta,t,4}N_{0,t}^{-\beta/(2\beta+d)}+\sigma_0\sigma_1^{-1}C_{\delta,t,4}N_{1,t}^{-\beta/(2\beta+d)}}{(\sigma_0+\sigma_1)(\sigma_0+\sigma_0^{-1}C_{\delta,t,4}N_{0,t}^{-\beta/(2\beta+d)}+\sigma_1-\sigma_1^{-1}C_{\delta,t,4}N_{1,t}^{-\beta/(2\beta+d)})}\\
        \le&\frac{(1-\underline{\pi})/\underline{\pi}+\underline{\pi}/(1-\underline{\pi})}{(\sigma_0+\sigma_1)(\sigma_0+\sigma_1/2)}C_{\delta,t,4}\left(\frac{\underline{\pi}t}{4}\right)^{-\beta/(2\beta+d)}
        \le\frac{4C_{\delta,t,4}}{(\sigma_0+\sigma_1)^2\underline{\pi}}\left(\frac{\underline{\pi}t}{4}\right)^{-\beta/(2\beta+d)}\\
        =&\frac{2^{2\beta/(2\beta+d)+2}}{(\sigma_0+\sigma_1)^2\underline{\pi}^{\beta/(2\beta+d)+1}}C_{\delta,t,4}t^{-\beta/(2\beta+d)}=\Theta(\underline{\pi}^{-\beta/(2\beta+d)-1}C_{\delta,t,4}t^{-\beta/(2\beta+d)}),
    \end{aligned}
    $$
    where the second inequality follows from the lower bound of $\pi_{t+1}$ in \eqref{eq:range_pi}, and the third inequality follows from \eqref{eq:l_delta_1}. Similarly, by the upper bound of $\pi_{t+1}$ in \eqref{eq:range_pi}, we have
    $$
    \begin{aligned}
        \pi_{t+1}-\pi^*\le&\frac{\sigma_1\sigma_0^{-1}C_{\delta,t,4}N_{0,t}^{-\beta/(2\beta+d)}+\sigma_0\sigma_1^{-1}C_{\delta,t,4}N_{1,t}^{-\beta/(2\beta+d)}}{(\sigma_0+\sigma_1)(\sigma_0-\sigma_0^{-1}C_{\delta,t,4}N_{0,t}^{-\beta/(2\beta+d)}+\sigma_1+\sigma_1^{-1}C_{\delta,t,4}N_{1,t}^{-\beta/(2\beta+d)})}\\
        \le&\frac{2^{2\beta/(2\beta+d)+2}}{(\sigma_0+\sigma_1)^2\underline{\pi}^{\beta/(2\beta+d)+1}}C_{\delta,t,4}t^{-\beta/(2\beta+d)}=\Theta(\underline{\pi}^{-\beta/(2\beta+d)-1}C_{\delta,t,4}t^{-\beta/(2\beta+d)}).
    \end{aligned}
    $$
    The proof is completed.
\end{proof}

\begin{lemma}
    \label{lem:pi_squa}
    Let $\epsilon=|\pi-\pi^*|$ for some $0\le\epsilon\le(\sigma_0\vee\sigma_1)/\{2(\sigma_0+\sigma_1)\}.$ Then, 
    $$
    \frac{\sigma_1^2}{\pi}+\frac{\sigma_0^2}{1-\pi}-\frac{\sigma_1^2}{\pi^*}-\frac{\sigma_0^2}{1-\pi^*}\le\frac{2(\sigma_0+\sigma_1)^2}{\underline{\pi}(1-\underline{\pi})}\epsilon^2=\Theta(\underline{\pi}^{-1}\epsilon^2).
    $$
\end{lemma}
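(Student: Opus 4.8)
The plan is to reduce the left-hand side to an exact closed form and then bound a single quadratic factor. Write $g(\pi):=\sigma_1^2/\pi+\sigma_0^2/(1-\pi)$, and recall that $\pi^*=\sigma_1/(\sigma_0+\sigma_1)$, so that $\sigma_1^2/\pi^*=\sigma_1(\sigma_0+\sigma_1)$ and $\sigma_0^2/(1-\pi^*)=\sigma_0(\sigma_0+\sigma_1)$. First I would combine $\sigma_1^2(1/\pi-1/\pi^*)$ with $\sigma_0^2\{1/(1-\pi)-1/(1-\pi^*)\}$ and use these two identities to factor out one power of $(\pi-\pi^*)$, obtaining $g(\pi)-g(\pi^*)=(\pi-\pi^*)(\sigma_0+\sigma_1)\{\sigma_0/(1-\pi)-\sigma_1/\pi\}$. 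Then, since $\sigma_0\pi-\sigma_1(1-\pi)=(\sigma_0+\sigma_1)\pi-\sigma_1=(\sigma_0+\sigma_1)(\pi-\pi^*)$, the bracket equals $(\sigma_0+\sigma_1)(\pi-\pi^*)/\{\pi(1-\pi)\}$, which produces the clean identity
\[
g(\pi)-g(\pi^*)=\frac{(\sigma_0+\sigma_1)^2(\pi-\pi^*)^2}{\pi(1-\pi)}=\frac{(\sigma_0+\sigma_1)^2\epsilon^2}{\pi(1-\pi)} .
\]
This double factorization of $(\pi-\pi^*)$ is the conceptual core; the remaining work is purely a lower bound on $\pi(1-\pi)$. (A second-derivative/Taylor bound $\tfrac12 g''(\xi)\epsilon^2$ would also work but produces a worse constant than the required factor $2$, so the exact identity is preferable.)

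Given the identity, the asserted inequality is equivalent to $\pi(1-\pi)\ge\tfrac12\,\underline{\pi}(1-\underline{\pi})$, where $\underline{\pi}(1-\underline{\pi})=\pi^*(1-\pi^*)=\sigma_0\sigma_1/(\sigma_0+\sigma_1)^2$. I would prove this by symmetry, taking without loss of generality $\sigma_1\le\sigma_0$ so that $\underline{\pi}=\pi^*\le 1/2$; the complementary case follows from the exchange $(\sigma_0,\pi)\leftrightarrow(\sigma_1,1-\pi)$, under which both sides are invariant. Since $\pi\mapsto\pi(1-\pi)$ is concave and symmetric about $1/2$, on the feasible set $[\pi^*-\epsilon,\pi^*+\epsilon]$ it is minimized at the endpoint farther from $1/2$, which for $\pi^*\le1/2$ is $\pi^*-\epsilon$. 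It therefore suffices to verify $(\pi^*-\epsilon)(1-\pi^*+\epsilon)\ge\tfrac12\pi^*(1-\pi^*)$ under the smallness constraint on $\epsilon$, a single scalar inequality that I would expand and check directly; in the clipped regime $\pi\in[\underline{\pi}/2,1-\underline{\pi}/2]$ of Lemma~\ref{lem:low_bound_pi} the worst endpoint is $\underline{\pi}/2$ and the inequality collapses to the transparent $1-\underline{\pi}/2\ge1-\underline{\pi}$.

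Finally, for the order statement, since $1-\underline{\pi}\in[1/2,1)$ one has $\underline{\pi}(1-\underline{\pi})\asymp\underline{\pi}$ with $\sigma_0,\sigma_1$ treated as fixed, so $2(\sigma_0+\sigma_1)^2\epsilon^2/\{\underline{\pi}(1-\underline{\pi})\}=\Theta(\underline{\pi}^{-1}\epsilon^2)$, giving the stated rate. The main obstacle is precisely this lower bound on $\pi(1-\pi)$: the algebraic identity is routine once the telescoping is noticed, but translating the smallness of $\epsilon$ into a confinement of $\pi$ away from $0$ and $1$, and then pinning down the worst-case endpoint, is where the constant $2$ is genuinely consumed. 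I would be most careful at the lower endpoint $\pi=\pi^*-\epsilon$, verifying that the admissible range for $\epsilon$ (and, where needed, the clipping bound of Lemma~\ref{lem:low_bound_pi}) is exactly what keeps $\pi(1-\pi)$ above half of its value at $\pi^*$.
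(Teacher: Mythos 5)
Your route is essentially the paper's route: the paper too assumes WLOG $\pi=\pi^*-\epsilon$, telescopes the difference into the exact form $\epsilon^2(\sigma_0+\sigma_1)^4/[\{\sigma_0+(\sigma_0+\sigma_1)\epsilon\}\{\sigma_1-(\sigma_0+\sigma_1)\epsilon\}]$, which is identical to your $(\sigma_0+\sigma_1)^2\epsilon^2/\{\pi(1-\pi)\}$, and then finishes by lower-bounding the denominator by $\sigma_0\sigma_1/2=\tfrac12(\sigma_0+\sigma_1)^2\underline{\pi}(1-\underline{\pi})$. Your double factorization is a cleaner derivation of the same identity, and your reduction of the claim to $\pi(1-\pi)\ge\tfrac12\,\underline{\pi}(1-\underline{\pi})$ is correct.

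The gap is exactly the step you postponed: ``verify $(\pi^*-\epsilon)(1-\pi^*+\epsilon)\ge\tfrac12\pi^*(1-\pi^*)$\ldots a single scalar inequality that I would expand and check directly.'' Under the lemma's stated constraint $\epsilon\le(\sigma_0\vee\sigma_1)/\{2(\sigma_0+\sigma_1)\}$, that check fails in your WLOG orientation $\sigma_1\le\sigma_0$, where $\underline{\pi}=\pi^*$ but the constraint reads $\epsilon\le(1-\pi^*)/2$, so that $\pi=\pi^*-\epsilon$ can drop to $(3\pi^*-1)/2$, i.e.\ arbitrarily close to $0$. Concretely, take $\sigma_0=1$, $\sigma_1=0.6$, so $\pi^*=0.375$ and $\epsilon=0.3125$ is admissible; then $(\pi^*-\epsilon)(1-\pi^*+\epsilon)=0.0625\times0.9375\approx0.059<0.117\approx\tfrac12\pi^*(1-\pi^*)$, and the lemma's displayed inequality itself is violated (left side $\approx4.27$, right side $\approx2.13$). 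Two remarks for fairness. First, the paper's own proof hides the same defect: its last inequality asserts $\{\sigma_0+(\sigma_0+\sigma_1)\epsilon\}\{\sigma_1-(\sigma_0+\sigma_1)\epsilon\}\ge\sigma_0\sigma_1/2$, which needs $(\sigma_0+\sigma_1)\epsilon\le\sigma_1/2$, i.e.\ the constraint with $\sigma_0\wedge\sigma_1$ (equivalently $\epsilon\le\underline{\pi}/2$) rather than $\sigma_0\vee\sigma_1$; with $\wedge$ both the scalar check and the lemma go through. Second, your fallback via Lemma~\ref{lem:low_bound_pi} is the sound repair: on $[\underline{\pi}/2,\,1-\underline{\pi}/2]$ the concave map $\pi\mapsto\pi(1-\pi)$ is minimized at the endpoints, giving $\pi(1-\pi)\ge(\underline{\pi}/2)(1-\underline{\pi}/2)\ge\tfrac12\underline{\pi}(1-\underline{\pi})$, and that clipped range is precisely what is available at the point in the regret proof where this lemma is invoked. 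So your plan becomes a complete proof only after replacing the stated $\epsilon$-constraint by the $\wedge$ version or by the clipping hypothesis; as written, the deferred ``direct check'' is the one step that cannot be carried out.
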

\begin{proof}
    Without loss of generality, we assume that $\pi\le\pi^*$ and thus $\pi=\pi^*-\epsilon$. Then, it follows that
    $$
    \begin{aligned}
        &\frac{\sigma_1^2}{\pi}+\frac{\sigma_0^2}{1-\pi}-\frac{\sigma_1^2}{\pi^*}-\frac{\sigma_0^2}{1-\pi^*}\\
        =&\frac{\sigma_1^2}{\pi^*-\epsilon}+\frac{\sigma_0^2}{1-\pi^*+\epsilon}-\frac{\sigma_1^2}{\pi^*}-\frac{\sigma_0^2}{1-\pi^*}
        =\epsilon\left\{\frac{\sigma_1^2}{\pi^*(\pi^*-\epsilon)}-\frac{\sigma_0^2}{(1-\pi^*)(1-\pi^*+\epsilon)}\right\}\\
        =&\epsilon(\sigma_0+\sigma_1)^2\left\{\frac{\sigma_1^2}{\sigma_1^2-\sigma_1(\sigma_0+\sigma_1)\epsilon}-\frac{\sigma_0^2}{\sigma_0^2+\sigma_0(\sigma_0+\sigma_1)\epsilon}\right\}\\
        =&\epsilon^2(\sigma_0+\sigma_1)^3\left\{\frac{1}{\sigma_0+(\sigma_0+\sigma_1)\epsilon}+\frac{1}{\sigma_1-(\sigma_0+\sigma_1)\epsilon}\right\}\\
        =&\epsilon^2(\sigma_0+\sigma_1)^4\frac{1}{\{\sigma_0+(\sigma_0+\sigma_1)\epsilon\}\{\sigma_1-(\sigma_0+\sigma_1)\epsilon\}}\\
        \le&\epsilon^2(\sigma_0+\sigma_1)^2\frac{2(\sigma_0+\sigma_1)^2}{\sigma_0\sigma_1}=\frac{2(\sigma_0+\sigma_1)^2}{\underline{\pi}(1-\underline{\pi})}\epsilon^2
        =\Theta(\underline{\pi}\epsilon^2),
    \end{aligned}
    $$
    where the last line both follows from the fact $\underline{\pi}(1-\underline{\pi})=\sigma_0\sigma_1/(\sigma_0+\sigma_1)^2$.
\end{proof}

Now we are ready to drive a high-probability regret bound. By Theorem~\ref{thm:variance}, it can be seen that the regret has the decomposition:
{\small \begin{equation}
    \label{eq:regret_decomp}
    \begin{aligned}
        \cR_T=&\cR_{T_0}\\
        &+\sum_{t=T_0+1}^{T}\frac{\sigma_1^2}{\pi_t}+\frac{\sigma_0^2}{1-\pi_t}-\frac{\sigma_1^2}{\pi^*}-\frac{\sigma_0^2}{1-\pi^*}\\
        &+\sum_{t=T_0+1}^{T}\Big[2\eE_{(\bsX_t,\bsS_t)}\big[\{\mu_1(\bX_t,\bS_t)-\hat{\mu}_{1,t-1}(\bX_t,\bS_t)\}\{\mu_0(\bX_t,\bS_t)-\hat{\mu}_{0,t-1}(\bX_t,\bS_t)\}|\cF_{t-1}\big]\\
        &+\frac{1-\pi_t}{\pi_t}\eE_{(\bsX_t,\bsS_t)}\big[\{\mu_1(\bX_t,\bS_t)-\hat{\mu}_{1,t-1}(\bX_t,\bS_t)\}^2|\cF_{t-1}\big]\\
        &+\frac{\pi_t}{1-\pi_t}\eE_{(\bsX_t,\bsS_t)}\big[\{\mu_0(\bX_t,\bS_t)-\hat{\mu}_{0,t-1}(\bX_t,\bS_t)\}^2|\cF_{t-1}\big]\Big]\\
        =:&\cR_{T_0}+\cR_T'+\cR_T'',
    \end{aligned}
\end{equation}
}where $\cR_{T_0}=\sum_{t=1}^{T_0}\ell(a_t)-\sum_{t=1}^{T_0}\ell(a^*),\cR_T'=\sum_{t=T_0+1}^{T}\ell_1(\pi_t)-\sum_{t=T_0+1}^{T}\ell_1(\pi^*),$ and $\cR_T''=\sum_{t=T_0+1}^{T}\ell_2(a_t)$, where $a_t$ and $a^*$ are defined in Section \ref{subsec:regret}.  We derive the upper bound for $\cR_T$ in the following three steps. \\
(i) \textit{First}, for the cumulative loss during the initialization phase $t\in[T_0]$, we have $\pi_t=1/2,\hat{\mu}_{1,t-1}(\cdot,\cdot)=0$, and $\hat{\mu}_{0,t-1}(\cdot,\cdot)=0$. Then, it follows from $\pi^* = \sigma_1/(\sigma_1+\sigma_0)$ that
$$
\begin{aligned}
    \cR_{T_0}\le&T_0\{2\sigma_1^2+2\sigma_0^2-(\sigma_1+\sigma_0)^2\}\\
    &+2\sum_{t=1}^{T_0}\sum_{z\in\{0,1\}}\eE_{(\bsX_t,\bsS_t)}\big[\{\mu_z(\bX_t,\bS_t)-\hat{\mu}_{z,t-1}(\bX_t,\bS_t)\}^2|\cF_{t-1}\big]\\
    \le&T_0(\sigma_1-\sigma_0)^2+2\sum_{t=1}^{T_0}\sum_{z\in\{0,1\}}\eE_{\bsX_t}|m_{Y,z}(\bX_t)|^2+\gamma_z^2\eE\{\varepsilon_{S,t}^2(z)|\bX_t\}\\
    \le&T_0(\sigma_1-\sigma_0)^2+2T_0(\gamma_0^2\sigma_{S,0}^2+\gamma_1^2\sigma_{S,1}^2+M_{Y,0}^2+M_{Y,1}^2)=:\widetilde{C}_1T_0,
\end{aligned}
$$
where $\widetilde{C}_1=(\sigma_1-\sigma_0)^2+2(\gamma_0^2\sigma_{S,0}^2+\gamma_1^2\sigma_{S,1}^2+M_{Y,0}^2+M_{Y,1}^2).$ \\
(ii) \textit{Second}, for the cumulative loss from adaptive treatment allocation during the concentration phase, we notice that $(\sigma_0+\sigma_1)(\sigma_0\vee\sigma_1)\underline{\pi}=\sigma_0\sigma_1$. Then by Lemma~\ref{lem:pi_rate} and the definition of $T_0$ in \eqref{eq:T0}, for $t\ge T_0$, we have
$$
|\pi_{t+1}-\pi^*|\le\frac{2^{2\beta/(2\beta+d)+2}}{(\sigma_0+\sigma_1)^2\underline{\pi}^{\beta/(2\beta+d)+1}}C_{\delta,t,4}t^{-\beta/(2\beta+d)}\le \frac{\sigma_0\vee\sigma_1}{2(\sigma_0+\sigma_1)},
$$
which, together with Lemma \ref{lem:pi_squa}, implies that
$$
\begin{aligned}
    \cR_T'=&\sum_{t=T_0+1}^{T}\frac{\sigma_1^2}{\pi_t}+\frac{\sigma_0^2}{1-\pi_t}-\frac{\sigma_1^2}{\pi^*}-\frac{\sigma_0^2}{1-\pi^*}
    \le\frac{2^{4\beta/(2\beta+d)+5}}{(\sigma_0+\sigma_1)^2\underline{\pi}^{2\beta/(2\beta+d)+3}}\sum_{t=T_0+1}^{T}C_{\delta,t,4}^2t^{-2\beta/(2\beta+d)}\\
    \le&\widetilde{C}_{\delta,T,2}\underline{\pi}^{-2\beta/(2\beta+d)-3}T^{d/(2\beta+d)},
\end{aligned}
$$
where $\widetilde{C}_{\delta,T,2}=2^{4\beta/(2\beta+d)+5}d^{-1}(2\beta+d)(\sigma_0+\sigma_1)^{-2}C_{\delta,T,4}^2$.\\
(iii) \textit{Third}, for the cumulative estimation loss during the concentration phase, by Lemma~\ref{lem:low_bound_pi}, we have $(1-\pi_t)/\pi_t\le2/\underline{\pi}-1$ and $\pi_t/(1-\pi_t)\le2/\underline{\pi}-1$, which, together with the Cauchy-Schwarz inequality, implies
\begin{equation}
    \label{eq:R_T_prime}
    \cR_T''\le\frac{2}{\underline{\pi}}\sum_{t=T_0+1}^{T}\sum_{z\in\{0,1\}}\eE_{(\bsX_t,\bsS_t)}\big[\{\mu_z(\bX_t,\bS_t)-\hat{\mu}_{z,t-1}(\bX_t,\bS_t)\}^2|\cF_{t-1}\big].
\end{equation}
Then, similar to \eqref{eq:hat_mu_deco_asy}, it follows that
\begin{equation}
    \label{eq:hat_mu_deco}
    \begin{aligned}
        &\eE_{(\bsX_t,\bsS_t)}\big[\{\mu_1(\bX_t,\bS_t)-\hat{\mu}_{1,t-1}(\bX_t,\bS_t)\}^2|\cF_{t-1}\big]\\
        \le&4\gamma_1^2\eE_{\bsX_t}\big[\{\hat{m}_{S,1,t-1}(\bX_t)-m_{S,1}(\bX_t)\}^2|\cF_{t-1}\big]\\
        &+4(\hat{\gamma}_{1,t-1}-\gamma_1)^2\eE_{\bsX_t}\big[\{\hat{m}_{S,1,t-1}(\bX_t)-m_{S,1}(\bX_t)\}^2|\cF_{t-1}\big]\\
        &+4\eE_{\bsX_t}\big[\{\hat{m}_{Y,1,t-1}(\bX_t)-m_{Y,1}(\bX_t)\}^2|\cF_{t-1}\big]+4\sigma_{S,1}^2(\hat{\gamma}_{1,t-1}-\gamma_1)^2,
    \end{aligned}
\end{equation}
and $\eE_{(\bsX_t,\bsS_t)}\big[\{\mu_0(\bX_t,\bS_t)-\hat{\mu}_{0,t-1}(\bX_t,\bS_t)\}^2|\cF_{t-1}\big]$ can be decomposed analogously. According to Lemma~\ref{lem:pi_rate}, $N_{1,t}\ge\underline{\pi}t/4$ and $N_{0,t}\ge\underline{\pi}t/4$ under the event $\cE$. Then by {\bf Step 2} of Section~\ref{supsubsubsec:event}, it follows that:
$$
\begin{aligned}
        (\hat{\gamma}_{1,t-1}-\gamma_1)^2\le& \frac{C_{\delta,t,2}^2}{N_{1,t}^{2\beta/(2\beta+d)}}\le C_{\delta,t,2}^2(4/\underline{\pi})^{2\beta/(2\beta+d)}t^{-2\beta/(2\beta+d)},\\
        (\hat{\gamma}_{0,t-1}-\gamma_1)^2\le& \frac{C_{\delta,t,2}^2}{N_{0,t}^{2\beta/(2\beta+d)}}\le C_{\delta,t,2}^2(4/\underline{\pi})^{2\beta/(2\beta+d)}t^{-2\beta/(2\beta+d)}.
\end{aligned}
$$
By the definitions of $T_0$ and $C_{\delta,t,2}$, we can also show that $(\hat{\gamma}_{z,t-1}-\gamma_z)^2\le2^{-6}\sigma_0^2\sigma_1^2$ for $z\in\{0,1\}$ and $t>T_0$. Denote $\widetilde{\ell}_{S,z,t}=\eE_{\bsX_t}[\{\hat{m}_{S,z,t-1}(\bX_t)-m_{S,z}(\bX_t)\}^2|\cF_{t-1}]$ and $\widetilde{\ell}_{Y,z,t}=\eE_{\bsX_t}[\{\hat{m}_{Y,z,t-1}(\bX_t)-m_{Y,z}(\bX_t)\}^2|\cF_{t-1}]$ for $z\in\{0,1\}$ and $t>T_0$. Then, combine the above results, by basic algebraic calculations, we have
$$
\begin{aligned}
    \cR_T''\le&2\underline{\pi}^{-1}\cdot 2\cdot 4\sigma_{S,\max}^2C_{\delta,T,2}^2(4/\underline{\pi})^{2\beta/(2\beta+d)}\sum_{t=1}^{T}t^{-2\beta/(2\beta+d)}\\
    &+2\underline{\pi}^{-1}\cdot 4(\gamma_1^2+2^{-6}\sigma_0^2\sigma_1^2)\sum_{t=T_0+1}^{T}\widetilde{\ell}_{S,1,t}+2\underline{\pi}^{-1}\cdot 4(\gamma_0^2+2^{-6}\sigma_0^2\sigma_1^2)\sum_{t=T_0+1}^{T}\widetilde{\ell}_{S,0,t}\\
    &+2\underline{\pi}^{-1}\cdot 4\sum_{t=T_0+1}^{T}\widetilde{\ell}_{Y,1,t}+2\underline{\pi}^{-1}\cdot 4\sum_{t=T_0+1}^{T}\widetilde{\ell}_{Y,0,t}\\
    \le&\widetilde{C}_{\delta,T,3}\underline{\pi}^{-2\beta/(2\beta+d)-1}T^{d/(2\beta+d)}\\
    &+8\underline{\pi}^{-1}(\gamma_{\max}^2+2^{-6}\sigma_0^2\sigma_1^2)\sum_{z\in\{0,1\}}\sum_{t=T_0+1}^{T}\widetilde{\ell}_{S,z,t}
    +8\underline{\pi}^{-1}\sum_{z\in\{0,1\}}\sum_{t=T_0+1}^{T}\widetilde{\ell}_{Y,z,t}.
\end{aligned}
$$
where $\widetilde{C}_{\delta,T,3}=2^{4\beta/(2\beta+d)+4}d^{-1}(2\beta+d)\sigma_{S,\max}^2C_{\delta,T,2}^2$.

Combining above, we achieve the final result:
\begin{equation}
    \label{eq:high_bound}
    \begin{aligned}
        \cR_T\le&\widetilde{C}_1T_0
        +\widetilde{C}_{\delta,T,2}\underline{\pi}^{-2\beta/(2\beta+d)-3}T^{d/(2\beta+d)}
        +\widetilde{C}_{\delta,T,3}\underline{\pi}^{-2\beta/(2\beta+d)-1}T^{d/(2\beta+d)}\\
        &+8\underline{\pi}^{-1}(\gamma_{\max}^2+2^{-6}\sigma_0^2\sigma_1^2)\sum_{z\in\{0,1\}}\sum_{t=T_0+1}^{T}\widetilde{\ell}_{S,z,t}
        +8\underline{\pi}^{-1}\sum_{z\in\{0,1\}}\sum_{t=T_0+1}^{T}\widetilde{\ell}_{Y,z,t},
    \end{aligned}
\end{equation}
which holds with probability at least $1-2\delta$.

\subsubsection{Finite-sample regret bound}
\label{supsubsubsec:proof_regret}

The following lemma derives an upper bound for the mean squared error implied by an exponential-tail concentration inequality.

\begin{lemma}
    \label{lem:MSE_bound}
    Suppose that $X$ is a random variable satisfying the tail inequality: $\eP(|X|\ge\epsilon)\le 2\exp\{-A(\epsilon-B)^2\}$ for every $\epsilon>B$ and for some $A,B>0.$ Then, $\eE X^2\le 2B^2+(\pi+2)/A$.
\end{lemma}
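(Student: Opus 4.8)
The natural route is tail integration (the layer-cake formula) applied to $X^2$. Writing $\eE X^2 = \int_0^\infty \eP(X^2 \ge s)\,ds$ and substituting $s=\epsilon^2$ gives the identity $\eE X^2 = \int_0^\infty 2\epsilon\,\eP(|X|\ge\epsilon)\,d\epsilon$. The hypothesis only controls the tail for $\epsilon > B$, so the first step is to split this integral at $\epsilon = B$.

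On the region $[0,B]$ I would simply use the trivial bound $\eP(|X|\ge\epsilon)\le 1$, which contributes $\int_0^B 2\epsilon\,d\epsilon = B^2$. On the region $(B,\infty)$ I would insert the assumed bound $\eP(|X|\ge\epsilon)\le 2\exp\{-A(\epsilon-B)^2\}$ and perform the shift $u=\epsilon-B$, turning the contribution into $4\int_0^\infty (u+B)\exp(-Au^2)\,du$. This reduces to two elementary Gaussian-type integrals, $\int_0^\infty u\,e^{-Au^2}\,du = 1/(2A)$ and $\int_0^\infty e^{-Au^2}\,du = \tfrac12\sqrt{\pi/A}$, yielding $2/A + 2B\sqrt{\pi/A}$ for the tail part. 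Combining both regions gives the intermediate bound $\eE X^2 \le B^2 + 2/A + 2B\sqrt{\pi/A}$.

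The final step is to massage the cross term $2B\sqrt{\pi/A}$ into the clean form stated in the lemma. Here I would apply the elementary inequality $2ab\le a^2+b^2$ with $a=B$ and $b=\sqrt{\pi/A}$, giving $2B\sqrt{\pi/A}\le B^2 + \pi/A$. Substituting this bound produces $\eE X^2 \le B^2 + 2/A + B^2 + \pi/A = 2B^2 + (\pi+2)/A$, as claimed.

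This argument is essentially a routine computation, so I do not anticipate a genuine obstacle; the only two points requiring a little care are (i) handling the sub-threshold region $\epsilon\le B$ separately, since the exponential bound is not assumed to hold there, and (ii) recognizing that the AM--GM step is exactly what is needed to absorb the $2B\sqrt{\pi/A}$ cross term and match the target constants $2B^2$ and $(\pi+2)/A$.
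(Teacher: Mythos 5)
Your proof is correct and follows essentially the same route as the paper's: the layer-cake identity $\eE X^2=\int_0^\infty 2\epsilon\,\eP(|X|\ge\epsilon)\,d\epsilon$, a split at $\epsilon=B$ with the trivial bound below and the Gaussian-integral computation above, and the AM--GM step $2B\sqrt{\pi/A}\le B^2+\pi/A$ to reach the stated constants. The only difference is cosmetic (where the absorbed cross term is bookkept), so there is nothing to change.
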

\begin{proof}
    Note that $\eE X^2=\int_{0}^{\infty}\eP(X^2\ge u){\rm d}u=\int_{0}^{\infty}\eP(|X|\ge\sqrt{u}){\rm d}u=\int_{0}^{\infty}2\epsilon\eP(|X|\ge\epsilon){\rm d}\epsilon=\int_{0}^{B}2\epsilon\eP(|X|\ge\epsilon){\rm d}\epsilon+\int_{B}^{\infty}2\epsilon\eP(|X|\ge\epsilon){\rm d}\epsilon=:I_1+I_2.$ For the first term, $I_1\le\int_{0}^{B}2\epsilon{\rm d}\epsilon=B^2.$ For the second term, given the exponential tail inequality, we have
    $$
    \begin{aligned}
        I_2\le&\int_{B}^{\infty}2\epsilon\cdot 2\exp\{-A(\epsilon-B)^2\}{\rm d}\epsilon=4\int_{0}^{\infty}(v+B)\exp(-Av^2){\rm d}v\\
        =&4B\int_{0}^{\infty}\exp(-Av^2){\rm d}v+4\int_{0}^{\infty}v\exp(-Av^2){\rm d}v\\
        =&4B\cdot\frac{1}{2}\sqrt{\frac{\pi}{A}}+4\cdot\frac{1}{2A}=2B\sqrt{\pi/A}+2/A\\
        \le&B^2+(\pi+2)/A,
    \end{aligned}
    $$
    where the third equality follows from the property of Gaussian integral, and the last inequality follows from the AM-GM inequality. Combining above, we have the desired result $\eE X^2\le 2B^2+(\pi+2)/A$.
\end{proof}

Now we are ready to show Theorem~\ref{thm:regret}. By combining Proposition~\ref{propos:NW_est}, %{\color{red}this hyperlink goes to Proposition 1}
Lemma~\ref{lem:MSE_bound} and Condition~\ref{cond:X}, it follows that
{\footnotesize $$
\begin{aligned}
    \eE\big(\widetilde{\ell}_{S,1,t+1}|\cE\big)=&\eE\big[\{\hat{m}_{S,1,t}(\bX_{t+1})-m_{S,1}(\bX_{t+1})\big\}^2|\cE\big]\le \widetilde{C}_4't^{-2\beta/(2\beta+d)},\\
    \eE\big(\widetilde{\ell}_{Y,1,t+1}|\cE\big)=&\eE\big[\{\hat{m}_{Y,1,t}(\bX_{t+1})-m_{Y,1}(\bX_{t+1})\}^2|\cE\big]\le \widetilde{C}_4'N_{1,t}^{-2\beta/(2\beta+d)}\le \widetilde{C}_4'(4/\underline{\pi})^{2\beta/(2\beta+d)}t^{-2\beta/(2\beta+d)},\\
    \eE\big(\widetilde{\ell}_{S,0,t+1}|\cE\big)=&\eE\big[\{\hat{m}_{S,0,t}(\bX_{t+1})-m_{S,0}(\bX_{t+1})\}^2|\cE\big]\le \widetilde{C}_4't^{-2\beta/(2\beta+d)},\\
    \eE\big(\widetilde{\ell}_{Y,0,t+1}|\cE\big)=&\eE\big[\{\hat{m}_{Y,0,t}(\bX_{t+1})-m_{Y,0}(\bX_{t+1})\}^2|\cE\big]\le \widetilde{C}_4'N_{0,t}^{-2\beta/(2\beta+d)}\le \widetilde{C}_4'(4/\underline{\pi})^{2\beta/(2\beta+d)}t^{-2\beta/(2\beta+d)},
\end{aligned}
$$}almost surely, where $\widetilde{C}_4'=2dL^2+3\cdot 2^{1-d}\underline{k}^{-2}\underline{c}^{-1}\bar{k}^2v_{\max}^2$. Note that $\cR_T\le T\{\ell_{\max}-\ell(a^*)\}$ almost surely, and $\eP(\cE^c)\le2\delta=2T^{-2\beta/(2\beta+d)}$. Then, by combining \eqref{eq:high_bound}, we have
$$
\begin{aligned}
    \eE\cR_T\le&\eE(\cR_T|\cE)+\eE(\cR_T|\cE^c)\\
    \le&\widetilde{C}_1T_0+\widetilde{C}_{\delta,T,2}\underline{\pi}^{-2\beta/(2\beta+d)-3}T^{d/(2\beta+d)}
    +\widetilde{C}_{\delta,T,3}\underline{\pi}^{-2\beta/(2\beta+d)-1}T^{d/(2\beta+d)}\\
    &+16\underline{\pi}^{-1}(\gamma_{\max}^2+2^{-6}\sigma_0^2\sigma_1^2)\widetilde{C}_4'\sum_{t=1}^{T}t^{-2\beta/(2\beta+d)}
    +16\underline{\pi}^{-1}(4/\underline{\pi})^{2\beta/(2\beta+d)}\widetilde{C}_4'\sum_{t=1}^{T}t^{-2\beta/(2\beta+d)}\\
    &+T\{\ell_{\max}-\ell(a^*)\}\cdot 2T^{-2\beta/(2\beta+d)}\\
    =&\widetilde{C}_1T_0+\widetilde{C}_{\delta,T,2}\underline{\pi}^{-2\beta/(2\beta+d)-3}T^{d/(2\beta+d)}\\
    &+\widetilde{C}_{\delta,T,3}\underline{\pi}^{-2\beta/(2\beta+d)-1}T^{d/(2\beta+d)}
    +\widetilde{C}_4\underline{\pi}^{-1}T^{d/(2\beta+d)}+\widetilde{C}_5\underline{\pi}^{-2\beta/(2\beta+d)-1}T^{d/(2\beta+d)}\\
    &+2\{\ell_{\max}-\ell(a^*)\}T^{d/(2\beta+d)},
\end{aligned}
$$
where $\widetilde{C}_1=(\sigma_1-\sigma_0)^2+2(\gamma_0^2\sigma_{S,0}^2+\gamma_1^2\sigma_{S,1}^2+M_{Y,0}^2+M_{Y,1}^2),$ 
$\widetilde{C}_{\delta,T,2}=2^{4\beta/(2\beta+d)+5}d^{-1}(2\beta+d)(\sigma_0+\sigma_1)^{-2}C_{\delta,T,4}^2,$
$\widetilde{C}_{\delta,T,3}=2^{4\beta/(2\beta+d)+4}d^{-1}(2\beta+d)\sigma_{S,\max}^2C_{\delta,T,2}^2,$ 
$\widetilde{C}_4=2^4d^{-1}(2\beta+d)(\gamma_{\max}^2+2^{-6}\sigma_0^2\sigma_1^2)(2dL^2+3\cdot 2^{1-d}\underline{k}^{-2}\underline{c}^{-1}\bar{k}^2v_{\max}^2),$ and $\widetilde{C}_5=2^{4\beta/(2\beta+d)+4}d^{-1}(2\beta+d)(2dL^2+3\cdot 2^{1-d}\underline{k}^{-2}\underline{c}^{-1}\bar{k}^2v_{\max}^2)$. Hence, $\widetilde{C}_1,\widetilde{C}_4$ and $\widetilde{C}_5$ are constants independent of $\delta$ and $T$, $\widetilde{C}_{\delta,T,2}\asymp(\log \delta^{-1}T)^2,$ and $\widetilde{C}_{\delta,T,3}\asymp\log \delta^{-1}T$. The proof is completed.

\subsection{Proof of Theorem~\ref{thm:asy_cs}}

To show Theorem~\ref{thm:asy_cs}, we first introduce the uncertainty quantification tools developed in \textcolor{blue}{Waudby-Smith et al.} (\textcolor{blue}{2024a}). Suppose $\{R_t\}_{t\ge1}$ is a sequence of random variables with conditional means and variances given by $\mu_t:=\eE(R_t|\cF_{t-1})$ and $\sigma_t^2:=\var(R_t|\cF_{t-1})$, respectively. We require that the following three conditions hold. \\
(L1): $\widetilde{V}_t:=\sum_{j=1}^{t}\sigma_t^2\to\infty$ a.s. as $t\to\infty$.\\
(L2): The Lindeberg-type uniform integrability holds, which means that there exists some $0<\iota<1$ such that $\sum_{t=1}^{\infty}\widetilde{V}_t^{-\iota}\eE\big[(R_t-\mu_t)^2I\{(R_t-\mu_t)^2>\widetilde{V}_t^{\iota}\}|\cF_{t-1}\big]<\infty$ a.s.\\
(L3): Let $\hat{\sigma}_t^2$ be an estimator of $\widetilde{\sigma}_t^2:=t^{-1}\sum_{j=1}^{t}\sigma_j^2$ constructed using $R_1,\dots,R_t$ such that $\hat{\sigma}_t^2/\widetilde{\sigma}_t^2\to_{a.s.}1$ as $t\to\infty$.

Let $\hat{\mu}_t:=t^{-1}\sum_{j=1}^{t}R_j$ be the estimate of the running average conditional mean $\widetilde{\mu}_t:=t^{-1}\sum_{j=1}^{t}\mu_j.$ The following lemma provides a Lindeberg-type Gaussian mixture martingale asymptotic confidence sequence for $\widetilde{\mu}_t$. 

\begin{lemma}
    \label{lem:asycs}
    (Proposition 2.5 of \textcolor{blue}{Waudby-Smith et al.} (\textcolor{blue}{2024a})) Suppose that conditions (L1)--(L3) hold. Then, for any pre-specified parameter $\varrho>0$,
    $$
    \left(\hat{\mu}_t\pm\sqrt{\frac{t\hat{\sigma}_t^2\varrho^2+1}{t^2\varrho^2}\log\left(\frac{t\hat{\sigma}_t^2\varrho^2+1}{\alpha^2}\right)}\right)
    $$
    forms a $(1-\alpha)$ asymptotic confidence sequence for $\widetilde{\mu}_t.$
\end{lemma}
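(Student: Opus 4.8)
The plan is to prove this general lemma by reducing the conditionally heteroscedastic, dependent martingale $S_t := \sum_{j=1}^t (R_j-\mu_j) = t(\hat\mu_t - \widetilde\mu_t)$ to a standard Brownian motion run on the intrinsic time scale $\widetilde V_t = \sum_{j=1}^t \sigma_j^2$, for which an \emph{exact} (non-asymptotic) confidence sequence is available in closed form. The boundary in the statement is precisely the inverse of a Gaussian-mixture martingale boundary evaluated at $s=\widetilde V_t$, so the argument splits into three pieces: (a) constructing that exact boundary for Brownian motion, (b) transferring it to $S_t$ through a strong Gaussian approximation, and (c) checking that the resulting sequence satisfies the two requirements of Definition~\ref{def:asycs}.

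First I would build the exact confidence sequence for Brownian motion. Let $W$ be a standard Brownian motion. For each fixed $\lambda$ the process $s\mapsto\exp(\lambda W(s)-\lambda^2 s/2)$ is a unit-mean martingale; mixing over $\lambda\sim\cN(0,\varrho^2)$ and performing the conjugate Gaussian integral yields the nonnegative mixture martingale
\[
M(s) = (1+\varrho^2 s)^{-1/2}\exp\!\left(\frac{\varrho^2 W(s)^2}{2(1+\varrho^2 s)}\right),\qquad M(0)=1.
\]
By Ville's inequality $\eP(\exists s\ge0:M(s)\ge1/\alpha)\le\alpha$, and solving $M(s)<1/\alpha$ for $W(s)$ gives the time-uniform bound $|W(s)|<\sqrt{\varrho^{-2}(1+\varrho^2 s)\log((1+\varrho^2 s)/\alpha^2)}$ holding simultaneously over all $s$ with probability at least $1-\alpha$. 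Substituting $s=\widetilde V_t = t\widetilde\sigma_t^2$ and dividing by $t$ reproduces exactly the radius in \eqref{eq:asy_cs}, but with $\widetilde\sigma_t^2$ in place of $\hat\sigma_t^2$ and $W(\widetilde V_t)$ in place of $S_t$.

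Next I would invoke a strong Gaussian approximation to couple $S_t$ with $W(\widetilde V_t)$. Conditions (L1) and (L2) are precisely the divergence-of-variance and Lindeberg-type uniform-integrability hypotheses under which a Strassen-type almost-sure invariance principle for martingale difference sequences applies: on a suitably enlarged probability space there is a standard Brownian motion $W$ with $S_t = W(\widetilde V_t)+r_t$, where $r_t = o\big(\sqrt{\widetilde V_t\log\log\widetilde V_t}\big)$ almost surely. I would then take the \emph{unobservable} non-asymptotic sequence $(\widetilde L_t,\widetilde U_t)$ centered at $\widetilde\mu_t + W(\widetilde V_t)/t$ with radius $t^{-1}\sqrt{\varrho^{-2}(1+\varrho^2\widetilde V_t)\log((1+\varrho^2\widetilde V_t)/\alpha^2)}$; by the previous step this covers $\widetilde\mu_t$ uniformly in $t$ with probability $\ge1-\alpha$, giving requirement (i) of Definition~\ref{def:asycs}. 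For requirement (ii), I compare $(\widetilde L_t,\widetilde U_t)$ with the stated $(\hat\mu_t\pm\widehat B_t/t)$: the centers differ by $r_t/t$ (since $\hat\mu_t=\widetilde\mu_t+S_t/t$), and the radii differ only through replacing $\widetilde\sigma_t^2$ by $\hat\sigma_t^2$. Because the boundary width grows like $\sqrt{\widetilde V_t\log\widetilde V_t}$, the coupling error $r_t=o(\sqrt{\widetilde V_t\log\log\widetilde V_t})$ is negligible relative to it, while (L3) forces the radius ratio to $1$ using $\widetilde V_t\to\infty$ from (L1); together with the law of the iterated logarithm for $W(\widetilde V_t)$ these give $\widetilde L_t/L_t\to1$ and $\widetilde U_t/U_t\to1$ almost surely.

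The main obstacle is the strong approximation step and the accompanying error bookkeeping. Establishing the pathwise embedding $S_t=W(\widetilde V_t)+o(\cdot)$ for a heteroscedastic, dependent martingale difference sequence is the analytically heavy ingredient, and it is where (L2) is indispensable: without uniform integrability of the squared increments one obtains only a functional CLT in distribution, which is too weak for the pathwise, time-uniform conclusion an anytime-valid confidence sequence demands. The delicate part is quantitative — one must show that \emph{both} the coupling remainder $r_t$ and the variance-substitution error implied by (L3) are $o$ of the mixture boundary width $\asymp\sqrt{\widetilde V_t\log\widetilde V_t}$, so that the endpoint \emph{ratios}, not merely their differences, converge to $1$. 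This matching of error rates against the slowly growing boundary is the crux that makes the asymptotic confidence sequence valid at arbitrary stopping times rather than only at a fixed horizon.
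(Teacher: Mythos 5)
This lemma is not proved in the paper at all: it is imported verbatim as Proposition~2.5 of Waudby-Smith et al.\ (2024a), and the paper's only work in this direction is verifying (L1)--(L3) when invoking it in the proof of Theorem~\ref{thm:asy_cs}. Your blind reconstruction---Robbins' Gaussian-mixture supermartingale $M(s)=(1+\varrho^2 s)^{-1/2}\exp\{\varrho^2W(s)^2/(2(1+\varrho^2s))\}$ with Ville's inequality for the exact Brownian boundary, a Strassen-type almost-sure invariance principle under (L1)--(L2) giving $S_t=W(\widetilde{V}_t)+o\big(\sqrt{\widetilde{V}_t\log\log \widetilde{V}_t}\,\big)$, and (L3) for the $\hat{\sigma}_t^2$-for-$\widetilde{\sigma}_t^2$ substitution---is precisely the route taken in the cited source, and your rate bookkeeping (both the coupling remainder and the radius-substitution error being $o$ of the boundary width $\asymp\sqrt{\widetilde{V}_t\log \widetilde{V}_t}$, so that endpoint \emph{ratios}, not just differences, tend to one) is correct.
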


Now we are ready to show Theorem~\ref{thm:asy_cs}. Denote $\widetilde{V}_t=\sum_{j=1}^{t}\eE(u_j^2|\cF_{j-1})$ and $\widetilde{\varsigma}_t^2=t^{-1}\widetilde{V}_t$. Then we check Conditions (L1)--(L3) above. \textit{First}, by Theorem~\ref{thm:variance} and noting that $\pi^*=\sigma_1/(\sigma_1+\sigma_0)$, we have $\eE(u_j^2|\cF_{j-1})\ge(\sigma_0+\sigma_1)^2$ a.s. for all $j\ge1,$ which implies that $\widetilde{V}_t\ge t(\sigma_0+\sigma_1)^2\to\infty$ a.s., and thus Condition (L1) is satisfied. \textit{Second}, by Lemma~\ref{lem:lyapunov}, it follows that, for $\theta>2\eta/(1-2\eta)$,
$$
\sum_{t=1}^{\infty}\frac{\eE(|u_t|^{2+\theta}|\cF_{t-1})}{\widetilde{V}_t^{1+\theta/2}}\le\frac{1}{(\sigma_0+\sigma_1)^{2+\theta}}\sum_{t=1}^{\infty}\frac{\eE(|u_t|^{2+\theta}|\cF_{t-1})}{t^{1+\theta/2}}<\infty
$$
almost surely, which corresponds to the Lyapunov-type condition given in Section 2.4 of \textcolor{blue}{Waudby-Smith et al.} (\textcolor{blue}{2024a}). As shown in Appendix B.5 of \textcolor{blue}{Waudby-Smith et al.} (\textcolor{blue}{2024a}), the Lyapunov-type condition implies the Lindeberg-type condition, so Condition (L2) is satisfied. For Condition (L3), note that
$$
\frac{\hat{\varsigma}_t^2}{\widetilde{\varsigma}_t^2}=\frac{\sum_{j=1}^{t}(\phi_j-\hat{\tau}_t)^2}{\sum_{j=1}^{t}\eE(u_j^2|\cF_{t-1})}=\frac{t^{-1}\sum_{j=1}^{t}u_j^2+(\hat{\tau}_t-\tau_0)^2}{t^{-1}\sum_{j=1}^{t}\eE(u_j^2|\cF_{t-1})}.
$$
By the proof of Theorem~\ref{thm:CI}, we have $t^{-1}\sum_{j=1}^{t}u_j^2\to_{a.s.}t^{-1}\sum_{j=1}^{t}\eE(u_j^2|\cF_{t-1})$ as $t\to\infty$, which, together with $\hat{\tau}_t\to_{a.s.}\tau_0$ as $t\to\infty$ by Theorem~\ref{thm:asy_nor}, implies that $\hat{\varsigma}_t^2/\widetilde{\varsigma}_t^2\to_{a.s.}1$ as $t\to\infty$. Hence, Condition (L3) is satisfied. Combining above, by using Lemma~\ref{lem:asycs}, the proof is completed.

\subsection{Proof of Theorem~\ref{thm:finite_cs}}
To prove Theorem~\ref{thm:finite_cs}, we first introduce two technical lemmas. 
\begin{lemma}
    \label{lem:supermart}
    (Lemma 1 of \textcolor{blue}{Waudby-Smith et al.} (\textcolor{blue}{2024b})) Let $X$ and $\hat{X}$ be $\cF$-adapted processes such that $X_t-\hat{X}_{t-1}\ge-1$ almost surely for all $t.$ Denoting $\mu_t:=\eE(X_t|\cF_{t-1})$, we have that for any $(0,1)$-valued predictable process $\{\lambda_t\}_{t\ge1},$
    $$
    M_t:=\exp\left\{\sum_{j=1}^{t}\lambda_j(X_t-\mu_t)-\sum_{j=1}^{t}\psi_E(\lambda_j)\big(X_t-\hat{X}_{t-1}\big)^2\right\},
    $$
     forms a test supermartingale, where $\psi_E(\lambda):=-\log(1-\lambda)-\lambda.$
\end{lemma}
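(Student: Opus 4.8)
The plan is to verify the three defining properties of a test supermartingale: $M_0=1$, $M_t\ge 0$, and $\eE(M_t\mid\cF_{t-1})\le M_{t-1}$. The first two are immediate, since with the empty-sum convention $M_0=\exp\{0\}=1$ and $M_t$ is an exponential hence strictly positive, so the entire content lies in the one-step supermartingale inequality. Because $\{\lambda_j\}$ is predictable and $\hat X_{t-1},\mu_t$ are $\cF_{t-1}$-measurable, I would write $M_t=M_{t-1}\,R_t$ with increment $R_t:=\exp\{\lambda_t(X_t-\mu_t)-\psi_E(\lambda_t)(X_t-\hat X_{t-1})^2\}$ and factor out everything $\cF_{t-1}$-measurable:
\[
R_t=\exp\{\lambda_t(\hat X_{t-1}-\mu_t)\}\cdot\exp\{\lambda_t(X_t-\hat X_{t-1})-\psi_E(\lambda_t)(X_t-\hat X_{t-1})^2\}.
\]
It then suffices to control the conditional expectation of the second factor, in which both the linear and the quadratic terms are centered at the same quantity $X_t-\hat X_{t-1}$.

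The crux is the deterministic inequality
\[
\exp\{\lambda y-\psi_E(\lambda)y^2\}\le 1+\lambda y,\qquad y\ge -1,\ \lambda\in(0,1).
\]
Since $1+\lambda y\ge 1-\lambda>0$ on this range, taking logarithms reduces it to $\lambda y-\log(1+\lambda y)\le \psi_E(\lambda)y^2$, and dividing by $(\lambda y)^2$ (the case $y=0$ being equality) recasts it as $g(\lambda y)\le g(-\lambda)$, where $g(u):=\{u-\log(1+u)\}/u^2$ and $\psi_E(\lambda)/\lambda^2=g(-\lambda)$. I would establish the needed monotonicity of $g$ through the integral representation $g(u)=\int_0^1 t/(1+ut)\,\mathrm dt$, obtained from $u-\log(1+u)=\int_0^u s/(1+s)\,\mathrm ds$ and the substitution $s=ut$. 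For each fixed $t\in[0,1]$ the integrand $t/(1+ut)$ is nonincreasing in $u$ on $(-1,\infty)$, since its $u$-derivative equals $-t^2/(1+ut)^2\le 0$; hence $g$ is nonincreasing, and as $y\ge -1$ gives $\lambda y\ge -\lambda$ this yields $g(\lambda y)\le g(-\lambda)$ and thus the claimed bound. This pointwise inequality, valid exactly under the boundedness hypothesis $X_t-\hat X_{t-1}\ge -1$, is the one genuinely nontrivial ingredient and is where I expect the main effort to go.

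Finally I would assemble the pieces. Applying the pointwise inequality with $y=X_t-\hat X_{t-1}\ge-1$ and then taking $\eE(\cdot\mid\cF_{t-1})$, using that the leading factor and $\lambda_t$ are $\cF_{t-1}$-measurable,
\[
\eE(R_t\mid\cF_{t-1})\le \exp\{\lambda_t(\hat X_{t-1}-\mu_t)\}\,\eE\big[1+\lambda_t(X_t-\hat X_{t-1})\mid\cF_{t-1}\big]=\exp\{\lambda_t(\hat X_{t-1}-\mu_t)\}\big(1+\lambda_t(\mu_t-\hat X_{t-1})\big),
\]
where the last equality uses $\eE(X_t\mid\cF_{t-1})=\mu_t$. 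Bounding $1+v\le e^v$ with $v=\lambda_t(\mu_t-\hat X_{t-1})$ makes the two exponentials cancel, giving $\eE(R_t\mid\cF_{t-1})\le 1$ and therefore $\eE(M_t\mid\cF_{t-1})=M_{t-1}\,\eE(R_t\mid\cF_{t-1})\le M_{t-1}$. Combined with $M_0=1$ and positivity, this shows that $\{M_t\}$ is a test supermartingale, completing the argument.
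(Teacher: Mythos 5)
Your proof is correct. Note that the paper itself does not prove this lemma --- it is quoted (up to a typo: the summands should read $X_j$, not $X_t$) from Lemma 1 of Waudby-Smith et al. (2024b) --- so the relevant comparison is with that reference's proof, which your argument reconstructs in essentially the same way: factor the one-step increment as $\exp\{\lambda_t(\hat X_{t-1}-\mu_t)\}$ times a term centered at $X_t-\hat X_{t-1}$, apply the pointwise inequality $\exp\{\lambda y-\psi_E(\lambda)y^2\}\le 1+\lambda y$ for $y\ge -1$ and $\lambda\in(0,1)$, take conditional expectations using predictability of $\lambda_t$ and adaptedness of $\hat X_{t-1}$, and close with $1+v\le e^v$ so the two exponentials cancel. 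The one place you go beyond the source is the key inequality itself: Waudby-Smith et al. invoke it as a known result (it is the inequality of Fan, Grama and Liu, 2015), whereas you give a clean self-contained derivation via the representation $g(u)=\{u-\log(1+u)\}/u^2=\int_0^1 t/(1+ut)\,\mathrm{d}t$ and its monotonicity on $(-1,\infty)$, correctly using $y\ge-1$ and $\lambda<1$ to keep $\lambda y\ge-\lambda>-1$ so that all logarithms are defined and the division by $(\lambda y)^2$ is handled (with $y=0$ as the equality case). Integrability of $M_t$, needed for the supermartingale property, follows by induction from your bound $\eE(R_t\mid\cF_{t-1})\le1$ together with $M_0=1$; the proof is complete as written.
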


\begin{lemma}
    \label{lem:ville}
    (Ville's inequality, Exercise 4.8.2 of \textcolor{blue}{Durrett} (\textcolor{blue}{2019})) Let $\{X_n\}_{n\ge1}$ be a nonnegative supermartingale. Then, for any $\epsilon>0,\eP\big(\sup_{n\ge1}X_n\ge\epsilon\big)\le\eE(X_1)/\epsilon.$
\end{lemma}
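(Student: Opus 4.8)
The plan is to prove Ville's inequality by a first-passage stopping-time argument combined with optional stopping for nonnegative supermartingales. Let $\{\cF_n\}_{n\ge1}$ denote the filtration with respect to which $\{X_n\}_{n\ge1}$ is a supermartingale. For a fixed level $\lambda\in(0,\epsilon)$, I would introduce the first-passage time
\[
\tau_{\lambda}:=\inf\{n\ge1:X_n\ge\lambda\},
\]
with the convention $\inf\emptyset=+\infty$. The first fact I would invoke is that the stopped process $\{X_{\tau_{\lambda}\wedge n}\}_{n\ge1}$ is again a nonnegative supermartingale, a standard consequence of optional stopping at the bounded stopping time $\tau_{\lambda}\wedge n$; in particular its expectations are non-increasing in $n$, so that $\eE(X_{\tau_{\lambda}\wedge n})\le\eE(X_{\tau_{\lambda}\wedge 1})=\eE(X_1)$ for every $n\ge1$, where the final equality uses $\tau_{\lambda}\ge1$ almost surely.

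Next I would extract a probability bound from this expectation inequality using nonnegativity of $X$. On the event $\{\tau_{\lambda}\le n\}$ one has $X_{\tau_{\lambda}\wedge n}=X_{\tau_{\lambda}}\ge\lambda$ by the definition of $\tau_{\lambda}$, while on its complement $X_{\tau_{\lambda}\wedge n}\ge0$. Discarding the nonnegative contribution of the complement therefore gives
\[
\eE(X_1)\ge\eE(X_{\tau_{\lambda}\wedge n})\ge\eE\big(X_{\tau_{\lambda}}I(\tau_{\lambda}\le n)\big)\ge\lambda\,\eP(\tau_{\lambda}\le n).
\]
Letting $n\to\infty$ and applying monotone convergence to $\eP(\tau_{\lambda}\le n)\uparrow\eP(\tau_{\lambda}<\infty)$ yields $\eP(\tau_{\lambda}<\infty)\le\eE(X_1)/\lambda$.

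Finally I would pass from $\{\tau_{\lambda}<\infty\}$ to the maximal event. For every $\lambda<\epsilon$ the inclusion $\{\sup_{n\ge1}X_n\ge\epsilon\}\subseteq\{\tau_{\lambda}<\infty\}$ holds, since $\sup_n X_n\ge\epsilon>\lambda$ forces some $X_n>\lambda$, hence $\tau_{\lambda}<\infty$. Combining with the previous bound gives $\eP(\sup_{n\ge1}X_n\ge\epsilon)\le\eE(X_1)/\lambda$ for all $\lambda<\epsilon$, and letting $\lambda\uparrow\epsilon$ produces the claimed estimate $\eP(\sup_{n\ge1}X_n\ge\epsilon)\le\eE(X_1)/\epsilon$. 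The only genuinely delicate point, and the one I would flag as the main obstacle, is this last passage: the maximal event $\{\sup_n X_n\ge\epsilon\}$ need not coincide with $\{\exists\,n:X_n\ge\epsilon\}=\{\tau_{\epsilon}<\infty\}$ when the supremum is approached only in the limit, and the sublevel approximation $\lambda\uparrow\epsilon$ is precisely what reconciles the two. Everything else reduces to the optional stopping property of the stopped supermartingale, which I would cite rather than reprove.
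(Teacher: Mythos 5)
Your proof is correct: the paper itself gives no argument for this lemma, citing it verbatim to Durrett (2019), and your first-passage/optional-stopping argument is precisely the standard proof behind that citation — the stopped process $\{X_{\tau_{\lambda}\wedge n}\}$ is a nonnegative supermartingale, so $\lambda\,\eP(\tau_{\lambda}\le n)\le\eE(X_{\tau_{\lambda}\wedge n})\le\eE(X_1)$, and continuity from below gives the bound at level $\lambda$. Your flagged refinement — bounding $\eP(\tau_{\lambda}<\infty)$ for $\lambda<\epsilon$ and then letting $\lambda\uparrow\epsilon$, rather than stopping directly at $\tau_{\epsilon}$ — correctly accounts for the fact that $\{\sup_{n\ge1}X_n\ge\epsilon\}$ can be strictly larger than $\{\exists\,n:X_n\ge\epsilon\}$ when the supremum is attained only in the limit, so the argument is complete with no gap.
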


Now we are ready to show Theorem~\ref{thm:finite_cs}. Note that for each $z\in\{0,1\}$, $Y_t,$ $S_t(z),$ $\hat{m}_{S,z,t-1}(\bX_t),$ and $\hat{\mu}_{z,t-1}(\bX_t,\bS_t)\in[0,1]$, which together imply that $\phi_t\ge -k_t$ a.s. Thus,
$$
\xi_t-\hat{\xi}_{t-1}\ge\frac{\phi_t}{k_t+1}-\frac{1}{k_t+1}\ge-1
$$
almost surely. Given the facts that $\eE(\xi_t|\cF_{t-1})=\tau_0/(k_t+1)$ and $\{\lambda_{\alpha,t}\}_{t\ge1}$ is a $(0,1)$-valued predictable sequence, Lemma~\ref{lem:supermart} implies that
\begin{equation}
    \label{eq:supermartingale}
    M_t:=\exp\left\{\sum_{j=1}^{t}\lambda_{\alpha,j}\left(\xi_j-\frac{\tau_0}{k_j+1}\right)-\sum_{j=1}^{t}\big(\xi_j-\hat{\xi}_{j-1}\big)^2\psi_E(\lambda_{\alpha,j})\right\}
\end{equation}
forms a test supermartingale. Hence, by Lemma~\ref{lem:ville}, we have 
$$
\eP(\exists t\ge1:M_t\ge 1/\alpha)\le\alpha.
$$ 
By arguments similar to those in the proof of Proposition 2 of \textcolor{blue}{Waudby-Smith et al.} (\textcolor{blue}{2024b}), we can show that, with probability at least $(1-\alpha)$ and for all $t\ge1,$
$$
M_t<1/\alpha\iff \tau_0>\frac{\sum_{j=1}^{t}\lambda_{\alpha,j}\xi_j}{\sum_{j=1}^{t}\lambda_{\alpha,j}/(k_j+1)}-\frac{\log(1/\alpha)+\sum_{j=1}^{t}(\xi_j-\hat{\xi}_{j-1})^2\psi_E(\lambda_{\alpha,j})}{\sum_{j=1}^{t}\lambda_{\alpha,j}/(k_j+1)}.
$$
Finally, using the mirroring trick as discussed in Remark 3 of \textcolor{blue}{Waudby-Smith et al.} (\textcolor{blue}{2024b}) completes the proof.

\section{Extensions}
\label{supsec:method}

\subsection{Profile least squares estimation}
\label{supsubsec:profile}

In this section, we provide the profile least squares method to estimate the partially linear models in Section~\ref{subsec:procedure}. Recall that $m_0(\bx)=m_{Y,0}(\bx)-\gamma_0m_{S,0}(\bx)$ and $m_1(\bx)=m_{Y,1}(\bx)-\gamma_1m_{S,1}(\bx)$. The main idea is that, for any given values of $\gamma_0$ and $\gamma_1$, we treat $m_0(\cdot)$ and $m_1(\cdot)$ as nuisance functions. We first estimate $m_0(\cdot)$ and $m_1(\cdot)$ using a nonparametric method, and then plug the estimates into the model to obtain residuals. The coefficients $\gamma_0$ and $\gamma_1$ are then estimated by minimizing the sum of squared residuals. The detailed procedure is outlined below for each $z\in\{0,1\}$.

\begin{itemize}
    \item[(i)] For a fixed value of $\gamma_z$, a Nadaraya-Watson estimator of $m_z(\cdot)$ is
    $$
    \hat{m}_{z,\gamma_z,t}(\bx)=\frac{\sum_{j=1}^{t}K_h(\bx-\bX_j)\{Y_j-\gamma_zS_j(z)\}I(Z_j=z)}{\sum_{j=1}^{t}K_h(\bx-\bX_j)I(Z_j=z)},\quad\bx\in\eR^d,
    $$
    where $K_h(\bx)=K(\bx/h)/h$ is a kernel function with bandwidth $h>0$. In particular, define $\hat{m}_{z,\gamma_z,t}(\bx)=0$ if $\sum_{j=1}^{t}K_h(\bx-\bX_j)I(Z_j=z)=0.$
    
    \item[(ii)] Calculate the residuals in each $G_{z,t}=\{j\in[t]:Z_j=z\}$:
    $$\hat{\varepsilon}_{j,\gamma_z,t}=Y_j-\gamma_zS_j(z)-\hat{m}_{z,\gamma_z,t}(\bX_j),\quad j\in G_{z,t},
    $$  
    and define the profile least squares criterion
    $L_{z,t}(\gamma_z)=\sum_{j=1}^{t}\hat{\varepsilon}_{j,\gamma_z,t}^2I(Z_j=z).$
    
    \item[(iii)] Find the value of $\gamma_z$ that minimizes the profile criterion:
    $$
    \hat{\gamma}_{z,t}=\argmin_{\gamma_z}L_{z,t}(\gamma_z),
    $$
    which can be computed using optimization algorithms such as BFGS or L-BFGS. 
    
    \item[(iv)] The conditional mean functions and variances can then be estimated as:
    $$
    \begin{aligned}
        &\hat{\mu}_{z,t}(\bx,\bs)=\hat{m}_{z,\hat{\gamma}_{z,t},t}(\bx)+\hat{\gamma}_{z,t}s(z),\\
        &\hat{\sigma}_{z,t}^2=\frac{1}{\sum_{j=1}^{t}I(Z_j=z)}\sum_{j=1}^{t}\{Y_j-\hat{\mu}_{z,t}(\bX_j,\bS_j)\}^2I(Z_j=z).
    \end{aligned}
    $$
\end{itemize}

It is worth noting that the computational complexity of the profile least squares method is substantially higher than that of the residual-based method introduced in Section~\ref{subsec:procedure}, since the nonparametric component must be re-estimated at each optimization step when updating $\gamma_0$ and $\gamma_1$. Table~\ref{tab:comparison_method} provides a side-by-side comparison of the two estimation methods, and Figure~\ref{fig:profile} in Section~\ref{supsec:sim} presents their numerical performance across different nonparametric regression estimators.

\renewcommand{\arraystretch}{1}
\begin{table}[H]
\centering
\caption{\small Comparisons between the two estimation methods.}
\label{tab:comparison_method}
\vspace{1em}
\begin{tabular}{c|cc}
\hline
& residual-based & profile least squares \\ 
\hline
method to estimate $\hat{\gamma}_{z,t}$ & linear regression & profile least squares \\
need optimization? & no & yes \\
computational complexity & low & high  \\
need to estimate $m_{Y,z}$ and $m_{S,z}$? & yes & no \\
directly estimate $m_z$ & no & yes \\
\hline
\end{tabular}
\end{table}

\subsection{Regret bound for the linear model}
\label{supsubsec:bound_linear}

This section first derives the estimators for the parameters in linear Model~\eqref{eq:model_linear_cond_XS}, serving as a supplement to Section~\ref{subsec:linear}. Specifically, at each Stage $t$, denote $G_{z,t}=\{j\in[t]:Z_j=z\}$, $\widecheck{\bX}_t=(\bX_1,\dots,\bX_t)^{\T}\in\eR^{t\times d}$, $\widecheck{\bS}_{z,t}=\big(S_1(z),\dots,S_t(z)\big)^{\T}\in\eR^{t\times 1}$, $\widecheck{\bX}_{G_{z,t},t}=(\bX_j)_{j\in G_{z,t}}^{\T}\in\eR^{|G_{z,t}|\times d}$, and $\widecheck{\bY}_{G_{z,t}}=(Y_j)_{j\in G_{z,t}}\in\eR^{|G_{z,t}|\times 1}.$ The least squares estimators of $\balpha_{S,z},\balpha_{Y,z},$ and $\gamma_z$ are given by 
\begin{align*}
    &\hat{\balpha}_{S,z,t}=\big(\widecheck{\bX}_t^{\T}\widecheck{\bX}_t\big)^{-1}\widecheck{\bX}_t^{\T}\widecheck{\bS}_{z,t}, \quad \hat{\balpha}_{Y,z,t}=\big(\widecheck{\bX}_{G_{z,t},t}^{\T}\widecheck{\bX}_{G_{z,t},t}\big)^{-1}\widecheck{\bX}_{G_{z,t},t}^{\T}\widecheck{\bY}_{G_{z,t}},\\
    &\hat{\gamma}_{z,t}=\sum_{j\in G_{z,t}}\{Y_j(z)-\hat{\balpha}_{Y,z,t}^{\T}\bX_j\}\{S_j(z)-\hat{\balpha}_{S,z,t}^{\T}\bX_j\}/\sum_{j\in G_{z,t}}\{S_j(z)-\hat{\balpha}_{S,z,t}^{\T}\bX_j\}^2.
\end{align*}

Then, we establish a finite-sample regret bound for our proposed method under the linear model. The regret $\cR_T$ is defined similar to that in Section~\ref{subsec:regret}. The assumption on the model errors $\{\varepsilon_{Y,t}(z)\}$ and $\{\varepsilon_{S,t}(z)\}$ still follows Condition \ref{cond:varep} for the nonparametric nonlinear model. Regarding the covariates $\bX_t$, Condition~\ref{cond:X} is replaced by the following assumption.

\begin{condition}
    \label{cond:X_linear}
    The covariate domain $\cX$ is a compact set. $\underline{\lambda}:=\lambda_{\min}(\eE\bX_t\bX_t^{\T})>0,$ where $\lambda_{\min}(\cdot)$ denotes the smallest eigenvalue of a symmetric matrix.
\end{condition}

Similar to Section~\ref{supsubsubsec:event}, we first define a good event. Note that
$$
\begin{aligned}
    \hat{\balpha}_{S,0,t}-\balpha_{S,0}=&\big(\widecheck{\bX}_t^{\T}\widecheck{\bX}_t\big)^{-1}\widecheck{\bX}_t^{\T}\widecheck{\bS}_{0,t}-\big(\widecheck{\bX}_t^{\T}\widecheck{\bX}_t\big)^{-1}\widecheck{\bX}_t^{\T}\widecheck{\bX}_t\balpha_{S,0}\\
    =&\big(\widecheck{\bX}_t^{\T}\widecheck{\bX}_t\big)^{-1}\widecheck{\bX}_t^{\T}\big(\widecheck{\bS}_{0,t}-\widecheck{\bX}_t\balpha_{S,0}\big)=\big(\widecheck{\bX}_t^{\T}\widecheck{\bX}_t\big)^{-1}\widecheck{\bX}_t^{\T}\widetilde{\bveps}_{S,t}(0),
\end{aligned}
$$
where $\widetilde{\bveps}_{S,t}(0)=\big(\varepsilon_{S,1}(0),\dots,\varepsilon_{S,t}(0)\big)\in\eR^{t\times 1}$. Since the covariate domain $\cX$ is compact by Condition~\ref{cond:X_linear}, there exists some constant $\bar{\lambda}>0$ such that $\lambda_{\max}(\bX_t\bX_t)\le \bar{\lambda}$ a.s. for each $t\in[T],$ where $\lambda_{\max}(\cdot)$ denotes the largest eigenvalue of a symmetric matrix. Then, by the matrix Chernoff inequality (Theorem 5.1.1 of \textcolor{blue}{Tropp} (\textcolor{blue}{2015})), 
$$
\eP\left\{\lambda_{\min}\Big(t^{-1}\sum_{j=1}^{t}\bX_j\bX_j^{\T}\Big)\le \underline{\lambda}/2\right\}\le d\exp\left(-\frac{t\underline{\lambda}}{8\bar{\lambda}}\right).
$$
Thus, by combining Conditions~\ref{cond:varep} and \ref{cond:X_linear}, and Lemma~\ref{lem:bounded}, there exists some constant $c_1,M_1>0$ such that, for a given $\bx\in\cX$ and every small $\epsilon>0$, we have $\eP(|\hat{\balpha}_{S,0,t}^{\T}\bx-\balpha_{S,0}^{\T}\bx|\ge\epsilon)\le c_1\exp\left(-t\epsilon^2M_1\right)$. Similar arguments apply to $\hat{\balpha}_{S,1,t},\hat{\balpha}_{Y,0,t}$, and $\hat{\balpha}_{Y,1,t}$. Then, by applying Proposition~\ref{propos:sigma_est} %{\color{red}this goes to Proposition 2. But I think you want to use another result, right?} 
with $\epsilon_1\asymp N_{z,t}^{-1/2}$ and $\epsilon_2\asymp t^{-1/2}$, we can derive the deviation inequalities for $\hat{\gamma}_{z,t}$ and $\hat{\sigma}_{z,t}$ with $z\in\{0,1\}$. Combining the above results, we can define quantities $C_{\delta,t,1},C_{\delta,t,2},C_{\delta,t,3},C_{\delta,t,4}$ such that with probability at least $1-\delta/T$, the following relations hold:
$$
\begin{aligned}
    &\max_{j\in G_{z,t}}|\hat{\balpha}_{Y,z,t}^{\T}\bX_j-\balpha_{Y,z}^{\T}\bX_j|\le C_{\delta,t,1}N_{z,t}^{-1/2},\quad
    \max_{j\in [t]}|\hat{\balpha}_{S,z,t}^{\T}\bX_j-\balpha_{S,z}^{\T}\bX_j|\le C_{\delta,t,1}t^{-1/2},\\
    &|\hat{\gamma}_{z,t}-\gamma_z|\le C_{\delta,t,2}N_{z,t}^{-1/2},\quad
    \max_{j\in G_{z,t}}|\hat{\mu}_{z,t}(\bX_j,\bS_j)-\mu_{z}(\bX_j,\bS_j)|\le  C_{\delta,t,3}N_{z,t}^{-1/2},\\
    &|\hat{\sigma}_{z,t}-\sigma_z|\le C_{\delta,t,4}\sigma_z^{-1}N_{z,t}^{-1/2},\quad z\in\{0,1\}.
\end{aligned}
$$
Then, define the following events:
\begin{align}
    &\cE_1(\delta)=\bigcap_{t=1}^{T}\left\{N_{1,t}\in\left[\sum_{j=1}^{t}\pi_j-C_{\delta,t,0}\sqrt{t},\sum_{j=1}^{t}\pi_j+C_{\delta,t,0}\sqrt{t}\right]\right\},\\
    &\cE_2(\delta)=\bigcap_{z\in\{0,1\}}\bigcap_{t=1}^{T}\left\{\sigma_z\in\left[\hat{\sigma}_{z,t}-\frac{C_{\delta,t,4}}{\sigma_z\sqrt{N_t(z)}},\hat{\sigma}_{z,t}+\frac{C_{\delta,t,4}}{\sigma_z\sqrt{N_t(z)}}\right]\right\},
\end{align}
where $C_{\delta,t,0}\asymp\sqrt{\log\log t+\log(1/\delta)}$ and $C_{\delta,t,4}\asymp\sqrt{\log t+\log(T/\delta)}$. The good event $\cE=\cE_1(\delta)\bigcap\cE_2(\delta)$ occurs with probability at least $1-2\delta.$ The length of initialization phase is determined by
\begin{equation}
    \label{eq:T0_linear}
    T_0=\frac{8C_{\delta,T,4}^2}{\sigma_0^4\wedge\sigma_1^4\wedge(2^{-5}\sigma_0^2\sigma_1^2\underline{\pi})}\vee \frac{16C_{\delta,T,0}^2}{\underline{\pi}^2}\vee\frac{1}{(2\underline{\pi})^{1/\eta}}\asymp \log (T/\delta).
\end{equation}

\begin{theorem}
    \label{thm:regret_linear}
    Suppose that Conditions~\ref{cond:sutva}, \ref{cond:varep} and \ref{cond:X_linear} hold, $\delta=T^{-1}$, $T_0$ is given by~\eqref{eq:T0_linear}, and there exists some constant $\ell_{\max}\ge\ell(a^*)$ such that $\max_{t\in[T]}\ell(a_t)\le\ell_{\max}$. Then, for the linear Model~\eqref{eq:model_linear}, it follows that $\eE\cR_T=\widetilde{O}((\log T)^2).$
\end{theorem}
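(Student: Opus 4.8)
The plan is to mirror the architecture of the proof of Theorem~\ref{thm:regret}, replacing the nonparametric rate $t^{-\beta/(2\beta+d)}$ by the parametric rate $t^{-1/2}$ throughout. Working on the good event $\cE=\cE_1(\delta)\cap\cE_2(\delta)$ defined in Section~\ref{supsubsec:bound_linear}, which holds with probability at least $1-2\delta$, I would first record the same regret decomposition as in \eqref{eq:regret_decomp}, namely $\cR_T=\cR_{T_0}+\cR_T'+\cR_T''$, where $\cR_{T_0}$ is the initialization loss, $\cR_T'=\sum_{t=T_0+1}^{T}\{\ell_1(\pi_t)-\ell_1(\pi^*)\}$ is the loss from the adaptive allocation, and $\cR_T''=\sum_{t=T_0+1}^{T}\ell_2(a_t)$ is the cumulative estimation loss.

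Next I would establish the linear-model analogues of Lemmas~\ref{lem:low_bound_pi}--\ref{lem:pi_squa}. Using the deviation bound $|\hat{\sigma}_{z,t}-\sigma_z|\le C_{\delta,t,4}\sigma_z^{-1}N_{z,t}^{-1/2}$ from Section~\ref{supsubsec:bound_linear} together with the choice of $T_0$ in \eqref{eq:T0_linear}, the argument of Lemma~\ref{lem:low_bound_pi} shows $\underline{\pi}/2\le\pi_{t+1}\le 1-\underline{\pi}/2$ for $T_0\le t<T$; in particular $N_{z,t}\ge\underline{\pi}t/4$ on $\cE$. Feeding this into the allocation bound gives the parametric counterpart of Lemma~\ref{lem:pi_rate}, $|\pi_{t+1}-\pi^*|=\widetilde{O}(\underline{\pi}^{-3/2}C_{\delta,t,4}\,t^{-1/2})$, obtained from Lemma~\ref{lem:pi_rate} by setting $\beta/(2\beta+d)=1/2$, while the purely algebraic Lemma~\ref{lem:pi_squa} carries over verbatim.

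I would then bound the three terms, treating $\underline{\pi}$ as a fixed problem constant. The initialization term obeys $\cR_{T_0}\le\widetilde{C}_1T_0\asymp\log(T/\delta)$. For the allocation term, combining the rate above with Lemma~\ref{lem:pi_squa} yields $\cR_T'\lesssim\underline{\pi}^{-1}\sum_{t=T_0+1}^{T}|\pi_t-\pi^*|^2\lesssim\underline{\pi}^{-4}\sum_{t=T_0+1}^{T}C_{\delta,t,4}^2\,t^{-1}$; since $\delta=T^{-1}$ gives $C_{\delta,t,4}^2\asymp\log(T/\delta)\asymp\log T$ and $\sum_{t\le T}t^{-1}\asymp\log T$, this produces $\cR_T'=\widetilde{O}((\log T)^2)$. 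For the estimation term, I would use the decomposition analogous to \eqref{eq:hat_mu_deco} with the linear components $\hat{\balpha}_{S,z,t}^{\T}\bX,\hat{\balpha}_{Y,z,t}^{\T}\bX,\hat{\gamma}_{z,t}$; converting the single-point deviation inequalities to mean-squared bounds via Lemma~\ref{lem:MSE_bound} gives $\eE(\widetilde{\ell}_{S,z,t}\mid\cE),\eE(\widetilde{\ell}_{Y,z,t}\mid\cE)=O(t^{-1})$, whereas the $\gamma_z$-contribution is controlled on $\cE$ by $(\hat{\gamma}_{z,t}-\gamma_z)^2\le C_{\delta,t,2}^2 N_{z,t}^{-1}=\widetilde{O}(t^{-1})$ with $C_{\delta,t,2}^2\asymp\log T$. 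Hence $\cR_T''\lesssim\underline{\pi}^{-1}\sum_{t}\widetilde{O}(t^{-1})=\widetilde{O}((\log T)^2)$ as well.

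Finally, the bad-event contribution is controlled crudely: since $\cR_T\le T\{\ell_{\max}-\ell(a^*)\}$ almost surely and $\eP(\cE^c)\le2\delta=2T^{-1}$, the choice $\delta=T^{-1}$ forces $\eE(\cR_T\mid\cE^c)\,\eP(\cE^c)\le 2\{\ell_{\max}-\ell(a^*)\}=O(1)$, which is dominated. Summing the pieces, the leading contribution is the $(\log T)^2$ coming from $\cR_T'$ and from the $\gamma_z$-part of $\cR_T''$, yielding $\eE\cR_T=\widetilde{O}((\log T)^2)$. The main obstacle is the careful bookkeeping of logarithmic factors: one must verify that the variance/coefficient estimation constants contribute exactly one $\log T$ factor and the harmonic sum $\sum_t t^{-1}$ another, so that their product is $(\log T)^2$ rather than a higher power, and that the clean parametric $O(t^{-1})$ rates for $\hat{\balpha}_{S,z,t},\hat{\balpha}_{Y,z,t}$ (no union bound over a fresh evaluation point) are not inflated. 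The genuinely new analytic input—the matrix Chernoff control of $\lambda_{\min}(t^{-1}\sum_j\bX_j\bX_j^{\T})$ and the resulting exponential-tail deviation inequalities for the least-squares estimators—is already supplied in the lead-up to the theorem, so no new concentration machinery is required.
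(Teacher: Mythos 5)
Your proposal is correct and follows essentially the same route as the paper's proof: the same decomposition $\cR_T=\cR_{T_0}+\cR_T'+\cR_T''$, the same linear-model analogues of Lemmas~\ref{lem:low_bound_pi}--\ref{lem:pi_squa} giving $|\pi_t-\pi^*|=\widetilde{O}(C_{\delta,t,4}t^{-1/2})$ on the good event, the same conversion of the least-squares deviation inequalities to $O(t^{-1})$ mean-squared bounds via Lemma~\ref{lem:MSE_bound}, and the same harmonic-sum bookkeeping yielding $C_{\delta,T,4}^2\sum_t t^{-1}\asymp(\log T)^2$. Your explicit treatment of the bad-event term via $\cR_T\le T\{\ell_{\max}-\ell(a^*)\}$ and $\eP(\cE^c)\le 2T^{-1}$ is a small point the paper's linear-model proof leaves implicit, but it matches the handling in the nonparametric case and introduces nothing new.
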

\begin{proof}
The regret can be decomposed into three parts, i.e., $\cR_T=\cR_{T_0}+\cR_T'+\cR_T''$, with definitions similar to those in Section~\ref{supsubsubsec:high_regret}. \\
(i) According to \eqref{eq:T0_linear}, we have $T_0\asymp\log(T/\delta)$. Then, using similar arguments as in Section~\ref{supsubsubsec:high_regret}, it can be shown that $\eE\cR_{T_0}=\Theta(T_0)=\Theta(\log T)$. \\
(ii) Similar to Lemmas~\ref{lem:pi_rate} and \ref{lem:pi_squa}, we can show that $|\pi_t-\pi^*|=\Theta(C_{\delta,t,4}t^{-1/2})$ under the good event $\cE$, and 
$$
\frac{\sigma_1^2}{\pi_t}+\frac{\sigma_0^2}{1-\pi_t}-\frac{\sigma_1^2}{\pi^*}-\frac{\sigma_0^2}{1-\pi^*}=\Theta(|\pi_t-\pi^*|^2)=\Theta(C_{\delta,t,4}^2t^{-1}).
$$
Thus, under the good event $\cE,$
$$
\cR_T'\le\sum_{t=T_0+1}^{T}\frac{\sigma_1^2}{\pi_t}+\frac{\sigma_0^2}{1-\pi_t}-\frac{\sigma_1^2}{\pi^*}-\frac{\sigma_0^2}{1-\pi^*}=O(C_{\delta,T,4}^2\log T)=\Theta((\log T)^2),
$$
which further implies that $\eE\cR_T'=O((\log T)^2).$ \\
(iii) Note that
$$
\begin{aligned}
    \cR_T'\le&\frac{2}{\underline{\pi}}\sum_{z\in\{0,1\}}\sum_{t=T_0+1}^{T}\eE_{(\bsX,\bsS)}\big[\{\mu_z(\bX_t,\bS_t)-\hat{\mu}_{z,t-1}(\bX_t,\bS_t)\}^2\big|\cF_{t-1}]\\
    \lesssim&\sum_{z\in\{0,1\}}\sum_{t=T_0+1}^{T}\eE_{\bsX}\big\{(\hat{\balpha}_{S,z,t-1}^{\T}\bX-\balpha_{S,z}^{\T}\bX)^2|\cF_{t-1}\big\}\\
    &+\sum_{z\in\{0,1\}}\sum_{t=T_0+1}^{T}(\hat{\gamma}_{z,t-1}-\gamma_z)^2\eE_{\bsX}\big\{(\hat{\balpha}_{S,z,t-1}^{\T}\bX-\balpha_{S,z}^{\T}\bX)^2|\cF_{t-1}\big\}\\
    &+\sum_{z\in\{0,1\}}\sum_{t=T_0+1}^{T}\eE_{\bsX}\big\{(\hat{\balpha}_{Y,z,t-1}^{\T}\bX-\balpha_{Y,z}^{\T}\bX)^2|\cF_{t-1}\big\}
    +\sum_{z\in\{0,1\}}\sum_{t=T_0+1}^{T}(\hat{\gamma}_{z,t-1}-\gamma_z)^2.
\end{aligned}
$$
Under the good event $\cE,(\hat{\gamma}_{z,t-1}-\gamma_z)^2\lesssim C_{\delta,t,2}^2t^{-1}$. Combining Lemma~\ref{lem:MSE_bound} and the deviation inequalities for $|\hat{\balpha}_{S,z,t}^{\T}\bx-\balpha_{S,z}^{\T}\bx|$ and $|\hat{\balpha}_{Y,z,t}^{\T}\bx-\balpha_{Y,z}^{\T}\bx|$, we have $\eE(\hat{\balpha}_{S,z,t-1}^{\T}\bX-\balpha_{S,z}^{\T}\bX)^2|\cE)\lesssim t^{-1}$ and $\eE(\hat{\balpha}_{Y,z,t-1}^{\T}\bX-\balpha_{Y,z}^{\T}\bX)^2|\cE)\lesssim t^{-1}$. Thus, we have $\eE\cR_T''\lesssim C_{\delta,T,2}^2\sum_{t=1}^{T}t^{-1}=\Theta((\log T)^2)$. Combining above, $\eE\cR_T\lesssim \eE\cR_{T_0}+\eE\cR_T'+\eE\cR_T''\lesssim(\log T)^2.$ The proof is completed. 
\end{proof}

To ensure $\cR_T=\widetilde{O}(\log T),$ we need $C_{\delta,t,4}\asymp\sqrt{\log\log t+\log(1/\delta)}.$ We consider using the uniform Bernstein bound proposed in \textcolor{blue}{Balsubramani and Ramdas} (\textcolor{blue}{2016}), which provide the law of the iterated logarithm (LIL)-based concentration inequalities. This uniform Bernstein bound only applies to bounded martingales. Notice that
$$
\begin{aligned}
    \hat{\sigma}_{0,t}^2-\sigma_0^2=&N_{0,t}^{-1}\sum_{j\in G_{0,t}}\{Y_j-\hat{\mu}_{0,t}(\bX_j,\bS_j)\}^2-\sigma_0^2\\
    =&N_{0,t}^{-1}\sum_{j\in G_{0,t}}\{Y_j-\mu_0(\bX_j,\bS_j)+\mu_0(\bX_j,\bS_j)-\hat{\mu}_{0,t}(\bX_j,\bS_j)\}^2-\sigma_0^2\\
    =&N_{0,t}^{-1}\sum_{j\in G_{0,t}}\{\varepsilon_j^2(0)-\sigma_0^2\}+N_{0,t}^{-1}\sum_{j\in G_{0,t}}\{\mu_0(\bX_j,\bS_j)-\hat{\mu}_{0,t}(\bX_j,\bS_j)\}^2\\
    &+N_{0,t}^{-1}\sum_{j\in G_{0,t}}\{Y_j-\mu_0(\bX_j,\bS_j)\}\cdot \{\mu_0(\bX_j,\bS_j)-\hat{\mu}_{0,t}(\bX_j,\bS_j)\}\\
    =:&N_{0,t}^{-1}\sum_{j\in G_{0,t}}M_j-N_{0,t}^{-1}\sum_{j\in G_{0,t}}\{\mu_0(\bX_j,\bS_j)-\hat{\mu}_{0,t}(\bX_j,\bS_j)\}^2\\
    &+2N_{0,t}^{-1}\sum_{j\in G_{0,t}}\{Y_j-\hat{\mu}_{0,t}(\bX_j,\bS_j)\}\cdot \{\mu_0(\bX_j,\bS_j)-\hat{\mu}_{0,t}(\bX_j,\bS_j)\}\\
    \le&N_{0,t}^{-1}\sum_{j\in G_{0,t}}M_j+2N_{0,t}^{-1}\sum_{j\in G_{0,t}}\{Y_j-\hat{\mu}_{0,t}(\bX_j,\bS_j)\}\cdot \{\mu_0(\bX_j,\bS_j)-\hat{\mu}_{0,t}(\bX_j,\bS_j)\}.
\end{aligned}
$$
If $\{M_j,\cF_j\}$ forms a bounded martingale sequence, then with probability at least $1-\delta,$
$$
N_{0,t}^{-1}\sum_{j\in G_{0,t}}M_j\lesssim \sqrt{\log\log t+\log (1/\delta)}.
$$
However, the second term above cannot be bounded under the law of the iterated logarithm. In \textcolor{blue}{Neopane et al.} (\textcolor{blue}{2025a},\textcolor{blue}{b}), the second term is zero. In particular, when both $\mu_0(\cdot,\cdot)$ and $\mu_1(\cdot,\cdot)$ are assumed to be constants functions, we may use the average of outcomes in each group to estimate them. Specifically, letting $\hat{\mu}_0(\cdot,\cdot)\equiv N_{0,t}^{-1}\sum_{j\in G_{0,t}}Y_j$, we have $\sum_{j\in G_{0,t}}\{Y_j-\hat{\mu}_{0,t}(\bX_j,\bS_j)\}\cdot \{\mu_0(\bX_j,\bS_j)-\hat{\mu}_{0,t}(\bX_j,\bS_j)\}=0.$ Combining above, it follows that $C_{\delta,t,4}\asymp\sqrt{\log\log t+\log(1/\delta)},$ and thus $\eE\cR_T'=\widetilde{O}(\log T).$ Similarly, we can also show that $\eE\cR_T''=\widetilde{O}(\log T)$ under this setting, which leads to $\eE\cR_T=\widetilde{O}(\log T).$

\subsection{Batch-wise adaptive experimental design}
\label{supsubsec:batch_adaptive}

In practice, when the covariate dimension $d$ is high or the total time horizon $T$ is large, implementing the SLOACI algorithm under a fully adaptive design can be computationally intensive. Furthermore, such designs are often infeasible in real-world applications due to delayed outcomes, ethical restrictions, and other resource constraints. Therefore, we consider a batch-wise adaptive experimental design and extend Algorithm~\ref{alg:main} to a batched version. The main idea is that, during the concentration phase of the algorithm, the conditional mean functions and variances are not estimated, nor is the allocation updated, at every stage. Instead, these estimations and updates are performed only at specified stages, or more specifically, after each batch of size $b$ is observed. The batch-wise adaptive experimental design, summarized in Algorithm~\ref{alg:batch}, is employed to accelerate the simulations in Sections \ref{sec:case} and \ref{supsec:sim}. It is straightforward to show that this batch-wise procedure can still yield an ATE estimator that attains the semiparametric efficiency bound, as presented in Theorem~\ref{thm:asy_nor}, and admits a regret bound similar to that established in Theorem~\ref{thm:regret}.

\begin{algorithm}[H]
\spacingset{1.2}
\caption{Batch-wise SLOACI for nonparametric nonlinear model}
\begin{algorithmic}[1]
\label{alg:batch}
\STATE \textbf{Input}: Batch size $b$, clipping rate $\eta$, the length of initialization phase $T_0$, the total time horizon $T$, and the kernel function $K_h(\cdot)$.

\STATE \textbf{Initialization}: For $t=1,\dots,T_0,$ assign the treatment $Z_t$ to 0 for odd $t$ and to 1 for even $t$, observe the covariates $\bX_t$, the surrogates $\bS_t=(S_t(0),S_t(1))^{\T}$ and the outcome $Y_t=Y_t(1)Z_t+Y_t(0)(1-Z_t)$, and obtain the conditional mean function and variance estimators $\hat{\mu}_{0,0}(\bx,\bs),$ $\hat{\mu}_{1,0}(\bx,\bs),$ $\hat{\sigma}_{0,0}^2$, and $\hat{\sigma}_{1,0}^2$ based on observations $\{(\bX_j,Z_j,\bS_j,Y_j)\}_{j=1}^{T_0}$ according to \eqref{eq:m_S_hat}--\eqref{eq:sigma_hat}.

\FOR{$k=1,\dots,\lceil(T-T_0)/b\rceil,$}
    \STATE Define the $k$-th batch as $B_k=\{T_0+(k-1)b+1,\dots,(T_0+kb)\wedge T\}$.
    \STATE Calculate the initial allocation as $\widetilde{\pi}_k=\hat{\sigma}_{1,k-1}/(\hat{\sigma}_{1,k-1}+\hat{\sigma}_{0,k-1}).$
    \STATE Set the clipping threshold as $\zeta_k=(1/2)\cdot \{T_0+(k-1)b+1\}^{-\eta}$.
    \STATE Obtain the adaptive treatment allocation by clipping as $\pi_k={\rm CLIP}(\widetilde{\pi}_k,\zeta_k,1-\zeta_k)$.
    \FOR{$t\in[B_k],$}
        \STATE Assign the treatment $Z_t$ as 1 with probability $\pi_k$ and 0 with $1-\pi_k$.
        \STATE Observe the covariates $\bX_t$, the surrogates $\bS_t=(S_t(0),S_t(1))^{\T}$, and the outcome $Y_t=Y_t(1)Z_t+Y_t(0)(1-Z_t)$.
    \ENDFOR
    \STATE Calculate the estimated $\hat{m}_{S,0,k}(\cdot),\hat{m}_{Y,0,k}(\cdot),\hat{\gamma}_{0,k},\hat{m}_{S,1,k}(\cdot),\hat{m}_{Y,1,k}(\cdot)$, and $\hat{\gamma}_{1,k}$ based on observations $\{(\bX_j,Z_j,\bS_j,Y_j)\}_{j=1}^{(T_0+kb)\wedge T}$ according to \eqref{eq:m_S_hat}--\eqref{eq:gamma_hat}.
    \STATE Obtain the conditional mean function and variance estimators $\hat{\mu}_{0,k}(\bx,\bs),$ $\hat{\mu}_{1,k}(\bx,\bs),$ $\hat{\sigma}_{0,k}^2$, and $\hat{\sigma}_{1,k}^2$ according to~\eqref{eq:mu_hat} and \eqref{eq:sigma_hat}.
\ENDFOR
\STATE \textbf{Estimate}: 
{\footnotesize $$
\begin{aligned}
    \hat{\tau}_T=&\frac{1}{T_0}\sum_{t=1}^{T_0}\frac{I(Z_t=1)Y_t}{1/2}-\frac{I(Z_t=0)Y_t}{1/2}
    +\frac{1}{T-T_0}\sum_{k=1}^{\lceil(T-T_0)/b\rceil}\sum_{t\in[B_k]}\bigg[\hat{\mu}_{1,k-1}(\bX_t,\bS_t)-\hat{\mu}_{0,k-1}(\bX_t,\bS_t)\\
    &+\frac{I(Z_t=1)\{Y_t-\hat{\mu}_{1,k-1}(\bX_t,\bS_t)\}}{\pi_k}-\frac{I(Z_t=0)\{Y_t-\hat{\mu}_{0,k-1}(\bX_t,\bS_t)\}}{1-\pi_k}\bigg].
\end{aligned}
$$}
\STATE \textbf{Output}: The adaptive AIPW estimator $\hat{\tau}_T$ under the proposed batch-wise SLOACI design.
\end{algorithmic}
\end{algorithm}

\subsection{RAR design in empirical studies}
\label{supsubsec:rar}

The following Algorithm~\ref{alg:rar} outlines the RAR design used for comparison with our proposed SLOACI in Sections~\ref{sec:case} and \ref{supsec:sim}. Notably, unlike the CARA design studied in \textcolor{blue}{Cook et al.} (\textcolor{blue}{2024}) and \textcolor{blue}{Kato et al.} (\textcolor{blue}{2025}), which is a special case of RAR, we assume that the variances of the error terms $\varepsilon_{Y,t}(0)$ and $\varepsilon_{Y,t}(1)$ in Model \eqref{eq:model} are independent of the covariate $\bX_t$, so the allocation does not need to be adjusted based on $\bX_t$. For the RARS design as defined in Section \ref{supsec:sim}, the new covariates are taken as $\big(\bX_t, S_t(0)\big)$ for the control group and $\big(\bX_t, S_t(1)\big)$ for the treatment group, while the remaining steps are similar to those of the RAR design. In addition, the batch-wise versions of RAR and RARS can also be extended in a similar manner to that in Section~\ref{supsubsec:batch_adaptive}, which we omit here.

\begin{algorithm}[H]
\spacingset{1.2}
\caption{RAR for nonparametric nonlinear model}
\begin{algorithmic}[1]
\label{alg:rar}
\STATE \textbf{Input}: Clipping rate $\eta$, the length of initialization phase $T_0$, the total time horizon $T$, and the kernel function $K_h(\cdot)$.

\STATE \textbf{Set}: $\pi_0 \leftarrow 1/2$, $\hat{m}_{Y,0,0}(\bx)\leftarrow 0$, $\hat{m}_{Y,1,0}(\bx)\leftarrow 0$, $\hat{\sigma}_{Y,0,0}^2\leftarrow 0$, and $\hat{\sigma}_{Y,1,0}^2\leftarrow 0$.

\STATE \textbf{Initialization}: For $t=1,\dots,T_0,$ set $\pi_t \leftarrow 1/2$, $\hat{m}_{Y,0,t}(\bx)\leftarrow 0$, and $\hat{m}_{Y,1,t}(\bx)\leftarrow 0$, assign the treatment $Z_t$ to 0 for odd $t$ and to 1 for even $t$, observe the covariates $\bX_t$, and the outcome $Y_t=Y_t(1)Z_t+Y_t(0)(1-Z_t)$, and obtain the conditional mean function and variance estimators $\hat{m}_{Y,0,T_0}(\bx),$ $\hat{m}_{Y,1,T_0}(\bx),$ $\hat{\sigma}_{Y,0,T_0}^2$, and $\hat{\sigma}_{Y,1,T_0}^2$.

\FOR{$t=T_0+1,\dots,T,$}
    \STATE Calculate the initial allocation as $\widetilde{\pi}_t=\hat{\sigma}_{Y,1,t-1}/(\hat{\sigma}_{Y,1,t-1}+\hat{\sigma}_{Y,0,t-1}).$
    \STATE Set the clipping threshold as $\zeta_t=(1/2)\cdot t^{-\eta}$.
    \STATE Obtain the adaptive treatment allocation by clipping as $\pi_t={\rm CLIP}(\widetilde{\pi}_t,\zeta_t,1-\zeta_t)$.
    \STATE Assign the treatment $Z_t$ as 1 with probability $\pi_t$ and 0 with probability $1 - \pi_t$.
    \STATE Observe the covariates $\bX_t$, and the outcome $Y_t=Y_t(1)Z_t+Y_t(0)(1-Z_t)$.
    \STATE Calculate the estimated $\hat{m}_{Y,0,t}(\cdot)$ and $\hat{m}_{Y,1,t}(\cdot)$ according to \eqref{eq:m_Y_hat}.
    \STATE Obtain the conditional variance estimators by
    $$
    \hat{\sigma}_{Y,z,t}^2=\frac{1}{\sum_{j=1}^{t}I(Z_j=z)}\sum_{j=1}^{t}\{Y_j-\hat{m}_{Y,z,t}(\bX_j)\}^2I(Z_j=z),\quad z\in\{0,1\}.
    $$
\ENDFOR
\STATE \textbf{Estimate}: 
$$
\begin{aligned}
    \hat{\tau}_T=\frac{1}{T}\sum_{t=1}^{T}&\bigg[\hat{m}_{Y,1,t-1}(\bX_t)-\hat{m}_{Y,0,t-1}(\bX_t)\\
    &+\frac{I(Z_t=1)\{Y_t-\hat{m}_{Y,1,t-1}(\bX_t)\}}{\pi_t}-\frac{I(Z_t=0)\{Y_t-\hat{m}_{Y,0,t-1}(\bX_t)\}}{1-\pi_t}\bigg].
\end{aligned}
$$
\STATE \textbf{Output}: The adaptive AIPW estimator $\hat{\tau}_T$ under the RAR design.
\end{algorithmic}
\end{algorithm}

\subsection{Heteroskedasticity setting}
\label{supsubsec:heter}

This section considers extending the proposed method in Section~\ref{subsec:procedure} to the heteroskedastic setting, under which we allow the conditional variances of $Y_t(0)$ and $Y_t(1)$ to depend on $\big(\bX_t, S_t(0)\big)$ and $\big(\bX_t, S_t(1)\big)$, respectively. Define $\sigma_z^2(\bx,\bs):=\var\{\varepsilon_t(z)|\bX_t=\bx,\bS_t=\bs\},$ and $e_z(\bx,\bs):=\eE\{\varepsilon_t^2(z)|\bX_t=\bx,\bS_t=\bs\}$ for each $z\in\{0,1\}.$ The conditional variance estimator in \eqref{eq:sigma_hat} can be replaced by the following two-step procedure. \textit{First}, a Nadaraya-Watson estimator of $e_z(\bx,\bs)$ at Stage $t$ is
$$
\hat{e}_{z,t}(\bx,\bs)=\frac{\sum_{j=1}^{t}\widetilde{K}_h(\widetilde{\bw}_z-\widetilde{\bW}_{z,j})Y_j^2I(Z_j=z)}{\sum_{j=1}^{t}\widetilde{K}_h(\widetilde{\bw}_z-\widetilde{\bW}_{z,j})I(Z_j=z)},
$$
where $\widetilde{\bW}_{z,j}=\big(\bX_j^{\T},S_j(z)\big)^{\T},\widetilde{\bw}_z=\big(\bx^{\T},s(z)\big)^{\T}\in\eR^{d+1},$ and $\widetilde{K}_h:\eR^{d+1}\to1$ is a multivariate kernel function with bandwidth $h>0$. In particular, define $\hat{e}_{z,t}(\bx,\bs)=0$ if $\sum_{j=1}^{t}\widetilde{K}_h(\widetilde{\bw}_z-\widetilde{\bW}_{z,j})I(Z_j=z)=0.$
\textit{Second}, a natural estimator of $\sigma_z^2(\bx,\bs)$ is 
$$
\hat{\sigma}_{z,t}^2(\bx,\bs)=
\begin{cases}
    \{\hat{e}_{z,t}(\bx,\bs)-\hat{\mu}_{z,t}^2(\bx,\bs)\}, & {\rm if}\ \hat{e}_{z,t}(\bx,\bs)>\hat{\mu}_{z,t}^2(\bx,\bs), \\
    \epsilon, & {\rm if}\ \hat{e}_{z,t}(\bx,\bs)\le\hat{\mu}_{z,t}^2(\bx,\bs),
\end{cases}
$$
where $\epsilon>0$ is a small constant to ensure positivity. The allocation at each Stage $t\in[T]$, defined as $\pi_t=\eP(Z_t=1|\bX_t,\bS_t,\cF_{t-1})$, is then updated (before CLIP) as:
$$
\pi_t=\frac{\hat{\sigma}_{1,t-1}(\bX_t,\bS_t)}{\hat{\sigma}_{1,t-1}(\bX_t,\bS_t)+\hat{\sigma}_{0,t-1}(\bX_t,\bS_t)}.
$$

This extension generalizes the surrogate-leveraged RAR design developed in Section~\ref{subsec:procedure} to a covariate-adjusted surrogate-leveraged RAR design. Analogous asymptotic and non-asymptotic theoretical results can be established accordingly.

\section{Simulation studies}
\label{supsec:sim}

In this section, we conduct a series of simulations to evaluate the empirical performance of the proposed adaptive designs and sequential testing procedures. 

\subsection{Setup}

Suppose the covariate $X_t$ is independently sampled from ${\rm Uniform}(0,2).$ The mean functions of the primary outcomes and the surrogates are specified as $m_{Y,1}(x)=x^2+3x+1,m_{Y,0}(x)=x^2+x+2,m_{S,1}(x)=\sin(2x)$, and $m_{S,0}(x)=\sin(x).$ The model errors in \eqref{eq:model} are assumed to follow bivariate Gaussian distributions: $\big(\varepsilon_{Y,t}(0),\varepsilon_{S,t}(0)\big)^{\T}\sim\cN(\bzero,\bSigma_0)$ and $\big(\varepsilon_{Y,t}(1),\varepsilon_{S,t}(1)\big)^{\T}\sim\cN(\bzero,\bSigma_1)$, where $\bSigma_0=\sigma_{Y,0}^2
\left(
\begin{smallmatrix}
1 & \rho_0 \\
\rho_0 & 1
\end{smallmatrix}
\right)$ 
and $\bSigma_1=\sigma_{Y,1}^2
\left(
\begin{smallmatrix}
1 & \rho_1 \\
\rho_1 & 1
\end{smallmatrix}
\right)$, with $\sigma_{Y,0}+\sigma_{Y,1}=1$. Under this setup, the ATE is $\tau_0=1$, and the semiparametric efficiency bound without surrogates is $T\cdot\widetilde{V}^*=2.333$. We consider the following three scenarios:
\begin{itemize}
    \item[(i)] Scenario 1: $\rho_0=0.8,\rho_1=0.8,\sigma_{Y,0}=0.4,\sigma_{Y,1}=0.6$;

    \item[(ii)] Scenario 2: $\rho_0=0.5,\rho_1=-0.5,\sigma_{Y,0}=0.4,\sigma_{Y,1}=0.6$;

    \item[(iii)] Scenario 3: $\rho_0=0.8,\rho_1=0.8,\sigma_{Y,0}=0.2,\sigma_{Y,1}=0.8$.
\end{itemize}
In Scenarios 1, 2 and 3, the surrogate-leveraged semiparametric efficiency bounds are $T\cdot V_T^*=2.026,2.213$ and $2.128$, respectively, and the optimal allocations are $\pi^*=0.6,0.6$ and 0.8, respectively. Scenario 1 represents a standard setting where the conditional correlations between the primary outcomes and the surrogates are relatively high. Scenario 2 reduces the conditional correlations, while Scenario 3 increases the difference in conditional variances between the control and treatment groups. By comparing the results of Scenario 1 with those of Scenarios 2 and 3, we can assess the impact of $\rho_z$ and $\underline{\pi}$ on the performance of our proposed method, respectively.

We compare the proposed SLOACI in Algorithm~\ref{alg:main} with three competing designs: the RAR design, the RAR design incorporating surrogates (denoted as RARS), and the optimal design (denoted as OPT). 
RAR relies only on covariate $X_t$, excluding the surrogates $S_t(0)$ and $S_t(1)$, to estimate conditional mean functions and variances of the primary outcomes, which are then used to determine the allocation and estimate the ATE. Other components, including the use of the Nadaraya-Watson regression to estimate $m_{Y,0}(\cdot)$ and $m_{Y,1}(\cdot)$, as well as the application of CLIP, remain the same as in SLOACI.
The detailed procedure of RAR is given in Section~\ref{supsubsec:rar}.
RARS follows similar procedures to RAR, but differs in that the conditional mean function of $Y_t(z)$ is estimated using both $S_t(z)$ and $X_t$ as covariates for $z\in\{0,1\}$. In other words, it regards the semiparametric partially linear models in \eqref{eq:model_cond_XS} as nonparametric models. OPT corresponds to the oracle estimator given in \eqref{eq:opt_est}. For a fair comparison, all the above designs estimate the ATE using the (adaptive) AIPW estimator. It is noteworthy that numerous prior studies (e.g., 
\textcolor{blue}{Dai et al.}, \textcolor{blue}{2023}; \textcolor{blue}{Kato et al.}, \textcolor{blue}{2025}; \textcolor{blue}{Neopane et al.}, \textcolor{blue}{2025a},\textcolor{blue}{b}) have shown that RAR outperforms RCT and Explore-then-Commit (\textcolor{blue}{Hahn et al.}, \textcolor{blue}{2011}), and that the AIPW estimator is more efficient than the IPW and Difference-in-Means estimators. Therefore, we omit comparisons with these baseline methods in this paper.
For sequential testing, we compare the sequence of CLT-based confidence intervals (denoted as CLT), the sequence of Bonferroni-corrected confidence intervals (denoted as BF), the asymptotic confidence sequence (denoted as ASY), and the empirical Bernstein confidence sequence (denoted as EB). 

Unless otherwise specified, when implementing the adaptive designs, we set $\eta=0.25,T_0=20, \kappa=2\beta+d,\beta=1$, and $c_h=2\hat{\sigma}(X_t)$, where $\hat{\sigma}(X_t)$ denotes the sample standard deviation of $\bX_t$, and use the Epanechnikov kernel $K(x)=(3/4)(1-x^2)I(|x|<1)$ in \eqref{eq:m_S_hat} and \eqref{eq:m_Y_hat}. For sequential testing, the nominal type-I error rate is fixed at $\alpha=0.05$; the initial peeking time is selected as $t_0=50$; for the Bonferroni correction in \eqref{eq:bf_ci}, the total time horizon is set to $2500$; for ASY in Theorem~\ref{thm:asy_cs}, the parameter $\varrho$ is chosen as in Section~\ref{subsec:asy_cs}; and for EB in Theorem~\ref{thm:finite_cs}, we set $c=0.5,\nu_0^2=1$, and $\hat{\xi}_0=0$. To accelerate computation, we update the estimators $\hat{\mu}_{z,t}(\cdot,\cdot)$ and $\hat{\sigma}_{z,t}$ for $z\in\{0,1\}$ using a batch-wise adaptive setting with batch size $b=50$; see Section~\ref{supsubsec:batch_adaptive} for detailed discussion. We repeat each simulation 2000 times.

\subsection{Results}

We present two sets of simulation results: the efficiency comparison in Section \ref{supsubsec:sim_efficiency}, and the sequential testing procedure comparison in Section \ref{supsubsec:sim_testing}.

\subsubsection{Efficiency comparison}\label{supsubsec:sim_efficiency}

Figure~\ref{fig:regret1} shows the curves of normalized variance $T\cdot \var(\hat{\tau}_T)$ for each design against the cumulative sample size $T$. The black dashed line represents the surrogate-leveraged semiparametric efficiency bound under each scenario. For each fixed $T$, the vertical distance between a curve and the black dashed line corresponds to the estimate of the expected regret of a given design, as defined in \eqref{eq:ney_regret}, scaled by $1/T$. From Figure \ref{fig:regret1}, we observe several apparent trends.

\textit{First}, SLOACI converges to the respective surrogate-leveraged semiparametric efficiency bound in all three scenarios, illustrating the semiparametric efficiency of our proposed estimator and validating Theorem~\ref{thm:asy_nor}. %Meanwhile, although the semiparametric efficiency bound in Scenario 3 is larger than that in Scenario 1, the cumulative sample size required for the normalized variance of SLOACI to converge to its optimal level is slightly larger in Scenario 3. 
Meanwhile, although the semiparametric efficiency bound in Scenario 3 is larger than that in Scenario 1, the empirical regret is larger in Scenario 3. 
This is expected because a smaller $\underline{\pi}$ leads to a larger regret bound, as demonstrated in Theorem~\ref{thm:regret}. \textit{Second}, OPT attains the semiparametric efficiency bound even for small $T$, as it has has access to the true models. \textit{Third}, RAR converges only to the suboptimal semiparametric efficiency bound without surrogates, showing the advantage of leveraging surrogate information, as shown in \eqref{eq:ney_var_red}. Moreover, the efficiency gain in Scenario 1 is larger than in Scenarios 2 and 3, because $\rho_0,\rho_1$ and $\sigma_{Y,0}\sigma_{Y,1}$ are larger in Scenario 1 compared to Scenarios 2 and 3, respectively. 

\textit{Lastly}, although RARS leverages surrogate information and thus shares the same theoretical semiparametric efficiency bound as SLOACI, its normalized variance converges significantly more slowly than that of SLOACI. The convergence of RARS is also slower than that of RAR, while attaining a larger optimal normalized variance, which can be evidenced by the crossover of the RAR and RARS curves in Scenarios 1 and 3. This is because the covariate dimension is $d=1$ for SLOACI and RAR, whereas for RARS it is $d=2$. According to Theorem~\ref{thm:regret}, the order of the regret bound increases with $d$, indicating that regarding the surrogates as the covariates can reduce the finite-sample performance of designs. 

\begin{figure}[H]
\centering
\includegraphics[width=0.8\linewidth]{./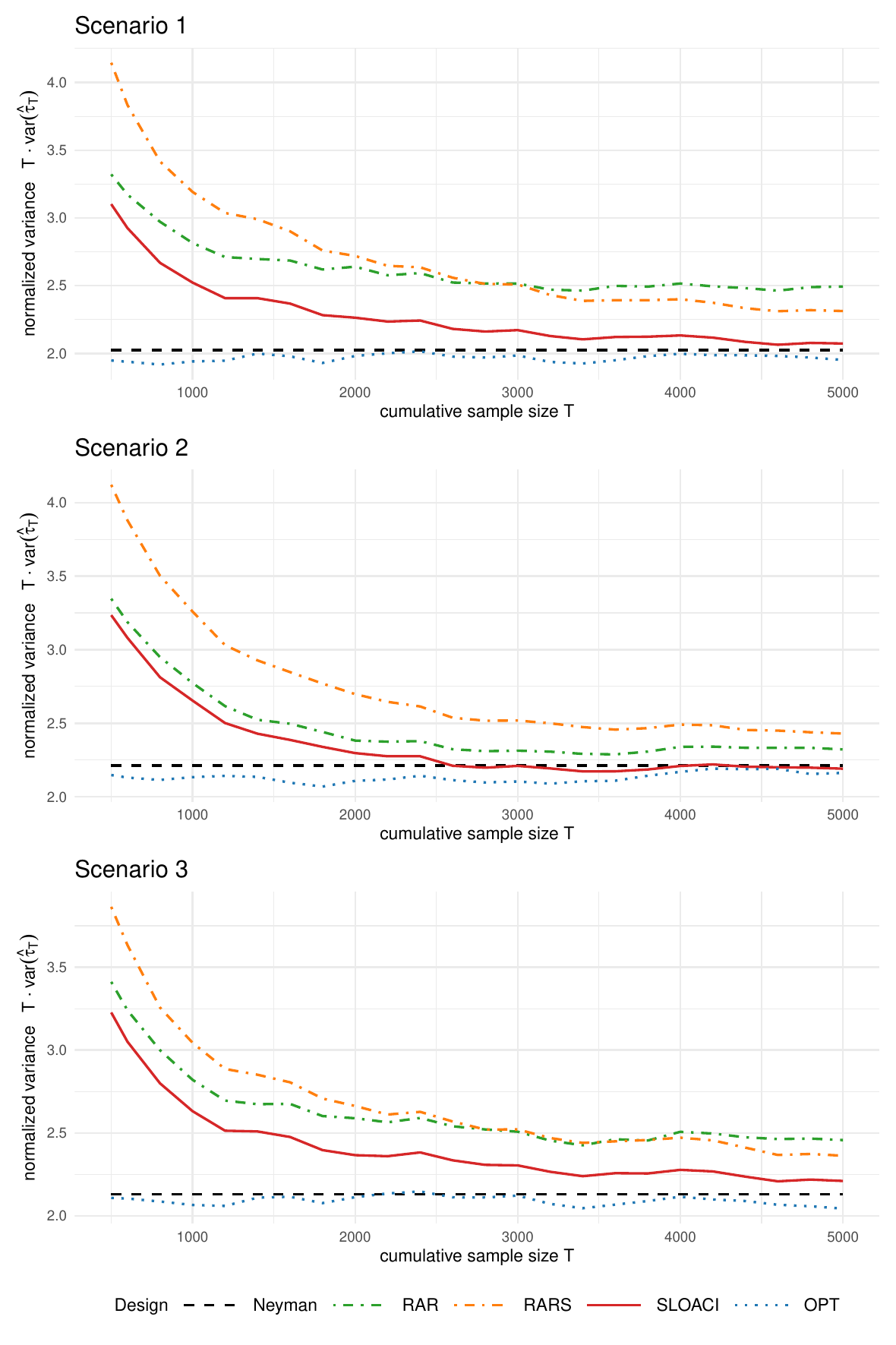}
\vspace{-1em}
\caption{\small Comparison of normalized variances of the (adaptive) AIPW estimator across different designs in Scenarios 1, 2 and 3.}
\label{fig:regret1}
\end{figure}

We further investigate an extreme Scenario 4 where the primary outcomes and the surrogates are conditionally uncorrelated, where the surrogate-leveraged semiparametric efficiency bound is $T\cdot V_T^*=2.333$, and the optimal allocation is $\pi^*=0.6.$ 
\begin{itemize}
    \item[(iv)] Scenario 4: $\rho_0=0,\rho_1=0,\sigma_{Y,0}=0.4,\sigma_{Y,1}=0.6$.
\end{itemize}
The normalized variance curves are displayed in Figure~\ref{fig:regret4}. It can be seen that SLOACI and RAR have almost overlapping curves, while the convergence of RARS is slower than both. This demonstrates the advantage of our proposed method: when it is uncertain whether the surrogates are conditionally correlated with the primary outcomes, using SLOACI incurs no adverse effect, whereas regarding them as covariates can slow the convergence rate due to the curse of dimensionality in nonparametric regression. The above observations offer solid validation of our theoretical results established in Section~\ref{sec:theory}. 

\begin{figure}[H]
\begin{center}
\includegraphics[width=1\linewidth]{./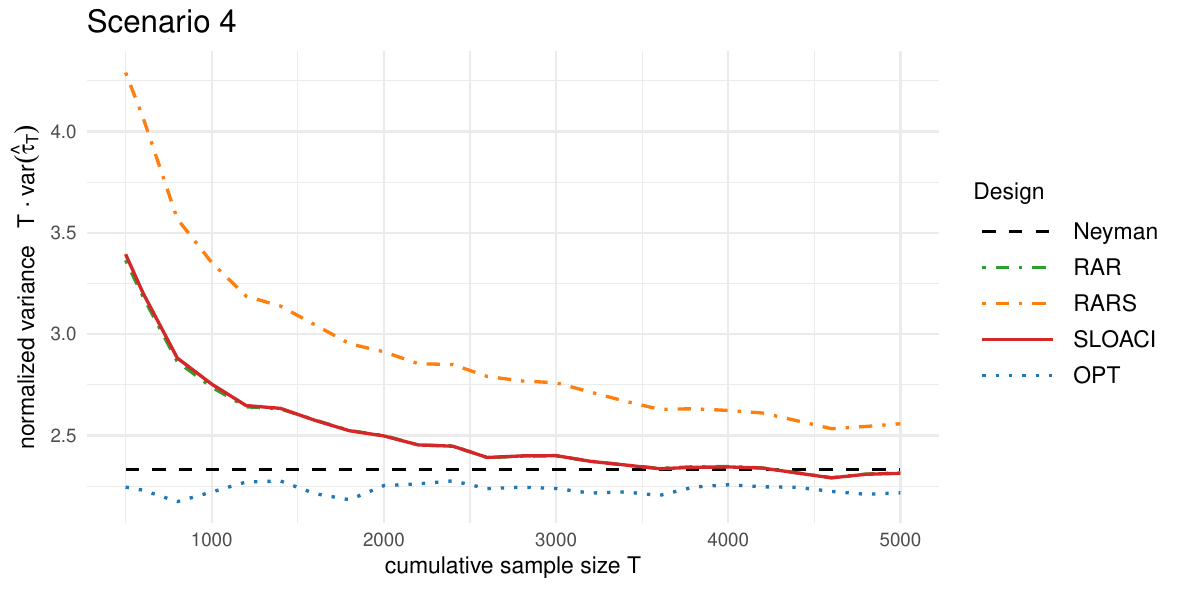}
\end{center}
\vspace{-1em}
\caption{\small Comparison of normalized variances of the (adaptive) AIPW estimator across different designs in Scenario 4.}
\label{fig:regret4}
\end{figure}

We also conduct additional simulations to investigate the effects of using different nonparametric regressions and other tuning parameters on the proposed algorithm. As discussed in Section~\ref{subsec:procedure}, the Nadaraya-Watson regressions in \eqref{eq:m_S_hat} and \eqref{eq:m_Y_hat} can be replaced by any nonparametric regression. Figure~\ref{fig:diff_est} uses Scenario 1 as an example to compare the performance of the Epanechnikov kernel and the Gaussian kernel regressions with different bandwidth parameters $c_h$, and the natural spline regression with different degrees of freedom. It can be observed that the convergence trends of the normalized variance of the ATE estimators obtained from these regressions are overall consistent, with the spline regression yielding a regret bound that is smaller in its constant terms. This suggests that the proposed method has broad applicability and can be combined with more advanced regression techniques. 

\begin{figure}[H]
\begin{center}
\includegraphics[width=1\linewidth]{./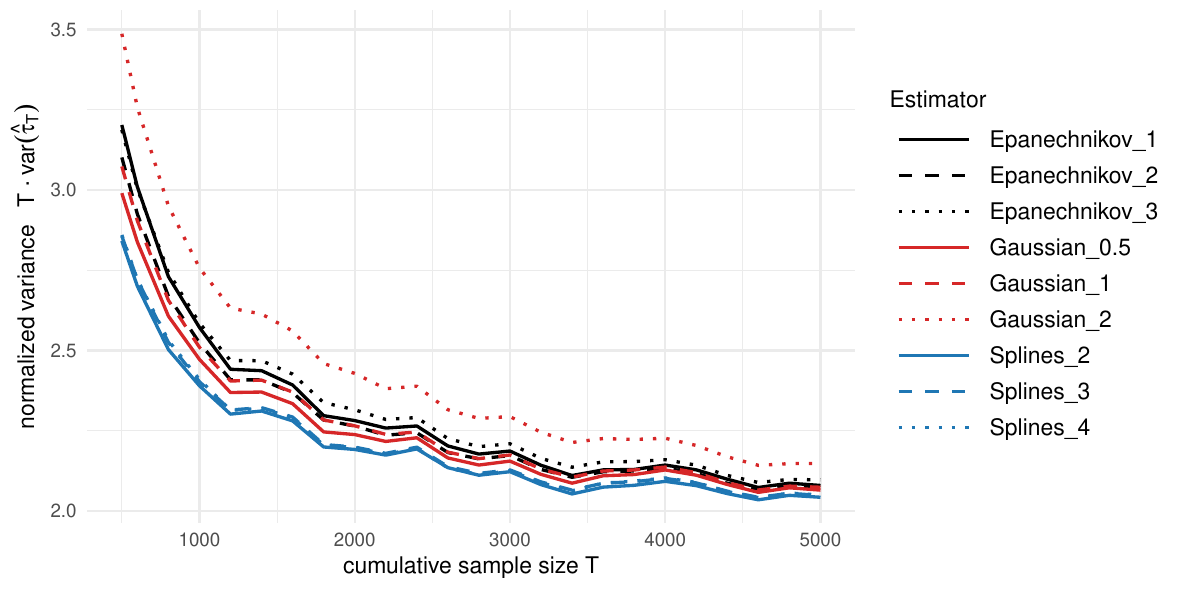}
\end{center}
\vspace{-1em}
\caption{\small Comparison of normalized variances of the proposed estimator across different nonparametric regression estimators and tuning parameters in Scenario 1. The number at the end of each legend label denotes the tuning parameter for each estimator.}
\label{fig:diff_est}
\end{figure}

To examine the effects of tuning parameters on the proposed design in Algorithm~\ref{alg:batch}, we compare different batch sizes, lengths of initialization phase, and clipping rates in Scenario 1 in Figures~\ref{fig:batch}, \ref{fig:T0}, and \ref{fig:clipping_rate}, respectively. The results indicate that the batch size $b$ and clipping rate $\eta$ have minimal impact on the convergence of the normalized variance, whereas the length of initialization phase $T_0$ has a more pronounced effect. Specifically, if $T_0$ is either too small or too large, the regret bound increases, leading to larger normalized variance, as shown in Theorem~\ref{thm:regret}. Choosing $T_0$ appropriately is therefore crucial in finite samples for reducing estimation variance and enhancing inference power.

\begin{figure}[H]
\begin{center}
\includegraphics[width=1\linewidth]{./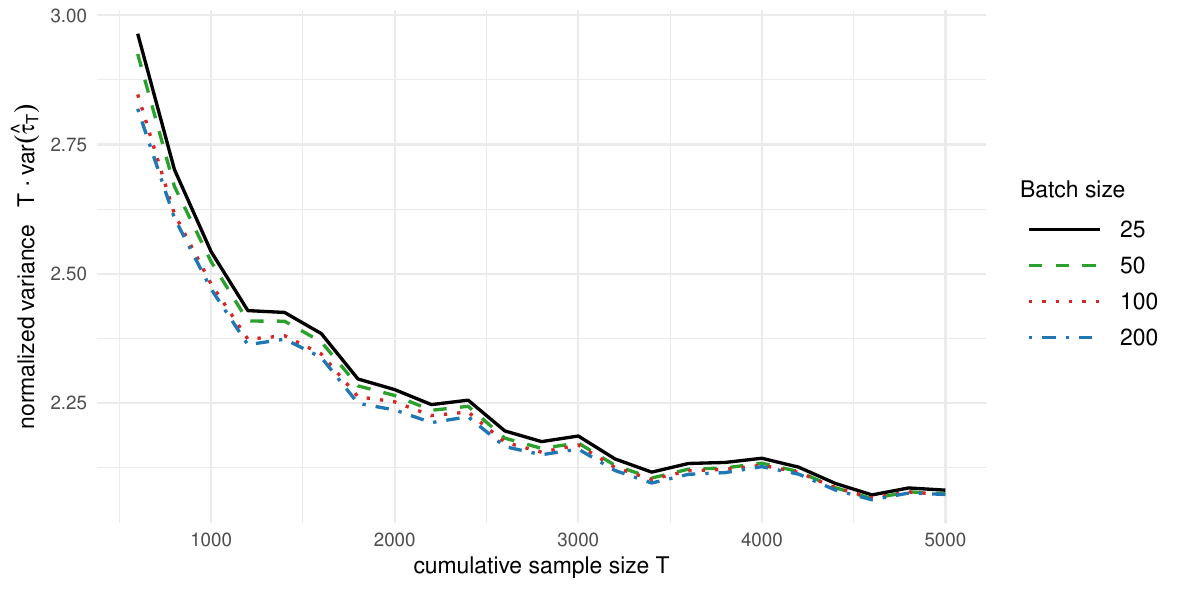}
\end{center}
\vspace{-1em}
\caption{\small Comparison of normalized variances of the proposed estimator across different batch sizes in Scenario 1.}
\label{fig:batch}
\end{figure}

\begin{figure}[H]
\begin{center}
\includegraphics[width=1\linewidth]{./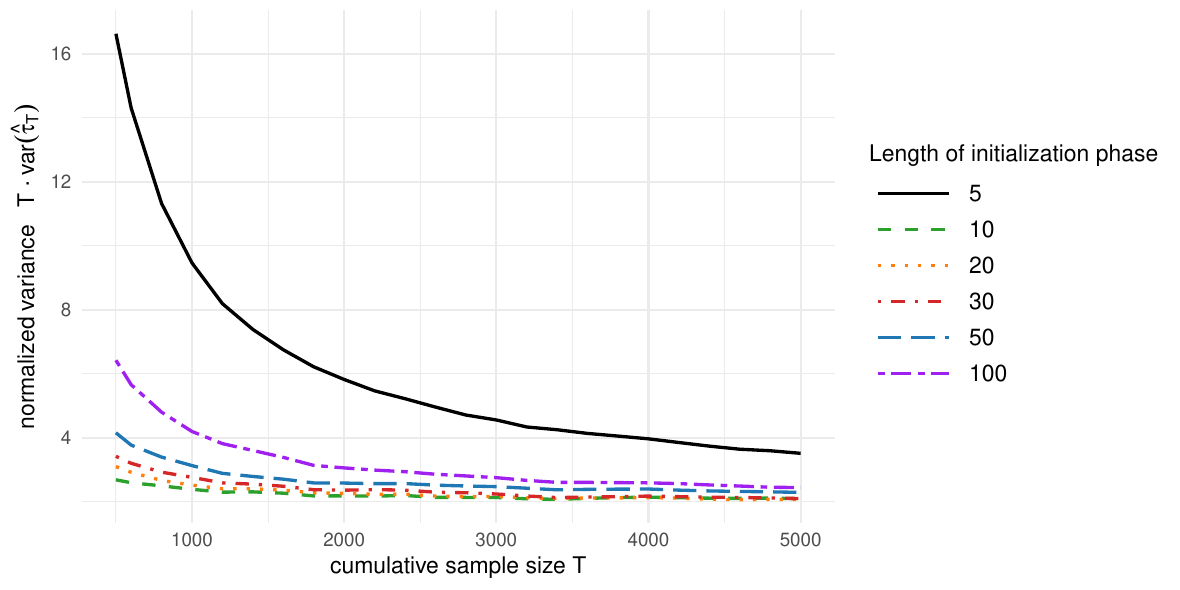}
\end{center}
\vspace{-1em}
\caption{\small Comparison of normalized variances of the proposed estimator across different lengths of initialization phase in Scenario 1.}
\label{fig:T0}
\end{figure}

\begin{figure}[H]
\begin{center}
\includegraphics[width=1\linewidth]{./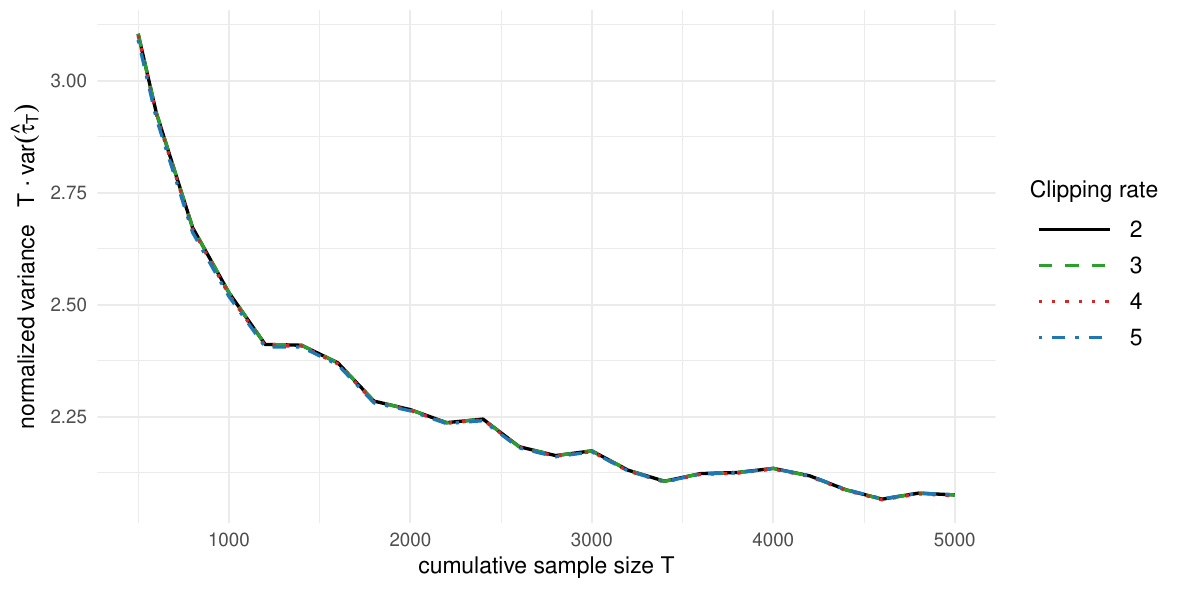}
\end{center}
\vspace{-1em}
\caption{\small Comparison of normalized variances of the proposed estimator across different clipping rates in Scenario 1.}
\label{fig:clipping_rate}
\end{figure}

Figure~\ref{fig:profile} compares the normalized variances of the proposed estimator obtained from Robinson’s (residual-based) method and the profile least squares method proposed in Section~\ref{supsubsec:profile}, across different nonparametric regression estimators in Scenario 1. The results indicate that the difference between the two methods is negligible.

\begin{figure}[H]
\begin{center}
\includegraphics[width=1\linewidth]{./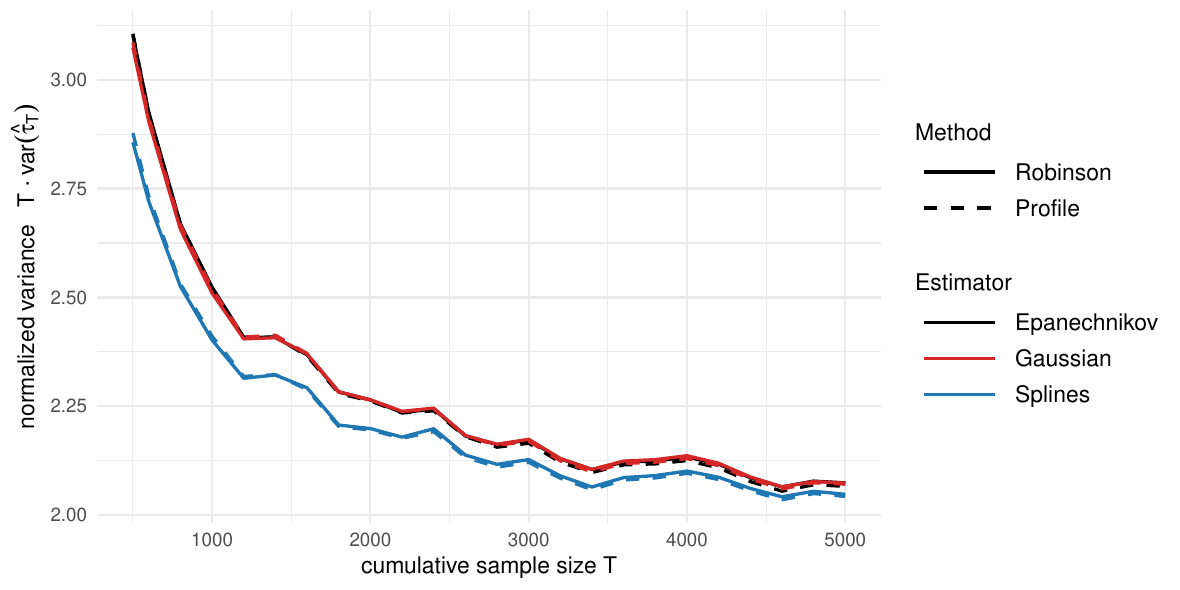}
\end{center}
\vspace{-1em}
\caption{\small Comparison of normalized variances of the proposed estimator between Robinson's method and profile least squares method in Scenario 1.}
\label{fig:profile}
\end{figure}

\subsubsection{Comparison of sequential testing procedures}\label{supsubsec:sim_testing}

We consider three null hypotheses: $\tau_{H_0}=\tau_0,\tau_{H_0}=0.8\tau_0,$ and $\tau_{H_0}=0.6\tau_0$, where the first is used to assess the cumulative coverage rate, and the latter two are used to evaluate the power. For a given confidence sequence (or a sequence of confidence intervals) $(L_t,U_t)_{t\ge1}$, following \textcolor{blue}{Waudby-Smith et al.} (\textcolor{blue}{2024a}); \textcolor{blue}{Cook et al.} (\textcolor{blue}{2024}), the cumulative miscoverage (error) rate up to Stage $T$ is defined as ${\rm Error}(T)=\eP\big(\exists t\in\{t_0,\dots,T\},\tau_0\notin[L_t,U_t]\big)$, and the power is defined as ${\rm Power}(T)=\eP\big(\exists t\in\{t_0,\dots,T\},\tau_{H_0}\notin[L_t,U_t]\big)$, which are estimated from 2000 simulation replications. In addition to these two metrics, we also compare the stopping times of different testing procedures, as defined in Section~\ref{subsec:tradeoff}. In particular, if the null hypothesis cannot eventually be rejected, the stopping time is set to the total time horizon of 2500. Notably, OPT uses the optimal allocation and true conditional mean functions to construct the confidence sequence, so its empirical power and average stopping time reflect the intrinsic difficulty of the sequential testing problem under the given scenario.

Figures~\ref{fig:test1}--\ref{fig:test3} present the empirical cumulative miscoverage rates and empirical powers for different designs and tests in Scenarios 1--3, with the corresponding means and standard errors of the stopping times reported in Table~\ref{tab:stopping_times}. Figure~\ref{fig:test4} presents the results in Scenario 1 when the initial peeking time $t_0$ is set to 30, in comparison with Figure~\ref{fig:test1} with $t_0=50$. Several apparent patterns are observable. 

\textit{First}, when $\tau_{H_0}=\tau_0$, CLT completely fails to control the type-I error rate, ASY attains the nominal level of 0.05, whereas BF and EB yield empirical cumulative miscoverage rates below 0.05 but remain relatively conservative. Benefiting from the rescaling technique, even though Theorem~\ref{thm:finite_cs} does not provide theoretical guarantees for unbounded primary outcomes (as in our simulation setup), EB still performs well in practice. Note that the inference begins at $t_0=50$, which avoids severe miscoverage at early stages. Additional simulations with $t_0=30$, presented in Figure~\ref{fig:test4}, suggest that when the initial peeking time is small, ASY may inflate the miscoverage error, whereas the more conservative BF and EB still maintain control.

\begin{figure}[H]
\begin{center}
\includegraphics[width=0.8\linewidth]{./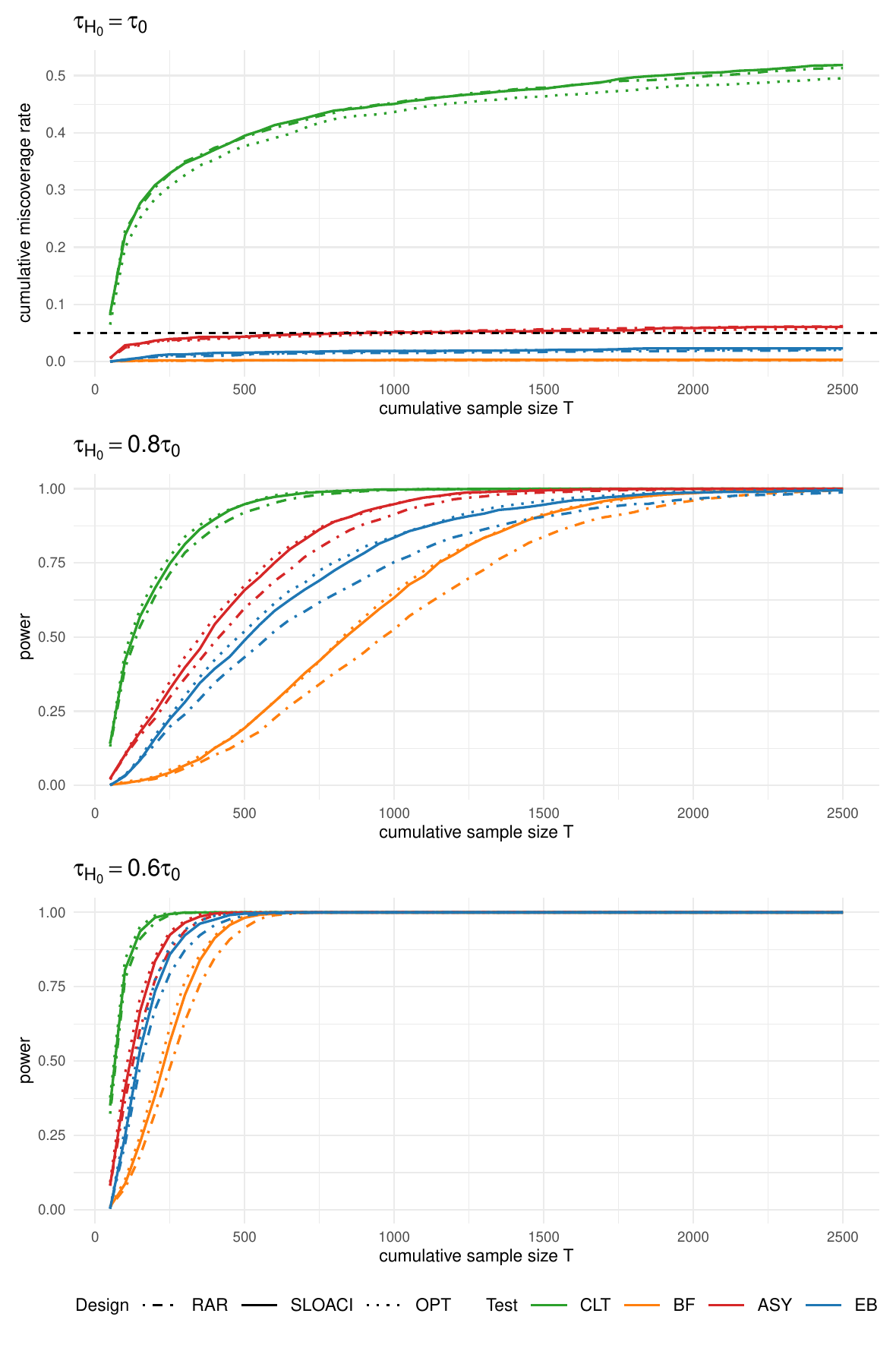}
\end{center}
\vspace{-1em}
\caption{\small Comparison of empirical cumulative miscoverage rates and empirical powers across different designs and tests in Scenario 1.}
\label{fig:test1}
\end{figure}

\begin{figure}[H]
\begin{center}
\includegraphics[width=0.8\linewidth]{./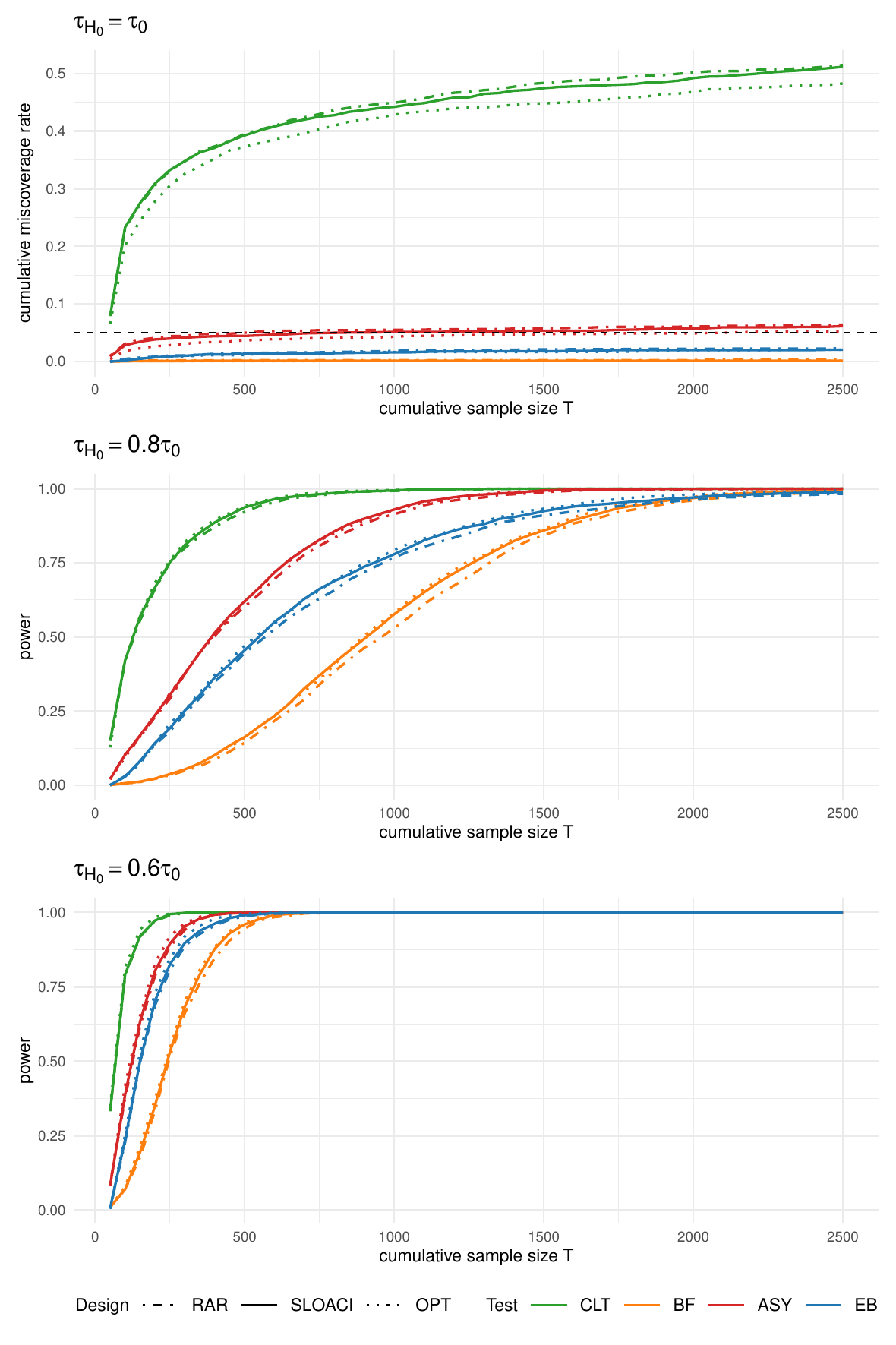}
\end{center}
\vspace{-1em}
\caption{\small Comparison of empirical cumulative miscoverage rates and empirical powers across different designs and tests in Scenario 2.}
\label{fig:test2}
\end{figure}

\begin{figure}[H]
\begin{center}
\includegraphics[width=0.8\linewidth]{./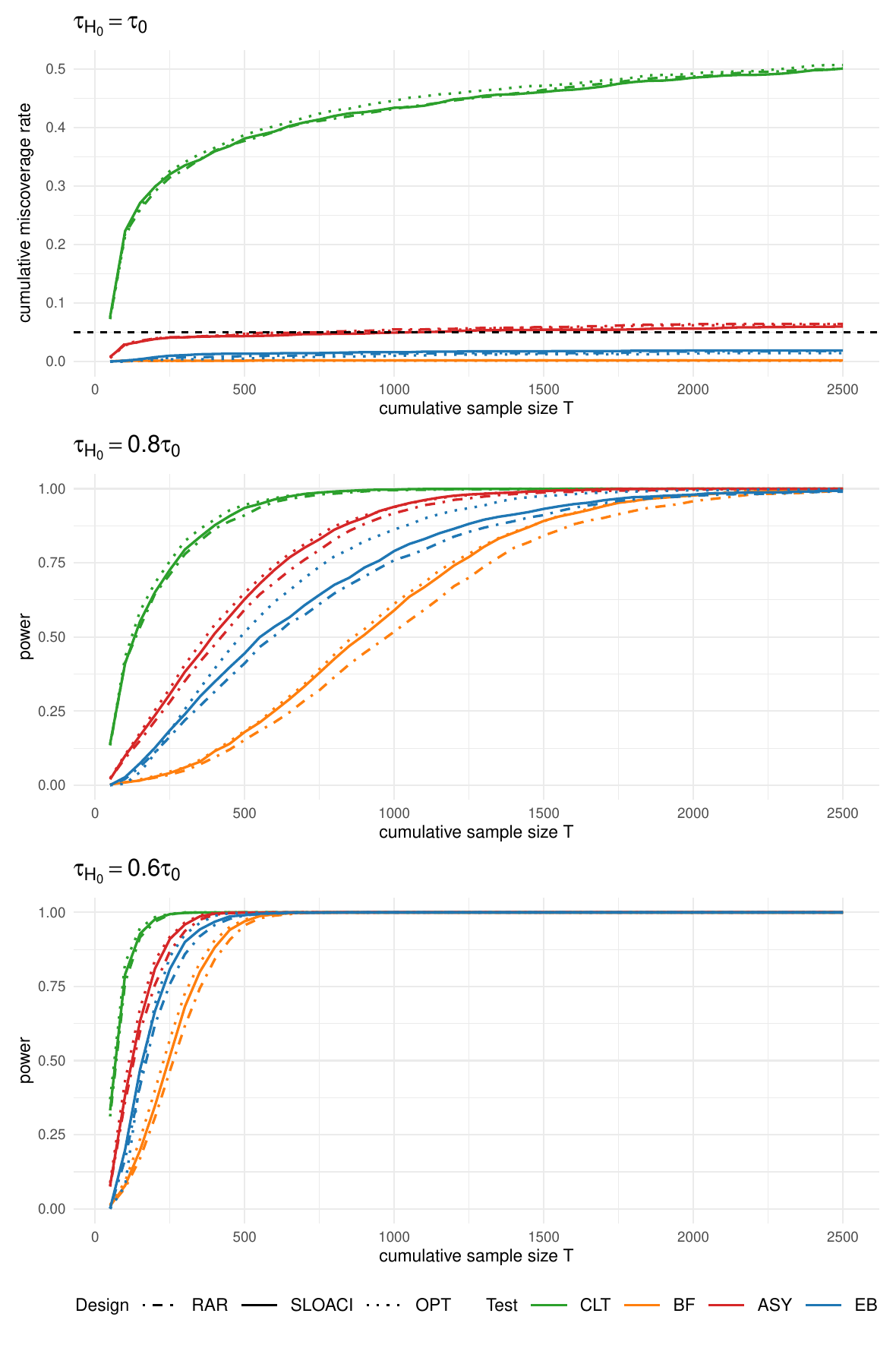}
\end{center}
\vspace{-1em}
\caption{\small Comparison of empirical cumulative miscoverage rates and empirical powers across different designs and tests in Scenario 3.}
\label{fig:test3}
\end{figure}

\begin{figure}[H]
\begin{center}
\includegraphics[width=0.8\linewidth]{./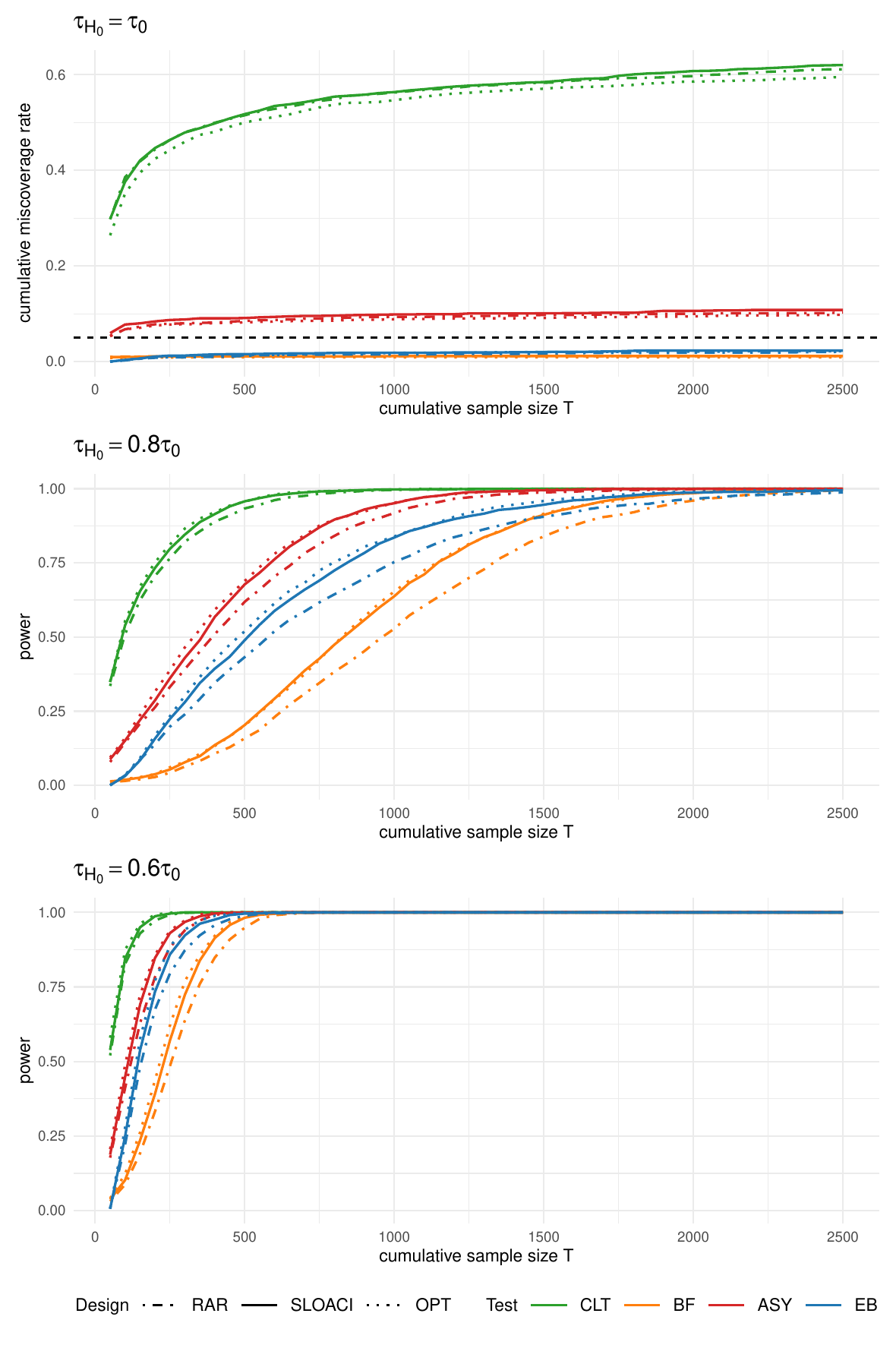}
\end{center}
\vspace{-1em}
\caption{\small Comparison of empirical cumulative miscoverage rates and empirical powers across different designs and tests in Scenario 1 with initial peeking time $t_0=30$.}
\label{fig:test4}
\end{figure}

\renewcommand{\arraystretch}{0.75}
\begin{table}[H]
\footnotesize
\centering
\caption{\small The means and standard errors (in parentheses) of stopping times.}
\label{tab:stopping_times}
\vspace{1em}
\begin{tabular}{llcccccc}
\hline
& & \multicolumn{2}{c}{Scenario 1} & \multicolumn{2}{c}{Scenario 2} & \multicolumn{2}{c}{Scenario 3}\\
Test & Design & $\tau_{H_0}=0.8\tau_0$ & $\tau_{H_0}=0.6\tau_0$ & $\tau_{H_0}=0.8\tau_0$ & $\tau_{H_0}=0.6\tau_0$ & $\tau_{H_0}=0.8\tau_0$ & $\tau_{H_0}=0.6\tau_0$ \\
\hline
\multicolumn{1}{l}{CLT} & \multicolumn{1}{l}{RAR} & 203.7  & 81.7  & 196.5  & 80.3  & 204.1  & 82.1  \\
          &       & (4.2) & (1.0)   & (4.2) & (1.0)   & (4.3) & (1.0) \\
          & \multicolumn{1}{l}{SLOACI} & 184.1  & 77.4  & 189.4  & 79.6  & 193.5  & 79.2  \\
          &       & (3.7) & (0.9) & (4.0)   & (1.0)   & (3.9) & (0.9) \\
          & \multicolumn{1}{l}{OPT} & 175.0  & 72.7  & 185.0  & 75.8  & 182.3  & 74.7  \\
          &       & (3.6) & (0.8) & (3.8) & (0.9) & (3.7) & (0.8) \\
\hline
    \multicolumn{1}{l}{BF} & \multicolumn{1}{l}{RAR} & 1011.3  & 268.5  & 1010.2  & 266.9  & 1018.9  & 273.1  \\
          &       & (11.2) & (2.8) & (11.0)  & (2.9) & (11.1) & (2.8) \\
          & \multicolumn{1}{l}{SLOACI} & 885.3  & 241.9  & 966.1  & 257.7  & 932.9  & 254.9  \\
          &       & (9.6) & (2.5) & (10.6) & (2.7) & (10.1) & (2.6) \\
          & \multicolumn{1}{l}{OPT} & 878.6  & 230.7  & 957.9  & 251.7  & 918.9  & 241.6  \\
          &       & (9.7) & (2.4) & (10.3) & (2.7) & (10.1) & (2.6) \\
\hline
    \multicolumn{1}{l}{ASY} & \multicolumn{1}{l}{RAR} & 483.6  & 146.4  & 476.8  & 142.5  & 490.1  & 147.8  \\
          &       & (7.7) & (1.9) & (7.7) & (1.9) & (7.7) & (1.9) \\
          & \multicolumn{1}{l}{SLOACI} & 430.2  & 133.0  & 458.0  & 139.4  & 453.7  & 138.0  \\
          &       & (6.6) & (1.7) & (7.1) & (1.8) & (7.0)   & (1.7) \\
          & \multicolumn{1}{l}{OPT} & 416.9  & 124.3  & 461.9  & 133.5  & 439.1  & 130.2  \\
          &       & (6.6) & (1.6) & (7.2) & (1.7) & (7.0)   & (1.7) \\
\hline
    \multicolumn{1}{l}{EB} & \multicolumn{1}{l}{RAR} & 718.8  & 180.9  & 710.6  & 178.1  & 720.1  & 193.0  \\
          &       & (12.0)  & (2.3) & (12.0)  & (2.3) & (11.2) & (2.3) \\
          & \multicolumn{1}{l}{SLOACI} & 615.6  & 164.0  & 679.4  & 174.0  & 676.1  & 179.8  \\
          &       & (10.0)  & (1.9) & (11.3) & (2.2) & (10.7) & (2.1) \\
          & \multicolumn{1}{l}{OPT} & 587.5  & 154.7  & 662.6  & 165.6  & 581.2  & 179.6  \\
          &       & (9.6) & (1.8) & (10.8) & (2.0)   & (8.4) & (1.6) \\
\hline
\end{tabular}
\end{table}

\textit{Second}, when $\tau_{H_0}\neq\tau_0$, all sequential testing procedures achieve 100\% empirical power when the cumulative sample size is sufficiently large, and for the same method, the cumulative sample size required decreases as the difference between $\tau_{H_0}$ and $\tau_0$ increases. Among the three valid tests, ASY attains the highest empirical power and thus yields the narrowest confidence sequences, followed by EB, while BF yields the lowest detection ability. For different designs under a given test, SLOACI is very close to OPT and achieves higher power than RAR, which arises from the variance-adaptivity of our constructed tests, together with the smaller semiparametric efficiency bound of SLOACI compared to RAR. The improvement diminishes as the conditional correlations between the primary outcomes and the surrogates decrease, as shown by the comparison between Figures~\ref{fig:test1} and \ref{fig:test2}. 

\textit{Third}, the average stopping time results confirm the differences in detection ability across designs and tests. Accounting for standard errors, the proposed SLOACI significantly reduces the average stopping time compared to RAR across all tests, especially in Scenarios 1 and 3. For example, in Scenario 1 when $\tau_{H_0}=0.6\tau_0$, SLOACI reduces the average stopping time of BF from 268.5 to 241.9, ASY from 146.4 to 133.0, and EB from 180.9 to 164.0, relative to RAR. It is interesting to see that for OPT-EB, when $\tau_{H_0}=0.6\tau_0$ (an easier case, reflecting detection ability for small $T$), Scenario 1 yields a shorter average stopping time than Scenario 3, which is characterized by a small $\underline{\pi}$, but the opposite holds when $\tau_{H_0}=0.8\tau_0$ (a more challenging case). This distinction indicates that EB may suffer from reduced power in small samples, as discussed in Section~\ref{subsec:tradeoff}. Figures~\ref{fig:boxplot1}--\ref{fig:boxplot3} below present boxplots of stopping times as a complement. The above observations validate Theorems~\ref{thm:asy_cs} and \ref{thm:finite_cs}, and effectively illustrate the practical trade-offs among three sequential testing procedures as discussed in Section~\ref{subsec:tradeoff}.

\begin{figure}[H]
\begin{center}
\includegraphics[width=0.9\linewidth]{./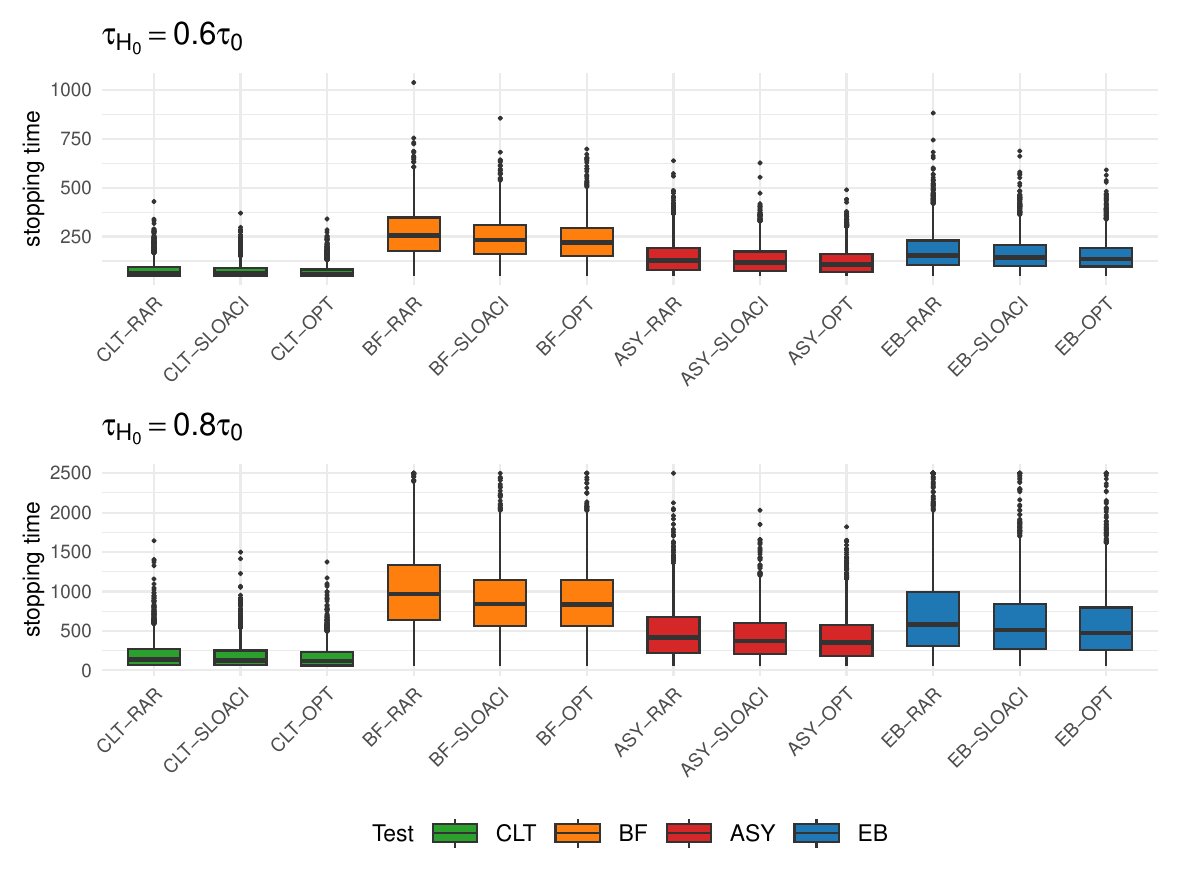}
\end{center}
\vspace{-3em}
\caption{\small Boxplots of stopping times for different designs and tests in Scenario 1.}
\label{fig:boxplot1}
\end{figure}

\begin{figure}[H]
\begin{center}
\includegraphics[width=0.9\linewidth]{./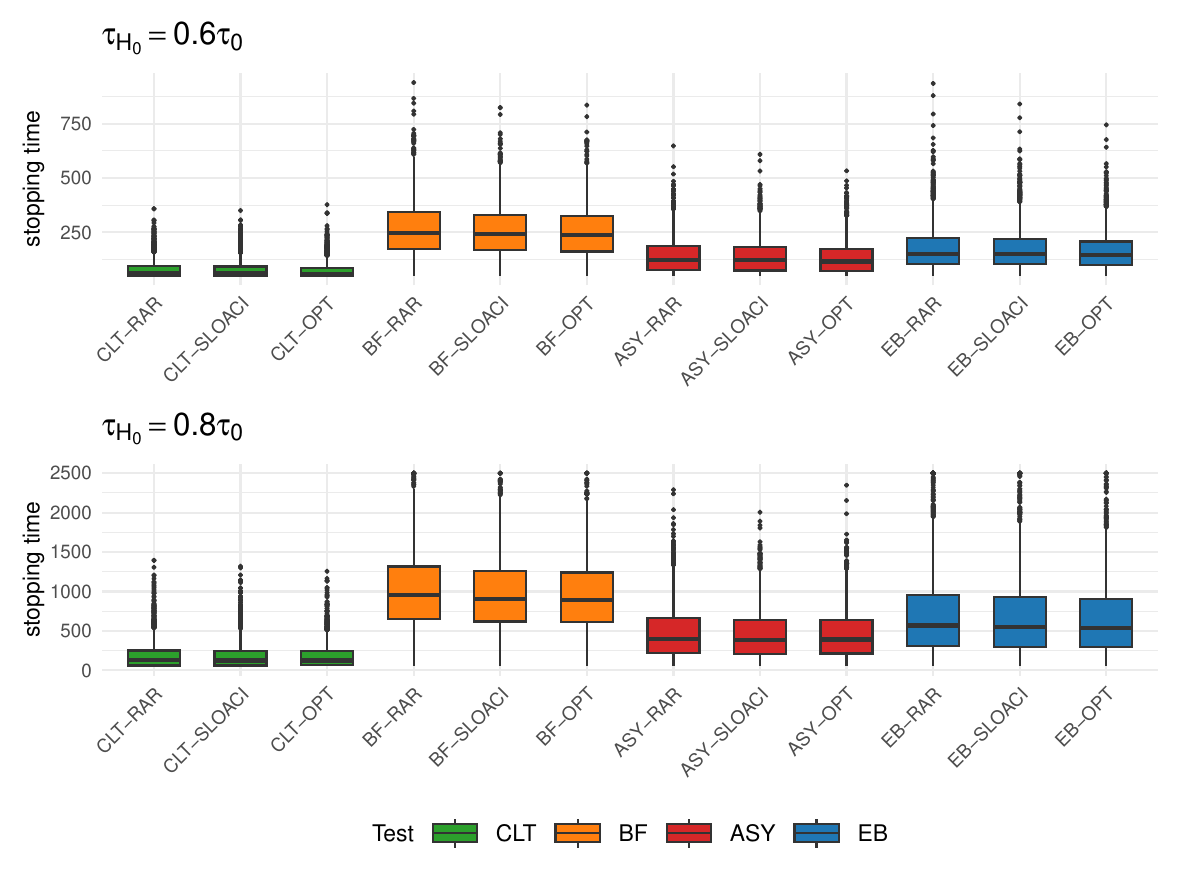}
\end{center}
\vspace{-3em}
\caption{\small Boxplots of stopping times for different designs and tests in Scenario 2.}
\label{fig:boxplot2}
\end{figure}

\begin{figure}[H]
\begin{center}
\includegraphics[width=0.9\linewidth]{./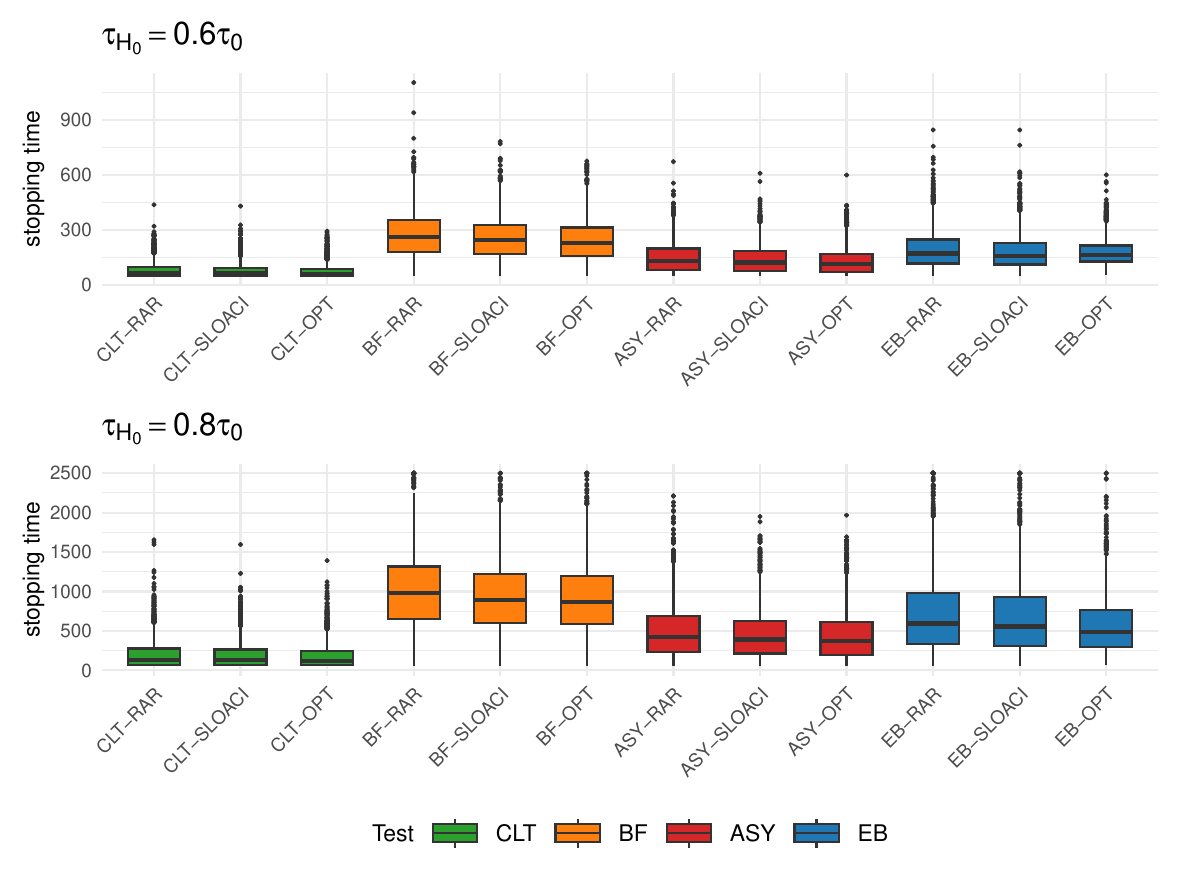}
\end{center}
\vspace{-3em}
\caption{\small Boxplots of stopping times for different designs and tests in Scenario 3.}
\label{fig:boxplot3}
\end{figure}

%Figures~\ref{fig:test2} and \ref{fig:test3} show the sequential testing results for Scenarios 2 and 3, respectively. 
%Figure~\ref{fig:test4} presents the results in Scenario 1 when the initial peeking time $t_0$ is set to 30, in comparison with Figure~\ref{fig:test1} with $t_0=50$. The results suggest that when $t_0$ is small, ASY may inflate the miscoverage error. 

\section{Additional empirical results}
\label{supsec:empirical}

This section provides additional results supporting Section \ref{sec:case}. We present a specific replication from the simulations with $b=400$ to illustrate of how to conduct the sequential testing in a real-world setting. Figure~\ref{fig:case_width} displays the confidence sequences for different designs and tests along with their corresponding interval widths, revealing several notable trends.
\textit{First}, as the cumulative sample size $T$ increases, the interval widths of all designs and tests gradually decrease, with the proposed ASY under SLOACI achieving the smallest width. 
\textit{Second}, the confidence sequences for the three tests under SLOACI consistently contain $\tau_0=0.229$ at each stage and eventually lie above $\tau_{H_0}=0$ when $T$ is large. This result suggests that assignment to a small class significantly improves the third-grade scores, consistent with evidence from the existing literature (e.g., \textcolor{blue}{Athey et al.}, \textcolor{blue}{2025}). 
\textit{Third}, the stopping times of ASY under SLOACI, RAR, and RCT are 934, 1229, and 1708, respectively. Hence, compared with RCT which requires 5 years of randomized experiment to detect the treatment effect, RAR reduces the required duration to 4 years, while SLOACI further shortens it to 3 years. 

\begin{figure}[H]
\begin{center}
\includegraphics[width=1\linewidth]{./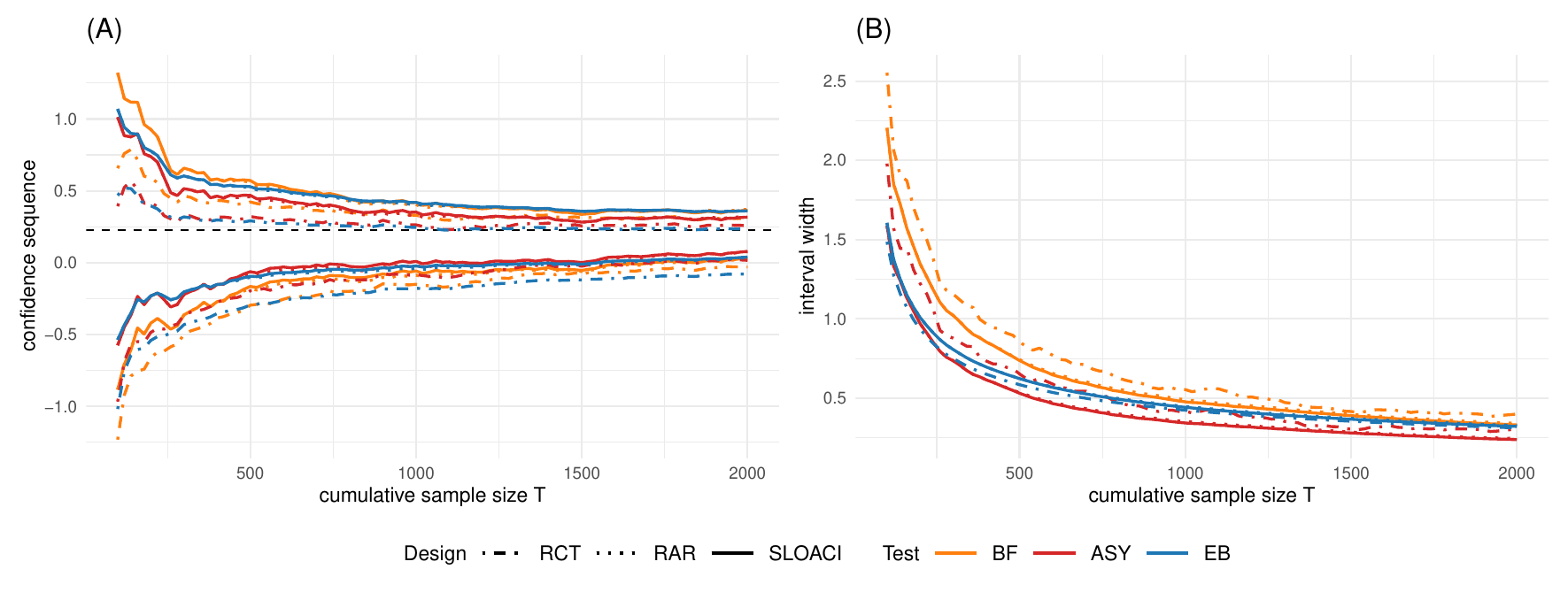}
\end{center}
\vspace{-1em}
\caption{\small Comparison of confidence sequences in panel (A) and their corresponding interval widths in panel (B) across different designs and tests for the STAR data. The black dashed line in panel (A) represents $\tau_0=0.229$.}
\label{fig:case_width}
\end{figure}

{\small 
\spacingset{1}
\section*{References}
\begin{description}

    \item 
    Athey, S., R. Chetty, and G. Imbens (2025). The experimental selection correction estimator: Using experiments to remove biases in observational estimates. \emph{arXiv preprint arXiv:2006.09676v2}.

    \item 
    Balsubramani, A. and A. Ramdas (2016). Sequential nonparametric testing with the law of the iterated logarithm. In \emph{Proceedings of the Thirty-Second Conference on Uncertainty in Artificial Intelligence}, pp. 42--51.

    \item 
    Cai, Y. and A. Rafi (2024). On the performance of the Neyman allocation with small pilots. \emph{Journal of Econometrics} \emph{242}(1), 105793.

    \item
    Chernozhukov, V., D. Chetverikov, M. Demirer, E. Duflo, C. Hansen, W. Newey, and J. Robins (2018). Double/debiased machine learning for treatment and structural parameters. \emph{The Econometrics Journal} \emph{21}(1), C1--C68.
    
    \item
    Cook, T., A. Mishler, and A. Ramdas (2024). Semiparametric efficient inference in adaptive experiments. In \emph{Causal Learning and Reasoning}, pp. 1033--1064. %PMLR.
    
    \item 
    Dai, J., P. Gradu, and C. Harshaw (2023). Clip-OGD: An experimental design for adaptive Neyman allocation in sequential experiments. In \emph{Advances in Neural Information Processing Systems}, Volume 36, pp. 32235–32269. %Curran Associates, Inc.
    
    \item 
    Durrett, R. (2019). \emph{Probability: Theory and Examples} (Fifth ed.). Cambridge University Press.

    \item 
    Fan, J. and I. Gijbels (1996). \emph{Local Polynomial Modelling and Its Applications: Monographs on Statistics and Applied Probability 66}. CRC Press.

    \item 
    Gy{\"o}rfi, L., M. Kohler, A. Krzyzak, and H. Walk (2006). \emph{A Distribution-Free Theory of Nonparametric Regression}. Springer Science \& Business Media.

    \item 
    Hahn, J., K. Hirano, and D. Karlan (2011). Adaptive experimental design using the propensity score. \emph{Journal of Business \& Economic Statistics} \emph{29}(1), 96–108.
    
    \item 
    Hall, P. and C. C. Heyde (1980). \emph{Martingale Limit Theory and Its Application}. Academic Press.

    \item 
    Imbens, G. W. and D. B. Rubin (2015). \emph{Causal Inference in Statistics, Social, and Biomedical Sciences}. Cambridge University Press.

    \item
    Kato, M., T. Ishihara, J. Honda, and Y. Narita (2025). Efficient adaptive experimental design for average treatment effect estimation. \emph{arXiv preprint arXiv:2002.05308v7}.

    \item 
    Linton, O. (1995). Second order approximation in the partially linear regression model. \emph{Econometrica} \emph{63}(5), 1079–1112.
    
    \item
    Neopane, O., A. Ramdas, and A. Singh (2025a). Logarithmic Neyman regret for adaptive estimation of the average treatment effect. In \emph{International Conference on Artificial Intelligence and Statistics}. %PMLR.

    \item
    Neopane, O., A. Ramdas, and A. Singh (2025b). Optimistic algorithms for adaptive estimation of the average treatment effect. In \emph{International Conference on Machine Learning}. %PMLR.

    \item 
    Perchet, V. and P. Rigollet (2013). The multi-armed bandit problem with covariates. \emph{The Annals of Statistics} \emph{41}(2), 693–721.
    
    \item
    Qian, W. and Y. Yang (2016). Kernel estimation and model combination in a bandit problem with covariates. \emph{Journal of Machine Learning Research} \emph{17}(149), 1--37.

    \item 
    Robinson, P. M. (1988). Root-N-consistent semiparametric regression. \emph{Econometrica} \emph{56}(4), 931–954.

    \item
    Tropp, J. A. (2015). An introduction to matrix concentration inequalities. \emph{Foundations and Trends{\textregistered} in Machine Learning} \emph{8}(1-2), 1--230.

    \item 
    Tsybakov, A. B. (2009). \emph{Introduction to Nonparametric Estimation}. Springer New York
    
    \item 
    Vershynin, R. (2025). \emph{High-Dimensional Probability: An Introduction with Applications in Data Science} (Second ed.). Cambridge University Press.

    %\item
    %Waudby-Smith, I. and A. Ramdas (2024). Estimating means of bounded random variables by betting. \emph{Journal of the Royal Statistical Society Series B: Statistical Methodology} \emph{86}(1), 1--27.

    \item 
    Waudby-Smith, I., D. Arbour, R. Sinha, E. H. Kennedy, and A. Ramdas (2024a). Time uniform central limit theory and asymptotic confidence sequences. \emph{The Annals of Statistics} \emph{52}(6), 2613–2640.
    
    \item
    Waudby-Smith, I., L. Wu, A. Ramdas, N. Karampatziakis, and P. Mineiro (2024b). Anytime-valid off-policy inference for contextual bandits. \emph{ACM/IMS Journal of Data Science} \emph{1}(3), 1--42.

    \item 
    Yang, Y. and D. Zhu (2002). Randomized allocation with nonparametric estimation for a multi-armed bandit problem with covariates. \emph{The Annals of Statistics} \emph{30}(1), 100–121.
    
    \item 
    Zhang, H. and S. X. Chen (2020). Concentration inequalities for statistical inference. \emph{arXiv preprint arXiv:2011.02258}.
    
\end{description}
}

\end{document}